\author{Nilin Abrahamsen}
\title{A polynomial-time algorithm for ground states of spin trees}
\newtheorem{theorem}{Theorem}
\newtheorem{lemma}[theorem]{Lemma}
\newtheorem{corollary}[theorem]{Corollary}
\newtheorem{proposition}[theorem]{Proposition}
\newtheorem{observation}[theorem]{Observation}
\newtheorem{definition}[theorem]{Definition}
\newtheorem{remark}[theorem]{Remark}
\newtheorem*{theorem*}{Theorem}
\newtheorem*{lemma*}{Lemma}
\newtheorem*{corollary*}{Corollary}
\newtheorem*{proposition*}{Proposition}
\newtheorem*{setting*}{Setting}
\newtheorem*{example*}{Example}
\newtheorem*{definition*}{Definition}
\newtheorem*{claim*}{Claim}
\numberwithin{theorem}{section}
\numberwithin{lemma}{section}
\numberwithin{corollary}{section}
\numberwithin{proposition}{section}
\numberwithin{observation}{section}
\numberwithin{fact}{section}
\numberwithin{definition}{section}
\renewcommand\thmcontinues[1]{Continued}
\newcommand\scalemath[3]{\text{\scalebox{#1}[#2]{$#3$}}}
\renewcommand\eqref[1]{{(\ref{#1})}}
\newcommand\xifinal{\xi_{\opn{final}}}
\newcommand\bs[1]{\boldsymbol{#1}}
\newcommand\opn[1]{\operatorname{#1}}
\newcommand\mc[1]{{\mathcal #1}}
\newcommand\dE{{\E'}}
\newcommand\pad[2]{\scalemath{1}{.9}{(}#1\scalemath{1}{.9}{)}_{#2}}
\newcommand\ALPHA{{\boldsymbol\upalpha}}
\newcommand\BETA{{\boldsymbol\upbeta}}
\newcommand\sphere{{\mc S}}
\newcommand\allbut[1]{#1\kern-1.1ex\mathsmaller\diagdown}
\newcommand\Stc{{\opn{S}}}
\newcommand\xtc{{\widehat\Stc}}
\newcommand\BB{{\mathbb B}}
\newcommand\itc{{\Stc}}
\newcommand\preprime{\kern.5ex{}^\prime\kern-.3ex}
\newcommand\bigO{{\opn{O}}}
\newcommand\slice{S}
\newcommand\topcaption[1]{\noindent\textbf{Subroutine} #1\\\smallbreak}
\newcommand\HSH{{\mathbf{H}}}
\newcommand\shrink{\sigma}
\newcommand\vsum\sum
\newcommand\lhalf{{\opn L}}
\newcommand\rhalf{{\opn R}}
\newcommand\FOR{\quad\text{for}\quad}
\newcommand\thickvert{\scalemath{2}{1}{|}}
\newcommand\vd[1]{\thickvert#1\thickvert}
\newcommand\spec{\opn{Spec}}
\newcommand\close[1]{\approx_{#1}}
\newcommand\XPT{\backslash\kern-.3em\backslash}
\newcommand\PP{\opn{P}}
\newcommand\Into{\opn{J}}
\newcommand\Id{\opn{I}}
\newcommand\mf[1]{{\mathfrak{#1}}}
\newcommand\hmk[2]{{{}^{#1}\T\kern.05em^{#2}}}
\newcommand\brc[1]{{}^{#1}\T}
\newcommand\vx{\mathrm v}
\newcommand\w{\mathrm w}
\newcommand\COMMENTOUT[1]{}
\newcommand\proc{\SetAlgorithmName{Subroutine}{subroutine}{List of subroutines}}
\newcommand\xxii{t}
\newcommand\trexp{\widetilde\exp}
\newcommand\roomy[1]{\kern1em#1\kern1em}
\newcommand\SET{\operatorname{Set}}
\newcommand\clos{\opn{clos}}
\newcommand\closure[1]{\,\overline{\!#1\!}\,}
\newcommand\sparagraph[1]{\par\smallskip\noindent\textbf{#1}\quad}
\newcommand\VX{{\opn{V}}}
\newcommand\nameofthealgorithmt{\textup{\texttt{GroundStates}}}
\newcommand\nameofthealgorithmm{\text{\nameofthealgorithmt}}
\newcommand\nameofthealgorithm{\(\nameofthealgorithmm\)}
\newcommand\GSt\nameofthealgorithmt
\newcommand\GSm\nameofthealgorithmm
\newcommand\GS\nameofthealgorithm
\newcommand\Refinet{\textup{\texttt{Chop}}}
\newcommand\Refinem{\text{\Refinet}}
\newcommand\Refine{$\Refinem$}
\newcommand\enhancet{\textup{\texttt{SlightlyEnhance}}}
\newcommand\enhancem{\text{\enhancet}}
\newcommand\Enhancet{\textup{\texttt{Enhance}}}
\newcommand\Enhancem{\text{\Enhancet}}
\newcommand\trisectt{\textup{\texttt{Trisect}}}
\newcommand\trisectm{\text{\trisectt}}
\newcommand\EG{{\opn{E}}}
\newcommand\lin{\mathscr{L}}
\newcommand\META{\textup{META}}
\newcommand\hist{\upmu}
\newcommand\histt{\upnu}
\newcommand\shadow[2]{\mu\big(\tfrac{#1}{#2}\big)}
\newcommand\almaj[2]{\delta\big(\tfrac{#1}{#2}\big)}
\renewcommand\S{\mc S}
\newcommand\V{\mc V}
\newcommand\W{\mc W}
\renewcommand\L{\mc L}
\newcommand\R{\mc R}
\newcommand\partialv{{{}^{\opn{v}}\!\partial}}
\newcommand\partiale{{}^{\opn{e}}\!\partial}
\newcommand\raisemath[2]{\text{\raisebox{#1}{$#2$}}}
\newcommand\rightof[2]{{}^{#1}\raisemath{-.2em}{\scalemath{1}{1.8}{\sqsubset}}{\kern-.1em\scalemath{1.2}{1.2}{#2}}}
\newcommand\less{\kern1em\le\kern1em}
\newcommand\AND{\quad\text{and}\quad}
\newcommand\WHERE{\quad\text{where}\quad}
\newcommand\Span{\opn{span}}
\newcommand\dist{{\opn d}}
\newcommand\STITCHout{\widehat{\opn{Stc}}}
\newcommand\STITCHin{{\opn{Stc}}}
\newcommand\stitchout{\widehat{\opn{stc}}}
\newcommand\stitchin{{\opn{stc}}}
\newcommand\cld{\opn{chld}}
\newcommand\CLD{\mathfrak{C}\!\opn{hld}}
\newcommand\xpt{\backslash}
\newcommand\x{{\opn x}}
\renewcommand\P{\mc P}
\newcommand\emp[1]{\textit{\textbf{#1}}}
\newcommand\RR{\mathbb R}
\newcommand\B{\mc B}
\newcommand\HS{\mathscr{H}}
\newcommand\herm{\opn{Herm}}
\newcommand\proj{\opn{Proj}}
\newcommand\tr{\opn{tr}}
\newcommand\ZZ{\mathbb{Z}}
\newcommand\charac{{1\kern-.8ex \opn{I}}}
\newcommand\ii{{\mathbf i}}
\newcommand\jj{{\mathbf j}}
\newcommand\I{{\mathfrak I}}
\newcommand\bra[1]{\langle #1 \rvert}
\newcommand\ket[1]{\lvert #1 \rangle}
\newcommand\bigbra[1]{\big\langle #1 \big\rvert}
\newcommand\bigket[1]{\big\lvert #1 \big\rangle}
\newcommand\bracket[2]{\langle #1 \,|\, #2 \rangle}
\newcommand\BRA[1]{\bra{#1}}
\newcommand\KET[1]{\ket{#1}}
\newcommand\E{\mc E}
\newcommand\EE{\opn{EE}}
\newcommand\K{\opn K}
\newcommand\MT{\mathscr T}
\newcommand\MV{\mathscr V}
\newcommand\ME{\mathscr E}
\newcommand\TNgen\Gamma
\newcommand\LLambda{\scalemath{1.3}{1.05}{I}\kern-.35em\Lambda}
\newcommand\TNOgen{{\opn{I}}\kern-.15em\Gamma}
\newcommand\GGamma\TNOgen
\newcommand\TNO[1]{\TNOgen^{[#1]}}
\renewcommand\d{{\opn d}}
\newcommand\e{{\opn{e}}}
\newcommand\ee{{\mathbf{e}}}
\newcommand\tolevel[1]{\:\scalemath{2}{1}{-}\kern-.3em\scalemath{.8}{.8}{\{#1\}}\kern-.45em\succ\,}
\newcommand\ts[1]{\textsuperscript{#1}}
\newcommand\trim{\opn{trim}}
\newcommand\Trim{{\opn{Trim}^*}\!\!}
\newcommand\Tpol{T}
\newcommand\hTRUNC[2]{ {\,{}^{#2}\kern-.2em\tikz[baseline={([yshift=0em]current bounding box.south)}]{\draw plot coordinates{(-.1,.41)(-.35,.41)(-.5,.1)};}{\kern-.9em#1}}}
\newcommand\htrunc[2]{ {\,\frac{#2}{0}\kern-.2em\tikz[baseline={([yshift=-.25em]current bounding box.south)}]{\draw plot coordinates{(-.1,.41)(-.35,.41)(-.45,.2)};}{\kern-.9em{#1}}}}
\newcommand\genTRUNC[5]{\tikz[baseline={(0,-.1)}]{ \draw[#5] plot[smooth]coordinates{(-.15,-.1)(-0.0,.08)(.15,.18)(.3,.22)(.55,.23)};\node[right] at(.05,0){$#1$};\node[left] at(.15,.2){\scriptsize$#3$};\node at(.05,-.1){\scriptsize$#2$};\node[right] at(.4,.25){\scriptsize$(#4)$};}\kern-.5em}
\newcommand\genTRUNClong[6]{\tikz[baseline={(0,-.1)}]{ \draw[#5] plot[smooth]coordinates{(-.15,-.1)(-0.0,.08)(.15,.18)(.3,.22)(#6,.23)};\node[right] at(.05,0){$#1$};\node[left] at(.15,.2){\scriptsize$#3$};\node at(.05,-.1){\scriptsize$#2$};\node[right] at(#6-.1,.25){\scriptsize$(#4)$};}\kern-.5em}
\newcommand\sTRUNC[4]{\genTRUNC{#1}{#2}{#3}{#4}{}}
\newcommand\fTRUNC[4]{\genTRUNC{#1}{#2}{#3}{#4}{very thick, densely dotted}}
\newcommand\fTRUNClong[5]{\genTRUNClong{#1}{#2}{#3}{#4}{very thick, densely dotted}{#5}}
\newcommand\A{\mc A}
\newcommand\Z{\mathcal Z}
\newcommand\simplex[1]{\triangle^{#1}}
\newcommand\Simplex[1]{\blacktriangle^{#1}}
\newcommand\NN{\mathbb N}
\newcommand\ld{d}
\newcommand\sd{{\opn{d}}}
\newcommand\supp{\opn{supp}}
\renewcommand\aa{\mathbf a}
\newcommand\CC{\mathbb C}
\newcommand\subsp{\preceq}
\newcommand\ople{\le}
\newcommand\opge{\ge}
\newcommand\TNspan[1]{[\:#1\:]}
\newcommand\TNOspan[1]{[\:#1\:]}
\newcommand\mmu{\boldsymbol{\mu}}
\newcommand\Otilde{\tilde\bigO}
\newcommand\bounded{\lesssim}
\newcommand\BOUNDED{\:\mathlarger{\mathlarger{<}}\kern-.2em\raisemath{.2ex}{\scriptstyle{c}}\:\:} 
\newcommand\BOUNDS{\:>\kern-.7em\raisemath{.8ex}{\scriptstyle{C}}\:\:}
\newcommand\DEFtheta{\kern1ex\raisemath{-.2ex}{..}\kern-1.5ex{=}\:}
\newcommand\deflarge{=\kern-.8em\raisemath{1ex}{\mathsmaller{\mathsmaller C}}\:}
\newcommand\defsmall{=\kern-.6em\raisemath{-.6ex}{\scriptstyle c}\:}
\newcommand\DEFTHETA{\DEFtheta}
\newcommand\cbounded{{\,}_c\kern-.4em<}
\newcommand\T{{\opn{T}}}
\newcommand\F{{\opn{F}}}
\newcommand\ST{{\T^{\raisemath{-.5ex}{\mathsmaller{\mathsmaller'}}}}}
\newcommand\sST{{\T^{\raisemath{-.5ex}{\mathsmaller{\mathsmaller{''}}}}}}
\newcommand\nV[1]{\big|#1\big|}
\newcommand\Y{\mc Y}
\newcommand\comp{\opn{CC}}
\newcommand\sshrink{\boldsymbol{\shrink}}
\newcommand\halftree{\MT\kern-.25em{\textcolor{white}{\blacksquare}}\kern-.9em\Gamma}
\newcommand\one{1\kern-.4em\opn I}
\newcommand\mr{\mathfrak{r}_0}
\newcommand\mv{\mathlarger{\mathfrak a}}
\newcommand\me{{\mathfrak e}}
\newcommand\mw{\mathlarger{\mathfrak c}}
\newcommand\back{\kern-1em}
\newcommand\ddelta{\boldsymbol{\delta}}
\newcommand\inside{{\lhalf}}
\newcommand\barrier{{\opn B}}
\newcommand\moat{{\opn B}}
\newcommand\outside{{\rhalf}}
\newcommand\Reg\Omega
\affil{\small{Department of Mathematics,\\
Massachusetts Institute of Technology,\\
Cambridge, MA, USA
}}
\begin{document}
\maketitle
\begin{abstract}
	We prove that the ground states of a local Hamiltonian satisfy an area law and can be computed in polynomial time when the interaction graph is a tree with \emph{discrete fractal dimension} $\beta<2$. This condition is met for generic trees in the plane and for established models of hyperbranched polymers in 3D. This work is the first to prove an area law and exhibit a provably polynomial-time classical algorithm for local Hamiltonian ground states beyond the case of spin chains. Our algorithm outputs the ground state encoded as a multi-scale tensor network on the \emph{META-tree} which we introduce as an analogue of Vidal's MERA. Our results hold for polynomially degenerate and frustrated ground states, matching the state of the art for local Hamiltonians on a line.
\end{abstract}


\section{Introduction}

A fundamental problem in computational physics and chemistry is that of computing the ground states of a large quantum system of interacting particles given knowledge of their local interactions, that is, given a \emph{local Hamiltonian} $H$. From a computer science perspective, local Hamiltonians generalize the notion of constraint satisfaction problems (CSPs) which are the focus of intense research \cite{cubitt2016complexity}.

In contrast with classical CSPs where any particular assignment of $n$ variables can at least be \emph{described} with $n$ parameters, a typical \emph{quantum} state of $n$ particles takes exponentially many parameters to describe classically. Thus, any hope of an efficient classical algorithm rests on being able to assert that ground states satisfy structural properties which allow them to be succinctly encoded.

A major milestone in the study of local Hamiltonian complexity was the proof by Hastings \cite{hastings2007area} of an {area law} for ground states of spin chains and later quantitative improvements by Arad, Kitaev, Landau, and Vazirani \cite{arad2013area}. 
The area law for spin chains implies that 1D ground states have an efficient encoding as \emph{matrix product states} (MPS') \cite{schollwock2011density}. Still, their efficient computation from $H$ is a highly non-trivial extension.

A breakthrough work of Landau, Vazirani, and Vidick \cite{lvv15} provided the first polynomial-time algorithm for computing the ground state of a gapped spin chain.  
Subsequent improvements to this result include the extension by Chubb and Flammia \cite{chubb2016computing} to \emph{degenerate} ground states.

The previous algorithms for ground states of spin chains were improved in several directions by Arad, Landau, Vazirani and Vidick \cite{alvv17}. These improvements included allowing for ground state degeneracies of \emph{polynomial} size, improving on the previous \emph{constant} degeneracy, and a \emph{singly} exponential running time dependence on the gap as opposed to the previous \emph{doubly exponential} dependence \cite{lvv15}. In contrast with previous algorithms which iteratively extend a partial solution (known as a \emph{$\delta$-viable set}) in a sweep from left to right, \cite{alvv17} iteratively fuses pairs of partial solutions in a process of divide-and-conquer. This provides theoretical justification for the \emph{superblock method} \cite{white1992real} used in empirically successful DMRG algorithms to handle the entanglement properties on different length scales.

The difficulty of computing ground states of gapped local Hamiltonians is attested by the fact that all area laws and efficient algorithms proven so far are restricted to the linear interaction graph.  
Beyond the case of chain graphs a number of related results have been shown, though none allow for efficient computation of the ground state. An exciting recent result \cite{anshu2019entanglement} proved a sub-volume law in 2D for {frustration-free} Hamiltonians assuming a strengthened gap condition. In terms of algorithmic results, \cite{brandao2016product,bravyi2019approximation} have given efficient approximations to the \emph{energy} with multiplicative error bounds on \emph{general} interaction graphs using mean-field approximations.

\subsection{Our contribution}
In this work we consider local Hamiltonians whose interaction graph is a {tree} $\T$. We view the spin tree as a physical system where each particle takes up a positive amount of space in $\sd=2$ dimensions. We thus assume that, for each vertex $\vx$, the number of particles connected to $\vx$ by a path of length $<r$ \emph{within the tree} is smaller than the volume of a euclidean ball of radius $r$. That is, we require that $\T$ have \emp{fractal dimension $\beta<2$}, defined as follows:
\begin{definition}\label{def:fracdim}
	Let $C\ge2$ be constant, and let the $r$-ball $\B_\vx(r)$ be the set of vertices connected to $\vx$ by a path in $\T$ of length $r$ or less.  
	The $C$-discrete \emp{fractal dimension} of $\T$ is
	\[\beta=\sup_{\vx\in\VX,r>1}\frac{\log|\B_\vx(r)|}{\log(C r)}.\]
\end{definition}
That is, we assume that $|\B_\vx(r)|=\bigO(r^\beta)$ for some $\beta<2$. We fix $C$ and shorten \emph{$C$-discrete fractal dimension} to just \emph{fractal dimension}. Clearly, $\beta\le2$ for any $\T$ in the 2D lattice.

We prove that when the interaction graph for a local Hamiltonian is a tree with fractal dimension $\beta<2$, its ground states satisfy an area law and can be computed by an efficient classical algorithm.
Our algorithm thus provides the first efficient algorithm in the literature for computing ground states of local Hamiltonians beyond chain graphs (which are a special case of our setting). Our running time is polynomial in $n$ and singly exponential in the inverse gap, and our results hold even for frustrated Hamiltonians with a polynomially degenerate ground state, matching the state of the art for spin chains.

\sparagraph{The META-tree}
Our algorithm outputs a tensor network $\Gamma$ on a tree $\MT$ of logarithmic depth and \emph{distinct from the interaction tree} $\T$. Just as the MERA (Multiscale Entanglement Renormalization Ansatz) introduced by Vidal \cite{vidal2008class} describes the multi-scale entanglement structure of certain states on lattices, $\MT$ represents the entanglement structure of an encoded state on different length scales. By analogy we name $\MT$ the \textbf{META-tree}, for \emph{Multiscale Entanglement Tree Ansatz}, and say that $\Gamma$ is a META, or meta-TN. The main content of our work is the \emph{proof} that ground states can be efficiently computed and represented as a meta-TN.

\begin{figure}[H]\centering
	\begin{minipage}[T]{.55\textwidth} \centering \includegraphics[width=.9\textwidth]{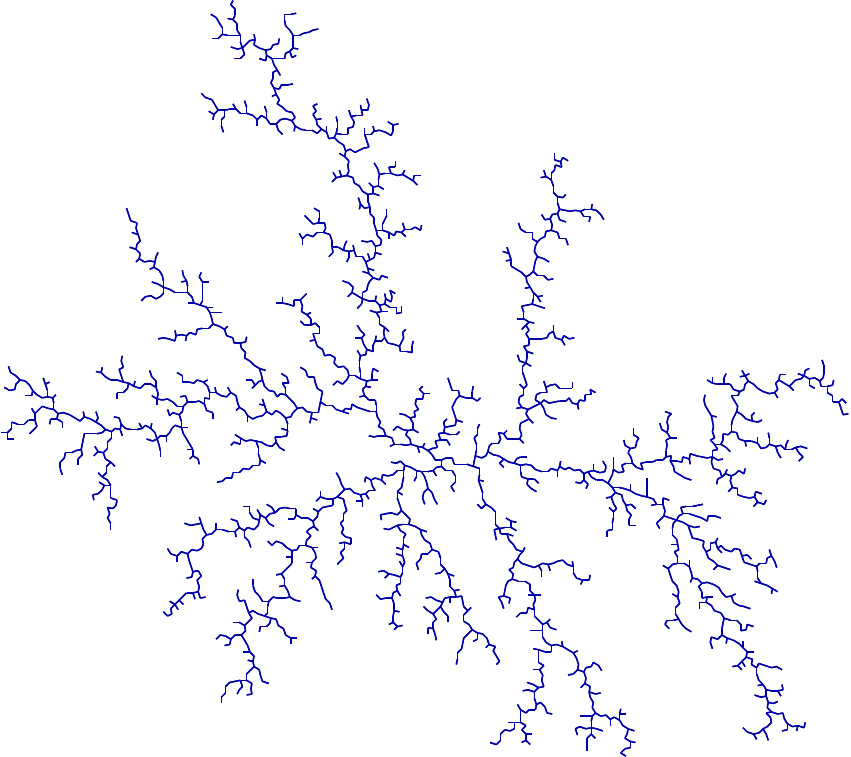}
		\caption{\small{\textbf{Above: }Interaction tree $\T$ (generated by DLA).\medbreak
		\textbf{Right: }The \META-tree $\MT=(\MV,\ME)$. Each horizontal layer represents a partition of $\T$. The \emph{meta-vertices} $\mv\in\MV$ are the red \emph{branches} and the blue \emph{sections} (see section \ref{sec:META}). The curved edges between layers are the \emph{meta-edges} $\ME$ which carry the bonds of the meta-TN. Physical indices are at the bottom, and a dimension index at the top allows the meta-TN to encode a multi-dimensional subspace.}}
		\label{fig:MT}
		\end{minipage}\quad\begin{minipage}{.4\textwidth}\caption*{META-tree $\MT$}\scriptsize{dimension index\qquad}\centering\\\includegraphics[width=.8\textwidth]{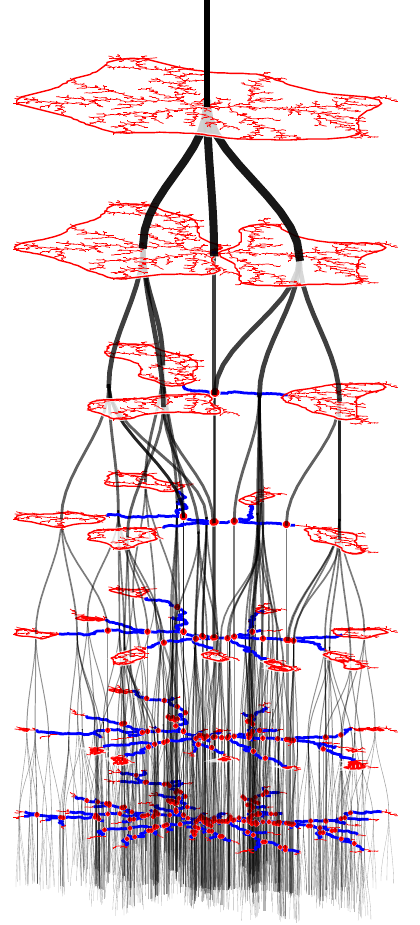}\\\scriptsize{physical indices}\\\includegraphics[width=.8\textwidth]{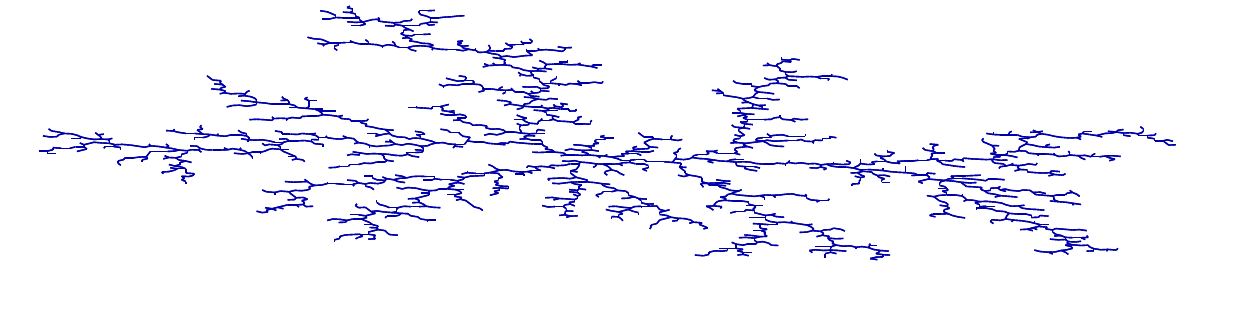}\end{minipage}
\end{figure}
The physical indices of the META are at the meta-leaves (the leaves of $\MT$), which are in bijective correspondence with all of $\T$. 
Thus, our algorithm makes use of the \emph{holographic principle} \cite{swingle2012entanglement}, viewing the interaction tree $\T$ as the boundary and the META-tree as the \emph{bulk}. The operations of our algorithm and the TN encoding of the output state live in the bulk $\MT$. 

The proofs of \cite{alvv17} provided justification for the multi-scale nature of DMRG algorithms for spin chains, and our work extends these results to the case when the interactions are on a tree. Our work furthermore contributes justification for MERA-like tensor network representations by devising an algorithm which operates on the META-tree and is {provably} correct and polynomial-time. In contrast with \cite{alvv17} where the multi-scale nature is expressed in parts of the algorithm but not in the output MPS, our algorithm operates on the META-tree throughout and outputs a tensor network $\Gamma$ on the META-tree.

\subsubsection{Technical contributions} Extending the notion of \emph{viability} to spaces of operators, we introduce \emph{partial approximate projectors} (PAPs) as operator subspaces which are viable for some AGSP. We then define a notion of \emph{higher-degree viability} for operators which we use to custruct explicit PAPs. This construction replaces global AGSPs with spaces of operators defined on a subset of particles.

We achieve the simple META-tree structure of our algorithm by implementing the \emph{trimming} procedure \cite{lvv15,alvv17} and an efficient PAP on the META-tree instead of the interaction tree. We present a simplified analysis of the \emph{soft truncation} of \cite{alvv17} by bounding it from above and below with hard truncations. 
We furthermore implement the \emph{truncated cluster expansion} \cite{hastings2006solving,alvv17} on the META-tree $\MT$, compared with \cite{alvv17} which implements it as a matrix product operator (MPO) on the interaction chain.

\subsection{Examples with $\beta<2$}

\paragraph{3-dimensional Vicsek tree}
The \emph{Vicsek fractal} is a self-similar fractal which has been studied as a model for regular hyperbranched macromolecules \cite{blumen2004generalized}. In its 3-dimensional version it can be seen as a limit of the trees in the 3D lattice constructed as follows:
\begin{definition}[3D Vicsek trees]
	Let $\VX^1=\{\vx\in\ZZ^3\mid\|\vx\|_1\le1\}$. For $k\ge1$, define $\VX^{k+1}=3^k\VX_1+\VX_k,$
	where the sum of two sets is $A+B=\{a+b\mid a\in A,b\in B\}$. The \emp{3D Vicsek tree} of order $k$ has vertex set $\VX^k$ and edge set consisting of all pairs of vertices in $\VX$ which are neighbors in the 3D lattice.
\end{definition}
Explicit computation shows that the 3D Vicsek trees have fractal dimension $\beta=\frac{\log 7}{\log 3}\approx1.77$. Note that the tree is non-trivial in the sense that it has no linear sections of length greater than $3$.
	\sparagraph{Random interaction trees }
	Natural \emph{random} constructions of trees in euclidean space also satisfy sub-quadratic volume growth according to numerical simulations (such random tree models are notiously difficult to analyze mathematically, so these findings are generally empirical). Such random models include:\begin{enumerate}
		\item\emph{Diffusion-limited aggregation} (DLA) in 2D (see the interaction tree in figure \ref{fig:MT}, left) has been determined numerically to have $\beta\approx1.7$ \cite{meakin1983diffusion}. 
		\item \emph{uniform spanning trees} (UST) on the 2D lattice was proved to have $\beta=1.6$ in \cite{barlow2011spectral} in a slightly weaker sense (where $\vx$ is fixed in definition \ref{def:fracdim}). See \cite{pemantle1991choosing} for a precise definition of the UST in the 2D and 3D lattices. 
		\item USTs in the 3D lattice appear to have subquadratic growth by simulations (appendix \ref{sec:numerics}).
	\end{enumerate}This empirical property of random tree models reaffirms the claim that condition \ref{def:fracdim} is satisfied for generic trees. We caution, however, that a direct application of our algorithm requires the fractal dimension bound to hold with a universal constant $C$. If the fluctuations are large enough, an atypical region of $\T$ could form a bottleneck to our algorithm.
	\begin{figure}[H]
		\centering
		\begin{minipage}{.33\textwidth}\centering
			\includegraphics[width=.7\textwidth]{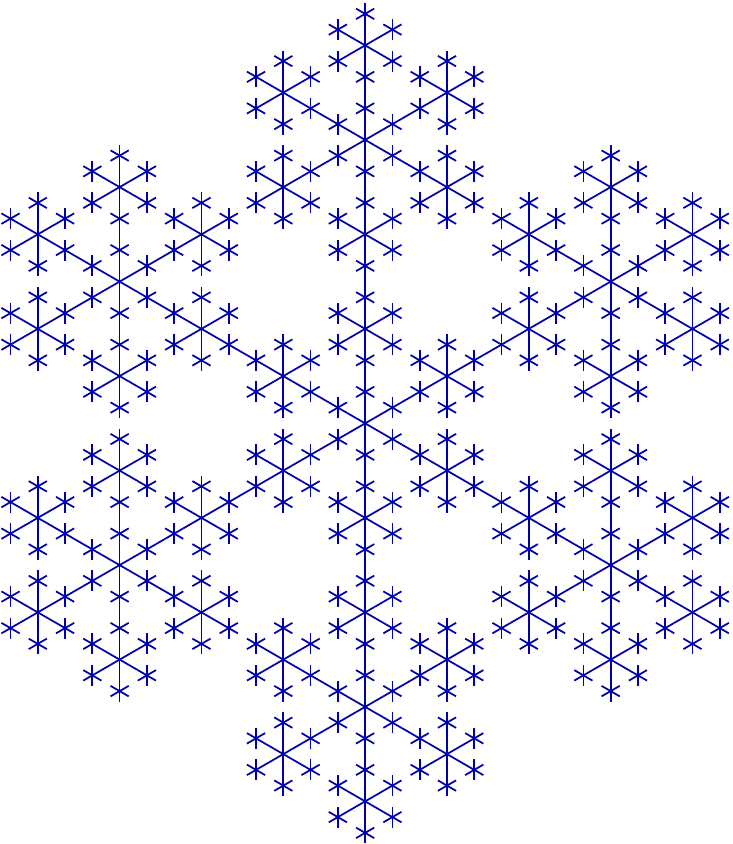}
			\caption*{\small{3D Vicsek tree ($\beta=\log_37$)}\smallbreak\scriptsize{shown as 2D projection.}}
		\end{minipage}\quad
	\begin{minipage}{.3\textwidth}\centering
			\includegraphics[width=\textwidth]{DLA.pdf}
			\caption*{\small{2D DLA ($\beta\approx1.7$)}\\.}
		\end{minipage}\quad
		\begin{minipage}{.3\textwidth}\centering
			\includegraphics[width=\textwidth]{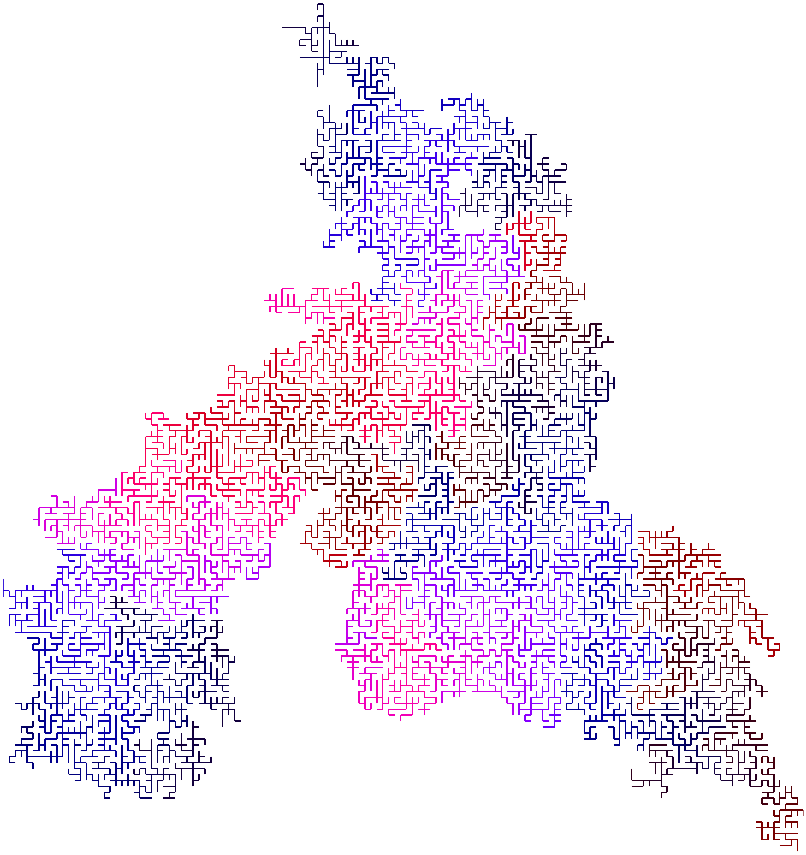}
			\caption*{\small{2D lattice UST ($\beta=\frac85$)}\smallbreak\scriptsize{ \textbf{Shown: }Ball $\B_\vx(400)$ in the UST, colored by distance from $\vx$.}}
		\end{minipage}
	\caption{Examples of interaction trees with fractal dimension $\beta<2$.}
	\end{figure}

\section{Results}

Let $\VX$ be a set of $n$ \emph{particles} or \emph{spins}, each with corresponding Hilbert space $\HS_\vx\simeq\CC^\ld$ with $\ld$ local degrees of freedom. The state of this system is a vector in the unit sphere $\ket\psi\in\sphere(\HS)$ of the Hilbert space $\HS:=\bigotimes_{\vx\in\VX}\HS_\vx$. The input to the problem consist of $\VX$ and a collection $\HSH=(H_\e)_{\e\in\EG}$ of local \emph{interactions}. These are Hermitian operators $H_{\e=\{\vx,\w\}}\in\herm(\HS_\vx\otimes\HS_\w)\simeq\herm(\CC^{\ld^2})$ with uniformly bounded operator norm $\|H\|\le1$. 
The locality properties of $\HSH$ are described by the \emph{interaction tree} $\T=(\VX,\EG)$ whose vertices are the particles and whose edges correspond to interaction terms.
We assume that $\T$ has fractal dimension bound $\beta<2$. Furthermore assume that $\T$ has some constant bound on its degree $2\sd=2,4,6,\ldots$ We think of $\sd$ as the dimension of a lattice contaning $\T$.

$\HSH$ describes the local Hamiltonian $H=\sum_{\e\in\EG} H_\e\otimes I_{\VX\xpt\e}$. We write the eigenvalues of $H\in\herm(\HS)$ as $\E_0(H)\le\E_1(H)\le\ldots\le\E_{\max}(H)$. $\Z=\opn{Ker}(H-\E_0(H))$ is the subspace of $\HS$ consisting of ground states of $H$, and its dimension $D=\vd\Z$ is known as the \emph{degeneracy}. We write $\Z\subsp\HS$ to denote the fact that $\Z$ is a subspace of $\HS$.

The entanglement of a pure quantum state $\ket\psi\in\sphere(\HS_A\otimes\HS_B)$ across a cut $A$, $B$ is compactly quantified by the \emph{entanglement entropy} $\EE_{A|B}(\ket\psi)$ defined as the von Neumann entropy $\sum\lambda_i\log(1/\lambda_i)$ of the reduced density matrix $\rho_A=\tr_B(\ket\psi\bra\psi)$ with eigenvalues $\lambda_i$. Write $x\bounded y$ to express that $x=\bigO(y)$. 
\begin{restatable}[\textbf{Area law}]{theorem}{EEarealaw}
	Let $\T$ be a tree with discrete fractal dimension $\beta<2$ and let $H$ be a local Hamiltonian on $\T$ with gap $\Delta=\E_D(H)-\E_0(H)$. The entanglement entropy of any ground state \(\ket\psi\in\sphere(\Z)\) between a region $\Reg$ with \(|\partial\Reg|\bounded 1\) and its complement $\outside=\T\xpt\Reg$ is at most
	\[\EE_{\Reg|\outside}(\ket\psi)=\Otilde\big(\log D+\Delta^{-\frac{2\beta(\beta+1)}{(2-\beta)^2}}\big).\]
\end{restatable}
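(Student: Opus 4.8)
The plan is to follow the AGSP route to area laws. Setting $\E_0(H)=0$ without loss of generality, it suffices to exhibit, for the cut $(\Reg,\outside)$ with $|\partial\Reg|\bounded1$, an \emph{approximate ground state projector} $K$: an operator that (i) fixes the ground space $\Z$ up to a small error, (ii) shrinks $\|K\ket\phi\|\le\Delta_K$ for every $\ket\phi\perp\Z$, and (iii) has Schmidt rank $\le D_K$ across $\partial\Reg$. Once such a $K$ is available with $D_K\cdot\Delta_K$ below an absolute constant (after absorbing the degeneracy), the bound $\EE_{\Reg|\outside}(\ket\psi)=\Otilde(\log D_K+\log D)$ follows from the AGSP-to-area-law bootstrapping lemma in its degenerate form, as in Arad--Kitaev--Landau--Vazirani and, for polynomial degeneracy, Chubb--Flammia and ALVV. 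So all of the content is in building a good $K$, and this is the only place the tree geometry is used.

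The single geometric input I would extract is: since $|\partial\Reg|\bounded1$, the $\ell$-neighborhood $B_\ell$ of $\partial\Reg$ in $\T$ has $|B_\ell|=\bigO(\ell^\beta)$ by Definition~\ref{def:fracdim}. Consequently any operator supported on $B_\ell$ has Schmidt rank $\le d^{\bigO(\ell^\beta)}$ across $\partial\Reg$, and the number of interaction terms at radius $\le\ell$ is polynomial in $\ell$. I would then construct $K$ as a composition of two pieces, reusing the paper's truncated cluster expansion and soft truncation. The first piece \emph{localizes}: replace $H$ by a Hamiltonian $\tilde H_\ell$ whose terms lie in $B_\ell$ and whose spectrum has been softly truncated, and argue via the cluster expansion that $\ket\psi$ lies within $\epsilon_\ell$ of the ground space of $\tilde H_\ell$, with $\epsilon_\ell$ decaying in $\ell$ at a rate governed by $\Delta$. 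The second piece is a Chebyshev polynomial $C_q(\tilde H_\ell)$ of degree $q$, boosting the shrinking factor to $\Delta_K\lesssim e^{-\Omega(q)\sqrt{\Delta/W_\ell}}+\sqrt{\epsilon_\ell}$ for the truncated spectral width $W_\ell$; since $C_q(\tilde H_\ell)$ is supported on $B_\ell$ with $\bigO(1)$ terms crossing $\partial\Reg$, its Schmidt rank is $D_K\le d^{\bigO(\min(q,\ell^\beta))}$. Imposing $D_K\cdot\Delta_K\lesssim1$ and minimizing $\log D_K$ over the free parameters $\ell,q$ gives $\log D_K=\Otilde\big(\Delta^{-\frac{2\beta(\beta+1)}{(2-\beta)^2}}\big)$; adding the degeneracy term yields the stated bound.

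The exponent, and in particular its blow-up as $\beta\uparrow2$, is the fingerprint of this optimization: $\ell$ must grow enough to push $\epsilon_\ell$ below the threshold $d^{-\Omega(\ell^\beta)}$ that the Schmidt-rank budget demands, which is possible only while $\ell^\beta$ grows more slowly than the decay exponent of $\epsilon_\ell$ — i.e. precisely while $\beta<2$. I expect the localization step on a \emph{fractal} tree (rather than a chain) to be the main obstacle: on a line there are $\bigO(1)$ terms at distance $\ell$ from the cut and the cluster expansion / soft truncation converge with room to spare, whereas here the radius-$\ell$ shell contains $\Theta(\ell^{\beta-1})$ terms, so one must show the relevant cluster expansion still converges and that the accumulated truncation error stays below $d^{-\Omega(\ell^\beta)}$. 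Pinning down quantitatively that $\beta<2$ is exactly the regime in which the competition between the $\ell^\beta$ growth of $B_\ell$ and the decay of the truncation error can be won — which is also what forces the precise shape of the exponent — is the technical heart; the AGSP bootstrapping and the degenerate-case accounting can then be quoted.
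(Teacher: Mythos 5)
Your high-level plan correctly identifies the ingredients (truncation near the cut, a Chebyshev AGSP, and the fractal volume bound $|B_\ell|=\bigO(\ell^\beta)$) and correctly predicts that the blow-up as $\beta\uparrow2$ comes from a competition between the barrier volume and the AGSP shrinking rate. But there is a real gap at the key step, the claimed Schmidt-rank bound $D_K\le d^{\bigO(\min(q,\ell^\beta))}$. The bound $d^{\bigO(\ell^\beta)}$ would hold if $C_q(\tilde H_\ell)$ were literally supported inside $B_\ell$, but that is not what happens: the Hamiltonian is spectrally truncated on $\Reg$ and on $\outside$, so $\hTRUNC{H_\Reg}{\eta}$ is a dense operator on all of $\HS_\Reg$. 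The resulting Chebyshev polynomial is not supported on $B_\ell$, and the naive degree bound is only $d^{\bigO(q)}$, which is hopeless here because the barrier has $\Theta(s^\beta)$ edges, forcing a spectral width $\gtrsim s^\beta$ and hence a Chebyshev degree $q$ far too large to afford $d^{\bigO(q)}$. If instead you literally restrict $H$ to $B_\ell$, the ground space of the restricted Hamiltonian is no longer close to $\Z$ by any simple argument, and your localization step collapses. What the paper actually proves (Proposition~\ref{prop:Hmprop} together with Lemma~\ref{lem:spanlemma}, and the dimension count in Observation~\ref{obs:Ldimbound}) is a nontrivial pigeonhole/histogram bound of the form $\exp[\Otilde(q/s+s^\beta)]$: there must exist a vertex shell $\partial_r$ inside the barrier that is visited at most $q/s$ times in any degree-$q$ monomial, and the ``left support'' construction (Definition~\ref{def:leftsupp}) pushes the cut to that shell. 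This is the extension of the AKLV pigeonhole lemma to trees, replacing AKLV's formal commuting variables with histograms and explicit weights $\bs\omega$. It is the technical heart of the proof, and your proposal assumes it rather than supplying it.

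Two secondary points. First, the paper's area law uses only hard (spectral) truncation (Definition~\ref{def:trunc}, Lemma~\ref{lem:hardtruncdirect}, Corollary~\ref{cor:robusttotrunc}); the truncated cluster expansion and soft truncation appear only later, in the \emph{efficient} PAP for the algorithm, where one needs the AGSP to be representable as a meta-TNO. Invoking them for the area law is an over-complication, and the concern you raise about convergence of the cluster expansion is moot for this statement. Second, the paper does not conclude directly from a single $D_K\cdot\Delta_K\lesssim1$ budget; it derives an $r(\cdot)$-entanglement bound (Proposition~\ref{prop:arealawthm}), which tracks how the viable-subspace dimension scales with the desired viability error $\delta$ (this is where the exponent $\alpha=\tfrac32\tfrac{\beta}{\beta+1}$ enters, via the $(\log\tfrac1\delta)^{\beta/2}$-dependence of the PAP dimension), and then integrates the Schmidt tail via Lemma~\ref{lem:intlemma}. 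A direct single-AGSP bootstrap would need a careful degenerate-case accounting that your sketch defers entirely to quoting; note that a naive optimization of $D_K\Delta_K\lesssim1$ with the correct Schmidt bound actually suggests the \emph{smaller} exponent $\Delta^{-\beta/(2-\beta)}$, so either the bootstrapping argument is lossier than you assume or additional constraints enter, and the discrepancy needs to be resolved before the $\Delta^{-\frac{2\beta(\beta+1)}{(2-\beta)^2}}$ claim can be asserted.
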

Note that the entanglement entropy bound is polynomial in the inverse gap and diverges as $\beta\to2$.

Write $\Y\subsp\HS$ to denote that $\Y$ is a subspace of $\HS$. Let $\Into_\Y:\Y\hookrightarrow\HS$ be the embedding of $\Y$ into $\HS$ and let $\Into^\dag$ be the Hermitian conjugate (which coincides with the pseudoinverse for $\Into$, so $\Into^\dag_\Y\Into_\Y=\Id_\Y$ is the identity). Let $\PP_\Y=\Into_\Y\Into_\Y^\dag$ be the orthogonal projection onto $\Y$.  
\begin{definition}\label{def:close}Let $0\le\delta<1$. Two subspaces $\Y,\Z\subsp\HS$ are $\delta$-close ($\Y\close\delta\Z$) if $M=\Into_\Y^\dag\Into_\Z$ is a bijection and all singular values of $M$ are at least $\sqrt{1-\delta}$.\end{definition}
The relation $\close\delta$ is symmetric, and definition \ref{def:close} is equivalent with being \emph{mutually $\delta$-close} in the sense of \cite{alvv17}.
The singular values of $M$ are $\cos\theta_j$ where $\theta_j$ are the \emph{principal angles} between $\Y$ and $\Z$ \cite{galantai2006jordan,ben1967geometry}.

\begin{theorem}[\textbf{Main theorem}]\label{thm:mainthm}
	There exists an algorithm, \(\GSm\), which takes as input a local Hamiltonian problem $(\T,\HSH)$ with spectral gap $\Omega(\Delta)$ and ground state degeneracy $D=n^{\bigO(1)}$ on a tree $\T$ with fractal dimension $\beta<2$ and outputs a subspace $\widetilde\Z\subsp\HS$ in time 
	\begin{equation}\label{eq:thesimpletimebound}n^{\bigO\big(\Delta^{-\frac{\beta+1}{2-\beta}}\big)}\end{equation}
	such that $\widetilde\Z$ is $\epsilon=n^{-10}$-close to $\Z$ with probability $1-e^{-n}$.
\end{theorem}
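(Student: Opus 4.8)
The plan is to run a divide-and-conquer algorithm on the \META-tree $\MT$, extending the chain algorithm of \cite{alvv17} to $\MT$. First I would construct $\MT$ (Section~\ref{sec:META}): a rooted tree of depth $\bigO(\log n)$ whose meta-leaves are in bijection with the particles $\VX$, in which each meta-vertex $\mv$ carries a connected region $\T_\mv\subseteq\T$ and the regions of the children of $\mv$ partition $\T_\mv$ into branches and sections. The construction is arranged so that (i) the depth is logarithmic, (ii) each meta-edge corresponds to a cut of $\T$ with $\bounded1$ boundary edges, and (iii) the number of particles within tree-distance $\ell$ of any such cut is $\bounded\ell^\beta$, by Definition~\ref{def:fracdim}. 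Property (iii) is the only use of the fractal-dimension hypothesis; it is what confines both the entanglement (through the area law) and the Schmidt ranks that appear below.

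The algorithm then sweeps $\MT$ from the meta-leaves upward, maintaining at each meta-vertex $\mv$ a subspace $\Y_\mv\subsp\HS_{\T_\mv}:=\bigotimes_{\vx\in\T_\mv}\HS_\vx$ that is \emph{viable}: $\Y_\mv\otimes\HS_{\VX\xpt\T_\mv}$ contains a $D$-dimensional subspace $\delta_\mv$-close to $\Z$ (Definition~\ref{def:close}), with the error $\delta_\mv$ controlled along the way. At a meta-leaf, $\Y_\vx=\HS_\vx$ is trivially viable. At an internal meta-vertex $\mv$ with children $\mw_1,\dots,\mw_k$ ($k\bounded1$, since $\T$ has bounded degree) I would (a) form $\Y_{\mw_1}\otimes\cdots\otimes\Y_{\mw_k}$; (b) apply a \emph{partial approximate projector} (\PAP) at $\mv$ --- an explicit operator subspace, supported within bounded tree-distance of $\mv$, that is viable for an AGSP of $H$ and hence shrinks the excited component of any viable subspace while essentially fixing $\Z$; and (c) \emph{trim} the result back to bounded dimension by the soft truncation of \cite{lvv15,alvv17} across the meta-edges at $\mv$, which I would analyze by sandwiching it between two hard SVD truncations and invoking the area law for the incident cuts. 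Step (b) turns the dimension blow-up of (a) into a decrease of $\delta_\mv$, and step (c) keeps the dimension bounded. The \PAP of step (b) is built by running the truncated cluster expansion \cite{hastings2006solving,alvv17} on $\MT$: a \PAP localized to a radius-$\ell$ neighborhood of $\mv$ touches only $\bounded\ell^\beta$ particles, so its local energy scale is $\bounded\ell^\beta$ while the gap is $\Delta$; consequently a degree-$\ell$ AGSP polynomial of the local Hamiltonian still shrinks the excited subspace by a factor that decays in $\ell$ --- which is exactly where $\beta<2$ is needed, since for $\beta\ge2$ the local energy scale overwhelms the gap and no shrinking survives --- while having Schmidt rank $\exp(\Otilde(\ell^\beta))$ across each incident meta-edge, and the higher-degree operator-viability calculus assembles these pieces into a genuine \PAP. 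Optimizing $\ell$ against the competing shrinking and Schmidt-rank bounds yields a \PAP that shrinks the excited subspace by a prescribed constant and has bond dimension $\exp\big(\Otilde(\Delta^{-\frac{\beta+1}{2-\beta}})\big)=n^{\bigO(\Delta^{-\frac{\beta+1}{2-\beta}})}$, the source of the exponent in \eqref{eq:thesimpletimebound}.

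It remains to control the error, the degeneracy, and the cost. Each combine step, after the \PAP is applied, leaves the new error at most $\tfrac12\max_i\delta_{\mw_i}+\tau$ where $\tau$ is the truncation error of step (c); so $\delta_\ROOT\bounded\tau$, and taking $\tau=n^{-\bigO(1)}$ --- which the area law permits at the stated bond dimension, at the cost only of the $\log n$ factors already hidden in the $\Otilde$ --- forces $\delta_\ROOT\le\epsilon=n^{-10}$. For the polynomial degeneracy $D=n^{\bigO(1)}$ I would follow \cite{chubb2016computing,alvv17}: to keep every $\Y_\mv$ carrying all $D$ directions of $\Z$ restricted to $\T_\mv$, augment it where necessary with a random subspace of bounded dimension; anti-concentration together with a union bound over the $\bigO(n)$ meta-vertices gives success with probability $1-e^{-n}$, which is the only randomness in the algorithm. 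Each meta-vertex costs only linear algebra --- tensor products, applying the \PAP, and SVDs --- polynomial in the bond dimension, so the total running time is $n^{\bigO(\Delta^{-\frac{\beta+1}{2-\beta}})}$. At the root, $\T_\ROOT=\T$, so $\Y_\ROOT\subsp\HS$ is $D$-dimensional and $\epsilon$-close to $\Z$, and the algorithm outputs $\widetilde\Z=\Y_\ROOT$.

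The step I expect to be the main obstacle is (b): building the \PAP on $\MT$ so that, uniformly over the range of length scales present in $\MT$, the shrinking of the excited subspace beats the Schmidt-rank growth across the meta-edges. This is precisely where $|\B_\vx(r)|\bounded r^\beta$ with $\beta<2$ is indispensable, and it requires carefully marrying the truncated cluster expansion with the branch/section decomposition and with the operator-viability calculus; the remaining ingredients --- constructing $\MT$, the error bookkeeping, the randomized handling of degeneracy, and the complexity count --- should be comparatively routine adaptations of \cite{alvv17}.
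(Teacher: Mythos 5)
The proposal captures the paper's overall architecture --- the \META-tree sweep, viable subspaces, PAPs built from the truncated cluster expansion and soft truncation, trimming, and randomization for degeneracy --- but the error/complexity bookkeeping has a genuine gap that the paper's two-phase design is specifically built to avoid.

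You claim the combine step leaves error $\delta_\mv\le\tfrac12\max_i\delta_{\mw_i}+\tau$ with $\tau$ the trimming error, and that one can take $\tau=n^{-\bigO(1)}$ ``at the cost only of the $\log n$ factors already hidden in the $\Otilde$,'' so that $\delta_\ROOT\lesssim\tau\le n^{-10}$. Two things go wrong here. First, the PAP step does not halve the viability error. By Lemma~\ref{lem:AGSPapplication}, after sampling a low-dimensional $\V\subsp\W$ and applying $\L$, the result is $\shrink|\L|^2$-viable essentially independently of the input error (provided the input is $\bigO(1)$-viable); the per-step error is \emph{reset} to $\approx\shrink|\L|^2$, not geometrically contracted. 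To reach $\delta_\ROOT\le n^{-10}$ one would therefore need $\shrink|\L|^2\lesssim n^{-10}$ at every combine step. Second, the cost of an \emph{efficient} PAP achieving such a $\shrink$ is not hidden by the area-law exponent $\alpha<1$. The area law only certifies that \emph{trimming} at error $\tau=n^{-10}$ costs sub-polynomial bond dimension; the efficient PAP is a separate meta-TNO whose bond dimension, by Proposition~\ref{prop:Lg}, is $(n2^k)^{\bigO(m)}$ with $k\gtrsim\log\frac1{\delta\Delta}$ and $m\gtrsim\ell^{1+1/\beta}/\log\ell$, where the parameter choice \eqref{eq:theell}--\eqref{eq:mskchoice} forces $\ell$, and hence $m$, to be polylogarithmic in $n$ once $\delta=n^{-\bigO(1)}$. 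This is exactly what the time bound \eqref{eq:timebound} records: it contains a term $\big(\frac{\log(1/\delta)}{\Delta}\big)^{(\beta+1)/2}$ in the exponent of $n$, which for $\delta=n^{-10}$ becomes $\Theta\big((\log n)^{(\beta+1)/2}\big)$, making the running time $\exp\big[\Theta\big((\log n)^{1+(\beta+1)/2}\big)\big]$ --- quasi-polynomial for every $\beta\in(0,2)$, not polynomial. In short, the PAP cost strictly dominates for inverse-polynomial $\delta$ and breaks the claimed bound \eqref{eq:thesimpletimebound}.

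The paper sidesteps this by a two-phase error reduction which your proposal omits. The main \META sweep runs with $\delta$ a \emph{constant} in $n$ ($\delta\DEFtheta\tilde\Theta[(\log\frac1\Delta)^{-2}]$), so the PAP's bond dimension stays $n^{\bigO(\Delta^{-(\beta+1)/(2-\beta)})}$. Only after the sweep reaches $\V_\T$ does the algorithm amplify the overlap from constant to $\epsilon=n^{-10}$ by iterating the trivial isotropic AGSP $|\EG|-H$ --- whose target space is exactly $\Z$, so no energy-dependent truncation is needed, and whose poor shrinking factor $1-\Omega(\Delta/n)$ is good enough over $\bigO(\frac n\Delta\log\frac n{\epsilon\Delta})$ iterations --- interleaved with $\Trim_{\xifinal}$ (line~\ref{ln:basic} of $\GSm$, analyzed via Observations~\ref{obs:simpleAGSP} and~\ref{obs:simpleAGSPapplication} and Corollary~\ref{cor:usefultrimcor}). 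Each such iteration is a cheap tensor-network operation, so polynomial time is preserved. You would need to add this final amplification phase to make the argument go through. A secondary omission: the efficient PAP requires a running lower bound $\E$ on the local ground energy to set the soft-truncation level, and $\Enhancem$ maintains and improves this invariant up the \META-tree (Corollary~\ref{cor:Enhancecor}); your sketch does not track this, though that is closer to a bookkeeping detail than a structural gap.
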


The accuracy parameter $\epsilon$ can be replaced with any $\epsilon=n^{-\bigO(1)}$. Version 1 of this paper had a factor $D^k$ in the time bound of theorem \ref{thm:mainthm} with the erroneous exponent $k=5$. The actual dependence on $D$ is a polynomial factor $D^k$ with constant $k\ge12$ (absorbed in \eqref{eq:thesimpletimebound} since $D=n^{\bigO(1)}$).

\section{Partial knowledge and operations}
The notion of a \emph{viable set} was intoduced with the first algorithm for spin chains \cite{lvv15}. Viable sets make of the partial information which is known to an algorithm about $\Z$ at each intermediate step.

We will define viable subspaces using the following natural extension of definition \ref{def:close} for comparing subspaces of different dimension $\vd\Y>\vd\Z$, introduced in \cite{alvv17} (where \emph{it} rather than definition \ref{def:close} was called $\delta$-closeness):
\begin{definition}\label{def:majorizing}\mbox{}
Given subspaces $\Z,\Y\subsp\HS$, write
\[\shadow\Y\Z :=\min_{{\ket \varphi \in\sphere(\Z)}} \|\PP_\Y\ket \varphi\|^2,\quad\text{ and let }\quad\almaj\Y\Z:=1-\shadow\Y\Z.\]
		\(\Y\) is \emp{\(\mmu\)-majorizing} or \emp{\(\boldsymbol\delta\)-almost majorizing} for \(\Z\) if \(\shadow\Y\Z\ge\mu=1-\delta\).
\end{definition}
We observe that for two subspaces of the same dimension, showing that the two are \(\delta\)-close reduces to a one-sided problem.
\begin{lemma}[Symmetry lemma]\label{lem:symmetric}
Let $\delta<1$. Two subspaces \(\Y,\Z\subsp\HS\) are \(\delta\)-close if and only if
\begin{enumerate}\item\label{it:majitem}\(\Y\) is \(\delta\)-almost majorizing for \(\Z\), and
\item\label{it:dim}\(\vd\Y\le\vd\Z\).\end{enumerate}
\end{lemma}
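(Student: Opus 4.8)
The plan is to reduce the whole statement to a single fact about the matrix $M=\Into_\Y^\dag\Into_\Z$ and its smallest singular value. First I would rewrite $\shadow\Y\Z$ in terms of $M$: for a unit vector $\ket\varphi\in\sphere(\Z)$ write $\ket\varphi=\Into_\Z\ket c$ with $\ket c$ a unit vector in the coefficient space of $\Z$; since $\Into_\Y$ and $\Into_\Z$ are isometries and $\PP_\Y=\Into_\Y\Into_\Y^\dag$, we get $\|\PP_\Y\ket\varphi\|=\|\Into_\Y^\dag\ket\varphi\|=\|M\ket c\|$. As $\ket\varphi$ ranges over $\sphere(\Z)$, the vector $\ket c$ ranges over the unit sphere of the coefficient space, so $\shadow\Y\Z=\min_{\|\ket c\|=1}\|M\ket c\|^2$. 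This equals $\sigma_{\min}(M)^2$, with the understanding that $\sigma_{\min}(M)=0$ whenever $\vd\Y<\vd\Z$ (in that case $M$ has a nontrivial kernel).

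The second step is a dimension count. If $\vd\Y<\vd\Z$ then by the previous paragraph $\shadow\Y\Z=0$, so — because $\delta<1$ — the subspace $\Y$ cannot be $\delta$-almost majorizing for $\Z$. Hence condition \ref{it:majitem} already forces $\vd\Y\ge\vd\Z$, and combined with condition \ref{it:dim} it gives $\vd\Y=\vd\Z$, so $M$ is square.

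With $M$ square the two implications are immediate. If $\Y\close\delta\Z$ then by definition $M$ is a bijection, so $\vd\Y=\vd\Z$ (condition \ref{it:dim}), and every singular value of $M$ — in particular $\sigma_{\min}(M)$ — is at least $\sqrt{1-\delta}$, whence $\shadow\Y\Z=\sigma_{\min}(M)^2\ge1-\delta$, which is condition \ref{it:majitem}. Conversely, assuming conditions \ref{it:majitem} and \ref{it:dim}, Step two makes $M$ a square matrix with $\sigma_{\min}(M)^2=\shadow\Y\Z\ge1-\delta>0$; a square matrix with nonzero smallest singular value is invertible, hence a bijection, and all of its singular values are $\ge\sqrt{1-\delta}$, so $\Y\close\delta\Z$.

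I do not expect a genuine obstacle here. The only point that needs a little care is the degenerate case $\vd\Y<\vd\Z$ in the first step: one should not write $\shadow\Y\Z=\sigma_{\min}(M)^2$ without noting that $M$ is then rank-deficient, since it is precisely this observation that lets condition \ref{it:majitem} absorb half of the dimension constraint and makes the "one-sided" characterization work.
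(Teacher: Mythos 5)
Your proof is correct and follows essentially the same route as the paper's: you relate $\shadow\Y\Z$ to $\sigma_{\min}(M)^2$ (the paper writes this as $\lambda_{\min}(\Into_\Z^\dag\PP_\Y\Into_\Z)=\lambda_{\min}(M^\dag M)$), observe that condition~\ref{it:majitem} alone already forces $\vd\Y\ge\vd\Z$ by rank/kernel considerations, and then read off the equivalence for square $M$. The only cosmetic difference is that you work directly with $M$ and its kernel rather than with the Gram matrix $M^\dag M$ and its rank.
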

\begin{proof}Write $\shadow{\Y}\Z$ as $\lambda_{\min}(\Into_\Z^\dag\PP_\Y\Into_\Z)$.
	If $\Y$ is $\delta$-almost majorizing for $\Z$, then $\Into_\Z^\dag\PP_\Y\Into_\Z$ has full rank $\vd\Z$, so $\vd\Y\ge\vd\Z$. Thus, it suffices to show that when the dimensions are equal, $\Y$ is $(1-\delta)$-majorizing for $\Z$ iff $\Y\close\delta\Z$. This is true because $M^\dag M=\Into_\Z^\dag\PP_\Y\Into_\Z$. \end{proof}

For subspaces of equal dimension \(\vd\V=\vd\W\) we write \(\almaj{\W}{\V}\) as \(\delta_{\{\V,\W\}}\). 

\begin{definition}
	A \emp{partial \(\mmu\)-majorizer} for \(\Z\subsp\HS_\lhalf\otimes\HS_\rhalf\) is a subspace \(\V\subsp\HS_\lhalf\) whose extension \(\V\otimes\HS_\rhalf\subsp\HS_{\lhalf\rhalf}\) is \(\mu\)-majorizing for \(\Z\).
\\	
A partial $(1-\delta)$-majorizer is said to be \emp{\(\ddelta\)-viable} \textup{\cite{lvv15,alvv17}} for \(\Z\).
\end{definition}

\begin{lemma}\label{lem:viableclose}
	A subspace \(\V\subsp\HS_\lhalf\) is \(\delta\)-viable for \(\Z\subsp\HS_{\lhalf\rhalf}\) if and only if \(\widetilde\Z:=(\PP_\V\otimes \Id_{\rhalf})\Z\) is \(\delta\)-close to \(\Z\).
\end{lemma}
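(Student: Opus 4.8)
The plan is to deduce the lemma from the Symmetry Lemma (Lemma~\ref{lem:symmetric}) via a single clean identity. Write $P=\PP_\V\otimes\Id_\rhalf$, so that $P=\PP_{\V\otimes\HS_\rhalf}$ is the orthogonal projection onto $\V\otimes\HS_\rhalf$ and $\widetilde\Z=P\Z$. The first and main step is to show that for every $\ket\varphi\in\Z$ the orthogonal projection of $\ket\varphi$ onto $\widetilde\Z$ is \emph{exactly} $P\ket\varphi$:
\[\PP_{\widetilde\Z}\ket\varphi=(\PP_\V\otimes\Id_\rhalf)\ket\varphi\qquad(\ket\varphi\in\Z).\]
To see this I would decompose $\ket\varphi=P\ket\varphi+(\Id-P)\ket\varphi$: the first summand lies in $P\Z=\widetilde\Z$, and the second is orthogonal to all of $\V\otimes\HS_\rhalf$ — hence to $\widetilde\Z\subseteq\V\otimes\HS_\rhalf$ — because $P$ is a self-adjoint projection onto $\V\otimes\HS_\rhalf$. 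Uniqueness of the orthogonal decomposition relative to $\widetilde\Z$ gives the identity.

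The second step packages this into the majorizing quantity $\shadow{\cdot}{\Z}$ of Definition~\ref{def:majorizing}. Taking norms in the identity above and minimizing over $\ket\varphi\in\sphere(\Z)$ yields
\[\shadow{\widetilde\Z}{\Z}=\min_{\ket\varphi\in\sphere(\Z)}\|\PP_{\widetilde\Z}\ket\varphi\|^2=\min_{\ket\varphi\in\sphere(\Z)}\|(\PP_\V\otimes\Id_\rhalf)\ket\varphi\|^2=\shadow{\V\otimes\HS_\rhalf}{\Z}.\]
Thus $\V$ is $\delta$-viable for $\Z$ — i.e. $\V\otimes\HS_\rhalf$ is $(1-\delta)$-majorizing for $\Z$, i.e. $\shadow{\V\otimes\HS_\rhalf}{\Z}\ge 1-\delta$ — if and only if $\widetilde\Z$ is $\delta$-almost majorizing for $\Z$.

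Finally I would invoke the Symmetry Lemma. Since $\widetilde\Z=P\Z$ is a linear image of $\Z$, we always have $\vd{\widetilde\Z}\le\vd\Z$, so condition~\ref{it:dim} of Lemma~\ref{lem:symmetric} is automatic; hence ``$\widetilde\Z$ is $\delta$-almost majorizing for $\Z$'' is equivalent to ``$\widetilde\Z\close\delta\Z$''. Chaining this with the equivalence from the second step proves the lemma. The only point requiring any care is the first step: one must use the containment $\widetilde\Z\subseteq\V\otimes\HS_\rhalf$ (not just abstract orthogonality) to conclude that $(\Id-P)\ket\varphi\perp\widetilde\Z$; everything after that is bookkeeping. (As a harmless byproduct, when $\delta<1$ the map $P|_\Z$ is injective, so in fact $\vd{\widetilde\Z}=\vd\Z$, though this is not needed.)
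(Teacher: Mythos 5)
Your proof is correct and strengthens the paper's argument slightly. The paper handles the two implications separately: for the forward direction it exhibits the witness $\ket{\tilde\psi}\propto(\PP_\V\otimes\Id_\rhalf)\ket\psi\in\widetilde\Z$ and only \emph{lower-bounds} $\shadow{\widetilde\Z}{\Z}$ by the overlap $|\bracket{\psi}{\tilde\psi}|^2\ge1-\delta$; for the converse it separately invokes monotonicity, namely that $\widetilde\Z\subseteq\V\otimes\HS_\rhalf$ forces $\shadow{\V\otimes\HS_\rhalf}{\Z}\ge\shadow{\widetilde\Z}{\Z}$. Your version observes that $(\PP_\V\otimes\Id_\rhalf)\ket\psi$ is not just a favourable element of $\widetilde\Z$ but is exactly $\PP_{\widetilde\Z}\ket\psi$ for $\ket\psi\in\Z$: it lies in $\widetilde\Z=P\Z$ while the complementary part $(\Id-P)\ket\psi$ lies in $(\V\otimes\HS_\rhalf)^\perp\subseteq\widetilde\Z^\perp$, and uniqueness of the orthogonal decomposition does the rest. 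This upgrades the paper's one-sided inequality to the equality $\shadow{\widetilde\Z}{\Z}=\shadow{\V\otimes\HS_\rhalf}{\Z}$, from which both directions of the lemma drop out at once via the Symmetry Lemma and the automatic bound $\vd{\widetilde\Z}\le\vd{\Z}$. The two proofs are in the same spirit and both hinge on Lemma~\ref{lem:symmetric}, but yours is tidier in that one identity replaces two distinct arguments; the only place that needs care is exactly where you flagged it, i.e.\ using the containment $\widetilde\Z\subseteq\V\otimes\HS_\rhalf$ to get $(\Id-P)\ket\psi\perp\widetilde\Z$.
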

\begin{proof}
	Given a unit vector \(\ket\psi\in\Z\), define \(\ket{\tilde\psi}=(\PP_\V\otimes \Id)\ket\psi/\sqrt{\bra\psi \PP_\V\otimes \Id\ket\psi}\in\widetilde\Z\). Then \(\bracket\psi{\tilde\psi}^2={\bra\psi \PP_\V\otimes \Id\ket\psi}\ge1-\delta\), and thus if \(\V\) is \(\delta\)-viable for \(\Z\),
\[\shadow{\widetilde\Z}\Z=\min_{\ket\psi\in \sphere(\Z)}\max_{\ket{\psi'}\in\sphere(\widetilde\Z)}|\bracket{\psi}{\psi'}|^2\ge\min_{\ket\psi\in\sphere(\Z)}|\bracket\psi{\tilde\psi}|^2\ge1-\delta.\]
Hence \(\widetilde\Z\) is \(\delta\)-almost majorizing for \(\Z\). Since \(\vd{\widetilde\Z}\le\vd\Z\), the symmetry lemma implies that the subspaces are \(\delta\)-close.

Conversely, if \(\widetilde\Z\) is \(\delta\)-close to \(\Z\) then the fact that \(\V\otimes\HS_{\rhalf}\) contains \(\widetilde\Z\) implies that it is \(\delta\)-almost majorizing for \(\Z\). By definition this means that \(\V\) is \(\delta\)-viable for \(\Z\).
\end{proof}

The following useful lemma says that the the $\delta$-almost majorizing property is approximately transitive:
\begin{lemma}[\cite{alvv17} lemma 2.5]\label{lem:robust}
	Suppose that \(\Y\subsp\HS_{{\lhalf}{\rhalf}}\) is \(\varepsilon\)-majorizing for \(\Z\subsp\HS_{{\lhalf}{\rhalf}}\) and that \(\V\subsp\HS_{\lhalf}\) is \(\delta\)-viable for \(\Y\). Then \(\V\) is \(\delta'\)-viable for \(\Z\) where \(\delta'=\delta+\varepsilon+2\sqrt{\delta\varepsilon}\le\min\{2(\delta+\varepsilon),\delta+3\sqrt\varepsilon\}\).
\end{lemma}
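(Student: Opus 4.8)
The plan is to unfold the definitions and reduce to a triangle inequality for a metric on rays. By the definition of viability, the hypothesis ``$\V$ is $\delta$-viable for $\Y$'' says exactly that $\W:=\V\otimes\HS_\rhalf$ satisfies $\shadow\W\Y\ge1-\delta$, and the conclusion ``$\V$ is $\delta'$-viable for $\Z$'' says $\shadow\W\Z\ge1-\delta'$; so the lemma is just the assertion that almost-majorization composes, namely that $\shadow\W\Y\ge1-\delta$ together with $\shadow\Y\Z\ge1-\varepsilon$ implies $\shadow\W\Z\ge1-\delta'$. Since $\delta'\ge1$ whenever $\delta\ge1$ or $\varepsilon\ge1$, in which case $\shadow\W\Z\ge0\ge1-\delta'$ holds trivially, I may assume $\delta,\varepsilon\in[0,1)$.

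The estimate I would use is packaged through $\d(\ket u,\ket v):=\sqrt{1-|\bracket uv|^2}=\tfrac12\|\ket u\bra u-\ket v\bra v\|_1$ on unit vectors modulo phase, which is a metric because the trace norm is, and so obeys the triangle inequality. For a unit vector $\ket\psi$ and a nonzero subspace $\W'\subsp\HS$, extend it by $\d(\ket\psi,\W'):=\min_{\ket w\in\sphere(\W')}\d(\ket\psi,\ket w)$, which equals $\sqrt{1-\|\PP_{\W'}\ket\psi\|^2}$ with the minimum attained at the normalized projection of $\ket\psi$ onto $\W'$. In this language the two hypotheses say: $\d(\ket\psi,\Y)\le\sqrt\varepsilon$ for every $\ket\psi\in\sphere(\Z)$, and $\d(\ket\chi,\W)\le\sqrt\delta$ for every $\ket\chi\in\sphere(\Y)$.

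The core step: fix $\ket\psi\in\sphere(\Z)$. Since $\|\PP_\Y\ket\psi\|^2\ge1-\varepsilon>0$, the vector $\ket{\hat\chi}:=\PP_\Y\ket\psi/\|\PP_\Y\ket\psi\|$ is a well-defined unit vector in $\Y$ realizing $\d(\ket\psi,\Y)$, so $\d(\ket\psi,\ket{\hat\chi})\le\sqrt\varepsilon$; choosing $\ket w\in\sphere(\W)$ that realizes $\d(\ket{\hat\chi},\W)\le\sqrt\delta$, the triangle inequality gives
\[\d(\ket\psi,\W)\le\d(\ket\psi,\ket w)\le\d(\ket\psi,\ket{\hat\chi})+\d(\ket{\hat\chi},\ket w)\le\sqrt\varepsilon+\sqrt\delta.\]
Squaring and using $\d(\ket\psi,\W)^2=1-\|\PP_\W\ket\psi\|^2$, then taking the minimum over $\ket\psi\in\sphere(\Z)$, gives $\shadow\W\Z\ge1-\delta'$ with $\delta'=(\sqrt\delta+\sqrt\varepsilon)^2=\delta+\varepsilon+2\sqrt{\delta\varepsilon}$, i.e.\ $\V$ is $\delta'$-viable for $\Z$. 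The two comparison bounds are immediate: $2\sqrt{\delta\varepsilon}\le\delta+\varepsilon$ by AM--GM gives $\delta'\le2(\delta+\varepsilon)$, while $\varepsilon+2\sqrt{\delta\varepsilon}=\sqrt\varepsilon\,(\sqrt\varepsilon+2\sqrt\delta)\le3\sqrt\varepsilon$ (using $\delta,\varepsilon<1$) gives $\delta'\le\delta+3\sqrt\varepsilon$.

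I expect the only point needing any care to be the triangle inequality for $\d$ in its mixed vector/subspace form, and this is exactly what the identification $\d(\ket u,\ket v)=\tfrac12\|\ket u\bra u-\ket v\bra v\|_1$ settles, the subspace versions being infima of a genuine metric; the degenerate cases ($\PP_\Y\ket\psi=0$, or $\delta$ or $\varepsilon\ge1$) are all absorbed as noted above. A route avoiding the metric language would instead write $\ket\psi=\PP_\Y\ket\psi+\ket\eta$ with $\|\ket\eta\|\le\sqrt\varepsilon$, apply $\delta$-viability to the normalization of $\PP_\Y\ket\psi$, and combine via $\|\PP_\W\ket\psi\|\ge\|\PP_\W\PP_\Y\ket\psi\|-\|\ket\eta\|$ together with an expansion of the square; this is correct but passes through a clumsier intermediate bound before simplifying to the same $\delta'$.
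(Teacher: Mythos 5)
Your argument is correct. The paper cites this statement from \cite{alvv17} (lemma 2.5) without reproducing a proof, so there is nothing internal to compare against; what you supply is a self-contained and valid derivation. The computation $\tfrac12\|\ket u\bra u-\ket v\bra v\|_1=\sqrt{1-|\bracket uv|^2}$ for unit vectors is right (the difference is a traceless rank-$\le2$ Hermitian with eigenvalues $\pm\sqrt{1-|\bracket uv|^2}$), which makes $\d(\cdot,\cdot)$ a genuine metric on projective space; the identification $\d(\ket\psi,\W')=\sqrt{1-\|\PP_{\W'}\ket\psi\|^2}$ via Cauchy--Schwarz is also right, so the mixed vector/subspace triangle inequality you invoke follows from the vector case by taking the infimum at the endpoints. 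The hypothesis reduction is correct: ``$\V$ is $\delta$-viable for $\Y$'' is precisely $\shadow{\V\otimes\HS_\rhalf}{\Y}\ge1-\delta$, and ``$\Y$ is $\varepsilon$-(almost-)majorizing for $\Z$'' is $\shadow\Y\Z\ge1-\varepsilon$, matching how the lemma is actually used later in the paper (e.g.\ in the proof of the \enhancet‐lemma, where $\Z\close{\delta/5}\widetilde\Z$ supplies the $\varepsilon$). The reduction to $\delta,\varepsilon<1$ dispatches the $\PP_\Y\ket\psi=0$ degeneracy, and the two comparison bounds on $\delta'$ follow by AM--GM and by $\sqrt\varepsilon(\sqrt\varepsilon+2\sqrt\delta)\le3\sqrt\varepsilon$ exactly as you say. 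This is the standard ``Fubini--Study triangle inequality'' route for composing almost-majorizers; your sketched alternative via $\ket\psi=\PP_\Y\ket\psi+\ket\eta$ is the bare-hands version that \cite{alvv17}-style proofs typically write out, and indeed yields the same $\delta'$ after simplification.
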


\subsection{Approximate (ground state) projectors}

\emph{Approximate ground state projectors} (AGSPs) are operators whose action on a vector brings it closer to the ground states of the target Hamiltonian. Their construction is a central challenge in the theory of local Hamiltonians. A modern and highly flexible definition that of \emph{spectral} AGSPs \cite{alvv17}, which we use in a slightly modified form.

A typical construction of an AGSP $A$ begins with modifying the Hamiltonian $H$ to bound its operator norm---the resulting $A$ is a spectral AGSP for a modified operator $\widetilde H$ whose space of low-energy states $\widetilde\Z=\V_{(-\infty,\tilde\E_0+\varepsilon]}(\widetilde H)$ is close to  the space of ground states $\Z$ of $H$. We consider the spectral AGSP property in terms of the subspace $\widetilde\Z\approx\Z$ rather than the operator $\widetilde H\not\approx H$, resulting in the following definition:
\begin{definition}\label{def:apdef}
	A \emp{\(\sshrink\)-approximate projector} with target subspace $\Z\subsp\HS$ is a Hermitian operator $A$ which \emp{commutes with \(\PP_\Z\)} and satisfies
	\[\Into_{\!\Z}^\dag A\Into_{\!\Z}\opge \Id_\Z,\AND \| A\Into_{\Z^\perp}\|\le\sqrt\shrink,\]
	where $\Into_\Z:\Z\hookrightarrow\HS$ is the isometric embedding of $\Z$.
\end{definition}
Here, $\opge$ denotes the Loewner partial order on Hermitian operators.
In words, $A$ maps each subspace $\Z$ and $\Z^\perp$ to itself, $A$ is a dilation on $\Z$, and it is a contraction with Lipschitz bound $\sqrt\shrink$ on $\Z^\perp$.  $\shrink$ is the \emph{shrinking factor} of the approximate projector. 

Definition \ref{def:apdef} is similar to being a spectral AGSP in the sense of \cite{alvv17} for a Hamiltonian whose space of low-energy states is $\Z$. However, our definition is agnostic to the eigen\emph{vectors} within each subspace $\Z$ and $\Z^\perp$.  
We also do not require the orthogonal part $\Into_{\Z^\perp}^\dag A\Into_{\Z^\perp}$ to be positive semidefinite.

\subsection{Partial approximate projectors}
Viable subspaces of partial majorizers $\V$ represent the partial knowledge about $\Z$ with respect a set of sites $\Reg$. We now introduce the notion of \emph{partial approximate projectors} (PAPs) with which we represent \emph{operators} on within the region $\Reg$ which improve the partial majorizers $\V$.


Denote the vector space of linear operators on a Hilbert space $\HS_\lhalf$ by $\lin(\lhalf)=\lin(\HS_\lhalf)$.
We first specialize the terminology of viability to the $\delta=0$ case. 
\begin{definition}
	Let $\lin_{\lhalf\rhalf}$ be a bipartitie Hilbert space (typically the space of linear operators on $\HS_{\lhalf\rhalf}$). We say that a subspace \(\L\) of $\lin_\lhalf$ is {perfect\hspace{.1ex}l\hspace{.2ex}y viable}, or \emp{Viable} (with capitalized V), for any set \(\S\subset\L\otimes\lin_\rhalf\).
\end{definition}

\begin{definition}\label{def:partialAGSP}
	A \emp{partial \(\sshrink\)-approximate projector \textup{(\(\sshrink\)-PAP)}} on \(\HS_\lhalf\) with target space \(\Z\subsp\HS_{\lhalf\rhalf}\) is a space of operators \(\L\subsp\lin(\HS_\lhalf)\) which is Viable for some $\shrink$-approximate projector \(A\in\herm(\HS_{\lhalf\rhalf})\) with target space $\Z$.
\end{definition}

The following lemma quantifies the enhancement of the viability from applying a PAP:
\begin{lemma}[Restatement \cite{alvv17} lemma 6]\label{lem:AGSPapplication}
	Let \(\V\subsp\HS_\lhalf\), and let \(\L\subsp\lin(\HS_\lhalf)\) be a \(\shrink\)-\textup{PAP} for \(\Z\subsp\HS_{\lhalf\rhalf}\). 
If \(\V\) is a partial \(\mu\)-majorizer for \(\Z\), then \(\L\V=\{ L \ket\psi\mid L \in\L,\ket\psi\in\V\}\) is \(\delta'\)-viable for \(\Z\) where \(\delta'=\shrink/\mu^2\).
\end{lemma}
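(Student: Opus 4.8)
The plan is to reduce the statement to the single-operator AGSP bound by a pointwise argument. Let $A \in \herm(\HS_{\lhalf\rhalf})$ be a $\shrink$-approximate projector with target $\Z$ for which $\L$ is Viable, i.e. $A \in \L \otimes \lin_\rhalf$. First I would fix an arbitrary unit vector $\ket\varphi \in \sphere(\Z)$ and track where $A\ket\varphi$ lands. Since $A$ commutes with $\PP_\Z$, we may decompose $A\ket\varphi = \PP_\Z A \ket\varphi$ (because $A$ preserves $\Z$), and the dilation bound $\Into_\Z^\dag A \Into_\Z \opge \Id_\Z$ gives $\|A\ket\varphi\|^2 = \bra\varphi A^2 \ket\varphi \ge \bra\varphi A \Into_\Z\Into_\Z^\dag A\ket\varphi \cdots$; more directly, since $\PP_\Z A \ket\varphi = A\ket\varphi$ and $A\opge \Id$ on $\Z$, we get $\|A\ket\varphi\| \ge 1$ after normalizing the relevant quadratic form — so the component of $A\ket\varphi$ inside $\Z$ has norm at least $1$.

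Next I would feed in the partial-majorizer hypothesis. Since $\V$ is a partial $\mu$-majorizer for $\Z$, the projection $\PP_\V \otimes \Id_\rhalf$ preserves at least a $\mu$-fraction of the squared norm of every unit vector of $\Z$. Applying this to the (normalized) vector $A\ket\varphi / \|A\ket\varphi\|$ — or rather, comparing $A\ket\varphi$ with $(\PP_\V\otimes\Id)A\ket\varphi$ — I would use that $A$ acts only on $\lhalf$ through factors in $\L$, so $(\PP_\V \otimes \Id_\rhalf) A = A (\PP_\V \otimes \Id_\rhalf)$ only if $A$ were supported on $\V$, which it is not; instead the point is that $(\PP_\V\otimes\Id_\rhalf)A\ket\varphi \in \L\V \otimes \HS_\rhalf$ up to the $\rhalf$ tensor structure, because each $L \in \L$ maps into $\L$-image and $\PP_\V L \in \lin(\HS_\lhalf)$ lies in the relevant span. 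The cleaner route: write $A = \sum_i L_i \otimes R_i$ with $L_i \in \L$; then $(\PP_\V \otimes \Id) A \ket\varphi = \sum_i (\PP_\V L_i) \ket{\psi} \cdots$ shows $(\PP_\V\otimes\Id)A\ket\varphi$ lies in $\L\V \otimes \HS_\rhalf$, hence in $\widetilde\Y := (\L\V)\otimes\HS_\rhalf$. So it suffices to lower-bound $\|\PP_{\widetilde\Y}\ket\varphi\|^2$, and we estimate $\|\PP_{\widetilde\Y}\ket\varphi\| \ge \|(\PP_\V\otimes\Id)A\ket\varphi\| / \|A\ket\varphi\|_{\text{relative}}$, combining the $\mu$-majorizing bound on the $\Z$-component of $A\ket\varphi$ with the $\sqrt\shrink$-contraction bound $\|A\Into_{\Z^\perp}\| \le \sqrt\shrink$ on the orthogonal complement.

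Concretely, the calculation I expect is: decompose $A\ket\varphi = \ket{u} + \ket{w}$ with $\ket u \in \Z$, $\ket w \perp \Z$; then $\|\ket u\| \ge 1$ (from the dilation) while $\ket w = A\PP_{\Z^\perp}\ket{\varphi'}$ for the $\Z^\perp$-part of $\ket\varphi$ — but $\ket\varphi \in \Z$ so actually $\ket w = 0$ by the commutation, giving $A\ket\varphi \in \Z$ outright. Then $(\PP_\V\otimes\Id)A\ket\varphi$ has squared norm $\ge \mu \|A\ket\varphi\|^2 \ge \mu$, and since $A\ket\varphi \in \Z$ with $\|A\ket\varphi\|^2 \le \bra\varphi A^2\ket\varphi$, I would divide through: the overlap of the unit vector $A\ket\varphi/\|A\ket\varphi\|$ with $\widetilde\Y$ is at least $\sqrt\mu$, but we need the overlap of $\ket\varphi$ itself. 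Here the contraction enters to bound $\|A\ket\varphi\|$ from above is not what we want; rather we want $\bracket{\varphi}{\widetilde\Y}$. We use $|\bracket{\varphi}{\zeta}| \ge |\bra\varphi A \Into_{\widetilde\Y}\cdots|$... — this is the step where one inverts $A$ on $\Z$. Precisely: $\shadow{\widetilde\Y}{\Z} \ge \min_\varphi \|(\PP_\V\otimes\Id)A\ket\varphi\|^2 / \|A\|_{\text{on }\Z}^2$ is the wrong direction; instead use $\ket\varphi = A^{-1}(A\ket\varphi)$ on $\Z$ and $\|A^{-1}\|_{\Z} \le 1$, so $\|\PP_{\widetilde\Y}\ket\varphi\|^2 \ge \|(\PP_\V\otimes\Id)A\ket\varphi\|^2 \ge \mu\|A\ket\varphi\|^2 \ge \mu$, but this ignores the $\shrink$ and gives the wrong bound — so the genuine argument must instead bound $\delta' = 1 - \shadow{\L\V}{\Z}$ by expanding $\ket\varphi$ against its image under $A$ normalized, using that the $\Z^\perp$-leakage of $(\PP_{\V^\perp}\otimes\Id)\ket\varphi$ is controlled by $\sqrt\shrink$ after applying $A$.

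The main obstacle, then, is correctly orchestrating these two bounds: the $\sqrt\shrink$-contraction is used to show that the part of $\ket\varphi$ \emph{missed} by $\PP_\V\otimes\Id$ is small after one application of $A$ (because $A$ shrinks it relative to the part it keeps), while the dilation + $\mu$-majorizing property shows the kept part survives with weight $\ge \mu$; the ratio $\shrink/\mu^2$ then emerges as $\delta'$. I would structure it as: let $\ket\xi = (\PP_{\V^\perp}\otimes\Id_\rhalf)\ket\varphi$ (the "bad" part), show $\|A\ket\xi\|^2 \le \shrink$ is false in general since $\ket\xi \notin \Z^\perp$ — so one first projects: $\ket\xi = \ket{\xi_0} + \ket{\xi_1}$, $\ket{\xi_0}\in\Z$; bound $\|\ket{\xi_0}\|$ using that $\V$ is a $\mu$-majorizer (the $\Z$-component of the $\V^\perp$-part is small, norm$^2 \le 1-\mu = \delta$), and bound the $A$-image of $\ket{\xi_1}$ by $\sqrt\shrink\|\ket{\xi_1}\| \le \sqrt\shrink$. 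Combining, $\|(\PP_{(\L\V)^\perp}\otimes\Id)A\ket\varphi\|$ is small, and since $A$ is a dilation on $\Z$ (so doesn't lose the good part), we divide and collect terms to get $\delta' = \shrink/\mu^2$. This final bookkeeping — keeping the normalization by $\|A\ket\varphi\|\ge 1$ straight through the triangle-inequality estimates — is the only delicate part; everything else is an invocation of Definitions \ref{def:apdef} and \ref{def:majorizing}. Since this is exactly the restatement of \cite{alvv17} lemma 6, I would in practice cite that argument and simply verify the translation to the present operator-space language is faithful.
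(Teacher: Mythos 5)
Your final answer is the same as the paper's: reduce to \cite{alvv17}~lemma~6 and verify the translation. That is indeed all the paper does, so on the headline approach you agree.

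However, the exploratory sketch preceding that conclusion contains a concrete error and misses the one substantive point of the reduction. You write that $(\PP_\V\otimes\Id)A\ket\varphi=\sum_i(\PP_\V L_i)\ket\varphi$ ``lies in $\L\V\otimes\HS_\rhalf$.'' It does not: $\PP_\V L_i$ has range inside $\V$, not inside $\L\V=\{L\ket v\mid L\in\L,\ket v\in\V\}$. The correct observation is the reverse composition: $A(\PP_\V\otimes\Id_\rhalf)\ket\varphi=\sum_i(L_i\PP_\V\otimes R_i)\ket\varphi\in(\L\V)\otimes\HS_\rhalf$, i.e.\ one first projects onto $\V$ and then applies $A$. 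The argument in \cite{alvv17} hinges on this order (normalize $(\PP_\V\otimes\Id)\ket\varphi$, apply $A$, and land in the amplified space), so confusing it is not cosmetic.

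Separately, you promise to ``verify the translation to the present operator-space language is faithful'' without saying what needs checking. The paper's proof explicitly names it: taking the Hamiltonian to be $\PP_{\Z^\perp}$ with energy thresholds $(\eta_0,\eta_1)=(0,1)$ makes $A$ a spectral $\shrink$-AGSP in the sense of \cite{alvv17}, \emph{except} that \cite{alvv17} assumes $\PP_{\Z^\perp}A\PP_{\Z^\perp}\opge0$, which Definition~\ref{def:apdef} drops; one then observes the proof of \cite{alvv17}~lemma~6 never uses that positivity. That is the only nontrivial step in the reduction, and your proposal does not surface it.
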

\begin{proof}
	We reduce the statement to that of \cite{alvv17} lemma 6: Let $\L$ be Viable for the $\shrink$-approximate projector $A\in\herm(\HS_{\lhalf\rhalf})$ with target space $\Z$.
	Given $\Z\subsp\HS_{\lhalf\rhalf}$, consider the projection $\PP_{\Z^\perp}$ as the Hamiltonian. 
	Then $A$ is a spectral $\shrink$-AGSP for $(\PP_{\Z^\perp},\eta_0=0,\eta_1=1)$ in the sense of $\cite{alvv17}$ (except that $\PP_{\Z^\perp}A\PP_{\Z^\perp}$ is not positive, but the proof from \cite{alvv17} remains valid without this condition). 
\end{proof}

\subsection{Higher-degree Viability}
We introduce the notion of higher-degree viability which ensures that a local (on the interaction graph) space of operators $\L\subsp\lin(\lhalf)$ is Viable not just for an operator $H\in\lin({\lhalf\rhalf})$ but also for polynomials in $H$. Using the Chebyshev AGSP construction \cite{arad2013area} we argue that this property leads to effective PAPs.
\begin{definition}\label{def:orderm}
	For a set of operators $\S\in\lin(\HS)$ and $k\in\NN$, let
	\[\S^{\circ k}=\Big\{\prod_{i=1}^k S_i\:\Big|\: S_1,\ldots,S_k\in\S\Big\}.\]
	Let $\S^{\circ\,0}=\{\Id\}$. A subspace of operators \(\L\subsp\lin(\lhalf)\) is \emp{degree-\(\boldsymbol m\) Viable} for \(\S\subset\lin({\lhalf\barrier})\) if
	\[(\S\cup\{\Id_{\lhalf\barrier}\})^{\circ m}\subset\mc L\otimes \lin(\barrier).\]
\end{definition}
An equivalent condition is that $\prod_{i=1}^{k}S_i\in\L\otimes\lin(\barrier)$ for any $k\in[m]_0=\{0,1,\ldots,m\}$ and $S_1,\ldots,S_{k}\in\S$, where we evaluate the empty product as the identity operator (we also write $[m]=\{1,\ldots,m\}$). The \emph{Viability degree} of $\L\subsp\lin(\lhalf)$ for $\S\subset\lin({\lhalf\barrier})$ is the smallest $m$ such that $\L$ is degree-$m$ viable for $\S$.
\begin{lemma}[Transitivity]\label{lem:transitive}
	Consider a tripartite space \(\HS_{\lhalf\barrier\rhalf}\). If
\begin{enumerate}
	\item \(\L\subsp\lin(\lhalf)\) is degree-\(m\) Viable for \emp{subspace} \(\S\subsp\lin({\lhalf\barrier})\), and\label{it:degmitem}
	\item \(\S\subsp\lin({\lhalf\barrier})\) is Viable for \emp{subset} \(\R\subset\lin({\lhalf\barrier\rhalf})\),\label{it:AviabB}
\end{enumerate}
then \(\L\) is degree-$m$ Viable for $\R$. 
\end{lemma}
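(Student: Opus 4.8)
The plan is to unwind each of the three ``Viability'' statements into a concrete algebraic containment of operator subspaces, and then obtain the conclusion by expanding a product of simple tensors. Recall the relevant unwindings: ``$\S\subsp\lin(\lhalf\barrier)$ is Viable for $\R\subset\lin(\lhalf\barrier\rhalf)$'', with $\lin(\lhalf\barrier\rhalf)$ bipartite over $(\lhalf\barrier,\rhalf)$, means exactly $\R\subset\S\otimes\lin(\rhalf)$; by the equivalent reformulation of Definition \ref{def:orderm}, ``$\L\subsp\lin(\lhalf)$ is degree-$m$ Viable for $\S$'' means $\prod_{i=1}^k S_i\in\L\otimes\lin(\barrier)$ for every $k\in[m]_0$ and all $S_1,\dots,S_k\in\S$; and, again by that reformulation, the desired conclusion is that $\prod_{i=1}^k R_i\in\L\otimes\lin(\barrier\rhalf)$ for every $k\in[m]_0$ and all $R_1,\dots,R_k\in\R$.

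First I would fix $k\le m$ and $R_1,\dots,R_k\in\R$. By hypothesis~\ref{it:AviabB} each $R_i$ lies in $\S\otimes\lin(\rhalf)$, so it can be written as a finite sum $R_i=\sum_j S_{ij}\otimes T_{ij}$ with $S_{ij}\in\S$ and $T_{ij}\in\lin(\rhalf)$ (this is where it matters that $\S$ is a genuine subspace, not merely a set). Multiplying these expansions out and using that in the tensor product the $\lhalf\barrier$-factors and the $\rhalf$-factors multiply independently, $(S\otimes T)(S'\otimes T')=SS'\otimes TT'$, gives
\[\prod_{i=1}^k R_i=\sum_{j_1,\dots,j_k}\Big(\prod_{i=1}^k S_{ij_i}\Big)\otimes\Big(\prod_{i=1}^k T_{ij_i}\Big).\]
By hypothesis~\ref{it:degmitem} each inner factor $\prod_{i=1}^k S_{ij_i}$ is a product of $k\le m$ elements of $\S$, hence lies in $\L\otimes\lin(\barrier)$, while trivially $\prod_{i=1}^k T_{ij_i}\in\lin(\rhalf)$. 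Thus every summand lies in $(\L\otimes\lin(\barrier))\otimes\lin(\rhalf)=\L\otimes\lin(\barrier\rhalf)$, and since this is a linear subspace the whole sum lies there too. The degenerate case $k=0$ is the assertion $\Id_{\lhalf\barrier\rhalf}=\Id_{\lhalf\barrier}\otimes\Id_\rhalf\in\L\otimes\lin(\barrier\rhalf)$, which is the $k=0$ instance of hypothesis~\ref{it:degmitem} tensored with $\Id_\rhalf$.

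I do not anticipate a genuine obstacle: the argument is pure multilinearity of operator multiplication together with bookkeeping of tensor factors. The two points to be careful about are (i) invoking the canonical identification $\lin(\barrier)\otimes\lin(\rhalf)\cong\lin(\barrier\rhalf)$ correctly, and (ii) observing that forming products never increases the number of tensor-factor terms, so the degree bound $m$ is inherited verbatim rather than with any loss — which is what makes the conclusion ``degree-$m$ Viable'' (the same $m$) rather than something weaker.
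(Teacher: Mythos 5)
Your proof is correct and follows essentially the same route as the paper's: expand each $R_i$ as a sum of simple tensors $S_{ij}\otimes T_{ij}$, multiply out, and apply degree-$m$ Viability of $\L$ to each $\prod_i S_{ij_i}$. The only additions are your explicit treatment of the $k=0$ case and the remark on why $\S$ being a subspace matters, both of which the paper leaves implicit.
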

\begin{proof}
	Given $k\le m$ and operators $\tilde R_1,\ldots,\tilde R_{k}\in\R\subset\lin({\lhalf\barrier\rhalf})$, use item \ref{it:AviabB} to write each as $\tilde R_i=\sum_jS_{ij}\otimes R_{ij}$ where $S_{ij}\in\S\subsp\lin({\lhalf\barrier})$ and $R_{ij}\in\lin(\rhalf)$. Then
	\[\tilde R_1\cdots \tilde R_k\in\Span_{\jj}\big\{\textstyle{\prod_{i=1}^{k}S_{i,\jj(i)}\otimes\prod_{i=1}^{k}R_{i,\jj(i)}}\big\}\subsp\big(\L\lin(\barrier)\big)\otimes\lin(\rhalf),\]
	where the last inclusion is by item \ref{it:degmitem}. So $\tilde R_1\cdots\tilde R_k\in\L\otimes\lin({\barrier\rhalf})$, which proves the claim.
\end{proof}
\begin{corollary}
	If $\L\subsp\lin(\lhalf)$ is Viable for subspace $\S\subsp\lin(\lhalf\barrier)$ which in turn is Viable for $H\in\lin(\lhalf\barrier\rhalf)$, then $\L$ is Viable for the space of polynomials $\Span\{\Id,H,\ldots, H^m\}$.
\end{corollary}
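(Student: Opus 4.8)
The plan is to derive this corollary directly from the transitivity lemma (Lemma \ref{lem:transitive}) applied to the appropriate tripartition, after recognizing that the degree-$m$ Viability of $\L$ for a subspace $\S$ is automatic when $\L$ is merely Viable (degree-$1$, i.e. $\delta=0$) for $\S$ \emph{and} $\S$ is a subspace closed under products up to degree $m$. However, $\S$ need not be closed under products, so the cleaner route is to first observe that being Viable for a subspace $\S$ in the sense of the $\delta=0$ definition means $\S\subset\L\otimes\lin(\barrier)$, and then iterate.

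Concretely, first I would set up the tripartition $\HS_{\lhalf\barrier\rhalf}$ exactly as in Lemma \ref{lem:transitive}, taking the role of the ``subset $\R$'' in item \ref{it:AviabB} to be the singleton (or rather the set $\{H\}$, or its powers). The key intermediate claim is that $\L$ Viable for subspace $\S$ implies $\L$ is degree-$m$ Viable for $\S$ for \emph{every} $m$: indeed $(\S\cup\{\Id\})^{\circ m}\subset \Span(\S\cup\{\Id\})\otimes\lin(\barrier)\cdot\ldots$, but here one must be careful, since $\S^{\circ m}$ involves products of operators, not linear combinations. The correct statement is that if $\S$ is a \emph{subspace} of $\lin(\lhalf\barrier)$ with $\S\subset\L\otimes\lin(\barrier)$, then any product $S_1\cdots S_k$ with $S_i\in\S$ lies in $(\L\otimes\lin(\barrier))^{\circ k}\subset\L\otimes\lin(\barrier)$ because $\L\otimes\lin(\barrier)$ is closed under multiplication within $\lin(\lhalf\barrier)$ (as $\L$ is a subspace of $\lin(\lhalf)$ and products of operators of the form $L\otimes B$ factor as $(LL')\otimes(BB')$, but $LL'$ need not be in $\L$!). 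So in fact $\L\otimes\lin(\barrier)$ is \emph{not} generally closed under products; the genuine content must come from degree-$m$ Viability of $\L$ for $\S$ being a hypothesis somewhere. Re-examining, I believe the intended reading is: $\L$ Viable for $\S$ gives degree-$1$ Viability, and we want degree-$m$; the corollary as literally stated should perhaps read that $\L$ is \emph{Viable} (degree-$1$) for $\Span\{\Id,H,\ldots,H^m\}$ only if $\L$ has degree-$m$ Viability for $H$ built in. Taking the statement at face value, the plan is to apply Lemma \ref{lem:transitive} with $\S$ in the role of the degree-$1$-Viable subspace is insufficient, so instead I would interpret ``$\L$ Viable for $\S$'' together with ``$\S$ Viable for $H$'' as giving, via a direct unfolding, that $\L\otimes\lin(\barrier\rhalf)$ contains each $H^k$.

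The main obstacle is precisely this bookkeeping about whether $\L\otimes\lin(\barrier)$ is closed under operator products. The resolution: the relevant space is $\L\otimes\lin(\barrier)$ as a \emph{subspace} (not subalgebra), and a product $S_1\cdots S_k$ of elements of $\S\subset\L\otimes\lin(\barrier)$ expands, using $S_i=\sum_j L_{ij}\otimes B_{ij}$ with $L_{ij}\in\L$, into $\sum_{\jj}\big(\prod_i L_{i,\jj(i)}\big)\otimes\big(\prod_i B_{i,\jj(i)}\big)$; the left tensor factor $\prod_i L_{i,\jj(i)}$ lies in $\lin(\lhalf)$ but \emph{not necessarily in $\L$}. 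Hence the corollary genuinely requires the hypothesis that $\L$ is degree-$m$ Viable for $\S$ (not just Viable), and I would state the proof accordingly: assume $\L$ is degree-$m$ Viable for subspace $\S$, which is Viable for $H$; then $\Span\{\Id,H,\ldots,H^m\}\subset\S^{\circ\le m}\cdot(\ldots)$ — rather, since $\S$ is Viable for $H$ we have $H\in\S\otimes\lin(\rhalf)$, hence $H^k\in(\S\otimes\lin(\rhalf))^{\circ k}\subset(\S^{\circ k})\otimes\lin(\rhalf)$ (using that $\lin(\rhalf)$ \emph{is} a subalgebra, so the right factors multiply within it), and then $\S^{\circ k}\subset\L\otimes\lin(\barrier)$ by degree-$m$ Viability of $\L$. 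Combining, $H^k\in\L\otimes\lin(\barrier\rhalf)$ for all $k\le m$, which is exactly degree-$m$ Viability of $\L$ for $H$, i.e. Viability of $\L$ for $\Span\{\Id,H,\ldots,H^m\}$. This is simply Lemma \ref{lem:transitive} with $\R=\{H\}$, so the cleanest writeup is a one-line invocation of that lemma, noting additionally that degree-$m$ Viability for $\{H\}$ is the same as ordinary Viability for $\Span\{\Id,\ldots,H^m\}$.

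So the proof sketch I would write is: \emph{By Lemma \ref{lem:transitive} applied with $\R=\{H\}$, $\L$ is degree-$m$ Viable for $\{H\}$. By Definition \ref{def:orderm} this means $\prod_{i=1}^k S_i\in\L\otimes\lin(\barrier\rhalf)$ whenever each $S_i\in\{H,\Id\}$ and $k\le m$, i.e. $H^k\in\L\otimes\lin(\barrier\rhalf)$ for all $0\le k\le m$. Since $\L\otimes\lin(\barrier\rhalf)$ is a linear subspace, it contains $\Span\{\Id,H,\ldots,H^m\}$, which is the assertion that $\L$ is Viable for $\Span\{\Id,H,\ldots,H^m\}$.} The one subtlety to flag is that Lemma \ref{lem:transitive} is stated for $\S$ a subspace and $\R$ a subset, both of which hold here with $\R=\{H\}$ and $\S$ the given subspace, so no extra work is needed; the entire content of the corollary is this specialization.
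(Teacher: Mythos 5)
Your final argument is the correct one and matches what the paper leaves implicit (the corollary is stated without proof, as an immediate specialization of Lemma \ref{lem:transitive}): apply the lemma with $\R=\{H\}$ to conclude that $\L$ is degree-$m$ Viable for $\{H\}$, then observe that degree-$m$ Viability for the singleton $\{H\}$ unpacks via Definition \ref{def:orderm} to $(\{H\}\cup\{\Id\})^{\circ m}=\{\Id,H,\ldots,H^m\}\subset\L\otimes\lin(\barrier\rhalf)$, which is exactly Viability for $\Span\{\Id,H,\ldots,H^m\}$.

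You have also correctly caught that the corollary as printed is stated too weakly. The hypothesis reads ``$\L$ Viable for subspace $\S$,'' but Lemma \ref{lem:transitive} needs $\L$ to be \emph{degree-$m$} Viable for $\S$, and mere degree-$1$ Viability does not suffice: knowing only $\S\subset\L\otimes\lin(\barrier)$ and expanding each $S\in\S$ as $\sum_j L_j\otimes B_j$ with $L_j\in\L$, a product $S_1\cdots S_k$ produces terms whose $\lhalf$-factor lies in $\L^{\circ k}$ rather than in $\L$, and $\L$ is a subspace, not a subalgebra. So the hypothesis should read ``degree-$m$ Viable,'' mirroring the first condition of the lemma it specializes. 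Your exploratory reasoning correctly circles in on this point; the clean writeup is the one-line invocation you give at the end.
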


The notion of higher-degree Viability is useful in the construction of PAPs by the next lemma, which builds on the Chebyshev AGSP of \cite{arad2013area}.
For an interval $J\subset\RR$, let $\charac_J:\RR\to\{0,1\}$ be its indicator function so that $\charac_J(H)$ is a spectral projection for $H\in\herm(\HS)$. Let $\V_J(H)=\supp\charac_J(H)$ be the corresponding spectral subspace where we have used the \emph{support} of a Hermitian operator $\supp A=(\opn{Ker} A)^\perp$.
\begin{lemma}\label{lem:degmusage}
	Let $H\in\herm(\HS_{\lhalf\rhalf})$ be a Hamiltonian with
	\[\spec H\subset(-\infty,a-\Delta]\cup[a,b]\WHERE 8\Delta\le b-a,\]
	If $\L\subsp\lin(\HS_\lhalf)$ is degree-$m$ Viable for $H$, then $\L$ is a $\shrink$-\textup{PAP} for the spectral subspace $\V_{(-\infty,a-\Delta]}$, where
	\[\shrink\le2\exp\Big(-m\sqrt{\tfrac{2\Delta}{b-a}}\Big).\]
\end{lemma}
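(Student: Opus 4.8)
The plan is to build an explicit polynomial $p$ of degree $\le m$ such that $A=p(H)$ is a $\shrink$-approximate projector with target space $\Z=\V_{(-\infty,a-\Delta]}(H)$, and then invoke degree-$m$ Viability to conclude that $\L$ is Viable for $A$, hence a $\shrink$-PAP. Since $\spec H$ is contained in the disjoint union $(-\infty,a-\Delta]\cup[a,b]$, the spectral projection $\PP_\Z$ commutes with every polynomial in $H$, so the commutation requirement in Definition \ref{def:apdef} is automatic; the content is to choose $p$ so that $p(H)$ is a dilation ($\opge \Id$) on the low-energy block and a contraction of norm $\le\sqrt\shrink$ on the high-energy block. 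Concretely I would take $p$ to be (a rescaling of) the degree-$m$ Chebyshev polynomial $T_m$ of the first kind, affinely reparametrized so that the interval $[a,b]$ — where we want smallness — maps into $[-1,1]$, and evaluated at the point corresponding to $a-\Delta$ to get the dilation. This is exactly the AGSP construction of \cite{arad2013area}.

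The key steps, in order: (i) Let $\ell(t)=\frac{2t-(a+b)}{b-a}$ be the affine map sending $[a,b]$ onto $[-1,1]$; note $\ell(a-\Delta)=-1-\tfrac{2\Delta}{b-a}=:-1-\eta$ with $\eta=\tfrac{2\Delta}{b-a}$. Define $p(t)=T_m(\ell(t))/T_m(-1-\eta)$, a polynomial of degree $m$. (ii) On $[a,b]$ we have $|T_m(\ell(t))|\le1$, so $\|p(H)\Into_{\Z^\perp}\|\le 1/|T_m(1+\eta)|$. (iii) Bound $|T_m(1+\eta)|$ from below using $T_m(1+x)\ge\tfrac12 e^{m\sqrt{2x}}$ for $x\ge0$ — this standard estimate follows from $T_m(\cosh\theta)=\cosh(m\theta)\ge\tfrac12 e^{m\theta}$ together with $\cosh\theta=1+x\Rightarrow\theta\ge\sqrt{2x}/\sqrt{1+x/3}\ge\ldots$; more simply, for $0\le x\le$ (something bounded by our hypothesis $\eta\le\tfrac14$) one gets $\theta\ge\sqrt{2x}$ directly, hence $|T_m(1+\eta)|\ge\tfrac12 e^{m\sqrt{2\eta}}$, giving $\shrink=\|p(H)\Into_{\Z^\perp}\|^2\le 4e^{-2m\sqrt{2\eta}}$; absorbing constants and using $\shrink^{1/2}$ for the square root yields the claimed $\shrink\le 2\exp(-m\sqrt{2\Delta/(b-a)})$. (iv) On $(-\infty,a-\Delta]$ I claim $p(t)\ge1$: indeed $T_m$ is increasing and $\ge1$ on $[1,\infty)$, and by parity $|T_m|$ is increasing on $(-\infty,-1]$; since every such $t$ has $\ell(t)\le -1-\eta$, we get $|T_m(\ell(t))|\ge |T_m(-1-\eta)| = T_m(1+\eta)$, so $|p(t)|\ge 1$, and a sign check (using that $\Into_\Z^\dag A\Into_\Z$ only needs $\opge\Id_\Z$, which holds for $A=p(H)^2$ if signs are an issue, or directly since $(-1)^m T_m$ is positive and increasing on $(-\infty,-1]$ making $p\ge1$ there after matching signs in the normalization) gives $\Into_\Z^\dag p(H)\Into_\Z\opge\Id_\Z$. (v) Finally, $p(H)=\sum_{k=0}^m c_k H^k$ is a linear combination of powers $H^k$ with $k\le m$; by degree-$m$ Viability of $\L$ for $H$, each $H^k\in\L\otimes\lin(\rhalf)$, hence $p(H)\in\L\otimes\lin(\rhalf)$, i.e. $\L$ is Viable for the $\shrink$-approximate projector $p(H)$, which is precisely the definition of $\L$ being a $\shrink$-PAP for $\Z$.

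The main obstacle I anticipate is step (iv) — getting the signs and the Loewner inequality $\Into_\Z^\dag A\Into_\Z\opge\Id_\Z$ exactly right, since $T_m(-1-\eta)$ has sign $(-1)^m$ and one must normalize $p$ so that it is genuinely $\ge 1$ (not merely $|p|\ge1$) on the low block while staying $|p|\le 1/|T_m(1+\eta)|$ on the high block; the clean fix is to define $p(t)=(-1)^m T_m(\ell(t))/T_m(1+\eta)$ so that $p\equiv 1$ at $t=a-\Delta$ and $p$ is increasing in $|{\cdot}|$ away from the band, and if any residual sign worry remains one replaces $A$ by $A^2$ (still degree $\le 2m$, absorbed by re-indexing, or one simply starts from a degree-$\lfloor m/2\rfloor$ Chebyshev polynomial). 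The Chebyshev growth estimate in step (iii) is routine and is quoted from \cite{arad2013area}; the hypothesis $8\Delta\le b-a$ guarantees $\eta=2\Delta/(b-a)\le 1/4$, which is exactly the regime where $\cosh^{-1}(1+\eta)\ge\sqrt{2\eta}$ is safe and where the constant $2$ in front of the exponential suffices.
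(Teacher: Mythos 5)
Your approach is the same as the paper's: build the (un-normalized) Chebyshev AGSP of \cite{arad2013area} from an affine reparametrization of $H$, observe that it is a degree-$m$ polynomial in $H$, and conclude from degree-$m$ Viability that $\L$ is Viable for it. But there is a concrete error in your step (iii).

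The ``simplified'' Chebyshev estimate you invoke, that $\cosh\theta=1+x$ implies $\theta\ge\sqrt{2x}$ for small $x$ (hence $T_m(1+x)\ge\tfrac12 e^{m\sqrt{2x}}$), is false for every $x>0$: one has $\cosh(\sqrt{2x})=1+x+\tfrac{x^2}{6}+\cdots>1+x=\cosh\theta$, so $\theta<\sqrt{2x}$ strictly. (Numerically: at $x=1/4$ one finds $\theta=\ln 2\approx 0.693<0.707\approx\sqrt{1/2}$, and $T_{10}(1.25)=\cosh(10\ln 2)\approx 512$ while $\tfrac12 e^{10\sqrt{1/2}}\approx 588$, so the inequality fails.) The correct bound in this regime, and the one the paper uses, is $\theta\ge\sqrt{x}$, equivalently $T_m(1+x)\ge\tfrac12 e^{m\sqrt{x}}$ for $0\le x\le 1/4$; this already follows from the intermediate bound $\theta\ge\sqrt{2x}/\sqrt{1+x/3}$ that you wrote down and then discarded, since $\sqrt{2/(1+x/3)}\ge 1$ for all $x\le 3$. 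With $\delta=2\Delta/(b-a)\le 1/4$ (this is where the hypothesis $8\Delta\le b-a$ is used), it gives $1/T_m(1+\delta)\le 2e^{-m\sqrt{\delta}}$, i.e.\ exactly the claimed $\shrink\le 2\exp(-m\sqrt{2\Delta/(b-a)})$, with no further constant-chasing. Your written inequality $\shrink\le 4e^{-2m\sqrt{2\eta}}$ is considerably stronger than the lemma claims, which should itself have been a warning sign.

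A secondary, stylistic point: the paper parametrizes the other way, taking $H'=\Id-2(H-a\Id)/(b-a)$ so that $a\mapsto 1$, $b\mapsto -1$, and $a-\Delta\mapsto 1+\delta$. The low-energy spectral block then lands in $[1+\delta,\infty)$, where $T_m$ is positive and increasing, so there is no sign or parity to track and the dilation condition $\Into_\Z^\dag A\Into_\Z\opge\Id_\Z$ is immediate. Your orientation sends $a-\Delta\mapsto -1-\eta$ and forces the $(-1)^m$ bookkeeping you spend all of step (iv) managing (and the $A\mapsto A^2$ fallback would halve the usable degree). With the corrected Chebyshev bound and the paper's orientation of the affine map, your argument becomes the paper's.
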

\begin{proof}
	Consider the (un-normalized) Chebyshev AGSP of \cite{arad2013area} defined as $A:=\Tpol_m\big(\Id-2\tfrac{ H-a\Id}{b-a}\big)$, where $\Tpol_m$ is the $m$\ts{th} Chebyshev polynomial of the first kind. $\Y$ is the spectral subspace for $H'=\Id-2\frac{H-a\Id}{b-a}$ corresponding to eigenvalues greater than $1+\delta:=1+\frac{2\Delta}{b-a}$, and $\Y^\perp$ is the spectral subspace for eigenvalues in $[-1,1]$. It is well-known that the Chebyshev polynomials satisfy $\Tpol_m([-1,1])=[-1,1]$ and $\Tpol_m(1+\delta)\ge\frac12e^{m\sqrt{\delta}}$ for $\delta\in[0,1/4]$. These properties imply that $A\PP_\Y\ge\frac12e^{m\sqrt\delta}=1/\shrink$ and $\|A\PP_\Y\|\le1$, so ${\shrink}A$ is a $\shrink$-approximate projector for $\Y$. $\L_m$ is Viable for $A$ because $A$ is a polynomial in $H$ of degree $m$.
\end{proof}

\subsection{Left subspaces}
Our construction of a higher-degree Viable space of operators uses the notion of a \emph{left support}, which we define for arbitrary Hilbert spaces before specializing to spaces of operators.
\begin{definition}[label=leftsupp]
	For a vector $\ket\psi\in\HS_{\lhalf\barrier}$ in a bipartite Hilbert space, the $\lhalf$-support of $\ket\psi$ is $\supp_{\lhalf}\ket\psi=\Span_\ii\{\big(\Id_\lhalf\otimes\bra{\ii}_\barrier\big)\ket{\psi}_{\lhalf\barrier}\}$ where $\{\ket\ii\}$ is any basis for $\HS_\barrier$. The $\lhalf$-support of a \emp{set} of vectors $\S\subset\HS_{\lhalf\barrier}$ is the combined span $\supp_{\lhalf}(\S)=\sum_{\ket\psi\in\S}\supp_\lhalf(\ket\psi)$. 
\end{definition}
The definition of left support is independent of the choice of basis. Indeed, suppose we defined $\supp_\lhalf'$ using a different basis $\{\ket{\ii'}\}$. Then for each $\ket{\ii'}$, $\ket{\ii'}\in\Span_\ii\{\ket\ii\}$ implies $\Id_\lhalf\otimes\bra{\ii'}\in\Span_\ii\{\Id_\lhalf\otimes\bra{\ii}\}$ and hence $\supp_\lhalf'(\ket\psi)\subset\supp_\lhalf(\ket\psi)$. 

The $\lhalf$-support of $\ket\psi\in\HS_{\lhalf\barrier}$ is Viable for $\ket\psi$. This follows by choosing $\{\ket\ii\}$ as an orthonormal basis and writing $\ket\psi=\sum_\ii(\Id_\lhalf\otimes\bra{\ii}_\barrier)\ket\psi\otimes\bra{\ii}_\barrier$.
\begin{definition}[continues=leftsupp]\label{def:leftsupp}
	The $\lhalf$-support of an operator $A\in\lin(\lhalf\barrier)$ on bipartite Hilbert space $\HS_{\lhalf\barrier}$ is
	\[\supp_{\lhalf}(A)=\Span_{\ii,\jj}\Big\{\big(\Id_\lhalf\otimes\bra{\ii}_{\barrier}\big)A\big(\Id_{\lhalf}\otimes\ket{\jj}_{\barrier}\big)\Big\},\]
	where $\ket\ii$ and $\ket\jj$ range over a basis for $\HS_\barrier$. As above, $\supp_\lhalf(\S)=\sum_{\A\in\S}\supp_\lhalf(A)$ for a set of operators $\S\subset\lin(\lhalf\barrier)$.
\end{definition}
We have the dimension bound $\vd{\supp_\lhalf(\S)}\le|\S|\vd{\HS_\barrier}^2$. In our applications, $\HS_\barrier$ will be the space of states on a small set of particles $\barrier$ which form a \emph{barrier} between two larger regions $\lhalf=\Reg$ and $\rhalf$.

\section{META-encodings and algorithm}
\label{sec:META}

We now detail the properties of the META-tree, which describes the structure of the TTN output by our efficient algorithm. To construct the META-tree we need a procedure $\Refinet$ to partition $\T$ into smaller subtrees. Indeed, the META-tree can be viewed as the recursion tree corresponding to recursive partitioning.

\subsection{Partitions of the interaction tree}
We view the interaction tree, formally a tuple $\T=(\VX,\EG)$, as a set $\T=\VX\cup\EG$ of vertices and edges.
For any subset $\Reg\subset\T$ (i.e., $\Reg\subset\VX\cup\EG$), write $\VX_\Reg=\Reg\cap\VX$ and $\EG_\Reg=\Reg\cap\EG$. Define the \emph{closure} $\overline\Reg$ by adding to $\Reg$ the end points of every $\e\in\EG_\Reg$. A \emph{closed forest} is a subset $\Reg\subset\T$ which equals its closure. A \emph{closed subtree} is a connected closed forest, and an \emph{open subtree} is a connected set $\ST\subset\T$ whose complement $\T\xpt\ST$ is a closed forest. The \emph{vertex-boundary} $\partialv\Reg$ of $\Reg$ is the set of vertices incident to some $\e\in\Reg$ and some $\e\in\T\xpt\Reg$, and the \emph{edge}-boundary $\partiale\Reg$ is the set of edges incident to one vertex in $\Reg$ and one outside $\Reg$.

\begin{definition}
A \emp{branch} of \(\T\) is a closed subtree \(\ST\) such that \(|\partialv\ST|\le1\). A \emp{section} is an open subtree \(\ST\) of \(\T\) such that \(|\partialv\ST|=2\). A \emp{lean subtree} of \(\T\) is any branch or section.
\end{definition}

\begin{definition}\leavevmode
	A \emp{lean partition} \(\P\) of a tree \(\T\) is a partition of $\T=\VX\cup\EG$ into lean subtrees. Let $\|\P\|_{\infty}=\max_{\ST\in\P}|\ST|$.
\end{definition}

\begin{figure}[H]\centering
	\includegraphics[width=\textwidth]{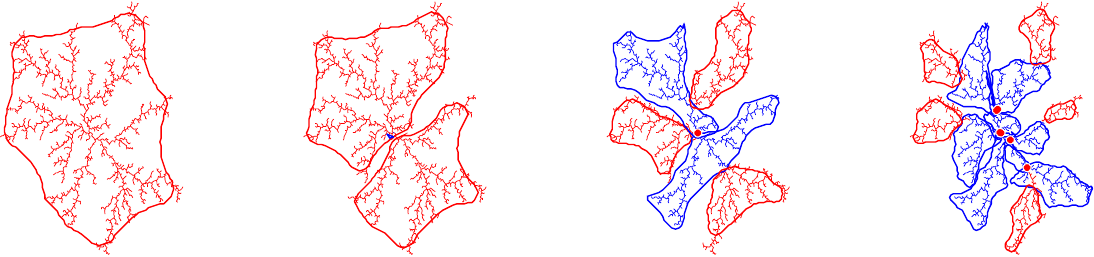}
	\caption{Lean partitions $\P_0,\ldots,\P_3$ of $\T$ obtained from successive applications of $\Refinem$. Branches are colored red and sections blue. Each forms a cross-section of the META-tree.}
\end{figure}
Branches are generalizations of vertices and sections are generalized edges. Indeed, we can visualize a lean partition with a \emph{points and paths}-representation: Collapse branches to points and collapse each section to its \emph{spine}, the path between its two boundary points.
\begin{figure}[H]\centering
	\includegraphics[width=\textwidth]{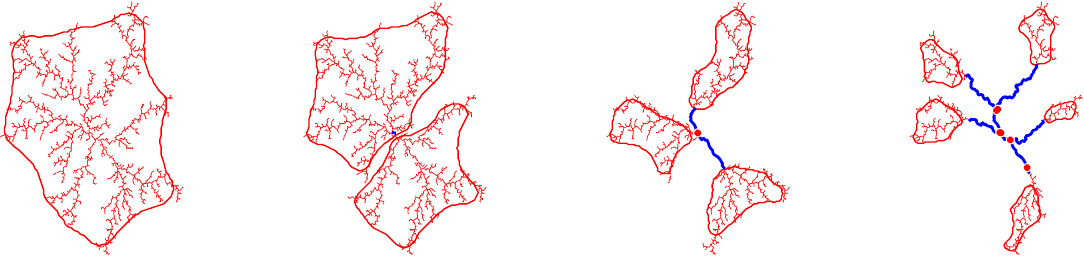}
	\caption{Partitions $\P_0,\ldots,\P_3$ with sections represented by their \emph{spines} (blue).}
\end{figure}

\begin{restatable}{proposition}{refineprop}\label{prop:therefineprop}
	There exists a \(O(n)\)-time procedure \(\Refine\) which takes a lean subtree \(\ST\) as input and outputs a lean tree partition \(\P\) of $\ST$ into at most \(5\) subtrees such that \(|\VX_\sST|\le\max\{\frac{4\sd-1}{4\sd}|\VX_\ST|,1\}\) for each \(\sST\in\P\). 
\end{restatable}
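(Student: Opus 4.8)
The plan is to describe the procedure $\Refinem$ explicitly and then verify the three claimed properties: running time $O(n)$, output size at most $5$, and the size-reduction bound $|\VX_{\sST}|\le\max\{\frac{4\sd-1}{4\sd}|\VX_\ST|,1\}$. Let me denote $m=|\VX_\ST|$ and assume $m\ge 2$ (otherwise there is nothing to do; output $\{\ST\}$). The key combinatorial fact I would use is the standard \emph{centroid} (or \emph{tree separator}) fact: in any tree on $m$ vertices there is a vertex $\vx^*$ — a centroid — such that every connected component of $\ST\setminus\{\vx^*\}$ has at most $\lfloor m/2\rfloor$ vertices. Since $\ST$ has degree bounded by $2\sd$, the vertex $\vx^*$ has at most $2\sd$ incident components. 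Grouping these components greedily around $\vx^*$ is the heart of the argument.

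\textbf{Step 1: reduce to the branch case.} First I would handle the case where $\ST$ is a section with boundary vertices $\vx_1,\vx_2$ (the branch case $|\partialv\ST|\le 1$ is similar but easier). The plan is to locate the centroid $\vx^*$ and split at it. Removing $\vx^*$ breaks $\ST$ into components $\conn_1,\dots,\conn_t$ with $t\le 2\sd$, each of size $\le \lfloor m/2\rfloor\le\frac{4\sd-1}{4\sd}m$ (this already gives the size bound for $m\ge 2$, since $\frac12\le\frac{4\sd-1}{4\sd}$). At most two of these components contain a boundary vertex of the original section. I would bundle the components into groups so that each group, together with $\vx^*$ and the appropriate incident edges, forms a lean subtree (a branch if it contains $\le 1$ of the original boundary vertices, a section if it contains both, which happens only when $\vx_1,\vx_2$ lie in the same component). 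The subtlety is to get the count down to $\le 5$: I claim that $\vx^*$ itself plus at most $4$ groups suffices. Indeed, put all components containing no original boundary vertex into a single group (one lean subtree — a branch rooted at $\vx^*$), and put each component containing a boundary vertex into its own group (at most $2$ of those). Together with the singleton $\{\vx^*\}$ as its own branch and the edges from $\vx^*$, a careful accounting gives at most $1+1+2 = 4$ pieces, comfortably $\le 5$; the extra slack absorbs edge cases such as where the closures overlap and one must peel a piece off.

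\textbf{Step 2: verify the size bound.} Each output subtree $\sST$ is contained in $\{\vx^*\}\cup(\text{union of some components})$, hence $|\VX_{\sST}|\le 1 + \lfloor m/2\rfloor$ in the worst case — but I need $\le\frac{4\sd-1}{4\sd}m$. For $m$ large this is immediate; for small $m$ (say $m\le 4\sd$) one checks directly that $1+\lfloor m/2\rfloor\le\frac{4\sd-1}{4\sd}m$ fails only for tiny $m$, which is exactly where the $\max\{\cdot,1\}$ and the freedom to not split a trivial piece rescue us. Here is where I would be most careful: the "$+1$" from re-attaching $\vx^*$ can push a piece over $\frac{4\sd-1}{4\sd}m$ when $m$ is small, so the procedure must, in the small-$m$ regime, instead split off individual vertices (each of size $1$) and leave a remainder of size $m-1\le\frac{4\sd-1}{4\sd}m$ only when $m\ge 4\sd$; for $m<4\sd$ I would split repeatedly until all pieces are trivial, which still produces $O(\sd)=O(1)$ pieces — but then I must double-check this stays $\le 5$, which forces using the centroid rather than naive peeling. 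This balancing is the main obstacle: making the same simple rule simultaneously respect the hard cap of $5$ output pieces and the multiplicative shrink factor $\frac{4\sd-1}{4\sd}$ across all values of $m$ and all $\sd\ge 1$.

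\textbf{Step 3: running time and lean-ness.} Computing a centroid is a single DFS, $O(|\ST|)\le O(n)$; forming the groups and their closures is linear; so the total is $O(n)$. Finally I would verify that each output piece is genuinely a \emph{lean subtree}: a branch has $|\partialv\sST|\le 1$ because its only possible boundary vertex is $\vx^*$ (or an original boundary vertex that ended up isolated in its own piece), and a section has $|\partialv\sST|=2$ with its two boundary vertices being $\vx^*$ and an original boundary vertex $\vx_i$; connectedness and closedness (resp. openness) follow from the construction since components of $\ST\setminus\{\vx^*\}$ are connected and we only ever add $\vx^*$ and edges incident to it. The claim that $\P$ is a genuine partition of $\T=\VX_\ST\cup\EG_\ST$ (not just the vertices) needs each edge of $\ST$ assigned to exactly one piece: edges inside a component go with that component, and the (at most $2\sd$) edges incident to $\vx^*$ are distributed to whichever piece received the corresponding component, with $\vx^*$'s own vertex placed in one designated piece — this is a routine bookkeeping check that I would state but not belabor.
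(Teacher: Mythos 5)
Your centroid-based plan has a genuine gap, and it is precisely at the point you flag as ``the main obstacle.'' The trouble is not the small-$m$ regime; it is that the grouping step breaks the size bound even for large $m$. Removing a centroid $\vx^*$ gives up to $2\sd$ components, each of size $\le\lfloor m/2\rfloor$. Your rule lumps all components not touching an original boundary vertex into one piece, and for a section whose two boundary edges both go directly into $\vx^*$ (so both ``boundary components'' are trivial or empty), the remaining components can together account for $m-O(1)$ vertices. For $m\gtrsim 4\sd\cdot O(1)$ that violates $|\VX_{\sST}|\le\frac{4\sd-1}{4\sd}m$. Your Step~2 bound ``$|\VX_{\sST}|\le 1+\lfloor m/2\rfloor$'' implicitly assumes each piece absorbs at most one component, which contradicts the grouping rule. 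If, on the other hand, you do not group and emit one piece per component plus $\{\vx^*\}$, you get up to $2\sd+1$ pieces, exceeding $5$ for $\sd\ge 3$. A bin-packing argument (pack $\le 2\sd$ items of size $\le m/2$ into $\le 3$ bins of size $\le\frac{4\sd-1}{4\sd}m$) might rescue the count, but you do not give one, and the lean-ness constraint (each bin must be a branch or section rooted at $\vx^*$) makes this more delicate than ordinary bin packing.

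The paper sidesteps this tension entirely by choosing a \emph{path}, not a vertex, as the cut. Starting from the root of a branch $\brc{\ee}$, the procedure $\trisectm$ greedily descends to the heaviest child until the subtree below drops to $\le\frac12$ of the original. This produces exactly three lean pieces: the residual branch at the root avoiding the first step of the path, the open section along the path, and the branch at the far endpoint. The far branch is $\le m/2$ by the stopping rule, and because the step \emph{before} stopping had $>m/2$ vertices distributed over at most $2\sd$ children plus the root, the heaviest child is $\ge\frac{1}{2\sd}\cdot\frac{m}{2}=\frac{m}{4\sd}$, so the root-side plus middle section together are $\le\bigl(1-\frac{1}{4\sd}\bigr)m$. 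For a section, the paper trisects once to get two sections and a branch, then trisects the middle branch again, yielding the $5$-piece cap. This path-descent trisection is the key idea missing from your proposal; the centroid viewpoint would need to be reworked into something along these lines (e.g.\ a greedy heaviest-path from $\vx^*$) to make both the size bound and the piece count hold simultaneously.
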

We prove proposition \ref{prop:therefineprop} in section \ref{sec:detMETA}.
Define the \emph{trivial partition} $\mr:=\{\T\}$. Let $\P_0=\mr$, and recursively define $\P_{\ell+1}=\bigcup_{\ST\in\P_{\ell}}\Refine(\P)$ while $\|\P_\ell\|_\infty>1$. Let $h\in\NN$ be such that $\P_h$ is the last partition obtained. Then $\P_h$ is the \emph{discrete partition} $\widetilde\T=\{\mathsmaller\{\vx\mathsmaller\}\mid\vx\in\VX\}\cup\{\mathsmaller\{\e\mathsmaller\}\mid\e\in\EG\}$.

\begin{definition}
	The \emp{\META-tree} \(\MT=(\MV,\ME)\) is the tree with vertex set $\MV=\bigsqcup_\ell\P_\ell$\footnote{$\bigsqcup$ denotes disjoint union---meaning that if some $\ST$ belongs to both $\P_\ell$ and $\P_{\ell+1}$ then we include another copy of $\ST$ in $\MV$. Formally, the vertices in $\MV$ have an additional coordinate $\ell$, i.e., $\MV=\bigcup_\ell{\P_\ell\times\{\ell\}}$.} and directed edge set $\bigcup_{\ell=1}^{h}\{(\ST,\sST)\in\P_{\ell-1}\times\P_{\ell}\mid\sST\in\Refine(\ST)\}$.
\end{definition}
We call $\mv\in\MV$ meta-leaves and $\mr=\P_0$ the meta-root, etc.
Observe that the meta-leaves $\widetilde\T$ are in bijective correspondence with \(\T\). We call $\widetilde\VX=\{\mathsmaller\{\vx\mathsmaller\}\mid\vx\in\VX\}$ the \emph{nontrivial} meta-leaves and $\widetilde\EG=\{\mathsmaller\{\e\mathsmaller\}\mid\e\in\EG\}$ the trivial meta-leaves. \(\MT\) has depth at most \((\log\frac{4\sd}{4\sd-1})^{-1}\log n\), and summing a geometric series shows that it has $\bigO(n)$ meta-vertices.

For any $\mv\in\MV$ let $\MT^\mv$ be the meta-branch rooted at $\mv$. The leaves of $\MT^\mv$ are in bijective correspondence with a {lean} interaction subtree which we denote by $\T^\mv\subset\T$. We say that $\{\T^\mv|\mv\in\MV\}$ are the \emph{standard subtrees} of $\T$.
The set of children of $\mv$ in the META-tree is denoted $\CLD(\mv)$, so that $\Refine(\T^\mv)=\{\T^\mw|\mw\in\CLD(\mv)\}$. 

%

\subsection{Representation of subspaces and operators}

We now represent a subspace of $\HS$ by a tree tensor network on \(\MT\). 
\begin{definition}
	A \textup{\META} \emph{(Multiscale Entanglement Tree Ansatz)}, or \emp{meta-\textup{TN}}, is a tensor network on $\MT$ with an open edge at the meta-root $\mr$, one at each nontrivial meta-leaf $\mv\in\widetilde\V$, and no other open edges. The open edge at the meta-root carries the \emp{dimension index}, and each nontrivial meta-leaf carries a \emp{physical index} ranging over $j\in[\ld]$.	
\end{definition}
Having defined the meta-TN we turn now to the description of how they encode a subspace. See \cite{evenbly2011tensor} for general background on tensor networks and their contraction.
\begin{definition}
	Let $\Gamma$ be a meta-TN and let $D$ be the bond dimension of the open edge at $\mr$.
	Contracting the inner bonds yields a vector in in an extended Hilbert space which we denote by
	\begin{equation}\label{eq:purif}\KET{\Gamma}\in\HS_\T\otimes\CC^D,\end{equation}
	where $\HS_\T=\bigotimes_{\vx\in\VX}\HS_\vx$. Define a nonnegative operator on \(\HS_\T\) as the reduced density matrix:
	\begin{equation}\label{eq:parttrace}\rho_{\Gamma}=\tr_{\CC^D}\big(\KET{\Gamma}\BRA{\Gamma}\big).\end{equation}
	We say that $\Gamma$ \emp{encodes} the subspace $\TNspan\Gamma=\supp\rho_\Gamma=(\opn{Ker}\rho_\Gamma)^\perp$ defined as the support of $\rho_\Gamma$. A meta-TN is \emp{regular} if $\rho_\Gamma$ is the projection onto $\TNspan\Gamma$.
\end{definition}

We can substitute a meta-branch $\MT_\mv$ in the definition above which results in a META which encodes a subspace $\V\subsp\HS_{\T^\mv}$. We view the open edge index at $\mv$ as enumerating a basis for $\V$. a superscript $[\MT^\mv]$ indicates that $\Gamma^{[\MT^\mv]}$ is a META on the meta-branch $\MT^\mv$. A superscript $\mw\leftarrow \alpha$ denotes contracting an open edge at $\mv$ with $\bra{\alpha}$, i.e., restricting the local tensor at $\mw$ to the $\alpha$-slice.

\begin{definition}\label{def:remarkafterthis}
	A \emp{meta-\textup{TNO}} (Tensor Network Operator) $\GGamma$ on $\MT$ is a tensor network with two open edges at each meta-leaf $\mv=\{\vx\}$, each indexing a basis of $\HS_\vx$ and representing the input and output, respectively. If the $\GGamma$ has no open edge at the root, it encodes a linear operator $\TNOspan\GGamma\in\lin(\HS)$ by contracting all inner bonds. If it has an open edge at the meta-root, it encodes a space of linear operators $\TNOspan{\GGamma}=\Span_i[\GGamma_i]\subsp\lin(\HS_{\T^\mv})$ where $\GGamma_i$ is $\GGamma$ with the root index fixed to value $i$.
\end{definition}
We make sure that a META is in the \emph{canonical form}\footnote{In fact, since $\KET\Gamma$ has norm $\|\ket\Gamma\|=\sqrt{|\TNspan\Gamma|}$ the normal form that we use is modified by a rescaling of the local tensor at the meta-root.} defined for TTNs in \cite{shi2006classical}. This form can be achieved in time $|\MV|B^4$ where $B$ is the bond dimension \cite{shi2006classical}. We shall assume that the meta-TN is brought back to canonical and regular form after each operation.

\subsection{Complexity reduction on the META-tree}

We reduce the complexity of a META is two ways, both of which are adapted from analogues in \cite{alvv17}. The first is \emph{Haar-random restriction} which decreases number of index values at the meta-root. The second is \emph{trimming} which decreases the inner bond dimensions of the META.
\sparagraph{Haar-random restriction}
The following lemma implies that given some partial solution of large dimension, one can trade viability in return for a reduction in dimension.
\begin{lemma}[corollary of \cite{alvv17} lemma 5]\label{lem:random}
	Let \(\Z\subsp\HS_{LR}\) and let \(\W\subsp\HS_L\) be a partial \(\mu\)-majorizer for \(\Z\). Let \(Z=\vd\Z\) and \(W=\vd\W\).
		Fix \(V\in\NN\) and let \(\V\subsp\W\) be a Haar-random subspace of dimension \(V\). Then with probability \(1-\eta\), \(\V\) is a partial \(\nu\)-majorizer for \(\Z\) where
		\begin{equation}\frac\nu V=\frac\mu{8W}\AND\eta=\exp\Big[\bigO\Big(Z\sqrt{\tfrac{\:\:\:W}{\mu V}}+\log W\Big)-\Omega(V)\Big].\end{equation}
\end{lemma}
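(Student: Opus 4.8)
The plan is to derive this from \cite{alvv17} lemma 5, which already contains the core Haar-random-subspace estimate; the remaining work is a dictionary between the ``$\delta$-closeness'' language of \cite{alvv17} and the partial-majorizer language of definition \ref{def:majorizing}, together with a specialization of constants. To make that dictionary transparent I first record the mechanism. Realize the random $\V$ as $U\W_0$, where $\W_0\subsp\W$ is an arbitrary fixed $V$-dimensional subspace and $U$ is Haar-distributed on the unitary group $\opn U(\W)$ of $\W$, acting as the identity on $\W^\perp\subsp\HS_L$, so that $\PP_\V=U\PP_{\W_0}U^\dag$. By definition \ref{def:majorizing}, ``$\V$ is a partial $\nu$-majorizer for $\Z$'' is exactly the event that $\min_{\ket\varphi\in\sphere(\Z)}\|(\PP_\V\otimes\Id_R)\ket\varphi\|^2\ge\nu$, so it suffices to control this minimum.

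The first step is a one-vector expectation computation. Fix $\ket\varphi\in\sphere(\Z)$ and expand $\ket\varphi=\sum_j\ket{\psi_j}_L\otimes\ket j_R$ over an orthonormal basis of $\HS_R$. Since $\V\subsp\W$, the part of each $\ket{\psi_j}$ orthogonal to $\W$ is annihilated, so $\|(\PP_\V\otimes\Id_R)\ket\varphi\|^2=\sum_j\|\PP_\V\PP_\W\ket{\psi_j}\|^2$. Taking expectations and using unitary invariance, $\mathbb E_U[U\PP_{\W_0}U^\dag]=\tfrac VW\Id_\W$, hence
\[\mathbb E_U\big\|(\PP_\V\otimes\Id_R)\ket\varphi\big\|^2=\tfrac VW\big\|(\PP_\W\otimes\Id_R)\ket\varphi\big\|^2\ge\tfrac VW\mu,\]
where the inequality is the hypothesis that $\W$ is a partial $\mu$-majorizer for $\Z$. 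Thus the target $\nu=\tfrac{\mu V}{8W}$ is a fixed fraction of the worst-case expectation, and what remains is a lower-deviation bound, uniform in $\ket\varphi$.

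For the uniform bound I would use concentration of measure: $U\mapsto\|(\PP_\V\otimes\Id_R)\ket\varphi\|$ is $\bigO(1)$-Lipschitz on $\opn U(\W)$ in the Hilbert--Schmidt metric, because $U\mapsto\PP_{U\W_0}$ is $\bigO(1)$-Lipschitz and $\PP_\V\otimes\Id_R$ is a contraction; so L\'evy's inequality on $\opn U(\W)$ --- equivalently on the Grassmannian $\Gr(V,\W)$ --- gives a sub-Gaussian lower tail for this quantity around its mean. Feeding this tail, together with the mean bound above, into a union bound over a $\gamma$-net of $\sphere(\Z)$ of cardinality $\exp(\bigO(Z\log\tfrac1\gamma))$, then extending off the net using the Lipschitz dependence of $\ket\varphi\mapsto\|(\PP_\V\otimes\Id_R)\ket\varphi\|$, and finally optimizing over $\gamma$, yields the lower bound $\nu$ with failure probability of the stated shape: the $Z\sqrt{W/(\mu V)}$ and $\log W$ terms come from the net entropy and the off-net slack, while $-\Omega(V)$ is the concentration gain --- which is why the bound is only useful when $V$ is large relative to $Z$.

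The main obstacle is making this last step quantitatively sharp. A crude union bound over the net is too lossy (it costs extra factors --- e.g.\ of $\mu$ --- in the exponent), so one must track the net scale, the L\'evy constant, and the off-net slack in tandem to land exactly on $\eta=\exp[\bigO(Z\sqrt{W/(\mu V)}+\log W)-\Omega(V)]$. That careful accounting is precisely the content of \cite{alvv17} lemma 5, so the actual proof reduces to invoking that lemma and carrying out the parameter translation set up above; the remaining ingredients --- the choice of net, the off-net Lipschitz estimate, and the equivalences between ``$\delta$-close'', ``$\delta$-viable'', and ``$\mu$-majorizing'' --- are routine.
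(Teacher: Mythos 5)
Your proposal ultimately does exactly what the paper does: it invokes \cite{alvv17} lemma 5 and reduces the statement to a parameter translation, so this is essentially the same route (the paper's substitution is $q\to W$, $s\to V$, $r\to Z$, $1-\delta=\mu$, $\delta'\to\nu$, followed by the simplification $1+x\le e^x$ to reach the stated form of $\eta$). The Haar/Lipschitz/net sketch you offer is plausible background for what sits inside the cited lemma rather than a new argument; the only bookkeeping point worth making explicit when you carry out the translation is that $\delta'$ in \cite{alvv17} is already an overlap rather than an error, so it maps directly to $\nu$ and not to $1-\nu$.
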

\begin{proof}
Apply lemma 5 of \cite{alvv17} and substitute $q\to W$, $s\to V$, $r\to Z$, $1-\delta=\mu$, and $\boldsymbol{\delta'\to\nu}$. Note that $\delta'$ in \cite{alvv17} is an overlap, not an error. Then apply the bound \(1+x\le e^x\) in the expression for $\eta$. \end{proof}
For a META in normal form we can pick a Haar-random $V$-dimensional subspace of $\W=\TNspan\Gamma$ by simply composing the top index with a Haar-random isometry $\CC^V\hookrightarrow\CC^{\vd\W}$.
\sparagraph{Trimming}
The trimming procedure reduces the entanglement of the viable subspace within the viability region.
\begin{definition}\label{def:subsptrim}\leavevmode
	\begin{enumerate}
		\item\label{it:individualtrim}
			Let a bipartite space \(\HS_{LR}\) and a subspace \(\V\subsp\HS_{LR}\) be given. Let \(\rho_L=\tr_R(\PP_\V)\) and define \(\PP=\charac_{(\xi,\infty)}(\rho_L)\in\proj(\HS_L)\) as the spectral projection for $\rho_L$ corresponding to eigenvalues above $\xi$. The \emp{\(\boldsymbol\xi\)-trimming} of \(\V\) relative to \(L\) is the image of \(\V\) under the map \(\PP\otimes \Id_R\).
	\begin{equation}\trim^L_\xi(\V):=[\PP\otimes \Id_R](\V)\subsp\HS_{LR}.\end{equation}
\item
	Let $\Z\subsp\HS$. A \emp{global trimming} of $\Z$, $\Trim_\xi(\V)$, is obtained from $\V$ by iteratively applying $\trim_\xi^{\T^\mv}$ for all $\mv\in\MV$ (say, running through $\mv$ in some decreasing order). Define $\Trim_\xi(\V)$ similarly for a partial solution $\V\subsp\T_\mv$ by applying $\trim_\xi^{\T^\mw}$ for all $\mw$ in the meta-branch $\MT^\mv$. 
	\end{enumerate}
\end{definition}
Item \ref{it:individualtrim} is equivalent with the $\ell=2$ case of \cite{alvv17} definition 4. 
\begin{observation}\label{obs:trimcomplexity}A trimming of $\Gamma$ relative to $\T^\mv$ can be implemented by bringing $\Gamma$ into canonical form and then truncating the indices at the meta-edge pointing downwards to $\mv$ in the META-tree. Iterating this for each $\mv$ yields a $n^2B^4$-time procedure for the global trimming.
\end{observation}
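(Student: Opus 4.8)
The plan is to recognize the $\xi$-trimming of $\Gamma$ relative to $L=\T^\mv$ as a purely local operation on one internal bond of $\Gamma$, namely the bond on the meta-edge $(\w,\mv)$ where $\w$ is the parent of $\mv$ in $\MT$ (the ``meta-edge pointing downwards to $\mv$''). Writing $R=\T\xpt\T^\mv$ and using that a regular META satisfies $\rho_\Gamma=\PP_\V$ with $\V=\TNspan\Gamma$, the operator $\rho_L$ of Definition \ref{def:subsptrim} is $\rho_L=\tr_R(\PP_\V)=\tr_R\tr_{\CC^D}\!\big(\KET\Gamma\BRA\Gamma\big)$, i.e.\ the reduced density matrix of the purification $\KET\Gamma\in\HS_\T\otimes\CC^D$ onto the legs in $L$. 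First I would observe that the meta-leaves in the meta-branch $\MT^\mv$ are in bijection with $\T^\mv=L$ (by construction of $\MT$), whereas all remaining open legs of $\KET\Gamma$ --- the physical indices of the other meta-leaves (in bijection with $R$) and the dimension index $\CC^D$ at $\mr$ --- lie outside $\MT^\mv$. Hence deleting the single meta-edge $(\w,\mv)$ splits $\MT$ into $\MT^\mv$ and its complement, and this split induces exactly the bipartition $L\mid(R\cup\CC^D)$ of the open legs of $\KET\Gamma$. (If $\mv=\mr$ then $R=\varnothing$, the trimming is the identity, and there is nothing to do, so assume $\mv\ne\mr$.)

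Next I would invoke the canonical form of a TTN \cite{shi2006classical}: after gauging $\Gamma$ so that its canonical center sits on the meta-edge $(\w,\mv)$, the Schmidt decomposition of $\KET\Gamma$ across the bipartition above becomes manifest on that bond --- its index $\alpha$ enumerates orthonormal Schmidt vectors on each side, with coefficients $\Lambda_\alpha\ge0$, so $\rho_L=\sum_\alpha\Lambda_\alpha^2\ket{\Phi^L_\alpha}\bra{\Phi^L_\alpha}$ is diagonal in $\alpha$. Therefore $\PP=\charac_{(\xi,\infty)}(\rho_L)$ is implemented by keeping exactly the index values $\alpha$ with $\Lambda_\alpha^2>\xi$ and deleting the rest, i.e.\ by ``truncating'' the bond on $(\w,\mv)$. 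Calling the result $\Gamma'$, we have $\KET{\Gamma'}=(\PP\otimes\Id_R\otimes\Id_{\CC^D})\KET\Gamma$, so with $A=\PP\otimes\Id_R$ one gets $\rho_{\Gamma'}=A\,\PP_\V\,A=(A\PP_\V)(A\PP_\V)^\dag$ and hence $\TNspan{\Gamma'}=\supp\rho_{\Gamma'}=\opn{range}(A\PP_\V)=A\V=[\PP\otimes\Id_R](\V)=\trim^L_\xi(\V)$, which is the desired identity; the truncated $\Gamma'$ is then re-regularized per the standing convention.

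For the running time, bringing $\Gamma$ into canonical form (centered anywhere, then sweeping the center to $(\w,\mv)$) costs $|\MV|B^4$ as recalled above, and the single-bond spectral truncation --- reading off the diagonal of the central tensor, discarding small entries, absorbing back --- is no more expensive; since bond dimensions only shrink under trimming, $B$ remains a valid bound throughout. The global trimming $\Trim_\xi(\V)$ of Definition \ref{def:subsptrim} merely runs $\trim^{\T^\mv}_\xi$ over all $\mv\in\MV$ in some decreasing order, faithfully realized by repeating the above once per $\mv$; with $|\MV|=\bigO(n)$ this gives a total of $\bigO(n^2B^4)$.

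The step I expect to be the crux is the first one: verifying that a \emph{single} meta-edge cut of $\MT$ reproduces exactly the physical bipartition $\T^\mv\mid(\T\xpt\T^\mv)$, with $\CC^D$ on the complement side. This is what lets the canonical-form bond on that meta-edge carry precisely the spectrum of $\rho_{\T^\mv}$, and it is the one place where the structure of the META-tree (rather than generic TTN facts) enters --- it follows from the defining bijection between the meta-leaves of $\MT^\mv$ and $\T^\mv$. Everything after that is the standard canonical-form machinery together with the geometric-series bound $|\MV|=\bigO(n)$.
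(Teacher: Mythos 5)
Your argument is correct and is exactly the argument the observation is implicitly relying on: the single meta-edge $(\mathrm{parent}(\mv),\mv)$ induces the physical bipartition $\T^\mv\mid\big((\T\xpt\T^\mv)\cup\CC^D\big)$ because the meta-leaves below $\mv$ are in bijection with $\T^\mv$, the canonical form of \cite{shi2006classical} diagonalizes $\rho_{\T^\mv}$ on that bond, so $\charac_{(\xi,\infty)}(\rho_{\T^\mv})$ is realized by deleting bond values, and iterating over the $\bigO(n)$ meta-vertices with a fresh $\bigO(nB^4)$ canonicalization each time gives $\bigO(n^2B^4)$. The paper leaves this unproved beyond citing \cite{shi2006classical}; you have supplied the intended elaboration, including the re-regularization convention and the harmless $\mv=\mr$ edge case.
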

Here we have used the complexity bound \cite{shi2006classical} of computing the canonical form of a TTN which was recounted after definition \ref{def:remarkafterthis}.
\begin{observation}\label{obs:trimdim}
	$\Trim_\xi(\V)$ is represented by a meta-TN with bond dimension bounded by $\vd\V/\xi$.
\end{observation}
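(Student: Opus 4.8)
The plan is to bound, separately for each meta-edge of $\MT$, the Schmidt rank of the trimmed subspace across the corresponding cut of $\T$, and then invoke the canonical-form property of tree tensor networks. Recall (see the discussion after Definition~\ref{def:remarkafterthis} and Observation~\ref{obs:trimcomplexity}, using \cite{shi2006classical}) that for a meta-TN in canonical and regular form the bond dimension on the meta-edge from $\mv$ up to its parent equals the Schmidt rank of the encoded subspace across the cut $(\T^\mv,\T\xpt\T^\mv)$, i.e.\ the rank of $\tr_{\T\xpt\T^\mv}(\PP_\V)$ (the dimension index at the meta-root lies on the $\T\xpt\T^\mv$ side of this cut). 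So it suffices to show that for every $\mv\in\MV$,
\[\rank\big(\tr_{\T\xpt\T^\mv}(\PP_{\Trim_\xi(\V)})\big)\le\vd\V/\xi.\]

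First I would record the elementary rank bound for a single trimming: if $\W\subsp\HS_{LR}$ and $\PP=\charac_{(\xi,\infty)}(\rho_L)$ with $\rho_L=\tr_R(\PP_\W)\opge0$, then $\tr\rho_L=\vd\W$, so $\rho_L$ has at most $\vd\W/\xi$ eigenvalues exceeding $\xi$, whence $\rank\PP\le\vd\W/\xi$; moreover $\trim_\xi^L(\W)=[\PP\otimes\Id_R]\W\subsp\opn{im}(\PP)\otimes\HS_R$, a subspace whose partial trace over $R$ has rank $\le\rank\PP$. Since every trimming is the image under a projection, $\vd{\cdot}$ is nonincreasing along the procedure, so $\vd\W\le\vd\V$ at all stages and the bound is always $\le\vd\V/\xi$.

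Next I would pin down the processing order in Definition~\ref{def:subsptrim} to be any linear extension of the inclusion order on $\{\T^\mv\}$ with larger standard subtrees first (equivalently, nondecreasing depth in $\MT$; this is a decreasing order in the nesting sense, consistent with the definition). The key claim is an invariant: immediately after $\trim_\xi^{\T^\mv}$ is applied, and for the rest of the procedure, the current subspace is contained in $\U_\mv\otimes\HS_{\T\xpt\T^\mv}$ for some $\U_\mv\subsp\HS_{\T^\mv}$ with $\dim\U_\mv\le\vd\V/\xi$. The base case is the previous paragraph with $L=\T^\mv$. For the inductive step, since $\MT$ is a tree, any later $\T^\mw$ is contained in $\T^\mv$, disjoint from $\T^\mv$, or a strict superset of $\T^\mv$; the last case is ruled out by the order. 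In the first case $\PP_\mw\otimes\Id$ acts only inside $\HS_{\T^\mv}$ and maps $\U_\mv\otimes\HS_{\T\xpt\T^\mv}$ into $\U_\mv'\otimes\HS_{\T\xpt\T^\mv}$ with $\dim\U_\mv'\le\dim\U_\mv$; in the second case it acts only inside $\HS_{\T\xpt\T^\mv}$ and preserves $\U_\mv\otimes\HS_{\T\xpt\T^\mv}$. Evaluating the invariant at the end gives $\Trim_\xi(\V)\subsp\U_\mv\otimes\HS_{\T\xpt\T^\mv}$, so $\tr_{\T\xpt\T^\mv}(\PP_{\Trim_\xi(\V)})$ has support inside $\U_\mv$ and rank $\le\vd\V/\xi$. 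The partial-solution case $\V\subsp\HS_{\T^\mv}$ is identical after replacing $\T$ by $\T^\mv$ and $\MT$ by $\MT^\mv$.

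The main obstacle is exactly the ordering bookkeeping in the last step: one must ensure that the rank reduction obtained when trimming relative to $\T^\mv$ is not later undone by a trimming relative to a region straddling the cut $(\T^\mv,\T\xpt\T^\mv)$, which is why the trimmings are performed from the meta-root downward. The remaining ingredients---the Markov-type eigenvalue count and the canonical-form identity between bond dimension and Schmidt rank---are routine.
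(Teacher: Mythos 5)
Your proof is correct and takes the same essential route as the paper's: the Markov bound on $\rho_L=\tr_R(\PP_\W)$ giving $\rank\PP\le\vd\W/\xi\le\vd\V/\xi$, combined with the canonical-form identification of bond dimensions with Schmidt ranks. The paper's proof records only the single-step support bound and the remark that $\trim_\xi^L$ factors through $\PP\otimes\Id_R$, leaving the ordering argument (that later trimmings relative to nested or disjoint standard subtrees preserve the per-cut rank bound, and that the laminar structure plus the decreasing processing order rule out the straddling case) implicit; you make that invariant explicit, which is the right way to fill in the ``it suffices'' step.
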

\begin{proof}It suffices to show that the $\HS_L$-support of the $\xi$-trimming satisfies
	\begin{equation}\label{eq:supptrim}\vd{\supp_L{\trim_\xi(\V)}}\le\vd\V/\xi.\end{equation}
	A Markov bound implies that at most $\vd\V/\xi$ eigenvalues of $\rho_L$ can be $\xi$ or greater, since $\tr\rho_L=\tr\PP_\V=\vd\V$. That is, $\opn{rank}\PP\le\vd\V/\xi$. But $\trim_\xi^L$ factors through $\PP\otimes \Id_R$ which proves \eqref{eq:supptrim}. 
\end{proof}

The area law, proposition \ref{prop:arealawthm} below, implies a bound on the error incurred from trimming. Versions of such an implication are common in the literature \cite{lvv15,alvv17}. We include a statement and proof in appendix \ref{sec:trimsec} for completeness.

\subsection{Algorithm}
\label{sec:algsec}
Our main algorithm is a dynamic programming algorithm with a similar divide-and-conquer--structure as \cite{alvv17}. Combining several local partial solutions into one incurs an increase in the error and complexity. The subprocedure $\Enhancem$, detailed in section \ref{sec:algan}, offsets this cost by combining the error-decreasing PAP-application with complexity-decreasing operations of subspace sampling and bond trimming. Constructing the PAP requires a non-trivial lower bound $\E$ on $\E_0(H_\ST)$ (this is used in the \emph{truncation} of $H_\ST$), so $\Enhancem$ takes both $\W$ and $\E$ as inputs. In the process, the energy estimate is improved to help subsequent calls to $\Enhancem$ on supertrees of $\T^\mv$.

{
\bigbreak
\proc
\RestyleAlgo{ruled}
\begin{algorithm}[H]
	\caption{$\Enhancem$}
	\textbf{input }partial solution $\W\subsp\HS_{\ST}$, local energy estimate $\E\le\E_0(H_\ST)$\;
	Construct PAP $\L$ using local energy estimate $\E$\tcp*{section \ref{sec:algan}}
	\While{\(\vd\W>D'\)}{
		Sample \(\V\subsp\W\), \(\vd\V=\lfloor\frac{\vd\W}{2\vd\L}\rfloor\).\;
		\(\W'\leftarrow\L\V\)\;
		\(\W\leftarrow\Trim_\xi(\W')\)}
		\smallskip
		Improve local energy estimate $\E$\tcp*{section \ref{sec:algan}}
		\textbf{output }{\(\W,\E\)}
\end{algorithm}
\bigbreak
}
$\GSm$ keeps in memory a partition $\P$ of $\T$ and a partial solution $\V\subsp\HS_\ST$ on each subtree $\ST\in\P$ in the partition. The algorithm is initialized with the discrete partition $\P=\{\{\vx\}\mid\vx\in\VX\}\cup\{\{\e\}\mid\e\in\EG\}$. The singleton sets are in bijective correspondence with the meta-leaves. $\GSm$ proceeds by fusing groups of subtrees $\{\T^\mw\}$ in the partition such that $\{\mw\}=\CLD(\mv)$ are siblings in the META-tree. the result is that the group $\{\T^\mw\}$ is replaced by a single subtree $\T^\mv$, coarse-graining the partition. The partial solution on $\T^\mv$ is initially patched together from the local ones $\bigotimes_{\mw\in\CLD(\mv)}\V_{\T^\mw}$ before $\Enhancem$ is applied.
{
\RestyleAlgo{ruled}\LinesNumbered
\begin{algorithm}\label{alg:thealg}\caption{\nameofthealgorithmt}
\begin{spacing}{1.2}
	\textbf{input }$\T=(\VX,\EG)$, $\HSH=\{H_\e\}_{\e\in\EG}$\smallskip\;
	Construct META-tree \(\MT\).\;
	Initialize $\slice\leftarrow\widetilde\V$ as the set of nontrivial meta-leaves.\;
	$\V_{\{\vx\}}\leftarrow\CC^\ld$;\qquad $\E_{\{\vx\}}\leftarrow 0$ \quad\textbf{for} $\vx\in\VX$\;
	\While{\(\slice\neq\{\mr\}\)}{
		Pick \(\mv\in\MV\xpt S\) such that \(\CLD(\mv)\subset\slice\)\;
		\label{ln:Wassign}		\(\W\leftarrow\bigotimes_{\mw\in\CLD(\mv)}\V_{\T^\mw}\);\qquad
		\(\E\leftarrow(\sum_{\mw\in\CLD(\mv)}\E_{\T^\mw})-|\bigcup_{\mw\in\CLD(\mv)}\partiale\T^\mw|\)\;
		\label{ln:Vassign}\(\V_{\T^\mv},\E_{\T^\mv}\leftarrow\Enhancem(\W,\E)\)\tcp*{Reduce complexity and improve accuracy.}
		\(\slice\leftarrow(\slice\cup\{\mv\})\xpt\CLD(\mv)\)
	}
	$\Y\leftarrow\V_\T$\;
	\label{ln:basic}\textbf{repeat $\bigO(\frac{n}{\Delta}\log\frac{n}{\epsilon\Delta})$ times: }$\Y\leftarrow\Trim_{\xifinal}\big((|\EG|-H)\Y)$\;
	\smallskip
	\label{ln:outputline}
	\textbf{output }$\V_{(\infty,\E_0+\epsilon\Delta]}(\Into_\Y^\dag H\Into_\Y)$ where $\E_0\leftarrow \E_0(\Into_\Y^\dag H\Into_\Y)$
\end{spacing}
\end{algorithm}}

When $\GSm$ exits the loop, $\Y=\V_{\T}\subsp\HS$ is $\delta$-majorizing for $\Z(H)$ with a small constant error $\delta$. The final improvement to invere polynomial error is standard: Alternate between applying an AGSP for $H$ and trimming the resulting subspace. We use the un-normalized $\frac\Delta{2n}$-AGSP $|\EG|-H$\footnote{$H$ is encoded by a meta-TNO with bond dimension $\ld^4$. The bond pointing down to any meta-vertex $\mv$ enumerates a basis for $\lin({\partialv\T^\mv})$.}. This simple AGSP is used for the final error reduction despite its poor shrinkage factor because its \emph{target} space is exactly $\Z(H)$. Finally, using the accuracy $\epsilon\BOUNDED \Delta^2/n^2$ of the $\epsilon$-majorizing subspace we can accurately output the lowest-energy subspace $\widetilde\Z$ of $\Y$, which is $\epsilon$-close to $\Z(H)$.

\section{Abstract PAP}
It is well-known \cite{arad2013area,alvv17} that a $\shrink$-AGSP with entanglement rank $R^C\le\shrink$ 
across a cut $\HS=\HS_L\otimes\HS_R$ implies a bound on the entanglement entropy across the cut for sufficiently small $c$. Since the statement of the area law is not algorithmic, there is no requirement that this AGSP be efficiently representable. We hence refer to this AGSP as an \emph{abstract} AGSP. In particular there is no requirement that the abstract AGSP satisfy any locality condition within $L$. This means that one can use spectral truncations with relative ease in the construction of the abstract AGSP. Rephrasing in terms of PAPs we require an {abstract} $\shrink$-PAP $\L\subsp\lin(\HS_L)$ with $\shrink|\L|^2\BOUNDED1$ to prove an area law. In contrast, our algorithm requires an \emph{efficient} $\shrink$-PAP which needs to satisfy the same dimension bound and additionally be representable as meta-TNO on $\lin(\HS_L)$ with polynomial bond dimension. Thus, the proof of the area law provides a convenient stepping stone for analyzing algorithm \GSt.

\subsection{Well-conditioned Hamiltonian from truncation}

The effectiveness of the $\shrink$-PAP in \ref{lem:degmusage} depends inversely on the \emph{spectral diameter} $\E_{\max}(H)-\E_0(H)$. We say that $H$ is well-conditioned if the spectral diameter is not large compared to the gap $\Delta$.
A priori the spectral diameter is of order \(n\), which severely limits the effectiveness of the construction. Therefore, one preprocesses the Hamiltonian applying a \emph{spectral truncation} on regions outside the interactions of interest \cite{lvv15}. It is crucial that this procedure approximately preserves \(\Delta\).

\begin{definition}[e.g. \cite{arad2013area}]\label{def:trunc}
	Let \(F_M:\RR\to\RR\) be given by \(F_M(x)=\min\{x,M\}\), and define the \emp{spectral truncation} \(\hTRUNC H\eta:=F_{\E_0(H)+\eta}(H)\).
\end{definition}

Apply a Markov bound and lemma \ref{lem:symmetric} to estimate the closeness of subspaces from a perturbation bound on eigenvalues, a fact that will be be useful later in our analysis of the efficient PAP. The only new ingredient in this lemma is the symmetry lemma, the rest is standard.
\begin{lemma}[Eigenvalue closeness $\Rightarrow$ eigenspace closeness]\label{lem:spectrumtosubspace}
	Let $\E_0$ be the minimum energy, $\Z$ the space of ground states, and \(D=|\Z|\) the ground state degeneracy of \(H\). Let \(\widetilde H\) be a Hamiltonian such that
	\begin{itemize}
		\item$\widetilde H\le H$, and let
		\item\(\varepsilon=\E_0-\E_0(\widetilde H)\).
		\end{itemize}
Let $\widetilde\Z$ be the $D$-dimensional subspace with lowest \(\widetilde H\)-energy. Then \(\widetilde\Z\) is \(\delta\)-close to \(\Z(H)\) where
\[\delta=\frac\varepsilon{\E_D(\widetilde H)-\E_0(\widetilde H)}.\]
\end{lemma}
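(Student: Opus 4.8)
The plan is to apply the Symmetry Lemma (Lemma~\ref{lem:symmetric}), so that it suffices to show two things: that $\widetilde\Z$ is $\delta$-almost majorizing for $\Z(H)$, and that $\vd{\widetilde\Z}\le\vd\Z$. The second is immediate since both spaces have dimension $D$ by hypothesis. For the first, I would take an arbitrary unit vector $\ket\varphi\in\Z(H)$, so that $\bra\varphi H\ket\varphi=\E_0$, and bound its overlap with $\widetilde\Z$ from below. The natural quantity to control is $\bra\varphi\PP_{\widetilde\Z^\perp}\ket\varphi$, i.e.\ the weight of $\ket\varphi$ outside the low-energy subspace of $\widetilde H$; we want $\|\PP_{\widetilde\Z}\ket\varphi\|^2=1-\bra\varphi\PP_{\widetilde\Z^\perp}\ket\varphi\ge1-\delta$.

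The key estimate is a Markov-type bound on the energy. Since $\widetilde H\le H$ (Loewner order), we have $\bra\varphi\widetilde H\ket\varphi\le\bra\varphi H\ket\varphi=\E_0=\E_0(\widetilde H)+\varepsilon$. Writing $\widetilde H$ in its eigenbasis and splitting according to whether the eigenvalue is below $\E_D(\widetilde H)$ (the part inside $\widetilde\Z$, since $\widetilde\Z$ is the $D$-dimensional minimal-energy subspace) or at least $\E_D(\widetilde H)$ (inside $\widetilde\Z^\perp$), one gets
\[\bra\varphi\widetilde H\ket\varphi\ge\E_0(\widetilde H)\,\|\PP_{\widetilde\Z}\ket\varphi\|^2+\E_D(\widetilde H)\,\bra\varphi\PP_{\widetilde\Z^\perp}\ket\varphi=\E_0(\widetilde H)+\big(\E_D(\widetilde H)-\E_0(\widetilde H)\big)\bra\varphi\PP_{\widetilde\Z^\perp}\ket\varphi.\]
Combining with the upper bound $\bra\varphi\widetilde H\ket\varphi\le\E_0(\widetilde H)+\varepsilon$ and rearranging yields $\bra\varphi\PP_{\widetilde\Z^\perp}\ket\varphi\le\varepsilon/(\E_D(\widetilde H)-\E_0(\widetilde H))=\delta$, hence $\|\PP_{\widetilde\Z}\ket\varphi\|^2\ge1-\delta$. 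Since $\ket\varphi\in\sphere(\Z(H))$ was arbitrary, $\shadow{\widetilde\Z}{\Z(H)}\ge1-\delta$, i.e.\ $\widetilde\Z$ is $\delta$-almost majorizing for $\Z(H)$.

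With both hypotheses of the Symmetry Lemma verified, we conclude $\widetilde\Z\close\delta\Z(H)$. The only mild subtlety — the "main obstacle," though it is minor — is making sure the eigenvalue split is done on the correct side: one must use that $\widetilde\Z$ is exactly the span of eigenvectors of $\widetilde H$ with the $D$ smallest eigenvalues, so that every eigenvalue appearing in the $\widetilde\Z^\perp$ block is at least $\E_D(\widetilde H)$, which is precisely what makes the spectral gap of $\widetilde H$ (rather than of $H$) appear in the denominator. Everything else is a one-line Markov inequality plus the already-proved Symmetry Lemma.
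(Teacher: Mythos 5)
Your proposal is correct and follows essentially the same route as the paper: both reduce to one-sided almost-majorization via the Symmetry Lemma, bound $\widetilde H$ from below by $\E_0(\widetilde H)\PP_{\widetilde\Z}+\E_D(\widetilde H)\PP_{\widetilde\Z^\perp}$, and pair this with $\widetilde H\le H$ restricted to $\Z$. The paper phrases the Markov step as an operator inequality $\xi\,\Into_\Z^\dag\PP_{\widetilde\Z^\perp}\Into_\Z\ople\varepsilon\Id_\Z$ rather than for an arbitrary unit vector, but the content is identical.
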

\begin{proof}
	By the symmetry lemma it suffices to prove that $\widetilde\Z$ is $\delta$-majorizing for $\Z$.  
	Add the operator inequalities $\widetilde H\PP_{\widetilde\Z^\perp}\opge \E_D(\widetilde H)\PP_{\widetilde\Z^\perp}$ and $\widetilde H\PP_{\widetilde\Z}\opge\E_{0}(\widetilde H)\PP_{\widetilde\Z}$ to get $\widetilde H\opge\E_{0}(\widetilde H)+\xi\PP_{\widetilde\Z^\perp}$ where $\xi=\E_D(\widetilde H)-\E_0(\widetilde H)$. Rearrange and restrict to $\Z$ to get
\begin{equation}\label{eq:epsxi}\Into_\Z^\dag(\widetilde H-\E_{0}(\widetilde H))\Into_\Z\opge\xi\Into_\Z^\dag\PP_{\widetilde\Z^\perp}\Into_\Z\end{equation}
Since $\widetilde H\ople H$ we have $\Into_\Z^\dag\widetilde H\Into_\Z\ople\Into_\Z^\dag H\Into_\Z=\E_0\Id_\Z$, so the LHS is bounded by $\varepsilon$. Thus, \eqref{eq:epsxi} implies $\xi\Into_\Z^\dag\PP_{\widetilde\Z}\Into_\Z\ople\varepsilon$, i.e., $\widetilde\Z$ is $\varepsilon/\xi$-majorizing for $\Z$.
\end{proof}
Lemma \ref{lem:spectrumtosubspace} reduces the problem of showing closeness of subspaces to a perturbation bound on eigenvalues.
%
Such a perturbation bound has appeared in various forms in the literature \cite{lvv15,arad2012improved} and, formulated in great generality, in \cite{akl15}:
\begin{lemma}[\cite{akl15}, theorem 2.6 (ii)]
	\label{lem:hardtruncdirect}	
	Let \(H=\sum_\e H_\e\) be a local Hamiltonian with eigenvalues $\E_0\le\E_1\le\cdots$. Let \(\lhalf\) and \(\rhalf\) be closed regions such that $\{\EG_\lhalf,\EG_\rhalf\}$ form a partition of $\EG$, and the local interactions in $\partiale\lhalf$ satisfy \(\|H_\e\|\le1\). Consider the truncation \(\widetilde H=\hTRUNC{H_\lhalf}{\eta}+H_{\outside}\) and let \(\tilde\E_0\le \tilde\E_1\le\cdots\) be the eigenvalues of \(\widetilde H\) with multiplicity. Then,
	\[0\quad\le\quad\E_j-\tilde\E_j\quad\le\quad96\sqrt2\d^{\frac32}\exp\big[\tfrac1{8\d}(-\eta+\E_j-\tilde\E_0+33|\partiale\lhalf|)\big].\]
\end{lemma}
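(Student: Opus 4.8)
The statement to prove is Lemma~\ref{lem:hardtruncdirect}, attributed to \cite{akl15}. Since it is cited as a known result, the natural ``proof'' is a reduction to the cited theorem plus a short argument filling in the bookkeeping. Below is the plan.

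\medskip

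\textbf{Plan.} The plan is to cite \cite{akl15}, theorem 2.6 (ii), essentially verbatim and then verify that our hypotheses match theirs. First I would recall the setup of \cite{akl15}: they consider a bipartition of the edge set into $\EG_\lhalf$ and $\EG_\rhalf$, a cut of bounded interaction strength $\|H_\e\|\le1$ across the boundary, and the truncated Hamiltonian obtained by replacing $H_\lhalf=\sum_{\e\in\EG_\lhalf}H_\e$ with its spectral truncation $F_{\E_0(H_\lhalf)+\eta}(H_\lhalf)$ while leaving $H_\rhalf$ untouched. Our $\widetilde H=\hTRUNC{H_\lhalf}\eta+H_\outside$ is exactly this object (note $H_\outside=H_\rhalf$ in our notation). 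The monotonicity $F_M(x)\le x$ gives $\hTRUNC{H_\lhalf}\eta\ople H_\lhalf$ as operators, hence $\widetilde H\ople H$, which yields the lower bound $0\le\E_j-\tilde\E_j$ by Weyl's inequality; this is the easy half and I would state it in one line.

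\medskip

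\textbf{Key steps.} For the upper bound I would invoke \cite{akl15} directly. The structure of their argument (which I would sketch rather than reproduce) is: the difference $H_\lhalf-\hTRUNC{H_\lhalf}\eta$ is supported on the high-energy eigenspace of $H_\lhalf$ above threshold $\E_0(H_\lhalf)+\eta$; a detectability-lemma / approximate-ground-space-projector argument shows that the low-lying eigenvectors of $\widetilde H$ have exponentially small weight on that high-energy subspace, the decay rate being governed by the local spectral gap $1/(8\d)$ coming from the boundary interaction strength and the degree; combining this with a variational (min-max) comparison of the $j$-th eigenvalues produces the stated bound with the explicit constants $96\sqrt2\,\d^{3/2}$, the exponent $\tfrac1{8\d}$, and the additive terms $-\eta+\E_j-\tilde\E_0+33|\partiale\lhalf|$. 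The only adjustment needed is notational: their ``boundary size'' parameter is our $|\partiale\lhalf|$, their degree parameter is our $\d$, and their reference energy is $\tilde\E_0=\E_0(\widetilde H)$. I would remark that the hypothesis $8\Delta\le b-a$ appearing in nearby lemmas is not needed here; this lemma is purely a perturbation statement.

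\medskip

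\textbf{Main obstacle.} Since the result is quoted from \cite{akl15}, there is no genuine mathematical obstacle in this excerpt; the only care required is making sure the dictionary between the two papers' conventions is correct --- in particular that ``$\E_j$'' here denotes the $j$-th eigenvalue of the \emph{original} $H$ (not of $H_\lhalf$), that the truncation level is measured from $\E_0(H_\lhalf)$ and not from $\E_0(H)$, and that $\d$ really is the quantity controlling the $1/(8\d)$ rate in \cite{akl15}. If one instead wanted a self-contained proof, the hard part would be the exponential-decay estimate for the weight of $\widetilde H$'s low eigenvectors on the truncated subspace, which is where the detectability-lemma machinery enters; but given the citation I would not reprove it.
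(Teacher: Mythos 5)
Your proposal matches the paper exactly: the paper presents Lemma~\ref{lem:hardtruncdirect} as a direct citation of \cite{akl15} theorem 2.6 (ii) and offers no proof of its own, so ``cite and check the dictionary'' is precisely what is done. Your one-line observation that $\widetilde H\ople H$ (hence $\tilde\E_j\le\E_j$ by Weyl) is the only piece one might add, and it is consistent with the paper's treatment.
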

The RHS of the inequality above involves the energy $\tilde\E_0$ of the truncated Hamiltonian, which is a slight complication. 
To handle this, let $\pad{\rhalf}{-r}$ be the sub-region on vertex set $\{\vx\in\rhalf\mid\dist(\partialv\rhalf)\ge r\}$ and note that
\[\E_0\le\E_0(H_{\lhalf})+\E_0(H_{\pad{\rhalf}{-1}})+|\partiale\lhalf|.\]
Indeed, such an energy is achieved by the product of a ground state of $H_\lhalf$ and one of $H_{\pad{\rhalf}{-1}}$. Furthermore, by the triangle inequality,
\[\tilde\E_0=\E_0(\hTRUNC{H_{\lhalf}}\eta+H_{\partialv\lhalf}+H_{\pad{\rhalf}{-1}})\ge\E_0(\hTRUNC{H_{\lhalf}}\eta+H_{\pad{\rhalf}{-1}})-|\partiale\lhalf|.\]
Since $\HS=\HS_\lhalf\otimes\HS_{\pad{\rhalf}{-1}}$ we have $\E_0(H_{\lhalf})+\E_0(H_{\pad{\rhalf}{-1}})=\E_0(\hTRUNC{H_{\lhalf}}\eta+H_{\pad{\rhalf}{-1}})$, so $\tilde\E_0\ge\E_0-2|\partiale\lhalf|$ and we get that in the setting of lemma \ref{lem:hardtruncdirect},
\begin{equation}\label{eq:hardtrunceq}0\quad\le\quad\E_j-\tilde\E_j\quad\le\quad96\sqrt2\d^{\frac32}\exp\big[\tfrac1{8\d}(-\eta+\E_j-\E_0+35|\partiale\lhalf|)\big].\end{equation} 
 Let $x\BOUNDS y$ and $y\BOUNDED x$ denote assumptions which require $x\ge Cy$ for a sufficiently large constant $C$ (See appendix \ref{sec:bigOnotation} for details).
Let $x\deflarge y$ denote defining $x$ as a sufficiently large multiple of $y$. Similarly $y\defsmall x$ defines $y$ as a sufficiently small multiple of $x$. $y\bounded x$ is synonymous with $y=\bigO(x)$.

The robustness of the ground states to truncation now follows using lemma \ref{lem:spectrumtosubspace} and \eqref{eq:hardtrunceq}.
\begin{corollary}
	\label{cor:robusttotrunc}
	Let $\inside$, $\barrier$, and $\outside$ be closed regions of $\T$ such that $\{\EG_\inside,\EG_\barrier,\EG_\outside\}$ partition $\EG$, and the two regions $\inside$, $\outside$ are not adjacent. Let \(0<\delta\le1\) and consider the truncation \(\widetilde H=\hTRUNC{H_\inside}{\eta}+H_{\barrier}+\hTRUNC{H_\outside}{\eta}\) where
	\begin{equation}\label{eq:Wchoice}\eta\quad\BOUNDS\quad {|\partiale \inside|+|\partiale \outside|+\log\frac1{\Delta\cdot\delta}}.\end{equation}
	Then \(\tilde\E_j\ge\E_j-\frac{\delta\Delta}2\) for $j=0,D$, and the $D$-dimensional space with minimal $\widetilde H$-energy is \(\delta\)-close to \(\Z\). 
\end{corollary}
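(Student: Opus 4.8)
The plan is to obtain $\widetilde H$ from $H$ by removing the two spectral truncations one at a time, controlling each step with the hard-truncation perturbation bound \eqref{eq:hardtrunceq}, and then to deduce the closeness statement from lemma \ref{lem:spectrumtosubspace}. First I would introduce the intermediate Hamiltonian $\widehat H:=H_\inside+H_\barrier+\hTRUNC{H_\outside}{\eta}$. Because $\inside$ and $\outside$ are non-adjacent, every edge of $\partiale\outside$ lies in $\EG_\barrier$ and therefore appears untruncated in $\widehat H$ with norm $\le1$, so \eqref{eq:hardtrunceq} applies to the cut $\{\outside,\,\inside\cup\barrier\}$ and gives $0\le\E_j(H)-\E_j(\widehat H)\le\mathrm{err}_1$, where $\mathrm{err}_1$ is the exponentially small right-hand side of \eqref{eq:hardtrunceq} with $|\partiale\outside|$ in the exponent. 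In particular $\widehat H\le H$, hence $\E_j(\widehat H)\le\E_j$ and $\E_0(\widehat H)\ge\E_0-\mathrm{err}_1$. I would then apply \eqref{eq:hardtrunceq} a second time to remove the $\inside$-truncation, passing from $\widehat H$ to $\widetilde H$ across the cut $\{\inside,\,\barrier\cup\outside\}$ (the relevant boundary edges $\partiale\inside$ again sit inside the untouched block $H_\barrier$), obtaining $0\le\E_j(\widehat H)-\E_j(\widetilde H)\le\mathrm{err}_2$, with $|\partiale\inside|$ and $\E_j(\widehat H)-\E_0(\widehat H)$ in the exponent.

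For the bookkeeping I would use that for $j\in\{0,D\}$ one has $\E_j-\E_0\le\Delta$ (trivially for $j=0$; by definition of the gap for $j=D$), hence $\E_j(\widehat H)-\E_0(\widehat H)\le\Delta+\mathrm{err}_1\le\Delta+1$, and that the prefactor $96\sqrt2\,\d^{3/2}$ and the factor $\tfrac1{8\d}$ in the exponent involve only the constant $\d$. Choosing $\eta$ as in \eqref{eq:Wchoice} with a large enough absolute constant — which absorbs the $\d$-dependent prefactor, the terms $35|\partiale\inside|$ and $35|\partiale\outside|$, and $\E_j-\E_0\le\Delta$ — then forces $\mathrm{err}_1,\mathrm{err}_2\le\tfrac{\delta\Delta}4$. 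Adding the two steps yields $0\le\E_j-\tilde\E_j\le\tfrac{\delta\Delta}2$ for $j=0,D$, which is the first assertion.

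For the closeness claim I invoke lemma \ref{lem:spectrumtosubspace}: $\widetilde H\le H$ holds since $\hTRUNC{H_\inside}{\eta}\le H_\inside$ and $\hTRUNC{H_\outside}{\eta}\le H_\outside$; the perturbation parameter is $\varepsilon=\E_0-\tilde\E_0\le\tfrac{\delta\Delta}2$; and the lemma's denominator satisfies $\E_D(\widetilde H)-\E_0(\widetilde H)\ge(\E_D-\tfrac{\delta\Delta}2)-\E_0=\Delta-\tfrac{\delta\Delta}2\ge\tfrac\Delta2$, using $\tilde\E_0\le\E_0$ and $\delta\le1$. Lemma \ref{lem:spectrumtosubspace} then gives that the $D$-dimensional subspace of lowest $\widetilde H$-energy is $\tfrac{\varepsilon}{\E_D(\widetilde H)-\E_0(\widetilde H)}\le\tfrac{\delta\Delta/2}{\Delta/2}=\delta$-close to $\Z$.

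The step I expect to require the most care is the \emph{second} application of \eqref{eq:hardtrunceq}: the intermediate operator $\widehat H$ is no longer an edge-local Hamiltonian $\sum_\e H_\e$, so one must check that the bound — and the short derivation of \eqref{eq:hardtrunceq} from lemma \ref{lem:hardtruncdirect} — still goes through. The point is that it only uses the tensor-product splitting $\HS=\HS_\inside\otimes\HS_{\barrier\cup\outside}$ and the operator norms of the interactions crossing $\partiale\inside$, all of which live inside the block $H_\barrier$ that the first truncation left untouched (again by non-adjacency of $\inside$ and $\outside$); alternatively one re-runs the derivation from the general statement of \cite{akl15}. A minor point is to verify that the slack $\E_0(\widehat H)\ge\E_0-\mathrm{err}_1$ from the first step does not spoil the exponent in the second, which is immediate once $\mathrm{err}_1\le1$.
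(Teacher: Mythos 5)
Your proposal is correct and follows essentially the same route as the paper's proof: two successive applications of the hard-truncation perturbation bound \eqref{eq:hardtrunceq} (once per truncated block, using the non-adjacency of $\inside$ and $\outside$ so that the crossing interactions always live in the untouched $H_\barrier$), followed by lemma \ref{lem:spectrumtosubspace} with $\varepsilon=\delta\Delta/2$ and denominator $\ge\Delta/2$. The only cosmetic difference is the order in which you truncate the two blocks, which is symmetric and immaterial. Your explicit remark that the second application of \eqref{eq:hardtrunceq} is to an operator $\widehat H$ that is not a fully edge-local Hamiltonian — and that one must therefore lean on the generality of the AKL15 perturbation bound (which needs locality only for the block being truncated and the boundary terms, not for the far block) — is in fact a point the paper's own proof passes over silently; it is worth stating, and your resolution is the right one.
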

\begin{proof}
	Apply \eqref{eq:hardtrunceq} with $\rhalf'=\barrier\cup\rhalf$ substituted for $\rhalf$ to control $\E_j(\hTRUNC{H_\lhalf}{\eta}+H_{\barrier\cup\rhalf})$. Then apply the \eqref{eq:hardtrunceq} again, this time passing from $H_\rhalf+(\hTRUNC{H_\lhalf}\eta+H_\barrier)$ (which was previously the $\widetilde H$ of \eqref{eq:hardtrunceq}) to $\widetilde H=\hTRUNC{H_\rhalf}\eta+(\hTRUNC{H_\lhalf}\eta+H_\barrier)$.
	The two applications of \eqref{eq:hardtrunceq} imply that
\[\E_0-\tilde\E_0\bounded\exp\big[\tfrac1{8\d}(-\eta+35|\partiale\lhalf|)\big]\AND\E_D-\tilde\E_D\bounded\exp\big[\tfrac1{8\d}(-\eta+\Delta+35|\partiale\lhalf|)\big].\]
Since $\Delta=\bigO(1)$ we can absorb $\Delta$ in $\bigO(|\partiale\lhalf|)$, and the choice of $\eta$ yields $\tilde\E_j\ge\E_j-\delta\Delta/2$ for $j=0,D$. Conclude the proof by applying \ref{lem:spectrumtosubspace} (eigenvalue closeness $\Rightarrow$ eigenspace closeness) with $\varepsilon=\delta\Delta/2$ and $\tilde\E_D-\tilde\E_0\ge\tilde\E_D-\E_0\ge\Delta-\delta\Delta/2\ge\Delta/2$.
\end{proof}

\subsection{Raising the Viability degree.}
\label{sec:HDVabstract}

\cite{arad2013area} proved that for a Hamiltonian $H$ on the line, the entanglement of $H^m$ across a cut grows only sub-exponentially with $m$. This was an essential realization in their proof of the area law, as this meant that the entanglement rank of the Chebyshev AGSP grows slower than its spectral shrinking factor. We extend the method of \cite{arad2013area} to prove a entanglement bound on $H^m$ on trees, provided they satisfy the fractal dimension bound $\beta<2$. Carrying out our arguments in the language of higher-degree Viability means that we get a local method for constructing the PAP which makes it easy to apply algorithmically.

Let $\Reg\subset\T$ be a closed region with $|\partial\Reg|=\bigO(1)$, and let the \emph{separation distance} $s\in\NN_0$ be a natural number such that all $\bigO(1)$ connected components of $\Reg$ are a distance at least $2s$ apart.
For $r\ge0$, define the \emph{padded region} $\pad{\Reg}{r}$ on vertex set $\{\vx\in\VX\mid\dist(\vx,\Reg)\le r\}$ and the \emph{vertex shell} $\partial_r=\partialv\pad{\Reg}{r}$.

For $r\in[s]_0=\{0,\ldots,s\}$ and a region $\Reg$, define the \emph{barrier} $\barrier_r=\closure{\pad{\Reg}r\xpt\Reg}$, and the {rest}, $\outside_r=\closure{\T\xpt\pad{\Reg}r}$, where the bar denotes the closure. Let $\barrier=\barrier_s$ and $\outside=\outside_s$.

In this section we assume that $H=H_\Reg+H_{{\barrier}}+H_{\outside}$, where $H_{\barrier}$ is a sum of local interactions in $\barrier$. $H_\Reg$ and $H_{{\outside}}$ are arbitrary non-local operators on $\HS_{\Reg}$ and $\HS_{{\outside}}$, respectively.
\begin{definition}
	For $r\in\NN_0$ let ${\Sigma_r}\subset\lin({\pad\Reg{r}})$ be the basis for the set of \emp{1-particle operators} in vertex shell $\partial_r$ given by
	\[\Sigma_r=\Big\{\Id_{\pad\Reg{r}\xpt\{\vx\}}\otimes\ket i\bra{j}_{\vx}\:\Big|\:\vx\in\partial_r,i,j\in[\ld]\Big\}.\]
	Let $\Sigma'_r=\Sigma_{r-1}$ for $r\in[s]$.
\end{definition}
We further define a viable subspace $\S\subsp\lin(\pad\Reg{s})$ for $\CC H$ by
\begin{equation}\S=\CC H_{\pad\Reg{s}}+\lin(\partial_s),\label{eq:thespaceA}\end{equation}
where $\lin(\partial_s)$ is short for the space of operators $\Id_{\pad\Reg{s-1}}\otimes B$ for some non-local operator $B$ on the vertex shell $\partial_s$, and $+$ denotes the combined span. We seek an $m$-Viable subspace $\L\subsp\lin(\HS_{\pad\Reg{s}})$ for $\S$. Then by lemma \ref{lem:transitive}, $\L$ is viable for any degree-$m$ polynomial in $H$. 

Let $\EG_0=\Reg$ and define the edge shells $\EG_r=\EG\xpt\big(\pad\Reg{r-1}\cup\outside_{r}\big)$ for $r\in[s]_0$. Let $H_r=\sum_{\e\in\EG_r}H_\e$.
%
Define the \emph{composition} of sets $\S_i\subset\lin(\HS)$ in an operator space by
\[\S_1\circ\cdots\circ\S_k=\Big\{\prod_{i=1}^kS_i\:\Big|\:S_1\in \S_1,\ldots, S_k\in\S_k\Big\}.\]
We now define a subspace $\L_m(H_\Reg,H_\barrier)$ of $\lin(\Reg)$ before proceeding to bound its dimension and show that it is degree-$m$ Viable for $\S$ defined in \eqref{eq:thespaceA}.
\begin{definition}\label{def:Lmdef}
	Let $H_0\in\lin(\Reg)$ and $H_\barrier=\sum_{j=1}^sH_j$ where $H_1,\ldots,H_s$ are local Hamiltonians on $\EG_1,\ldots,\EG_s$. Let $\ell=\lfloor m/s\rfloor$. Given $r\in[s]$ and $\bs\omega=(\omega_0,\ldots,\omega_r)\in[m]_0^{r+1}$, define the weighted Hamiltonian $H^{(\bs\omega)}\in\lin({\pad\Reg{r}})$ by
	\begin{equation}\label{eq:weighted}\textstyle{H^{(\bs\omega)}_{[0,r)}:=\sum_{j=0}^{r-1} 2^{\omega(j)}H_j}.\end{equation}
	For $\aa=(a_0,\ldots,a_\ell)\in[m]_0^{\ell+1}$ introduce the set $\S_{\bs\omega,\aa}\in\lin(\pad\Reg{r})$:
\[{\S_{\bs\omega,\aa}=({H_{[0,r)}^{(\bs\omega)}})^{a_0}\circ\Sigma'_r\circ({H^{(\bs\omega)}_{[0,r)}})^{a_1}\circ\Sigma'_r\circ({H^{(\bs\omega)}_{[0,r)}})^{a_2}\circ \cdots \circ\Sigma'_r\circ({H^{(\bs\omega)}_{0,r)}})^{a_\ell}}.\]
Let $\sum_{\bs\omega,\aa}$ denote the combined span of subspaces indexed by $r\in[s]$, $\bs\omega:[m]_0^{r+1}$, and $\aa\in[m]_0^{\ell+1}$ with $a_0+\cdots+a_\ell\le m$, and define the $m$-powered operator subspace $\L_m(H_\Reg,H_\barrier)$ as
	\[\L_m(H_{\Reg},H_{\barrier})=\sum_{\bs\omega,\aa}\supp_{\Reg}\big(\S_{\bs\omega,\aa}\big).\]
\end{definition}
\begin{observation}\label{obs:Ldimbound}The $m$-powered operator subspace satisfies the following dimension bounds:
	\begin{enumerate}
		\item\label{it:itemLmdim}
			$\vd{\L_m(H_\Reg,H_\barrier)}\le D_m=m^{s+\ell+\bigO(1)}|\barrier|^\ell\ld^{2\ell+2|\barrier|}$ where $\ell=\lfloor m/s\rfloor$. In particular, 
	\[\vd{\L_m(H_{\Reg},H_\barrier)}=\exp\Big[\bigO\Big(\big(\tfrac{m}{s}+s\big)\log m+|\barrier|\Big)\Big].\]
\item\label{it:TNOdegm}
	Suppose $\Reg$ is the standard subtree $\T^\mv$ and let $H_{\T^\mv}$ be encoded by meta-\textup{TNO} $\TNOgen$ on $\MT^\mv$ with bond dimension $B$. Then $\L_m(H_{\T^\mv},H_{\barrier_s(\T^\mv)})$ is encoded by a meta-\textup{TNO} $\TNO{m}$ with bond dimension $B^mD_m$.
	\end{enumerate}
\end{observation}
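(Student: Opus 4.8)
The plan is to prove the two dimension bounds in Observation~\ref{obs:Ldimbound} essentially by inspection of Definition~\ref{def:Lmdef}, tracking how many ``degrees of freedom'' each constituent factor contributes.

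For item~\ref{it:itemLmdim}, I would first count the number of index tuples $(r,\bs\omega,\aa)$ over which we take the combined span. There are $s$ choices for $r$, at most $m^{r+1}\le m^{s+1}$ choices for $\bs\omega\in[m]_0^{r+1}$, and at most $m^{\ell+1}$ choices for $\aa\in[m]_0^{\ell+1}$ (the constraint $a_0+\cdots+a_\ell\le m$ only helps); this already accounts for the $m^{s+\ell+\bigO(1)}$ prefactor. For each fixed tuple I would bound $\vd{\supp_\Reg(\S_{\bs\omega,\aa})}$. The set $\S_{\bs\omega,\aa}$ is a composition of $\ell+1$ powers of the single fixed operator $H^{(\bs\omega)}_{[0,r)}$ interleaved with $\ell$ copies of $\Sigma'_r$. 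Powers of a fixed operator contribute nothing to the count of distinct products; only the choices inside each $\Sigma'_r=\Sigma_{r-1}$ matter, and $|\Sigma_{r-1}|=|\partial_{r-1}|\cdot\ld^2\le|\barrier|\cdot\ld^2$. So $|\S_{\bs\omega,\aa}|\le(|\barrier|\ld^2)^\ell$. Then I invoke the dimension bound stated just after Definition~\ref{def:leftsupp}, namely $\vd{\supp_\Reg(\S)}\le|\S|\cdot\vd{\HS_\barrier}^2$, where here the ``boundary'' Hilbert space is supported on $\barrier$ so $\vd{\HS_\barrier}=\ld^{|\barrier|}$ and $\vd{\HS_\barrier}^2=\ld^{2|\barrier|}$. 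Multiplying through gives $\vd{\supp_\Reg(\S_{\bs\omega,\aa})}\le|\barrier|^\ell\ld^{2\ell}\ld^{2|\barrier|}$, and combining with the tuple count yields $D_m=m^{s+\ell+\bigO(1)}|\barrier|^\ell\ld^{2\ell+2|\barrier|}$. Taking logarithms and using $\ell=\lfloor m/s\rfloor$ gives the stated $\exp[\bigO((\tfrac ms+s)\log m+|\barrier|)]$ form, where the $s\log m$ term comes from $\log(m^{s+1})$ and the $|\barrier|$ term absorbs $|\barrier|\log\ld$ with $\ld=\bigO(1)$.

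For item~\ref{it:TNOdegm}, I would argue that each generator of $\L_m(H_{\T^\mv},H_{\barrier_s})$ is a product of at most $m$ factors, each of which is either (a copy of) $H^{(\bs\omega)}_{[0,r)}$ --- which is a weighted sum of pieces of $H_{\T^\mv}$ and hence still representable as a meta-TNO on $\MT^\mv$ with bond dimension $B$ --- or a single-particle operator from $\Sigma'_r$, which has bond dimension $1$ (it acts on one meta-leaf). A product of $m$ meta-TNOs, each of bond dimension at most $B$, composes to a meta-TNO of bond dimension at most $B^m$ on the same tree. Finally, taking the $\Reg$-support and the combined span over the $\le D_m$ index tuples and generators amounts to stacking at most $D_m$ such meta-TNOs and adding a root index ranging over them; this multiplies the bond dimension by $D_m$, giving $B^m D_m$ as claimed. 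I would remark that the $\supp_\Reg$ operation does not increase bond dimension since it just contracts the $\barrier$-legs against basis vectors, which can be folded into the existing tensors.

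The main obstacle I anticipate is purely bookkeeping: making sure the various index ranges ($r\in[s]$, $\bs\omega\in[m]_0^{r+1}$, $\aa\in[m]_0^{\ell+1}$) and the nested structure of $\S_{\bs\omega,\aa}$ are counted without double-counting or undercounting, and correctly identifying which Hilbert space plays the role of the ``boundary'' $\barrier$ in the generic $\supp$-dimension bound. There is no deep step here --- the content was already extracted in Definition~\ref{def:Lmdef} and the $\supp$-dimension estimate --- but one must be careful that the ``only the $\Sigma'_r$ factors contribute combinatorially'' observation is genuinely valid, i.e.\ that replacing each power $(H^{(\bs\omega)}_{[0,r)})^{a_i}$ by a single fixed operator does not lose generality for the span, which it does not since $a_i$ is fixed once the tuple is fixed.
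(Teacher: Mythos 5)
Your argument matches the paper's proof in structure and in all the key steps: the count of index tuples $(r,\bs\omega,\aa)$, the bound $|\S_{\bs\omega,\aa}|\le(|\barrier|\ld^2)^\ell$ via $|\Sigma'_r|=|\partial_{r-1}|\ld^2\le|\barrier|\ld^2$, the invocation of $\vd{\supp_\Reg(\S)}\le|\S|\,\vd{\HS_\barrier}^2$, and for item~\ref{it:TNOdegm} the observation that each product in $\S_{\bs\omega,\aa}$ is a composition of at most $m$ copies of $H^{(\bs\omega)}$ (bond dimension $B$) interleaved with single-site operators (bond dimension $1$), so that stacking the $\le D_m$ such products gives bond dimension $B^mD_m$.

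There is one small hole in your passage to the simplified form of item~\ref{it:itemLmdim}: you account for $\log(m^{s+\ell+\bigO(1)})$ and for $\log(\ld^{2\ell+2|\barrier|})$, but you never address the factor $|\barrier|^\ell$, which contributes $\ell\log|\barrier|$ to the log-dimension. This term is not automatically $\bigO\big((\tfrac ms+s)\log m+|\barrier|\big)$: you need $|\barrier|\bounded s^{\bigO(1)}\le m^{\bigO(1)}$ (which follows from the standing fractal-dimension hypothesis $|\B_\vx(s)|\bounded s^\beta$ together with $|\partial\Reg|\bounded 1$) to conclude $\ell\log|\barrier|\bounded\ell\log m\bounded(\tfrac ms)\log m$. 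The paper makes exactly this absorption step explicit. Everything else in your outline is correct.
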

\begin{proof}
	$\Sigma_r'$ consists of $|\partial_r'|\ld^2$ operators where $\partial'_r$ is the $r-1$\ts{st} vertex shell, so $|\S_{\bs\omega,\aa}|\le|\Sigma'_r|^\ell=|\partial_r'|^{\ell}\ld^{2\ell}$. By the remark following definition \ref{def:leftsupp}, $\vd{\supp_{\Reg}(\S_{\bs\omega,\aa})}\le|\S_{\bs\omega,\aa}|\vd{\HS_\barrier}^2\le|\partial'_r|^{\ell}\ld^{2\ell+2|\barrier|}$.
%
	For each $r\in[s]$ we have $(m+1)^{r+1}$ choices of $\bs\omega$ and $\binom{m+|[\ell]_0|}{|[\ell]_0|}<(m+1)^{\ell+1}$ choices of $\aa$, for a total of $m^{r+\ell+\bigO(1)}$ choices of $\bs\omega,\aa$ for each $r$. Thus,
	\[\vd{\L_m(H_\Reg,H_\barrier)}\le\sum_{r=1}^s m^{r+\ell+\bigO(1)}|\partial'_r|^\ell\ld^{2\ell+2|\barrier|}\le m^{s+\ell+\bigO(1)}|\barrier|^\ell\ld^{2\ell+2|\barrier|}\]

	Absorb $V^\ell\bounded s^{2\sd\cdot\ell}=s^{\bigO(\ell)}$ in $m^{\bigO(\ell)}$ to get the simplified form of item \ref{it:itemLmdim}.
%
%
	Item \ref{it:TNOdegm} follows by the same anaysis as item \ref{it:itemLmdim} since each product
\begin{equation}\label{eq:theprod}{(H^{(\bs\omega)})}^{a_0}(\Id\otimes\ket{i_1}\bra{j_1}_{\vx_1})\:{(H^{(\bs\omega)})}^{a_1}\:(\Id\otimes\ket{i_2}\bra{j_2}_{\vx_2}) \cdots \:{(H^{(\bs\omega)})}^{a_\ell}\end{equation}
in $\S_{\bs\omega,\aa}$ is represented by a meta-TNO with bond dimension $B^m$.
\end{proof}
\subsection{Proof of higher-degree Viability}
It remains to prove that $\L_m(H_{\Reg},H_{\barrier})$ is degree-$m$ Viable for the subspace $\S\subsp\lin(\pad\Reg{s})$ introduced in \eqref{eq:thespaceA}. Our proof is similar to lemma 4.2 of \cite{arad2013area} which bounds the entanglement rank of $H^m$ using the pigeonhole principle. The main difference is lemma \ref{lem:spanlemma} below which replaces the \emph{formal commuting variables} of \cite{arad2013area} with \emph{histograms} and chooses explicit weights $\bs\omega$ rather than the random coefficients used in \cite{alvv17}.

\begin{definition}\label{def:histo}\leavevmode\begin{itemize}\item
	Let $\ii\in \I^*$ be a word (finite sequence) on alphabet $\I$. The \emp{histogram} $\hist_\ii\in\NN_0^\I$ is the measure on $\I$ given by $\hist_\ii(i)=|\{t\mid \ii(t)=i\}|$ for all $i\in \I$.
\item Let $M\subset\NN_0^\I$ be a set of histograms. We let $[M]$ denote the equivalence class of words whose histogram belongs to $M$.\end{itemize}
\end{definition}
Given an indexed collection of operators $\HSH=(H_{it})_{i\in \I}^{t\in[m]}$, write a product $H_{i_1,1}\cdots H_{i_m,m}$ as $\HSH(\ii)$ where $\ii=(i_1,\ldots,i_m)$. For a set of histograms $M\in\NN_0^\I$ let $\HSH[M]=\sum_{\ii\in[M]}\HSH(\ii)$. 
It turns out that there are specific large set $M$ of histograms over which $\HSH[M]$ can be evaluated efficiently. However, such sums do not a priori give us the sums over smaller sets $M'\subset M$. The following lemma solves this problem by supplementing $\HSH$ with collections $\HSH^{(\bs\omega)}$ of \emph{weighted} Hamiltonians. 
Define the \emph{simplices} $\simplex{\I}(k)=\{\hist\in\NN_0^{\I}\::\:\|\hist\|_1=k\}$ and $\Simplex{\I}(k)=\{\hist\in\NN_0^{\I}\::\:\|\hist\|_1\le k\}$.
\begin{restatable}{lemma}{thespanlemma}\label{lem:spanlemma}
	Let $M\subset\Simplex\I(m)$ be a set of histograms. Then for any $\hist\in M$,
	\[\HSH[\hist]\in\Span\{\HSH_{\bs\omega}[M]\mid{\boldsymbol\omega}\in[m]_0^\I\},\]
	where $\HSH_{\bs\omega}=(\tilde H_{it})_{i\in\I}^{t\in[m]}$ is the collection of weighted Hamiltonians $\tilde H_{it}=2^{\omega_i}H_{it}$.
\end{restatable}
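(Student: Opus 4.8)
We are given a set $M\subset\Simplex\I(m)$ of histograms, a collection $\HSH=(H_{it})_{i\in\I}^{t\in[m]}$, and for each weight vector $\bs\omega\in[m]_0^\I$ the weighted collection $\HSH_{\bs\omega}$ with $\tilde H_{it}=2^{\omega_i}H_{it}$. We want to recover each individual $\HSH[\hist]$, $\hist\in M$, from the family $\{\HSH_{\bs\omega}[M]\}_{\bs\omega}$. The plan is to exhibit $\HSH_{\bs\omega}[M]$ as a linear combination (over $\bs\omega$) of the terms $\HSH[\hist]$ with a coefficient matrix that is invertible, using a Vandermonde-type argument.

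\textbf{Step 1: expand $\HSH_{\bs\omega}[M]$ in terms of the $\HSH[\hist]$.} For a word $\ii\in[M]$ with histogram $\hist=\hist_\ii$, the product $\HSH_{\bs\omega}(\ii)=\prod_{t=1}^{m}2^{\omega_{\ii(t)}}H_{\ii(t),t}$ picks up a scalar factor $\prod_{t}2^{\omega_{\ii(t)}}=2^{\sum_{i\in\I}\omega_i\hist(i)}=2^{\langle\bs\omega,\hist\rangle}$, which depends only on $\hist$ and not on the order of letters in $\ii$. Summing over all $\ii\in[M]$ and grouping by histogram gives
\[
\HSH_{\bs\omega}[M]\quad=\quad\sum_{\hist\in M}2^{\langle\bs\omega,\hist\rangle}\,\HSH[\hist].
\]
So the map from the tuple $(\HSH[\hist])_{\hist\in M}$ to the family $(\HSH_{\bs\omega}[M])_{\bs\omega}$ is governed by the matrix $V=\big(2^{\langle\bs\omega,\hist\rangle}\big)_{\bs\omega,\hist}$, i.e.\ $V_{\bs\omega,\hist}=\prod_{i\in\I}(2^{\omega_i})^{\hist(i)}$.

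\textbf{Step 2: invert the generalized Vandermonde matrix.} It suffices to find, for each fixed target histogram $\hist_0\in M$, coefficients $(c_{\bs\omega})$ such that $\sum_{\bs\omega}c_{\bs\omega}2^{\langle\bs\omega,\hist\rangle}=\delta_{\hist,\hist_0}$ for all $\hist\in M$; then $\HSH[\hist_0]=\sum_{\bs\omega}c_{\bs\omega}\HSH_{\bs\omega}[M]$ lies in the claimed span. Because each histogram $\hist\in M\subset\Simplex\I(m)$ has all coordinates in $[m]_0$, the ``evaluation points'' $\hist$ are distinct lattice points, and the functions $\bs\omega\mapsto\prod_i x_i^{\omega_i}$ with $x_i=2^{\omega_i}$ ranging over $[m]_0$ in each coordinate form a full tensor-product Vandermonde system: the matrix $\big(2^{\langle\bs\omega,\hist\rangle}\big)_{\bs\omega,\hist}$ restricted to $\bs\omega,\hist\in[m]_0^\I$ is the $\I$-fold tensor (Kronecker) product of one-dimensional Vandermonde matrices $\big((2^{a})^{b}\big)_{a,b\in[m]_0}=\big(2^{ab}\big)_{a,b\in[m]_0}$, each of which is invertible since the nodes $1,2,2^2,\dots,2^m$ are distinct. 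Hence the full matrix is invertible, $M$ being a subset of its index set in each variable. Reading off the $\hist_0$-column of the inverse gives the desired $(c_{\bs\omega})$, indexed by $\bs\omega\in[m]_0^\I$.

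\textbf{Main obstacle.} The only nontrivial point is recognizing that weighting by $2^{\omega_i}$ on the \emph{letters} turns the unordered (histogram-level) data into a clean monomial evaluation, so that a Vandermonde inversion applies uniformly over the potentially very large set $M$ — in particular, that the coefficients $c_{\bs\omega}$ can be chosen with $\bs\omega$ ranging only over $[m]_0^\I$, independent of which subset $M$ we started with. (The specific geometric weights $2^{\omega_i}$ are not essential; any choice giving distinct one-dimensional nodes works, but powers of $2$ keep the later bookkeeping in Definition \ref{def:Lmdef} concrete.) Everything else — the grouping of words by histogram in Step 1 and the tensor-product structure of the Vandermonde matrix in Step 2 — is routine.
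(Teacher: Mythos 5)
Your proof is correct and takes essentially the same route as the paper's: both amount to multivariate Lagrange interpolation at the nodes $2^0,\dots,2^m$ in each coordinate. The paper writes the Lagrange basis polynomials $f_b$ explicitly and multiplies them over $i\in\I$, producing the tensor-product coefficients $c_{\hist,\bs\omega}$ in closed form; you instead observe the Kronecker-Vandermonde structure of the matrix $(2^{\langle\bs\omega,\hist\rangle})$ and invoke its invertibility, which is the same argument packaged abstractly (minor slip: you want the $\hist_0$-row of $V^{-1}$, not the column, but this is immaterial).
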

\begin{proof}
	Let $f_b(x)=c_{b,0}+c_{b,1}x+\cdots+c_{b,m}x^m$ be the polynomial of degree $m$ such that $f_b(2^b)=1$ and $f_b(2^\nu)=0$ for $\nu\in[m]_0\xpt\{b\}$ where $[m]_0=\{0,1,\ldots,m\}$. 
	Then the indicator function for $\hist$ with defined on $[m]_0^{\I}$ is interpolated by the function
	\[\charac_\hist(\histt)=\prod_{i\in \I}f_{\hist(i)}(2^{\histt(i)})=\sum_{{\boldsymbol\omega}\in[m]_0^\I}c_{\hist,{\boldsymbol\omega}}2^{{\boldsymbol\omega}\cdot\histt},\]
where \(c_{\hist,{\boldsymbol\omega}}=\prod_{i\in \I} c_{\hist(i),\omega_i}\). 
	It follows that
	\[\HSH[\hist]=\sum_{\histt\in M}\charac_\hist(\histt)\HSH[\histt]=\sum_{{\boldsymbol\omega}\in[m]_0^S}c_{\hist,{\boldsymbol\omega}}\sum_{\histt\in M}2^{{\boldsymbol\omega}\cdot\histt}\HSH[\histt]=\sum_{{\boldsymbol\omega}\in[m]_0^S}c_{\hist,{\boldsymbol\omega}}\HSH_{\bs\omega}[M].\]
\end{proof}
%
%
%
%
%
%
%
%
%
%
%
%
We combine lemma \ref{lem:spanlemma} with the pigeonhole method of \cite{arad2013area} to show degree-$m$ Viability:  
\begin{restatable}{proposition}{Hmprop}\label{prop:Hmprop}
	Let $s\le m$ and let $\S=\CC H_{\pad\Reg{s}}+\lin(\partial_s)$. Then \(\L_m(H_{\Reg},H_{\barrier})\subsp\lin(\Reg)\) is degree-$m$ Viable for $\S$.
\end{restatable}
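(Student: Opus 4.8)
The plan is to prove the required degree-$m$ Viability by unwinding an arbitrary product $\prod_{t=1}^{k}S_t$ with $S_t\in\S$, $k\le m$, writing each factor either as the fixed Hamiltonian $H_{\pad\Reg s}$ or as an element of $\lin(\partial_s)$, and then controlling the ``escape'' of the Hamiltonian powers outward through the barrier shells $\EG_1,\ldots,\EG_s$. First I would group each product as an alternating string of Hamiltonian blocks and single $\Sigma_s$-operators: since at most $m$ factors appear and each $\Sigma_s$ insertion either contributes a $1$-particle operator on $\partial_s$ or is an identity, a product in $\S^{\circ k}$ has the shape $H^{a_0}\,\sigma_1\,H^{a_1}\,\sigma_2\cdots\sigma_\ell\,H^{a_\ell}$ with $\sum a_i\le m$ and $\ell\le m$; but we only want $\ell\le\lfloor m/s\rfloor$ worth of operator insertions to land inside $\Reg$, so the first step is a \emph{pigeonhole} argument (following \cite{arad2013area} lemma 4.2): among the $s$ concentric edge shells $\EG_1,\ldots,\EG_s$, since the total degree is $\le m$, at least one shell $\EG_{r^*}$ is ``crossed'' by at most $m/s$ of the relevant operator factors. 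Peeling off everything strictly outside $\pad\Reg{r^*}$ lets us factor the operator as (something supported in $\lin(\pad\Reg{r^*}{+}\mathrm{out})$) $\otimes$ (something on $\outside_{r^*}$), and what remains on $\pad\Reg{r^*}$ is a product with at most $\ell=\lfloor m/s\rfloor$ insertions of $\Sigma'_{r^*}$ operators separated by powers of the \emph{truncated-to-inside} Hamiltonian $H_{[0,r^*)}=\sum_{j<r^*}H_j$.

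Next I would invoke lemma \ref{lem:spanlemma}: the powers $H_{[0,r^*)}^{a_0},\ldots,H_{[0,r^*)}^{a_\ell}$ expand as sums over words in the local interaction pieces $H_0,\ldots,H_{r^*-1}$, and the sums we can evaluate locally are the ones grouped by \emph{histogram} — i.e. by how many times each shell's Hamiltonian $H_j$ is used, not in which order. Lemma \ref{lem:spanlemma} says that any single-histogram sum lies in the span of the \emph{weighted} sums $\HSH_{\bs\omega}[M]$ over all weights $\bs\omega\in[m]_0^{r^*+1}$, where $M=\Simplex{}(m)$ and the weighted Hamiltonian is exactly $H^{(\bs\omega)}_{[0,r^*)}=\sum_{j<r^*}2^{\omega(j)}H_j$ from \eqref{eq:weighted}. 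Applying this histogram-to-weights trick to each of the $\ell+1$ Hamiltonian blocks, the product on $\pad\Reg{r^*}$ lands in the span of the sets $\S_{\bs\omega,\aa}$ of definition \ref{def:Lmdef} (one $\bs\omega$ and one $\aa$ per choice), and then taking $\Reg$-left-support — which is Viable for each element by the remark after Definition \ref{def:leftsupp}, and commutes with the outer $\otimes\lin(\outside_{r^*}\cup\partial_s)$ factor since the latter acts trivially on $\Reg$ — puts the $\Reg$-part into $\sum_{\bs\omega,\aa}\supp_\Reg(\S_{\bs\omega,\aa})=\L_m(H_\Reg,H_\barrier)$. Hence $\prod_t S_t\in\L_m(H_\Reg,H_\barrier)\otimes\lin(\pad\Reg s\xpt\Reg)$, which is the defining inclusion $(\S\cup\{\Id\})^{\circ m}\subset\L\otimes\lin(\barrier)$.

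There are two bookkeeping subtleties I would be careful about. One is that the $\Sigma_s$-operators $\ket i\bra j_\vx$ sitting on the \emph{outermost} shell $\partial_s$ are genuinely outside $\Reg$ (and even outside $\pad\Reg{r^*-1}$ once $r^*\le s$), so they get absorbed into the $\lin(\barrier)$ factor rather than into $\L_m$ — this is why $\L_m$ is a space on $\lin(\Reg)$ and not on $\lin(\pad\Reg s)$, and why $\Sigma'_r=\Sigma_{r-1}$ is used inside $\S_{\bs\omega,\aa}$: the only $1$-particle operators that can be forced to land strictly inside $\pad\Reg{r^*}$ are those on shells $\partial_0,\ldots,\partial_{r^*-1}$. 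The other is that the $\CC$-coefficient in $\S=\CC H_{\pad\Reg s}+\lin(\partial_s)$ means each Hamiltonian factor may carry an arbitrary scalar, which is harmless by linearity, and that the single fixed operator $H_{\pad\Reg s}$ must be split as $H_{[0,r^*)}+H_{r^*}+\cdots+H_s+H_{\outside_s}$ before the pigeonhole step — the edges in $\EG_{r^*},\ldots,\EG_s$ and in $\outside$ are exactly the ones peeled outward. \textbf{The main obstacle} is making the pigeonhole step fully rigorous in the tree (rather than line) setting: one must verify that ``crossing shell $r^*$ at most $m/s$ times'' genuinely lets one split \emph{every} term of the expanded sum as a tensor product across the cut at $\partial_{r^*}$ with the inside part having at most $\ell$ insertions — this uses that the $\bigO(1)$ components of $\Reg$ are $\ge 2s$ apart so the shells $\EG_1,\ldots,\EG_s$ are nested and disjoint, and that an interaction term $H_\e$ with $\e\notin\EG_{\le r^*}\cup\Reg$ commutes with everything supported on $\pad\Reg{r^*-1}$, so it can be slid past the inside factors. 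Getting the dimension accounting to match Observation \ref{obs:Ldimbound} then just requires noting $\ell=\lfloor m/s\rfloor$ and $|\partial'_r|\le|\barrier|$.
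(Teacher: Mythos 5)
Your proposal assembles the right ingredients---the per-histogram pigeonhole, lemma~\ref{lem:spanlemma}, and the left-support argument---but the order in which they are deployed does not close. The crux is that the pigeonhole only selects a shell $r^*$ for each \emph{histogram}, and different histograms select different shells; yet the ``peeled'' product with insertions of $\Sigma'_{r^*}$ separated by clean powers of $H_{[0,r^*)}$ is a sum over \emph{all} shell-assignments, which you cannot form until after a histogram-inversion step. Concretely, your invocation of lemma~\ref{lem:spanlemma} uses $M=\Simplex{}(m)$ on the alphabet $\{0,\ldots,r^*-1\}$ applied per Hamiltonian block; but that gives you no leverage, since $\HSH_{\bs\omega}[\Simplex{}(m)]$ is just the unrestricted weighted power and carries no constraint on shell-$r^*$ occurrences. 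The paper instead applies lemma~\ref{lem:spanlemma} \emph{once, globally}, on the alphabet $[s]_0$ with the \emph{restricted} set $M_r=\{\hist:\hist(r)\le\lfloor m/s\rfloor\}$: this links the individual histogram sum $\HSH[\hist]$ (which is not directly evaluable) to the weighted $M_r$-sums $\HSH_{\bs\omega}[M_r]$, which \emph{are} evaluable because they decompose by the position-subset $\tau\subset[k]$ where shell $r$ is hit. The interposed factors in each $\Pi_\tau$ are $(\tilde H_{\allbut r}B_t)$ --- containing shells $>r$ and the boundary operators --- \emph{not} $H_{[0,r^*)}$ as you write; it is only after taking $\pad\Reg{r-1}$-support (via transitivity, lemma~\ref{lem:transitive}) that these collapse into $\Span\{\tilde H_{[0,r)},\Id\}$.

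The ``tensor factorization across the cut at $\partial_{r^*}$'' you flag as the main obstacle is indeed where things break: the expanded terms do \emph{not} split as tensor products, since each $H_\e$ on shell $r^*$ is an entangled operator across $\partial_{r^*-1}\times\partial_{r^*}$. The paper never factors; it keeps the Schmidt structure of $\tilde H_r$ inside the definition of $\Sigma'_r$ as a \emph{space} of one-particle operators that is Viable for $\tilde H_r$, and chains viability through lemma~\ref{lem:transitive}. Unless you rewrite your ``peeling'' in this language and move the lemma~\ref{lem:spanlemma} step ahead of the position decomposition (global, with $M=M_r$, alphabet $[s]_0$), the argument as outlined would assert membership in $\L_m\otimes\lin(\barrier)$ for individual monomials --- which is false, since $\L_m$ only contains the $\Reg$-supports of the specific weighted products $\S_{\bs\omega,\aa}$, not of arbitrary words in $H_0,\ldots,H_{r^*-1}$.
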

\begin{proof}
	Let $\L=\L_m(H_{\Reg},H_{\barrier})$. Given $B\in\partial_s$ we write $\Id\otimes B\in\lin(\pad\Reg{s})$ as $B$. Let $k\in[m]_0$ and $B_1,\ldots,B_{k}\in\lin(\partial_s)$ be given, and let 
	\begin{equation}\label{eq:suffic}\textstyle{\Pi=\prod_{t=1}^{k} H_{\pad\Reg{s}} B_t}.\end{equation}
	We show that $\L$ is viable for any such product $\Pi$. This will imply that $\L$ is degree-$m$ Viable for $\S$. Indeed, in any product with factors taken from $\{H_{\pad\Reg{s}},\lin(\partial_s)\}$, consecutive occurrences of $H_{\pad\Reg{s}}$ are achieved in \eqref{eq:suffic} by taking $B_t$, while consecutive operators from $\lin(\partial_s)$ are multiplied to form a single factor $B_t$. Expand the product into $\Pi=\sum_{\ii\in[s]_0^{k}}H_{i_1}B_1\cdots H_{i_{k}}B_{k}$ and . Group terms by histogram to get
	\begin{equation}\label{eq:expansioon}\Pi=\HSH[\simplex{[s]_0}(k)]\WHERE\HSH=(H_iB_t)_{i\in[s]_0}^{t\in[k]}\end{equation}
	Define the sets of histograms $M_r=\{\hist\in\simplex{[s]_0}({k})\mid\hist(r)\le\lfloor m/s\rfloor\}$.
	Then $\simplex{[s]_0}(k)=M_1\cup\cdots\cup M_s$ by the pigeonhole principle. For any histogram $\hist\in\simplex{[s]_0}(k)$, apply lemma \ref{lem:spanlemma} to some $M_r$ which contains $\hist$ to obtain that $\HSH[\hist]$ belongs to the span of the operators $\HSH_{\bs\omega}[M_r]$ where $\bs\omega$ ranges over $[k]_0^{s+1}$.
	Sum over $\hist\in\simplex{[s]_0}(k)$ and apply \eqref{eq:expansioon} to obtain
	\begin{equation}\label{eq:Pispan}\Pi\in\Span\{\HSH_{\bs\omega}[M_r]\:\mid\:\bs\omega\in[k]_0^{s+1},r\in[s]\}.\end{equation}
	Decompose each $\HSH_{\bs\omega}[M_r]$ as
	\begin{equation}\label{eq:decomposebytau}\HSH_{\bs\omega}[M_r]=\sum_{\substack{\tau\subset[k]\\|\tau|\le m/s}}\sum_{\substack{\ii\in[s]_0^k\\\ii^{-1}(r)=\tau}}\HSH_{\bs\omega}(\ii)=\sum_{\substack{\tau\subset[k]\\|\tau|\le m/s}}\Pi_\tau,\end{equation}
	where $\Pi_\tau$ is the evaluation of the inner sum. Let $\tilde H_i=2^{\omega_i}H_i$ where $H_0=H_{\Reg}$ and $H_i$ is the sum of interactions in the $i$\ts{th} edge shell for $i\ge1$.  Write $t_0=0$, $\tau=\{t_1<\ldots<t_\ell\}$, $t_{\ell+1}=k+1$ to define the intervals $I_j=[k]\cap(t_{j},t_{j+1})$ for $j\in[\ell]_0$, and let 
	\[(\tilde H_{\allbut{r}}B_\cdot)^{I_j}=\prod_{t\in I_j}\Big(\big(\sum_{i<r}\tilde H_i\big)+\tilde H_{r+1}+\cdots \tilde H_sB_t\Big),\]
	Then, collecting terms yields the expression
	\[\Pi_\tau=(\tilde H_{\allbut{r}}B_\cdot)^{I_0}\tilde H_r(\tilde H_{\allbut{r}}B_\cdot)^{I_1}\cdots \tilde H_r(\tilde H_{\allbut{r}}B_\cdot)^{I_\ell}\]
	$\Sigma'_{r}$ is Viable for $\tilde H_r$ and $\tilde H_{[0,r)}$ is Viable for $(\tilde H_{\allbut{r}}B_t)$, so a Viable space for $\Pi_\tau$ is given by
	\[\Span\S_{\bs\omega,\aa}=\Span\Big( \tilde H_{[0,r)}^{a_0}\Sigma'_{r}\tilde H_{[0,r)}^{a_1}\cdots\Sigma'_{r}\tilde H_{[0,r)}^{a_\ell}\Big).\]
	where $a_i=|I_j|$. Since $\L$ contains the $\Reg$-support of $\S_{\bs\omega,\aa}$, lemma \ref{lem:transitive} (transitivity) implies that $\L$ is Viable for $\Pi_\tau$. So $\L$ is Viable for $\HSH_{\bs\omega}[M_r]$ by \eqref{eq:decomposebytau} and, in turn, for $\Pi$ by \eqref{eq:Pispan}, proving that $\L$ is degree-$m$ Viable for $\S$.
\end{proof}

\subsection{A low-dimensional PAP}

We now apply the construction of a higher-degree Viable subspace in \ref{def:Lmdef} starting from a \emph{truncated} operator on $\Reg$. 
Combining the degree-$m$ Viability proven in proposition \ref{prop:Hmprop} with lemma \ref{lem:degmusage} with and the perturbation bounds of corollary \ref{cor:robusttotrunc} for the truncation yields our first PAP.
\begin{lemma}\label{lem:Aproperties}
	Let \(\Reg\subset\T\) be a closed region, $\barrier=\closure{\pad\Reg{s}\xpt\Reg}$ the closed barrier around $\Reg$ of thickness \(s\le m\), and $\outside=\closure{\T\xpt\pad\Reg{s}}$ the complementary region. Let
Let 
\[\eta\quad\DEFTHETA\quad {|\partiale\Reg|+|\partiale\outside|+\log\frac1{\Delta\cdot\delta}}.\]
Then $\L_m(\hTRUNC{H_{\Reg}}{\eta},H_\barrier)$ is a \(\shrink\)-\textup{PAP} for a subspace \(\widetilde\Z\subsp\HS\) which is \(\delta\)-close to \(\Z\), where
\begin{align*}\shrink&\bounded\exp\Big[-\Omega\Big(m\sqrt{\tfrac{\Delta}{{|\EG_\barrier|+\log\frac1{\delta\Delta}}}}\enspace\Big)\Big]\AND\\|\L_m(\hTRUNC{H_{\Reg}}{\eta},H_\barrier)|&=\exp\Big[\bigO\Big(\big(\tfrac ms+s\big)\log m+|\barrier|\Big)\Big].\end{align*}
\end{lemma}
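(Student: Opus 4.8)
The plan is to assemble the statement from four facts already in hand: corollary \ref{cor:robusttotrunc} (robustness of $\Z$ to truncation), proposition \ref{prop:Hmprop} (higher-degree Viability of $\L_m$), lemma \ref{lem:degmusage} (the Chebyshev PAP), and item \ref{it:itemLmdim} of observation \ref{obs:Ldimbound} (the dimension bound). I would set $\widetilde H=\hTRUNC{H_\Reg}{\eta}+H_\barrier+\hTRUNC{H_\outside}{\eta}$ and take $\widetilde\Z$ to be its $D$-dimensional minimal-energy subspace. The prescribed value of $\eta$ makes hypothesis \eqref{eq:Wchoice} of corollary \ref{cor:robusttotrunc} hold (with $\inside=\Reg$; this is legitimate since $s\ge1$ makes $\Reg$ and $\outside$ non-adjacent and $\{\EG_\Reg,\EG_\barrier,\EG_\outside\}$ partitions $\EG$), so that corollary gives at once $\widetilde\Z\close\delta\Z$ and $\tilde\E_j\ge\E_j-\tfrac{\delta\Delta}2$ for $j=0,D$. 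The first conclusion is the asserted closeness; the rest of the argument identifies $\widetilde\Z$ with a spectral subspace of $\widetilde H$ and exhibits a PAP for it.

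Next I would extract the spectral shape of $\widetilde H$. Since $\widetilde H\ople H$, one has $\tilde\E_j\le\E_j=\E_0$ for $j<D$, while $\tilde\E_D\ge\E_D-\tfrac{\delta\Delta}2\ge\E_0+\tfrac\Delta2$ as $\delta\le1$; hence no eigenvalue of $\widetilde H$ lies in the open interval $(\E_0,\E_0+\tfrac\Delta2)$. Setting $a-\Delta'=\E_0+\tfrac\Delta4$, $a=\E_0+\tfrac\Delta2$ (so $\Delta'=\tfrac\Delta4$) and $b=\max\{\E_{\max}(\widetilde H),\,a+8\Delta'\}$ arranges that $\spec\widetilde H\subset(-\infty,a-\Delta']\cup[a,b]$ with $8\Delta'\le b-a$, and that $\V_{(-\infty,a-\Delta']}(\widetilde H)$ is exactly the bottom-$D$ subspace $\widetilde\Z$ (the $D$ eigenvalues below $a-\Delta'$ being precisely $\tilde\E_0,\dots,\tilde\E_{D-1}$). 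The width $b-a$ is then controlled through the spectral diameter of $\widetilde H$: each hard truncation contributes at most $\eta$ to the diameter and $\|H_\barrier\|\le|\EG_\barrier|$, so $\E_{\max}(\widetilde H)-\E_0(\widetilde H)\le2\eta+2|\EG_\barrier|$; since $\eta\bounded|\partiale\Reg|+|\partiale\outside|+\log\tfrac1{\delta\Delta}\bounded|\EG_\barrier|+\log\tfrac1{\delta\Delta}$ (the edge-boundaries of $\Reg$ and $\outside$ lie within a bounded-degree neighborhood of the barrier) and, from $\tilde\E_D\ge\E_0+\tfrac\Delta2$ and $\tilde\E_0\le\E_0$, also $\Delta\bounded|\EG_\barrier|+\log\tfrac1{\delta\Delta}$, this gives $b-a\bounded|\EG_\barrier|+\log\tfrac1{\delta\Delta}$.

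The remaining step is to show $\L:=\L_m(\hTRUNC{H_\Reg}{\eta},H_\barrier)\subsp\lin(\Reg)$ is degree-$m$ Viable for $\widetilde H$. I would invoke proposition \ref{prop:Hmprop} for the decomposition $\widetilde H=\hTRUNC{H_\Reg}{\eta}+H_\barrier+\hTRUNC{H_\outside}{\eta}$, with $\hTRUNC{H_\Reg}{\eta}$ and $\hTRUNC{H_\outside}{\eta}$ playing the roles of the (nonlocal) region- and rest-Hamiltonians; it gives that $\L$ is degree-$m$ Viable for $\S=\CC\widetilde H_{\pad\Reg{s}}+\lin(\partial_s)$, where $\widetilde H_{\pad\Reg{s}}=\hTRUNC{H_\Reg}{\eta}+H_\barrier$. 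Writing $\widetilde H=\widetilde H_{\pad\Reg{s}}+\hTRUNC{H_\outside}{\eta}$, the first summand lies in $\CC\widetilde H_{\pad\Reg{s}}\subset\S$ and the second — an operator on $\HS_\outside$ that acts as the identity away from the shell $\partial_s$ together with $\T\xpt\pad\Reg{s}$ — lies in $\lin(\partial_s)\otimes\lin(\T\xpt\pad\Reg{s})\subset\S\otimes\lin(\T\xpt\pad\Reg{s})$; so $\S\subsp\lin(\pad\Reg{s})$ is Viable for $\widetilde H$, and transitivity (lemma \ref{lem:transitive}, with tripartition $\Reg$, $\pad\Reg{s}\xpt\Reg$, $\T\xpt\pad\Reg{s}$) yields the claim. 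Lemma \ref{lem:degmusage} then gives that $\L$ is a $\shrink$-PAP for $\V_{(-\infty,a-\Delta']}(\widetilde H)=\widetilde\Z$ with $\shrink\le2\exp\!\big(-m\sqrt{2\Delta'/(b-a)}\big)\bounded\exp\!\big[-\Omega\big(m\sqrt{\Delta/(|\EG_\barrier|+\log\tfrac1{\delta\Delta})}\big)\big]$, and the dimension bound $\vd\L=\exp[\bigO((\tfrac ms+s)\log m+|\barrier|)]$ is exactly item \ref{it:itemLmdim} of observation \ref{obs:Ldimbound}.

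The one place I expect to need care is the second paragraph: coercing $\spec\widetilde H$ into the precise interval form that lemma \ref{lem:degmusage} demands — in particular the degenerate case where the true spectral diameter is already $\bigO(\Delta)$ (handled by the $\max$ in the definition of $b$), and keeping the region-boundary quantities $|\partiale\Reg|$, $|\partiale\outside|$, $|\EG_\barrier|$, $|\barrier|$ comparable up to constants so that exactly the stated form of $\shrink$ comes out. Every other step is a direct appeal to the cited results.
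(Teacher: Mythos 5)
Your proof is correct and follows essentially the same route as the paper's: set $\widetilde H=\hTRUNC{H_\Reg}\eta+H_\barrier+\hTRUNC{H_\outside}\eta$, invoke corollary~\ref{cor:robusttotrunc} for the closeness and the eigenvalue bounds, bound the spectral diameter, and feed everything into lemma~\ref{lem:degmusage}. You are a bit more careful than the paper at two points — you make the degree-$m$ Viability of $\L_m$ for $\widetilde H$ explicit via proposition~\ref{prop:Hmprop} and transitivity (the paper applies lemma~\ref{lem:degmusage} without stating this bridge), and your $\max$ in the choice of $b$ cleanly enforces the hypothesis $8\Delta'\le b-a$ in the regime where the spectral diameter of $\widetilde H$ happens to be $\bigO(\Delta)$ — neither of which changes the argument. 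Your spectral-diameter bound via superadditivity of the diameter is an equivalent variant of the paper's, which bounds $\E_{\max}(\widetilde H)$ by a triangle inequality and $\E_0(\widetilde H)$ by superadditivity of $\E_0$; both give $b-a\bounded\eta+|\EG_\barrier|\bounded|\EG_\barrier|+\log\tfrac1{\delta\Delta}$.
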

\begin{proof}
	Let $\widetilde H=\hTRUNC{H_\Reg}\eta+H_\barrier+\hTRUNC{H_\outside}\eta$. We bound the spectrum of $\widetilde H$ to apply lemma \ref{lem:degmusage}. $\widetilde H\le H$ implies that \(\E_0(\widetilde H)\le\cdots\le\E_{D-1}(\widetilde H)\le\E_D=\E_0\), and corollary \ref{cor:robusttotrunc} implies that \(\E_D(\widetilde H)\ge\E_0+\Delta/2\). This shows that \(\spec\widetilde H\subset(-\infty,\E_0]\cup[\E_0+\Delta/2,\infty)\). It remains to upper-bound \(\widetilde H\). 
	\begin{equation}\label{eq:volWme}\widetilde H\le\|\hTRUNC{H_\ST}\eta \|+\|H_\barrier\|+\|\hTRUNC{H_\outside}{\eta }\|\le \big(\E_0(H_{\T_s})+\eta \big)+|{\EG_{\barrier}}|+\big(\E_0(H_\outside)+\eta \big).\end{equation}
	The lowest eigenvalue is superadditive, $\E_0(H_{\T_s}+H_\outside)\ge\E_0(H_{\T_s})+\E_0(H_{\outside})$. Apply the triangle inequality to $H_{\T_s}+H_\outside=H-H_\barrier$ to get
	\[\E_0(H_{\T_s})+\E_0(H_\outside)\le\E_0(H_{\T_s}+H_\outside)\le\E_0(H)+|\EG_\barrier|.\]
	Combine with \eqref{eq:volWme} to get the upper bound
	\[\widetilde H\le\E_0+2|\EG_\barrier|+2\eta.\]
	Applt lemma \ref{lem:degmusage} with gap $\Delta/2$, $a=\E_0+\Delta/2$, and $b=\E_0+2|\EG_\barrier|+2\eta$ such that $b-a\le2|\EG_\barrier|+2\eta\bounded |\EG_\barrier|+\log\frac1{\Delta\delta}$. Lemma \ref{lem:degmusage} yields
	\[\shrink\le2\exp\Big(-m\sqrt{\tfrac\Delta{2\eta +2|{\EG_{\barrier}}|}}\Big)=2\exp\Big[-\Omega\big(m\sqrt{\tfrac{\Delta}{|\EG_\barrier|+\log\frac1{\delta\Delta}}}\Big)\Big].\]
\end{proof}
We now apply the subquadratic growth assumption and choose parameters to get an appropriate shrinkage-dimension tradeoff for the \textup{PAP}.
We write $x\wedge y=\min\{x,y\}$ and $x\vee y=\max\{x,y\}$.
\begin{corollary}\label{lem:AGSP}
	Let $\beta$ be the discrete fractal dimension of $\T$, and let $\Reg$ be a closed region with $|\partial\Reg|=\bigO(1)$. Then given \(\delta>0\) and \(\ell>0\) there exists a \(\shrink\)-\textup{PAP} \(\L\subsp\lin(\HS_\Reg)\) for some \(\widetilde\Z\close\delta\Z\) where
	\begin{equation}\label{eq:gammaofR}\shrink\bounded\exp\Big[-\Omega\Big(\sqrt\Delta\cdot\frac{\ell^{\frac12+\frac1\beta}}{\log\ell}\wedge\sqrt{\tfrac\Delta{|\log(\delta\Delta)|}}\cdot\frac{\ell^{1+\frac1\beta}}{\log\ell}\Big)\Big],\AND \log|\L|\le\ell.\end{equation}
	If \(\beta<2\) then both exponents of \(\ell\) in \eqref{eq:gammaofR} are strictly greater than \(1\). 
\end{corollary}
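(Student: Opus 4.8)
The plan is to feed Lemma~\ref{lem:Aproperties} with a barrier thickness $s$ and a power $m$ chosen as functions of $\ell$, $\Delta$, $\delta$, and then verify that the resulting estimates collapse to \eqref{eq:gammaofR}. Throughout I take $\ell$ larger than an absolute constant; for smaller $\ell$ the right-hand side of \eqref{eq:gammaofR} is $\exp[-\bigO(1)]=\Theta(1)$ and the statement is vacuous, since a degree-$0$ Viable space such as $\Span\{\Id\}$ is a $\shrink$-PAP with $\shrink\le2$. First I would translate the data of Lemma~\ref{lem:Aproperties} into the fractal-dimension setting. Because $|\partial\Reg|=\bigO(1)$, the region $\Reg$ has $\bigO(1)$ boundary vertices, and in a tree the set $\pad\Reg{s}\xpt\Reg$ is contained in the union of the radius-$s$ balls around these boundary vertices; hence $|\VX_\barrier|=\bigO(s^\beta)$ by Definition~\ref{def:fracdim}, and the constant degree bound $2\sd$ upgrades this to $|\EG_\barrier|=\bigO(|\VX_\barrier|)=\bigO(s^\beta)$. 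Substituting into Lemma~\ref{lem:Aproperties} and Observation~\ref{obs:Ldimbound}, item~\ref{it:itemLmdim}, the space $\L:=\L_m(\hTRUNC{H_\Reg}{\eta},H_\barrier)\subsp\lin(\HS_\Reg)$ is a $\shrink$-PAP for some $\widetilde\Z\close\delta\Z$ with
\[\shrink\;\bounded\;\exp\Big[-\Omega\Big(m\sqrt{\tfrac{\Delta}{s^\beta+|\log(\delta\Delta)|}}\Big)\Big]\qquad\text{and}\qquad\log|\L|=\bigO\Big(\big(\tfrac ms+s\big)\log m+s^\beta\Big).\]

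Next I would choose $s:=\Theta(\ell^{1/\beta})$ so that the barrier term $s^\beta$ exactly saturates the dimension budget, and $m:=\Theta(\ell^{1+1/\beta}/\log\ell)$ so that the term $\tfrac ms\log m$ is also $\bigO(\ell)$, with small enough implied constants. Since $m$ is polynomial in $\ell$ we have $\log m=\bigO(\log\ell)$, so $s^\beta=\Theta(\ell)$, $\tfrac ms\log m=\bigO(\ell)$, and (for $\beta$ bounded away from $1$, using $\log\ell=\bigO(\ell^{1-1/\beta})$) also $s\log m=\bigO(\ell^{1/\beta}\log\ell)=\bigO(\ell)$; Observation~\ref{obs:Ldimbound} then gives $\log|\L|\le\ell$ after absorbing constants, and $\tfrac ms\to\infty$ ensures the hypothesis $s\le m$ of Lemma~\ref{lem:Aproperties} holds.

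For the shrinkage I would apply the elementary two-sided estimate $\tfrac1{\sqrt{x+y}}\asymp\tfrac1{\sqrt x}\wedge\tfrac1{\sqrt y}$ to split the denominator, so that the exponent becomes
\[m\sqrt{\tfrac{\Delta}{s^\beta+|\log(\delta\Delta)|}}\;\asymp\;\frac{m\sqrt\Delta}{\sqrt\ell}\;\wedge\;m\sqrt{\tfrac{\Delta}{|\log(\delta\Delta)|}}\;\asymp\;\sqrt\Delta\,\frac{\ell^{\frac12+\frac1\beta}}{\log\ell}\;\wedge\;\sqrt{\tfrac{\Delta}{|\log(\delta\Delta)|}}\,\frac{\ell^{1+\frac1\beta}}{\log\ell},\]
where the last step substitutes $m=\Theta(\ell^{1+1/\beta}/\log\ell)$ and $m/\sqrt\ell=\Theta(\ell^{1/2+1/\beta}/\log\ell)$. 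This is exactly \eqref{eq:gammaofR}. The final sentence is immediate: $\beta<2$ is equivalent to $\tfrac1\beta>\tfrac12$, whence $\tfrac12+\tfrac1\beta>1$, and trivially $1+\tfrac1\beta>1$.

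The main obstacle is pure bookkeeping of logarithmic factors: one must check that the $\log m$ in the dimension bound does not let the term $s\log m$ overtake the barrier term $s^\beta=\Theta(\ell)$ in the budget. This is routine for $\beta$ bounded away from $1$ as above; the borderline $\beta\to1$ regime behaves like a spin chain (already handled by prior work) and, if desired, can be accommodated by shrinking $s$ by a polylogarithmic factor. Apart from this, the corollary is a single parameter optimization plugged into Lemma~\ref{lem:Aproperties}, so no new idea is needed—the substance lies in Proposition~\ref{prop:Hmprop} and Lemma~\ref{lem:Aproperties}.
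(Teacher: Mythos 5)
Your proof is correct and takes essentially the same route as the paper: bound $|\EG_\barrier|\bounded s^\beta$ via the fractal dimension, feed this into Lemma~\ref{lem:Aproperties} and Observation~\ref{obs:Ldimbound}, then optimize with $s\DEFtheta\ell^{1/\beta}$ and $m\DEFtheta\ell^{1+1/\beta}/\log\ell$. The paper's proof simply states the parameter choice and says ``substitute,'' whereas you explicitly carry out the $\tfrac1{\sqrt{x+y}}\asymp\tfrac1{\sqrt x}\wedge\tfrac1{\sqrt y}$ split and flag the bookkeeping issue with the $s\log m$ term when $\beta$ is near $1$ (which the paper silently ignores, implicitly relying on $\beta>1$ being a fixed constant and $\ell$ being large so that $\ell^{1-1/\beta}\gg\log\ell$); these are useful clarifications but not a different argument. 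One small remark: you invoke the degree bound to get $|\EG_\barrier|=\bigO(|\VX_\barrier|)$, while the paper more simply observes that $\barrier$ is a forest and so $|\EG_\barrier|<|\VX_\barrier|$ outright.
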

\begin{proof}
\(|\EG_\barrier|<|\VX_\barrier|\bounded s^\beta\) since \(\W\) is a closed forest, so the operator $A$ in lemma \ref{lem:Aproperties} is a \(\shrink\)-approximate projector for \(\widetilde\Z\) where
\begin{equation}\label{eq:thegamma}\shrink=2\exp\Big[-\Omega\Big(\textstyle{m\sqrt{\frac{\Delta}{{s^\beta+\log\frac1{\Delta\delta}}}}}\Big)\Big].\end{equation}
Proposition \ref{prop:Hmprop} provides a Viable subspace \(\L\subsp\lin(\ST)\) for \(A\) which satisfies
\[|\L|=\exp\big[\bigO\Big(\big(\tfrac ms+s\big)\log m+ s^\beta\Big)\big].\]
By definition, $\L$ is a \(\shrink\)-PAP for \(\widetilde\Z\).
Now ensure \(\log|\L|\le\ell\) with a choice
\[s\DEFtheta\ell^{\frac1\beta}\AND m\DEFtheta\frac{\ell^{1+\frac1\beta}}{\log\ell},\]
and substitute \(s\) and \(m\) into \eqref{eq:thegamma}.
\end{proof}
\section{Area law}
\label{sec:areasec}

\begin{definition}\label{def:weakent}
	Let \(r:[0,1]\to\RR_+\) be a function and \(\HS_{LR}\) a bipartite Hilbert space. 
	A subspace \(\Z\subsp\HS_{LR}\) satisfies an \emp{\(\boldsymbol{r(\cdot)}\)-dimension bound on \(\boldsymbol L\)} if there exists a \(\delta\)-viable subspace \(\V\subsp\HS_L\) for \(\Z\) of dimension \(\vd\V\le r(\sqrt{2\delta}\:)\), for each \(\delta\in[0,1]\). 
	
	$\Z$ satisfies a \emp{\boldsymbol{$r(\cdot)$}-entanglement bound} if it satisfies a $r(\cdot)$-dimension bound on $\HS_\Reg$ for every $\Reg\subset \T$ with $|\partial\Reg|\bounded1$.
\end{definition}
The following observation relates an $r(\cdot)$-dimension bound on $\Z$ to the entanglement of pure states in $\Z$. 
\begin{restatable}{observation}{entER}
	\label{obs:entbound}
	If \(\Z\subsp\HS_{LR}\) satisfies a \(r(\cdot)\)-dimension bound on \(\HS_L\), then any state \(\ket\psi=\sum_i\sqrt{\lambda_i}\ket{\psi^L_i}\ket{\psi^R_i}\in\sphere(\Z)\) has at most $r(\xi)$ Schmidt coefficients $\lambda_i$ greater than $\xi$.
\end{restatable}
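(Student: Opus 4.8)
The plan is to instantiate the $r(\cdot)$-dimension bound at a single, carefully chosen error parameter and then read off the Schmidt bound from the viability inequality applied to the specific state $\ket\psi$. Fix the threshold $\xi$ from the statement. Since the $\lambda_i$ are nonnegative and sum to $1$, none exceeds $1$, so the claim is vacuous for $\xi\ge1$; and for $\xi=0$ it is immediate, because a $0$-viable $\V$ forces $\ket\psi\in\V\otimes\HS_R$, so every $\ket{\psi^L_i}$ with $\lambda_i>0$ lies in $\V$ and their number is at most $\vd\V\le r(0)$. So I would assume $0<\xi<1$ and set $\delta:=\xi^2/2\in(0,\tfrac12]$. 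Applying the $r(\cdot)$-dimension bound on $\HS_L$ at this value of $\delta$ produces a $\delta$-viable subspace $\V\subsp\HS_L$ for $\Z$ with $d:=\vd\V\le r(\sqrt{2\delta})=r(\xi)$, and it then suffices to show that at most $d$ of the $\lambda_i$ exceed $\xi$.

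Next I would unpack $\delta$-viability for the state $\ket\psi$ itself. By definition $\V\otimes\HS_R$ is $(1-\delta)$-majorizing for $\Z$, so $\|(\PP_\V\otimes\Id_R)\ket\psi\|^2\ge1-\delta$; expanding $(\PP_\V\otimes\Id_R)\ket\psi=\sum_i\sqrt{\lambda_i}\,(\PP_\V\ket{\psi^L_i})\ket{\psi^R_i}$ and using orthonormality of $\{\ket{\psi^R_i}\}$ to kill the cross terms gives $\sum_i\lambda_i\|\PP_\V\ket{\psi^L_i}\|^2\ge1-\delta$, that is, writing $q_i:=1-\|\PP_\V\ket{\psi^L_i}\|^2\in[0,1]$,
\[\textstyle\sum_i\lambda_i q_i\le\delta.\]

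Then I would argue by contradiction. Suppose the set $S=\{i:\lambda_i>\xi\}$ has $|S|\ge d+1$. Since $\{\ket{\psi^L_i}:i\in S\}$ is an orthonormal family, completing it to an orthonormal basis $\{\ket b\}$ of $\HS_L$ and summing the nonnegative quantities $\bra b\PP_\V\ket b=\|\PP_\V\ket b\|^2$ over that basis gives $\sum_{i\in S}\|\PP_\V\ket{\psi^L_i}\|^2\le\tr\PP_\V=d$, hence $\sum_{i\in S}q_i\ge|S|-d\ge1$. Combining with the displayed inequality,
\[\delta\ \ge\ \textstyle\sum_i\lambda_i q_i\ \ge\ \sum_{i\in S}\lambda_i q_i\ \ge\ \xi\sum_{i\in S}q_i\ \ge\ \xi,\]
which forces $\xi^2/2\ge\xi$, i.e.\ $\xi\ge2$, contradicting $\xi<1$. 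Therefore $|S|\le d\le r(\xi)$, as claimed.

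I do not expect a real obstacle here. The only non-routine choice is matching the parametrization of Definition~\ref{def:weakent}, namely taking $\delta=\xi^2/2$ so that $\sqrt{2\delta}=\xi$; after that the argument is a one-line Markov/counting estimate. The points that deserve a little care are correctly translating ``$\delta$-viable'' into the statement $\|(\PP_\V\otimes\Id_R)\ket\psi\|^2\ge1-\delta$, using orthonormality of the Schmidt vectors both to eliminate cross terms and to bound a partial sum of diagonal entries of $\PP_\V$ by $\tr\PP_\V$, and disposing of the boundary values $\xi=0$ and $\xi\ge1$ separately.
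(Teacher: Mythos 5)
Your proof is correct, and it takes a genuinely different and more elementary route than the paper. The paper's proof passes through reduced density matrices: it chooses a state $\ket\phi\in\sphere(\V\otimes\HS_R)$ with $|\bracket\psi\phi|^2\ge1-\delta$, invokes Uhlmann's theorem to lower-bound the fidelity $F(\rho_\psi,\rho_\phi)$ on $\HS_L$, then the Fuchs–van de Graaf inequality to upper-bound the trace distance $\|\rho_\psi-\rho_\phi\|_*\le\sqrt{2\delta}$, and finally Weyl's eigenvalue perturbation inequality together with $\rank\rho_\phi\le\vd\V$ to conclude $\lambda_{r+1}(\rho_\psi)\le\sqrt{2\delta}=\xi$. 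Your argument instead stays entirely with the Schmidt decomposition: expanding $\|(\PP_\V\otimes\Id_R)\ket\psi\|^2$ kills cross terms and yields $\sum_i\lambda_i q_i\le\delta$ with $q_i=1-\|\PP_\V\ket{\psi^L_i}\|^2$, and the trace bound $\sum_{i\in S}\|\PP_\V\ket{\psi^L_i}\|^2\le\tr\PP_\V=\vd\V$ over any orthonormal subfamily forces $\sum_{i\in S}q_i\ge1$ whenever $|S|>\vd\V$, giving $\delta\ge\xi$ and hence a contradiction once $\delta=\xi^2/2$ and $\xi<1$. What each buys: the paper's version is modular (it delegates to standard quantum-information inequalities and produces the ordered statement $\lambda_{r+1}\le\xi$ directly), while yours is self-contained, avoids fidelity/trace-distance/Weyl entirely, and is essentially a one-line Markov/counting argument — arguably cleaner and with no hidden constant-factor subtleties in the trace-distance step. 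Both choose the same reparametrization $\delta=\xi^2/2$ to match Definition \ref{def:weakent}.
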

\begin{proof}
	Let \(\V\subsp\HS_L\) be a \(\delta\)-viable space of dimension \(r\le r(\sqrt{2\delta})\), and let \(\ket\phi\in\sphere(\V\otimes\HS_R)\) be a state such that \(\bracket\psi\phi^2\ge1-\delta\). Then \(F(\rho_\psi,\rho_\phi)\ge1-\delta\) where \(F\) is the fidelity and \(\rho_\psi\) and \(\rho_\phi\) are the reduced density matrices on \(\HS_L\). Note that $\rho_{\phi}$ has rank $r\le r(2\sqrt\delta\:)$. Apply the standard bound for the trace distance $\|\rho_\psi-\rho_\phi\|_*\le\sqrt{1-F(\rho',\rho)^2}\le\sqrt{2\delta}$.
	The trace distance bounds the operator distance, so Weyl's inequalities imply that $\lambda_{r+1}(\rho_\psi)\le\sqrt{2\delta}$. Conclude by substituting $\delta=\xi^2/2$.
\end{proof}

Corollary \ref{lem:AGSP} in the previous section implies that given an arbitrary constant $C>0$ there exists a $\shrink$-PAP $\L$ for some $\widetilde\Z\close\delta\Z$ such that
\begin{equation}\label{eq:gammaL}\frac1\shrink\ge|\L|^C=\exp\Big[\tilde\Theta\bigg({\Delta^{-\frac\beta{2-\beta}}}+\Big(\tfrac{\log(1/\delta)}{\Delta}\Big)^{\frac\beta2}\bigg)\Big].\end{equation}
It is well-known \cite{arad2013area,alvv17} that such a bound implies an area law. As we prove in appendix \ref{sec:PAPtoent}, we arrive at the following entanglement bound on $\ST$.

\begin{restatable}{proposition}{thearealawthm}\label{prop:arealawthm}
	Let $H$ be a local Hamiltonian with spectral gap $\Delta$ on a tree $\T$ with discrete fractal dimension $\beta<2$. Let \(\Z\) be its space of ground states with degeneracy $D=|\Z|$. Then $\Z$ satisfies an \(r(\cdot)\)-entanglement bound, where
	\[r(t)=D\exp\Big[\Otilde\Big(\Delta^{-\frac\beta{2-\beta}}(\log\tfrac1t)^\alpha\Big)\Big],\WHERE \alpha={\tfrac32\tfrac{\beta}{\beta+1}}<1.\]
\end{restatable}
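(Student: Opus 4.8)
The statement unpacks, via Definition~\ref{def:weakent}, to the following: for every region $\Reg\subset\T$ with $|\partial\Reg|=\bigO(1)$ and every $\delta\in(0,1)$, produce a $\delta$-viable subspace $\V\subsp\HS_\Reg$ for $\Z$ of dimension at most $r(\sqrt{2\delta})=D\exp[\Otilde(\Delta^{-\frac\beta{2-\beta}}(\log\tfrac1\delta)^{\alpha})]$. The plan is to feed the low-dimensional PAP of Corollary~\ref{lem:AGSP} into the standard ``PAP $\Rightarrow$ area law'' bootstrap of \cite{arad2013area,alvv17}, with one twist: that PAP targets a subspace $\widetilde\Z$ only $\delta_1$-close to $\Z$, so Lemma~\ref{lem:robust} (approximate transitivity of almost-majorization) is applied at the end.

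Concretely, fix $\Reg$ and $\delta$, set $\delta_1=\delta/4$, and invoke Corollary~\ref{lem:AGSP} (equivalently~\eqref{eq:gammaL}) with accuracy $\delta_1$ and a parameter $C$ to be chosen, obtaining a $\shrink$-PAP $\L\subsp\lin(\HS_\Reg)$ for some $\widetilde\Z\close{\delta_1}\Z$ with $\shrink\le|\L|^{-C}$, where $\ell:=\log|\L|$ is, reading off the two branches of the minimum in Corollary~\ref{lem:AGSP}, of order $\Otilde\big(C^{\bigO(1)}\Delta^{-\frac\beta{2-\beta}}+C^{\bigO(1)}(\Delta^{-1}\log\tfrac1\delta)^{\beta/2}\big)$. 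Next, bootstrap toward $\widetilde\Z$: a standard argument (a Markov bound on $\tr_{\T\xpt\Reg}\PP_{\widetilde\Z}$ together with one application of $\L$, as in \cite{arad2013area,alvv17}) supplies an initial partial $\mu_0$-majorizer $\V_0\subsp\HS_\Reg$ for $\widetilde\Z$ with $\dim\V_0\le\ld^{\bigO(|\partial\Reg|)}D=\bigO(D)$ and $\mu_0^{2}=\Omega(1/D)$; iterating $\V_{t+1}\leftarrow\L\V_t$, Lemma~\ref{lem:AGSPapplication} improves the viability to $\shrink/\mu_t^{2}$ at each step, so that since $\shrink|\L|^{\bigO(1)}$ is tiny the error stabilizes at $\bigO(\shrink)\le\delta_1$ after $\bigO(1)$ steps while $\dim\V_t\le D|\L|^{\bigO(1)}=D\exp[\bigO(\ell)]$. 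This produces a $\delta_1$-viable subspace $\V$ for $\widetilde\Z$; since $\widetilde\Z\close{\delta_1}\Z$ is by the symmetry lemma $\delta_1$-almost-majorizing for $\Z$, Lemma~\ref{lem:robust} upgrades $\V$ to a $4\delta_1=\delta$-viable subspace for $\Z$ of the same dimension.

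It remains to choose $C$ so that $D\exp[\bigO(\ell)]$ fits inside $r(\sqrt{2\delta})$ while the bootstrap stays consistent: we need $\shrink\le|\L|^{-C}$ to beat $\delta/\mathrm{poly}(D)$, i.e.\ $C\ell\gtrsim\log\tfrac1\delta$ after absorbing $\log D$. Balancing this against the $C$-dependence of $\ell$ is the optimization that produces the exponent $\alpha$: taking $C\asymp(\log\tfrac1\delta)^{1-\alpha}$ near-optimally makes $\ell=\Otilde(\Delta^{-a}(\log\tfrac1\delta)^{b})$ with exponents that, using $0\le\beta<2$, satisfy $a\le\tfrac{\beta}{2-\beta}$ and $b\le\alpha$ (the balance point where $\alpha=\tfrac32\tfrac{\beta}{\beta+1}<1$), whence $\dim\V\le D\exp[\Otilde(\Delta^{-\frac\beta{2-\beta}}(\log\tfrac1\delta)^{\alpha})]=r(\sqrt{2\delta})$ (passing from $\delta$ to $\sqrt{2\delta}$ inside the logarithm only rescales constants). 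For $\delta$ above a fixed constant one runs the same argument at that constant, giving the weaker bound $D\exp[\Otilde(\Delta^{-\frac\beta{2-\beta}})]\le r(\sqrt{2\delta})$. As all steps are uniform in $\Reg$, $\Z$ satisfies the claimed $r(\cdot)$-entanglement bound. The main obstacle is the bootstrap: identifying the right initial majorizer and, once $C$ is allowed to grow with $1/\delta$, controlling the $|\L|^{\bigO(1)}$ dimension blow-up so that it still lands under the target exponent $\alpha$.
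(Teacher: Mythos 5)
The overall shape of your argument (bootstrap via PAP applications, then the robustness lemma to transfer $\delta$-viability from $\widetilde\Z$ to $\Z$, then optimize parameters to read off the exponent $\alpha$) matches the paper's strategy, but there is a genuine gap in the step you describe as ``standard'': the claimed initial partial $\mu_0$-majorizer of dimension $\bigO(D)$ with $\mu_0^2=\Omega(1/D)$, obtained by ``a Markov bound together with one application of $\L$.'' No such statement is available from a Markov bound; a Markov bound on $\tr_{\T\xpt\Reg}\PP_{\widetilde\Z}$ controls how many large Schmidt coefficients a density matrix has, but it does not produce a lower bound on the worst-case overlap $\min_{\ket\psi\in\sphere(\widetilde\Z)}\|\PP_\V\otimes\Id\ket\psi\|^2$. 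The actual bootstrap (AKLV) is a fixed-point argument and, for degeneracy $D=1$, gives initial overlap $\Omega(1/R)$ where $R$ is the \emph{AGSP entanglement rank} (here $|\L|$), not $\Omega(1/D)$. The paper encodes this correctly by defining $V(\nu)$ as the minimum dimension of a partial $\nu$-majorizer, using the deterministic sampling Corollary~\ref{cor:deterministic} to derive the recursion $V(\nu)\bounded cV(\nu)+\log\tfrac1\nu+\nu^{-3/2}|\Z|$, and rearranging to get $V(1/R)\le R|\Z|$ with $R$ tied to the weak PAP's dimension. Without the interleaved sampling/random-projection step, your iteration $\V_{t+1}\leftarrow\L\V_t$ cannot ``get off the ground''; it needs a valid initial majorizer, and once you substitute the correct one ($\Omega(1/|\L|)$-majorizer of dimension $\sim|\L|^{\bigO(1)}|\Z|$), a single parameter $C$ no longer suffices, because the shrinking factor required in the boost must beat $\delta/R^2$ whereas the weak PAP only needs $\shrink|\L|^2\le c$. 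This is why the paper uses \emph{two} PAPs at different scales: the weak one (giving $R$) and a much stronger one with $\shrink=\delta/(4R^2)$, and the output dimension is $R|\L||\Z|$ with the two factors coming from the two stages. Your optimization heuristic for $C$ lands on the right exponent $\alpha$, but to make the argument rigorous you need to either prove the fixed-point bootstrap lemma you are implicitly invoking, or switch to the $V(\nu)$-recursion route as in the paper.
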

\subsection{Application to trimming}
We now establish how the area law in terms of $r(\cdot)$-entanglement bounds the error incurred from the trimming operation. In appendix \ref{sec:trimsec} we show:
\begin{restatable}{lemma}{usefultrimlem}\label{lem:usefultrim}
	Let \(\Z\subsp\HS\) be a subspace which satisfies an \(r(\cdot)\)-entanglement bound, and let \(\TNspan{\TNgen}\subsp\HS_{\T^\mv}\) be \(\delta\)-viable for \(\Z\) on the standard subtree \(\T^\mv\), \(\mv\in\MV\). Given $\epsilon>0$, let
	\[\xi\defsmall\Big(\frac{\epsilon}{nr(\Theta(\epsilon/n))}\Big)^2.\] 
	Then \(\TNspan{\Trim_\xi(\TNgen)}\) is \(\delta+\epsilon\)-viable for \(\Z\).
\end{restatable}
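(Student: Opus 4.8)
By definition \ref{def:subsptrim}, $\Trim_\xi(\TNgen)$ is obtained by composing the single-cut trimmings $\trim_\xi^{\T^\mw}$ as $\mw$ ranges over the meta-branch $\MT^\mv$ (in a suitable order), and there are at most $|\MV|=\bigO(n)$ of them. The plan is to show that a single such trimming increases the viability error by at most $\bigO(\epsilon/n)$, and then to chain the $\bigO(n)$ estimates, absorbing the resulting constant into the hidden constant of $\xi\defsmall(\epsilon/(n\,r(\Theta(\epsilon/n))))^2$.

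Fix $\mw$ in $\MT^\mv$ and split $\HS=\HS_L\otimes\HS_M\otimes\HS_R$ with $L=\T^\mw$, $M=\T^\mv\xpt\T^\mw$, $R=\T\xpt\T^\mv$. Since $\T^\mw$ is a lean subtree we have $|\partial\T^\mw|=\bigO(1)$, so the $r(\cdot)$-entanglement bound on $\Z$ applies to the bipartition $(L,MR)$: by observation \ref{obs:entbound}, every $\ket\phi\in\sphere(\Z)$ has at most $r(t)$ Schmidt coefficients exceeding $t$ across this cut, and because $r(\cdot)$ grows only sub-polynomially the Schmidt weight below $t$ tends to $0$ as $t\to0$. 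Write $\W$ for the partial solution just before this trimming (inductively $\delta_0$-viable for $\Z$), and put $\rho_L=\tr_M\PP_\W$, $\PP=\charac_{(\xi,\infty)}(\rho_L)$, so that the trimmed space is $\W'=[\PP\otimes\Id_M]\W$.

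For a given $\ket\phi\in\sphere(\Z)$, let $\ket\psi=(\PP_\W\otimes\Id_R)\ket\phi/\|(\PP_\W\otimes\Id_R)\ket\phi\|$ be its best approximant in $\W\otimes\HS_R$, so $\bracket\phi\psi^2\ge1-\delta_0$, and consider $(\PP\otimes\Id_{MR})\ket\psi\in\W'\otimes\HS_{MR}$. The loss in passing to it is $\|((\Id-\PP)\otimes\Id_{MR})\ket\psi\|^2=\tr[(\Id-\PP)\sigma]$ with $\sigma=\tr_{MR}\ket\psi\bra\psi$, and two facts drive the estimate. First, $\sigma\preceq\rho_L$ (because $\ket\psi\in\W\otimes\HS_R$ forces $\tr_R\ket\psi\bra\psi\preceq\PP_\W$), so on $\mathrm{ran}(\Id-\PP)=\mathrm{ran}\,\charac_{[0,\xi]}(\rho_L)$ one has $\sigma\preceq\xi$. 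Second, $\|\sigma-\tr_{MR}\ket\phi\bra\phi\|_1=\bigO(\sqrt{\delta_0})$, so by Weyl's inequality $\sigma$ has $\bigO(r(t))$ eigenvalues above any threshold $t$ and its eigenvalue tail below $t$ exceeds that of $\tr_{MR}\ket\phi\bra\phi$ by only $\bigO(\sqrt{\delta_0})$. Splitting $\sigma$ at $t$: the part with eigenvalues above $t$ has rank $\bigO(r(t))$ and, being capped at $\xi$ by $\Id-\PP$, contributes $\bigO(\xi\,r(t))$; the part below $t$ contributes the ground-state Schmidt tail below $t$ plus $\bigO(\sqrt{\delta_0})$. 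Taking $t\asymp\epsilon/(n\,r(\Theta(\epsilon/n)))$, hence $\xi\asymp t^2$ as prescribed, makes both $\xi\,r(t)$ and the tail $\bigO(\epsilon/n)$, so the loss is $\bigO(\epsilon/n)+\bigO(\sqrt{\delta_0})$. To prevent the $\bigO(\sqrt{\delta_0})$ terms from spoiling the bound when summed, one runs each step's estimate against the previous partial solution rather than against $\Z$ and then folds the total accumulated error (an $\bigO(\epsilon)$-almost majorization after all $\bigO(n)$ steps) back onto $\Z$ through lemma \ref{lem:robust}, which turns a $\delta$-viability for $\Z$ and an $\bigO(\epsilon)$-majorization into a $(\delta+\bigO(\epsilon)+2\sqrt{\delta\cdot\bigO(\epsilon)})$-viability; rescaling $\epsilon$ by a constant makes this $\le\delta+\epsilon$.

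The main obstacle is the single-cut estimate. The area-law Schmidt-tail bound is a statement about the true ground space $\Z$, whereas the trimming threshold $\xi$ acts on the marginal $\rho_L=\tr_M\PP_\W$ of the approximating subspace $\W$, whose trace $\vd\W$ may be polynomially large and whose operator norm is uncontrolled; the argument must therefore use the spectral domination $\sigma\preceq\rho_L$ (so that only components of weight at most $\xi$ are deleted) simultaneously with the trace-norm closeness $\sigma\approx\tr_{MR}\ket\phi\bra\phi$ (to import the few-large-Schmidt-coefficients property), and it is exactly the conversion between a threshold on a density matrix and a threshold on Schmidt coefficients, passing through fidelity and trace distance, that forces $\xi$ to be quadratically small in $\epsilon/(n\,r(\Theta(\epsilon/n)))$. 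A secondary point is the bookkeeping of the interaction between the pre-existing error $\delta$ and the new error $\epsilon$ so that the final bound reads $\delta+\epsilon$, which is where the precise form of lemma \ref{lem:robust} is needed.
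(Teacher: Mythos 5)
Your top-level plan is the right one and matches the paper: iterate a single-cut estimate once per meta-vertex of $\MT^\mv$, each costing $\bigO(\epsilon/n)$, and pick $\xi$ quadratically small in $\epsilon/(n\,r(\Theta(\epsilon/n)))$. Your identification of the relevant cut (lean subtree $\T^\mw$ versus the rest), the observation $\sigma\preceq\rho_L$, and the trace-norm control on $\sigma$ via fidelity are all correct ingredients. But the single-cut estimate itself has a genuine gap, and it is exactly where you flag it.

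You control $\|\psi-\psi'\|^2=\tr[(\Id-\PP)\sigma]$, the \emph{norm} of the cut-away part, and then pay $|\bracket\phi{\psi-\psi'}|\le\|\psi-\psi'\|$. To bound the norm you need a tail bound on $\sigma$, and since $\sigma$ is only $\bigO(\sqrt{\delta_0})$-close to $\rho_\phi$ in trace norm, every tail or eigenvalue-count estimate for $\sigma$ inherits an additive $\bigO(\sqrt{\delta_0})$, where $\delta_0$ is the current viability error — a constant, not $\bigO(\epsilon/n)$. Your proposed repair (estimate against the previous partial solution, then fold into $\Z$ at the end) does not close this: the trim threshold acts on eigenvalues of $\rho_L=\tr_M\PP_\W$, whose small-eigenvalue tail has trace up to $\vd\W-\opn{rank}\PP$, so a single trim is \emph{not} $\bigO(\epsilon/n)$-close to its input either; the ``few large Schmidt coefficients'' structure must come from the ground space, and passing through $\W$ reintroduces the $\sqrt{\delta_0}$ loss. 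The paper's single-cut lemma (lemma \ref{lem:thetrimrobust}, via lemmas \ref{lem:trimexact} and \ref{lem:prevlemma}) does two things you skip. First, it replaces $\Z$ by $\widetilde\Z=(\PP_\A\otimes\Id\otimes\PP_\R)\Z$ with \emph{hard} support dimensions $\le r(t)$ on both $\HS_A$ and $\HS_R$ — you never use the $R$-side dimension bound, but it is needed to make $\rho_A^{\V\otimes\Z|_R}=r\,\rho_A^\V$, which is what ties the trim threshold on $\rho_L$ to a trim on the majorized subspace. Second, and decisively, it bounds $|\bracket{\tilde\eta}\psi|$ \emph{directly}: since $\tr_B\ket{\tilde\eta}\bra{\tilde\eta}\le\xi\Id$, each Schmidt overlap is $\le\sqrt\xi$, and Cauchy--Schwarz over the $\le\ell$ Schmidt terms of $\ket\psi\in\widetilde\Z$ gives $\sqrt{\xi\ell}$ with no dependence on the pre-existing error $\delta_0$. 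Your norm bound cannot see the low rank of the target, so it loses exactly the factor you then cannot recover. The fix is to prove a per-cut statement of the form ``trimming increases the error by $\varepsilon$ for a $\xi$ depending only on $\varepsilon$ and $r(\cdot)$, uniformly in $\delta$'' by first passing to a finite-rank $\widetilde\Z$ and bounding the error vector's \emph{overlap} with $\widetilde\Z$, not its norm, and only then applying lemma \ref{lem:robust} (in its $\delta+3\sqrt\varepsilon$ form, to avoid $\sqrt\delta$ cross terms).
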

Apply lemma \ref{lem:usefultrim} and bound $r$ using proposition \ref{prop:arealawthm} with $t\defsmall \epsilon/n$:
\begin{corollary}\label{cor:usefultrimcor}
Let $H$ be a local Hamiltonian on the tree $\T$ with fractal dimension bound $\beta<2$, let $\Z$ be its space of ground states with degeneracy $D=|\Z|$, and pick
\[\xi\DEFTHETA\Big(\frac{\epsilon}{Dn}\Big)^2\exp\Big[-\tilde\Theta\Big(\Delta^{-\frac\beta{2-\beta}}(\log(n/\epsilon))^\alpha\Big)\Big],\WHERE \alpha={\tfrac32\tfrac{\beta}{\beta+1}}<1.\]
Then for any $\V\subsp\T^\mv$ which is $\delta$-viable for $\Z$, $\Trim_\xi(\V)$ is $\delta+\epsilon$-viable for $\Z$.
\end{corollary}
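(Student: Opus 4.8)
This is a direct consequence of Lemma~\ref{lem:usefultrim} (the trimming bound) combined with the area law of Proposition~\ref{prop:arealawthm}; the plan is to evaluate the $r(\cdot)$-entanglement bound at the appropriate scale and substitute it into the hypothesis of Lemma~\ref{lem:usefultrim}.

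First I would invoke Proposition~\ref{prop:arealawthm}: since $\T$ has discrete fractal dimension $\beta<2$, the ground space $\Z$ satisfies an $r(\cdot)$-entanglement bound with
\[r(t)=D\exp\Big[\Otilde\Big(\Delta^{-\frac\beta{2-\beta}}(\log\tfrac1t)^\alpha\Big)\Big],\WHERE \alpha={\tfrac32\tfrac{\beta}{\beta+1}}<1.\]
Evaluating at $t=\Theta(\epsilon/n)$ and using $\log\tfrac1t=\log(n/\epsilon)+\bigO(1)$, the $\Otilde(\cdot)$ swallows the additive $\bigO(1)$ shift, so that
\[r\big(\Theta(\epsilon/n)\big)=D\exp\Big[\Otilde\Big(\Delta^{-\frac\beta{2-\beta}}(\log(n/\epsilon))^\alpha\Big)\Big].\]

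Next I would feed this $r(\cdot)$ into Lemma~\ref{lem:usefultrim}: its hypothesis (given the $r(\cdot)$-entanglement bound just established) requires, on any standard subtree $\T^\mv$, that $\xi$ be a sufficiently small multiple of $\big(\epsilon/(nr(\Theta(\epsilon/n)))\big)^2$. Substituting the displayed value of $r(\Theta(\epsilon/n))$, pulling the factor $D$ out as $(\epsilon/(Dn))^2$, and noting that squaring turns the $\exp[\Otilde(\cdot)]$ in the denominator into $\exp[-\tilde\Theta(\cdot)]$ (with the hidden exponent chosen large enough to also cover the factor of $2$ and the ``sufficiently small'' slack), the requirement becomes exactly
\[\xi\DEFTHETA\Big(\frac{\epsilon}{Dn}\Big)^2\exp\Big[-\tilde\Theta\Big(\Delta^{-\frac\beta{2-\beta}}(\log(n/\epsilon))^\alpha\Big)\Big],\]
which is the choice of $\xi$ in the statement. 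With this $\xi$, Lemma~\ref{lem:usefultrim} applies to any $\delta$-viable $\V\subsp\HS_{\T^\mv}$ and yields that $\Trim_\xi(\V)$ is $(\delta+\epsilon)$-viable for $\Z$, as claimed.

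The only point demanding any care is this bookkeeping: one must check that the additive shift $\log\tfrac1t=\log(n/\epsilon)+\bigO(1)$, the factor $2$ from squaring, and the implicit constant from Lemma~\ref{lem:usefultrim} perturb the exponent only by lower-order terms, which the $\tilde\Theta$ absorbs. This is routine, and there is no genuine obstacle: the corollary is purely the substitution of the area-law bound into the trimming lemma.
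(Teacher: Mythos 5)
Your proposal is correct and is exactly the paper's argument: the paper's one-line proof is precisely ``apply Lemma~\ref{lem:usefultrim} and bound $r$ via Proposition~\ref{prop:arealawthm} with $t\defsmall\epsilon/n$,'' which is the substitution you carry out, with the same bookkeeping (absorbing the additive $\bigO(1)$ shift in $\log\frac1t$, the factor $D$, and the square into the $\tilde\Theta$).
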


\subsection{Area law in terms of the entanglement entropy}
Recall that the \emph{entanglement entropy} $\EE(\ket\psi)$ of a pure state $\ket\psi\in\HS_L\otimes\HS_R$, $\|\ket\psi\|=1$, is the von Neumann entropy $\sum_i \lambda_i\log\frac1{\lambda_i}$ of its reduced density matrix $\rho_L=\tr_R(\ket\psi\bra\psi)$ with eigenvalues $\lambda_i$. It is easy to convert an $r(\cdot)$-dimension bound for $\Z$ on $\HS_L$ to a bound on the entanglement entropy of any pure state in $\Z$, as we now show.
\begin{lemma}\label{lem:intlemma}
	Suppose \(\Z\subsp\HS_{LR}\) satisfies an \(r(\cdot)\)-dimension bound on $L$. then for any state \(\ket\psi\in\sphere(\Z)\) and $\tau\ge1$,
	\[\EE(\ket\psi)\le\tau+\int_\tau^\infty\int_s^\infty r(e^{-t})e^{-t}\:dt\: ds.\]
\end{lemma}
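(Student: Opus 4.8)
The plan is to reduce everything to the counting estimate of Observation~\ref{obs:entbound}. Order the Schmidt coefficients of $\ket\psi$ as $\lambda_1\ge\lambda_2\ge\cdots\ge0$, so that $\sum_i\lambda_i=1$ (there are finitely many since $\HS$ is finite-dimensional). Applying Observation~\ref{obs:entbound} with the assumed $r(\cdot)$-dimension bound on $L$ gives $N(\xi):=\#\{i:\lambda_i>\xi\}\le r(\xi)$ for every $\xi\in(0,1]$. The remainder is a layer-cake rewriting of the von Neumann entropy followed by a split at the threshold $\tau$.

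First I would use the identity $\log(1/\lambda)=\int_0^\infty\mathbbm{1}[\lambda\le e^{-s}]\,ds$, valid for $\lambda\in(0,1]$, together with Tonelli's theorem, to write
\[\EE(\ket\psi)=\sum_i\lambda_i\log(1/\lambda_i)=\int_0^\infty T(s)\,ds,\qquad T(s):=\sum_{i:\lambda_i\le e^{-s}}\lambda_i.\]
Splitting $\int_0^\infty=\int_0^\tau+\int_\tau^\infty$ and bounding $T(s)\le\sum_i\lambda_i=1$ handles the first piece: $\int_0^\tau T(s)\,ds\le\tau$.

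For the tail $\int_\tau^\infty T(s)\,ds$ I would bound $T(s)$ pointwise. For each index $i$ with $\lambda_i\le e^{-s}$ one has $\lambda_i=\int_0^{e^{-s}}\mathbbm{1}[\xi<\lambda_i]\,d\xi$; summing over all $i$ (dropping the omitted nonnegative terms and using Tonelli again) gives
\[T(s)\le\int_0^{e^{-s}}N(\xi)\,d\xi\le\int_0^{e^{-s}}r(\xi)\,d\xi=\int_s^\infty r(e^{-t})e^{-t}\,dt,\]
where the last equality is the change of variables $\xi=e^{-t}$. Integrating over $s\in[\tau,\infty)$ then yields $\int_\tau^\infty T(s)\,ds\le\int_\tau^\infty\int_s^\infty r(e^{-t})e^{-t}\,dt\,ds$, and adding the two pieces gives the claimed bound. (In fact this works for every $\tau\ge0$; the hypothesis $\tau\ge1$ in the statement is harmless and presumably convenient for later applications.)

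I do not anticipate a genuine obstacle here: the argument is essentially bookkeeping. The only points deserving a moment of care are that the integral representation of $\log(1/\lambda)$ lets us bypass the non-monotonicity of $\lambda\mapsto\lambda\log(1/\lambda)$, which would otherwise complicate a naive term-by-term estimate, and that one must orient the substitution $\xi=e^{-t}$ correctly so that the integration limits come out as written.
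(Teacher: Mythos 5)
Your proof is correct, and it takes a cleaner route than the paper's. The paper partitions the \emph{sum} into $I=\{\lambda_i>e^{-\tau}\}$ and $J=\{\lambda_j\le e^{-\tau}\}$, bounds $\sum_Ih(\lambda_i)\le\tau$ crudely, and then layer-cakes $h(\lambda_j)=\lambda_j\log(1/\lambda_j)$ directly over $J$, using the monotonicity of $h$ on $[0,1/e]$ (which is why it imposes $\tau\ge1$) followed by the substitution $\eta=te^{-t}$, obtaining $\sum_J h\le\int_\tau^\infty r(e^{-t})(t-1)e^{-t}\,dt$. You instead linearize first, writing $\log(1/\lambda)=\int_0^\infty\mathbbm{1}[\lambda\le e^{-s}]\,ds$ to get $\EE=\int_0^\infty T(s)\,ds$, split the \emph{integral} at $s=\tau$, and bound the tail of $T$ by a second layer-cake in $\xi$. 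This neatly sidesteps the non-monotonicity of $h$ and needs no constraint on $\tau$. It also gives a genuinely sharper bound: after Fubini your tail equals $\int_\tau^\infty(t-\tau)r(e^{-t})e^{-t}\,dt$, whereas the paper's $(t-1)$ factor combined with the crude $\sum_Ih\le\tau$ would only yield $\tau+\int_\tau^\infty(t-1)r(e^{-t})e^{-t}\,dt$, which is \emph{larger} than the stated bound when $\tau>1$; to recover the stated inequality along the paper's lines one must keep the sharper $\sum_Ih\le\tau\sum_I\lambda_i=\tau(1-\sum_J\lambda_j)$ and let the subtracted term cancel the excess. So your derivation is not merely an alternative — it is the tighter bookkeeping that actually yields the displayed inequality as claimed.
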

\begin{proof}
	Let \((\lambda_i)_i\) be the Schmidt coefficients of \(\ket\psi\) and \(h(\lambda)=-\lambda\log\lambda\). \(h\) is increasing from \(h(0)=0\) to its maximum \(h(\frac1e)=\frac1e\). 
	Let $I=\{i\mid\lambda_i>e^{-\tau}\}$ and $J=\{j\mid\lambda_j\le e^{-\tau}\}$. For $i\in I$ we have $h(\lambda_i)<\tau\lambda_i$, so $\sum_{i \in I}h(\lambda_i)\le\tau\sum\lambda_i\le\tau$.
	Let the index $j$ run only over $J$:
	\[{\sum_{j\in J}} h(\lambda_j)=\int_0^{\frac1e}\#\{j\in J\mid h(\lambda_j)\ge\eta\}d\eta=\int_\tau^\infty\#\{j\mid \lambda_j\ge e^{-t}\}(t-1)e^{-t}dt,\]
	where we have substituted \(\eta=te^{-t}=h(e^{-t})\) and used $e^{-\tau}\ge\lambda_j\ge e^{-t}$ for the lower limit of integration. By observation \ref{obs:entbound}, $\#\{\lambda_j\ge e^{-t}\}\le r(e^{-t})$. To conclude, write $t-1=\int_1^tds$ and apply Fubini's theorem.
\end{proof}
The area law for the entanglement entropy now follows immediately from proposition \ref{prop:arealawthm}:
\EEarealaw*
\begin{proof}
	By proposition \ref{prop:arealawthm}, $\Z$ satisfies an $r(\cdot)$-entanglement bound where $r(e^{-t})=D\exp\Big[\tilde\Theta\Big(\Delta^{-\frac\beta{2-\beta}}t^\alpha\Big)\Big]$ for $\alpha={\tfrac32\tfrac{\beta}{\beta+1}}<1$. 
	Pick \(\tau=4\log D+\tilde\Theta\big(\Delta^{-\frac{2\beta(\beta+1)}{(2-\beta)^2}}\big)\) such that \(r(e^{-t})\le e^{t/2}\) for \(t\ge\tau\), and apply lemma \ref{lem:intlemma},
 \[\EE(\ket\psi)\le\tau+\int_\tau^\infty\int_s^\infty e^{-\frac{t}2}\:dt\: ds=\tau+4e^{-\tau/2}\le\tau+4.\]
\end{proof}

%
\section{Efficient PAP}

The main challenge in implementing algorithm $\GSm$ is the efficient application of a PAP in the subprocedure $\Enhancet$. The PAP in the previous sections satisfies the required dimension bound relative to its shrinking factor, but the spectral truncation means that it is non-local within the subtree $\ST$ on which it acts. The \emph{efficient} PAP replaces the spectral (hard) truncation of with a variant of the \emph{efficient truncation} of \cite{alvv17}. This is an approximation to the truncation $\hTRUNC H\eta$ by the \emph{soft truncation}, a linear combination of operators $e^{-cH}$, each of which can in turn be approximated using the \emph{truncated cluster expansion} \cite{hastings2006solving}.

In constrast with \cite{alvv17} which uses a construction of \cite{molnar2015approximating} to encode the truncated cluster expansion as a matrix product operator on the line (analogous to the interaction tree $\T$ in our setting), we instead encode the operator as a meta-TNO.

We analyze the soft truncation \cite{alvv17} in section \ref{sec:softtruncsec} by sandwiching it between two hard truncations instead of using the Taylor expansion as in \cite{alvv17}. This leads to a simple analysis which allows us to directly apply lemma \ref{lem:spectrumtosubspace} and conclude closeness of spectral subspaces.

\subsection{Truncated cluster expansion}\label{subsec:TCE}

Let \(\EG^*=\bigcup_{\ell=0}^\infty\EG^\ell\) be the set of finite sequences of edges. \(|\ee|\) is the length of the sequence, and 
\(\SET(\ee)=\clos(\{\e_t\mid t=1,\ldots,|\ee|\})\) is the set of interactions and particles affected by \(\ee\). The \emph{truncated cluster expansion} \cite{hastings2006solving} for $e^{-H}$ is based of the expansion
\begin{equation}\label{eq:expansion}\textstyle{e^{-H}=\sum_{\ee\in\EG^*}G_{\ee}(H)}\quad\text{where}\quad\textstyle{ G_\ee(H)=\frac{(-1)^{|\ee|}}{|\ee|!}\prod_{t=1}^{|\ee|} H_{\e_t}}.\end{equation} 

\begin{definition}
An \emp{\(\boldsymbol L\)-forest} is a \emp{closed} forest \({\F}\) such that each connected component of \({\F}\) has at most \(L\) edges. \(\mathfrak F_L\) is the set of \(L\)-forests in \(\T\). An $L$-tree is a tree $\ST\in\mathfrak F_L$.\end{definition}
\begin{definition}[\cite{hastings2006solving}]\label{def:truncx}
	The \emp{truncated \(\boldsymbol L\)-cluster expansion} of \(e^{-H}\) is
	\[\trexp_L(-H)=\sum_{\substack{\ee\in\EG^*\\\SET(\ee)\in\mf F_L}} G_\ee(H),\]
\end{definition}

As is section \ref{sec:HDVabstract} we collect the terms of the cluster expansion by histogram (definition \ref{def:histo}), writing
\[\trexp_L(-H)=\sum_{\closure{\supp\hist}\in\mf F_L} G_\hist(H)\WHERE
{G_\hist(H)}:=\sum_{\ee\in[\hist]}G_\ee(H).\]
Writing the expansion in terms of histograms allows us to show the following \emph{homomorphism} property, which we use to encode the truncated cluster expansion on the META-tree. 

\begin{lemma}\label{obs:sumprod}
	If \(\closure{\supp(\hist_1)}\cap\closure{\supp(\hist_2)}=\emptyset\) then
	\begin{equation}G_{\hist_1+\hist_2}(H)=G_{\hist_1}(H) G_{\hist_2}(H),\label{eq:logproperty}\end{equation}
\end{lemma}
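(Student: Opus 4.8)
The plan is to expand $G_{\hist_1+\hist_2}(H)$ into its defining sum over words and, after a combinatorial regrouping, to recognize it as the product of the two one-factor sums. Throughout write $m_i=\|\hist_i\|_1$ and $m=m_1+m_2$, and recall that every word $\ee$ with a prescribed histogram $\hist$ has length $|\ee|=\|\hist\|_1$, so $G_\hist(H)=\frac{(-1)^{\|\hist\|_1}}{\|\hist\|_1!}\sum_{\ee\in[\hist]}\prod_{t}H_{\e_t}$.

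The first step is to isolate the only analytic input. Since $\closure{\supp(\hist_1)}$ and $\closure{\supp(\hist_2)}$ are disjoint closed forests, they share no vertices, so each $H_\e$ with $\e\in\supp(\hist_1)$ and each $H_{\e'}$ with $\e'\in\supp(\hist_2)$ act on disjoint tensor factors of $\HS$ and hence commute; in particular $\supp(\hist_1)\cap\supp(\hist_2)=\emptyset$. I would flag that no commutation relations \emph{within} a single group are needed.

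The combinatorial core is a bijection. A word $\ee\in[\hist_1+\hist_2]$ is determined by: (i) the set $\tau\subset[m]$, $|\tau|=m_1$, of positions carrying an edge of $\supp(\hist_1)$; (ii) the subword $\ee^{(1)}\in[\hist_1]$ read off along $\tau$ (in increasing order of position); (iii) the subword $\ee^{(2)}\in[\hist_2]$ read off along $[m]\setminus\tau$. Conversely any such triple interleaves back to a unique word, and $\supp(\hist_1)\cap\supp(\hist_2)=\emptyset$ guarantees its histogram is $\hist_1+\hist_2$. Using the cross-commutativity, in $\prod_{t=1}^m H_{\e_t}$ I can slide all group-$1$ factors to the left, preserving their relative order, to get $\prod_{t=1}^m H_{\e_t}=\big(\prod_{t\in\tau}H_{\e_t}\big)\big(\prod_{t\notin\tau}H_{\e_t}\big)$, a quantity depending only on $\ee^{(1)}$ and $\ee^{(2)}$ and not on the interleaving pattern $\tau$. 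Summing over the $\binom{m}{m_1}$ choices of $\tau$ yields
\[G_{\hist_1+\hist_2}(H)=\frac{(-1)^m}{m!}\binom{m}{m_1}\Big(\sum_{\ee^{(1)}\in[\hist_1]}\prod_{t}H_{\e^{(1)}_t}\Big)\Big(\sum_{\ee^{(2)}\in[\hist_2]}\prod_{t}H_{\e^{(2)}_t}\Big),\]
and since $\frac1{m!}\binom{m}{m_1}=\frac1{m_1!\,m_2!}$ and $(-1)^m=(-1)^{m_1}(-1)^{m_2}$, the right-hand side is exactly $G_{\hist_1}(H)\,G_{\hist_2}(H)$, which is \eqref{eq:logproperty}.

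I expect the one delicate point to be verifying that $\prod_{t=1}^m H_{\e_t}$ is constant along the fibers of $\ee\mapsto(\ee^{(1)},\ee^{(2)})$, i.e. that the interleaving is immaterial, since that is precisely where the disjoint-closure hypothesis is consumed; once that and the (straightforward) bijection are in hand, only the bookkeeping of the $\binom{m}{m_1}/m!$ prefactor and the signs remains, and it is immediate.
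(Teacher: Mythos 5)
Your proof is correct and follows essentially the same route as the paper: the same bijection between $\ee\in[\hist_1+\hist_2]$ and triples (interleaving set, subword in $[\hist_1]$, subword in $[\hist_2]$), the same observation that non-adjacent supports make the interleaving immaterial, and the same $\binom{m}{m_1}/m!$ and sign bookkeeping. You spell out the cross-commutativity step a bit more explicitly than the paper, which simply asserts that each interleaving yields the same operator.
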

\begin{proof}
	Write \(|\hist|=\|\hist\|_1\). A sequence \(\ee\in[\hist_1+\hist_2]\) corresponds to a tuple \((\ee_1,\ee_2,\mathfrak I)\) where \(\ee_1\in[\hist_1]\), \(\ee_2\in[\hist_2]\), and \(\mathfrak I\subset\{1,\ldots,|\hist_1+\hist_2|\}\) is a set of size \(|\hist_1|\). Indeed, given such a tuple define \(\ee\) through the subsequences \(\ee|_{\mathfrak I}=\ee_1\) and \(\ee|_{\complement \mathfrak I}=\ee_2\). Since the supports of \(\hist_1\) and \(\hist_2\) are not adjacent, each \(\mathfrak I\) yields the same operator, so
\begin{align*}G_{\hist_1+\hist_2}=\frac{(-1)^{|\hist_1+\hist_2|}}{|\hist_1+\hist_2|!}\kern-.2em\binom{|\hist_1+\hist_2|}{|\hist_1|}\kern-.2em\sum_{\ee_1}\sum_{\ee_2}\prod_t H_{\ee_1(t)}\prod_t H_{\ee_2(t)}=G_{\hist_1}G_{\hist_2}.\end{align*}
\end{proof}

\begin{remark}When $A\in\HS_{LR}$ and $B\in\HS_{LR}$ are are of the form $A=\tilde A\otimes \Id$ and $B=\Id\otimes\tilde B$ we use $A$ and $\tilde A$ interchangeably and similarly for $B$ and $\tilde B$. Thus, we may write the lemma above as $G_{\hist_1+\hist_2}=G_{\hist_1}\otimes G_{\hist_2}$ for $\closure{\supp\hist_1}\cap\closure{\supp\hist_2}$.\end{remark}

Say that a forest \({\F}'\) \emp{extends} forest \({\F}\) if every connected component of \({\F}\) is also a connected component of \({\F}'\). Furthermore, let
\begin{equation}\label{eq:positive}\textstyle{G^+_{\F}:=\sum_{\hist\in\NN_+^{\F}}G_{\hist}(H_{\F})}\end{equation}
be the sum of terms in the expansion of \(e^{-H_{\F}}\) with full support on \({\F}\). For a closed forest $\F\subset\Reg$ in a region $\Reg$, denote the largest closed forest disjoint from $\F$ by
\[\Reg\XPT\F=\Reg\xpt(\F\cup\partiale\F).\]
The following well-known fact is immediate from the homomorphism property.
\begin{corollary}\label{cor:factor}
	Let \(\Omega\subset\T\) be closed, and let \({\F}\in\mf F_L(\Omega)\).  Then,
\begin{align*}
\sum_{\substack{\closure{\supp\hist}\in\mf F_L(\Omega)\\\text{extends }{\F}}}G_{\hist}&=G^+_{\F}\otimes\trexp_L(-H_{\Omega\xpt{\F}}).\end{align*}
Moreover, each tensor factor can be factored further:
\[G_{\F}^+=\bigotimes_{\ST\in\comp({\F})}G^+_\ST,\AND\quad\trexp_L(-H_{\Omega\xpt{\F}})=\bigotimes_{\ST\in\comp(\Omega\xpt{\F})}\trexp_L(-\ST),\]
where $\comp(\Reg)$ is the set of connected components of $\Reg$.
\end{corollary}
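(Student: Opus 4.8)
The plan is to deduce both identities directly from the homomorphism property of Lemma~\ref{obs:sumprod} (equivalently, its tensor-product form in the remark following it); the only real content is a bijection between the histograms appearing on the left and tuples of histograms supported on pairwise-disjoint pieces of $\T$, after which one simply interchanges a sum with a tensor product.

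First I would unpack what it means for $\closure{\supp\hist}$ to extend $\F$: every connected component of $\F$ is a connected component of $\closure{\supp\hist}$. Hence $\closure{\supp\hist}$ is the disjoint union of $\F$ with its remaining components $K$, and each such $K$ is vertex-disjoint from $\F$ — a vertex of $K$ lying in some component $\ST$ of $\F$ would force $K=\ST$, since $\ST$ is by hypothesis already a full component of $\closure{\supp\hist}$. Consequently every edge of $K$ has both endpoints outside $\VX_\F$, so it lies neither in $\F$ nor in $\partiale\F$, whence $K\subseteq\Omega\XPT\F=\Omega\setminus(\F\cup\partiale\F)$; moreover $K$ has at most $L$ edges. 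This shows that $\hist\mapsto(\hist_0,\hist_1)$, where $\hist_0$ is the restriction of $\hist$ to $\F$ and $\hist_1$ its restriction to $\Omega\XPT\F$, is a bijection from $\{\hist:\closure{\supp\hist}\in\mf F_L(\Omega),\ \hist\text{ extends }\F\}$ onto $\NN_+^{\F}\times\{\hist_1:\closure{\supp\hist_1}\in\mf F_L(\Omega\XPT\F)\}$, with inverse $(\hist_0,\hist_1)\mapsto\hist_0+\hist_1$. Since $\F$ and $\Omega\XPT\F$ have disjoint closures, Lemma~\ref{obs:sumprod} gives $G_{\hist}=G_{\hist_0}\otimes G_{\hist_1}$, and summing the two factors independently yields
\[\sum_{\substack{\closure{\supp\hist}\in\mf F_L(\Omega)\\\text{extends }\F}}G_\hist=\Big(\sum_{\hist_0\in\NN_+^{\F}}G_{\hist_0}\Big)\otimes\Big(\sum_{\closure{\supp\hist_1}\in\mf F_L(\Omega\XPT\F)}G_{\hist_1}\Big)=G^+_{\F}\otimes\trexp_L(-H_{\Omega\XPT\F}).\]

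For the two further factorizations I would apply the same principle to a closed forest and its connected components, which are pairwise vertex-disjoint closed subtrees and hence have pairwise-disjoint closures. A histogram $\hist$ supported on such a forest is the sum $\sum_\ST\hist|_\ST$ of its restrictions to the components $\ST$, and $\hist\mapsto(\hist|_\ST)_\ST$ is a bijection onto the product of the per-component histogram families — the $L$-tree constraint, when imposed, being inherited because each component of $\closure{\supp\hist}$ sits inside a single component of the ambient forest. By Lemma~\ref{obs:sumprod}, $G_\hist=\bigotimes_\ST G_{\hist|_\ST}$, so summing each factor independently gives $G^+_{\F}=\bigotimes_{\ST\in\comp(\F)}G^+_\ST$ (running $\hist$ over $\NN_+^{\F}$) and $\trexp_L(-H_{\Omega\XPT\F})=\bigotimes_{\ST\in\comp(\Omega\XPT\F)}\trexp_L(-H_\ST)$ (running $\hist$ over histograms with $\closure{\supp\hist}\in\mf F_L(\Omega\XPT\F)$).

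The main obstacle, modest as it is, is pinning down the first bijection: checking that no component of $\closure{\supp\hist}$ other than the components of $\F$ can meet $\F$, which is where the tree structure of $\T$ and the definition of $\partiale\F$ enter, and confirming that the ``at most $L$ edges per component'' condition passes cleanly between $\closure{\supp\hist}$ and its restrictions to $\F$ and to $\Omega\XPT\F$. Once that is in place, the remainder is the routine interchange of a locally finite sum with a tensor product.
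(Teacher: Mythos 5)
Your proof is correct and follows essentially the same route as the paper: decompose each extending histogram as a sum $\hist_0+\hist_1$ of its restrictions to $\F$ and to $\Omega\XPT\F$, apply the homomorphism property of Lemma~\ref{obs:sumprod}, and interchange the sum with the tensor product; the further factorizations into components are handled the same way. You have spelled out more explicitly the vertex-disjointness argument justifying the bijection, which the paper leaves implicit, but the idea and the key lemma used are identical.
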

\begin{proof}
	Write the LHS as the sum of \(G_{\hist'+\hist_+}\) where \(\hist_+\) ranges over histograms whose support is exactly \({\F}\), and \(\hist'\) ranges over histograms whose support is non-adjacent to \({\F}\). Then apply lemma \ref{obs:sumprod} to factor each summand into \(G_{\hist_+}\otimes G_{\hist'}\).
	The decomposition into trees is proven in the same way.
\end{proof}
\sparagraph{Error bound}
We use an error bound for the truncated cluster expansion, i.e., a bound on $\|e^{-H}-\trexp(-H)\|$, which takes a form familiar from the literature \cite{hastings2006solving,kliesch2014locality,alvv17}. The main property which is specific to our setting is the following bound on the number of tree neighborhoods of a given point:
\begin{observation}\label{obs:counttrees}
	For any \(\vx\in\VX\), the number of connected neighborhoods \(\ST\subset\T\) of \(\vx\) with at most \(L\) vertices is bounded by the number of full \(2\sd\)-ary trees with \(L\) internal vertices. By \textup{\cite{hilton1991catalan}}, the latter equals \(\frac1{(2\sd-1)L+1}\binom{2\sd L}{L}\le(6\sd)^{L}\).
\end{observation}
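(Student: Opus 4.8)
The plan is to count connected subtrees of $\T$ containing a fixed vertex $\vx$ by injecting them into rooted plane trees. First I would observe that since $\T$ has maximum degree $2\sd$, any connected subtree $\ST\subset\T$ of $\vx$ has at most $2\sd$ edges at $\vx$ and at most $2\sd-1$ further edges at every other vertex (one edge being used to connect back toward $\vx$). Rooting $\ST$ at $\vx$ and orienting edges away from the root therefore turns $\ST$ into a rooted tree in which the root has out-degree $\le 2\sd$ and every non-root vertex has out-degree $\le 2\sd-1$. To get a clean bijective count I would complete such a rooted tree to a \emph{full} $2\sd$-ary tree: at the root adjoin ``missing'' children so it has exactly $2\sd$ children, and at every other internal vertex adjoin missing children so it has exactly $2\sd-1$ proper children, i.e.\ $2\sd$ children once the parent edge is counted; this makes every vertex of the original $\ST$ an internal node and the adjoined placeholders the leaves. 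Since the embedding $\ST\hookrightarrow\T$ records which child slot each real neighbor occupies, distinct subtrees give distinct decorated trees, so the number of connected neighborhoods of $\vx$ with $\le L$ vertices is at most the number of full $2\sd$-ary trees with $L$ internal vertices.

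Next I would quote the enumeration of full $2\sd$-ary trees. A full $t$-ary tree with $L$ internal nodes has $(t-1)L+1$ leaves, and the number of such trees is the Fuss--Catalan number $\frac{1}{(t-1)L+1}\binom{tL}{L}$; with $t=2\sd$ this is $\frac1{(2\sd-1)L+1}\binom{2\sd L}{L}$, as cited from \cite{hilton1991catalan}. Finally I would bound this crudely: dropping the $\frac1{(2\sd-1)L+1}$ prefactor and using $\binom{2\sd L}{L}\le 2^{2\sd L}=(2^{2\sd})^L\le(6\sd)^L$ (valid since $2^{2\sd}=4^\sd\le(6\sd)^\sd\cdot\text{stuff}$ — more simply $\binom{2\sd L}{L}\le\binom{2\sd L}{L}\le (2\sd L)^L/L!\le (2e\sd)^L\le(6\sd)^L$ using $e<3$), which gives the stated bound $(6\sd)^L$.

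The only mild subtlety — and the step I would be most careful about — is the completion-to-full-tree argument: one must make sure the map from connected subtrees of $\T$ to full $2\sd$-ary trees is genuinely injective, which requires remembering the identity of the child slot into which each real neighbor is placed (equivalently, that $\T$ comes with, or can be given, a local ordering of the $\le 2\sd$ edges at each vertex, so the embedding into the plane tree is canonical). Since we only need an \emph{upper} bound on the count, any fixed such ordering works and injectivity is immediate. Everything else is the standard Fuss--Catalan identity and an elementary binomial estimate.
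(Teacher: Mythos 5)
The paper states this observation without proof, so there is no internal argument to compare against; what you supply is the standard one, and it is essentially correct. Rooting $\ST$ at $\vx$, fixing once and for all a local ordering of the $\le 2\sd$ edges at each vertex of $\T$, and padding with placeholder leaves does give a well-defined injection from rooted subtrees of $\vx$ into full $2\sd$-ary plane trees, with the vertices of $\ST$ becoming internal nodes — and you are right that this is the only point requiring care. The Fuss--Catalan enumeration and the Stirling estimate $\binom{2\sd L}{L}\le(2\sd L)^L/L!\le(2e\sd)^L\le(6\sd)^L$ are correct.

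Two small blemishes. First, your initial inline bound $\binom{2\sd L}{L}\le 2^{2\sd L}=(2^{2\sd})^L\le(6\sd)^L$ is false in general — it would require $4^\sd\le6\sd$, which already fails at $\sd=3$ — though you immediately notice this and the ``more simply'' argument you give instead is the correct one; the first attempt should simply be deleted. Second, and more substantively, your injection sends a subtree with $j$ vertices to a full $2\sd$-ary tree with \emph{exactly} $j$ internal nodes, so what you actually prove is a bound by $\sum_{j\le L}\frac{1}{(2\sd-1)j+1}\binom{2\sd j}{j}$, not by the single Fuss--Catalan number $\frac{1}{(2\sd-1)L+1}\binom{2\sd L}{L}$ as the observation literally asserts (indeed for $\sd=1$ and small $L$ the latter is smaller than the count of intervals around $\vx$). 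This imprecision is already present in the paper's phrasing, not something you introduced, and it is harmless for the final claim: the geometric sum $\sum_{j\le L}(2e\sd)^j$ is $O\big((2e\sd)^L\big)$ and is still comfortably below $(6\sd)^L$, which is all that the paper ever uses. If you want the write-up to be airtight, either restate the first sentence with ``exactly $L$ vertices,'' or bound by the geometric sum directly.
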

In appendix \ref{sec:TCEerrorsec} we combine observation \ref{obs:counttrees} with a standard decomposition \cite{hastings2006solving} of the error term $e^{-tH}-\trexp_L(-tH)$ using the inclusion-exclusion principle to show:
\begin{restatable}{lemma}{TCEb}
\label{lem:TCEbounds}
Let $a_t=6\sd e^{2\sd \cdot t}(e^t-1)$. Then we have the bound
\[\|\trexp_L(-tH)-e^{-tH}\|\le e^{-t\E_0}\cdot\Big(\exp\Big(\frac{na_t^L}{1-a_t}\Big)-1\Big).\]
\end{restatable}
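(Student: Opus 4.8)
The plan is to expand $e^{-tH}$ and $\trexp_L(-tH)$ term by term, collect terms by histogram as in section \ref{subsec:TCE}, and identify the difference as the sum of $G_\hist(tH)$ over histograms $\hist$ whose support closure $\closure{\supp\hist}$ has a connected component with more than $L$ edges. Every such $\hist$ factors along the components of $\closure{\supp\hist}$ by the homomorphism property (Lemma \ref{obs:sumprod}); grouping the ``large'' components (those with $>L$ edges) into a forest $\F$ and summing the remaining ``small'' part by that same argument (cf. Corollary \ref{cor:factor}) gives the exact identity
\[e^{-tH}-\trexp_L(-tH)=\sum_{\substack{\F\neq\emptyset\\\text{each component of $\F$ has $>L$ edges}}}G^+_\F(tH)\otimes\trexp_L(-H_{\T\XPT\F}),\]
since the $\F=\emptyset$ term of the analogous expansion of $e^{-tH}$ equals $\trexp_L(-tH)$. (Alternatively, iterating this identity and applying a standard set-partition identity rewrites the right-hand side with the untruncated $e^{-tH_{\T\XPT\F}}$ and a sign $(-1)^{|\comp\F|+1}$, which avoids an induction in the third step below.)

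First I would bound a single factor. For a connected subtree $\ST$, write $G^+_\ST(tH)=\sum_{(k_\e)_{\e\in\EG_\ST},\ k_\e\ge1}G_\hist(tH)$, count the $|\hist|!/\prod_\e k_\e!$ edge-sequences realizing a given histogram, and use $\|H_\e\|\le1$ to get
\[\|G^+_\ST(tH)\|\ \le\ \sum_{(k_\e)_\e,\ k_\e\ge1}\ \prod_{\e\in\EG_\ST}\frac{t^{k_\e}}{k_\e!}\ =\ (e^t-1)^{|\EG_\ST|}.\]

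Next I would pass to the $e^{-t\E_0}$ normalization. Using $\|e^{-tH_{\T\XPT\F}}\|=e^{-t\E_0(H_{\T\XPT\F})}$ together with superadditivity of the ground energy, $\E_0(H)\le\E_0(H_\F)+\E_0(H_{\T\XPT\F})+|\partiale\F|$ (from a product test vector), one extracts a global factor $e^{-t\E_0}$ at a cost of at most $e^{t|\partiale\F|}$; and a subtree with $m$ edges in a graph of degree $\le2\sd$ has $|\partiale\ST|=\bigO(\sd m)$. Combining this boundary factor, the bound on $\|G^+_\ST\|$ above, and the count from Observation \ref{obs:counttrees} of the $m$-edge subtrees through a fixed vertex, each large component $\ST$ of $\F$ contributes a weight bounded by $a_t^{|\EG_\ST|}$ with $a_t=6\sd e^{2\sd t}(e^t-1)$: this is exactly where the factor $e^{2\sd t}$ in $a_t$ absorbs the boundary cost. (If one keeps the $\trexp_L$-form of the identity, the factor $\|\trexp_L(-H_{\T\XPT\F})\|$ is instead controlled by strong induction on $|\VX|$, with base case the regions on which $\trexp_L$ is exact.)

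Finally I would resum over large forests. Since their components are pairwise vertex-disjoint,
\begin{align*}
\sum_{\substack{\F\neq\emptyset\\\text{all comps large}}}\prod_{\ST\in\comp\F}a_t^{|\EG_\ST|}
&\ \le\ \prod_{\vx\in\VX}\Big(1+\sum_{\ST\ni\vx,\,|\EG_\ST|>L}a_t^{|\EG_\ST|}\Big)-1\\
&\ \le\ \exp\Big(\sum_{\vx\in\VX}\sum_{m>L}a_t^m\Big)-1\ \le\ \exp\Big(\frac{n\,a_t^L}{1-a_t}\Big)-1,
\end{align*}
the last step being the geometric series $\sum_{m\ge L}a_t^m=a_t^L/(1-a_t)$, valid for $a_t<1$ (the asserted bound is vacuous when $a_t\ge1$). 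Multiplying by the extracted $e^{-t\E_0}$ yields the lemma. The step I expect to be the main obstacle is the third one: the careful accounting of the boundary edges $\partiale\F$ when a large cluster is ``extracted'' from the background, so that the only loss is the per-edge factor $e^{2\sd t}$ built into $a_t$, and --- if one works with the $\trexp_L$-form of the identity --- making the induction on system size close with the stated constant rather than a slightly worse one. Producing the clean prefactor $e^{-t\E_0}$ (rather than $\|e^{-tH}\|$ times a larger constant) is the delicate point; everything else is bookkeeping via the multinomial and geometric-series estimates above.
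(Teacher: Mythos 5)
Your proposal collects all the right ingredients: the histogram expansion, the factorization along components of $\closure{\supp\hist}$ via Lemma~\ref{obs:sumprod}, the multinomial bound $\|G^+_\ST(tH)\|\le(e^t-1)^{|\EG_\ST|}$, the superadditivity/boundary accounting that extracts $e^{-t\E_0}$ at a per-edge cost absorbed into $a_t$, the tree count from Observation~\ref{obs:counttrees}, and the exponential resummation. Your parenthetical ``alternative'' --- iterate the identity to reach $e^{-tH_{\T\XPT\F}}$ with sign $(-1)^{|\comp\F|+1}$ --- \emph{is} the paper's proof: it is the inclusion--exclusion $\trexp_L(-tH)-e^{-tH}=\sum_{m\ge1}(-1)^m\rho_m$ with $\rho_m=\sum_{|\comp\F|=m}G^+_\F\otimes e^{-tH_{\T\XPT\F}}$, followed by the triangle inequality and your estimates verbatim.

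Your \emph{primary} route --- the sign-free exact identity with $\trexp_L(-tH_{\T\XPT\F})$ on the right, controlled by strong induction on $|\VX|$ --- does not close, confirming your own suspicion. Writing $f(n)$ for the error ratio $\|\trexp_L(-tH_\Omega)-e^{-tH_\Omega}\|/e^{-t\E_0(H_\Omega)}$ over $n$-vertex regions, the induction hypothesis gives $\|\trexp_L(-tH_{\T\XPT\F})\|\le e^{-t\E_0(H_{\T\XPT\F})}\big(1+f(n-|\VX_\F|)\big)$, and the same boundary accounting and resummation then yield
\[f(n)\le\Big(1+\max_{\F\neq\emptyset}f\big(n-|\VX_\F|\big)\Big)\cdot\Big(\exp\big(\tfrac{na_t^L}{1-a_t}\big)-1\Big),\]
whose worst case is a single minimal large cluster, $|\VX_\F|\approx L$. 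Substituting the target $f(n')\le\exp\big(\tfrac{n'a_t^L}{1-a_t}\big)-1$ on the right produces $f(n)\le\exp\big(\tfrac{(n-L)a_t^L}{1-a_t}\big)\cdot\big(\exp\big(\tfrac{na_t^L}{1-a_t}\big)-1\big)$, which strictly exceeds the target whenever $n>L$; the per-vertex ``discount'' $e^{-\alpha L}$ with $\alpha=a_t^L/(1-a_t)$ is far too close to $1$ to repair this. The obstruction is structural: once you take norms, the inner $\trexp_L$ factor must be replaced by a norm bound rather than re-expanded, and that bound is (by the induction hypothesis itself) a factor $>1$ larger than $\|e^{-tH_{\T\XPT\F}}\|$, so each recursion level degrades multiplicatively. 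Iterating the identity \emph{before} taking norms --- your alternative route and the paper's --- replaces the problematic $\trexp_L$ by the exactly computable $\|e^{-tH_{\T\XPT\F}}\|=e^{-t\E_0(H_{\T\XPT\F})}$ and avoids the recursion entirely. (A minor bookkeeping note, harmless but worth flagging since you are otherwise careful with $a_t$: Observation~\ref{obs:counttrees} gives $\le(6\sd)^{m+1}$ trees with $m$ edges through a fixed vertex, not $(6\sd)^m$, so your line $\sum_{\ST\ni\vx,|\EG_\ST|>L}a_t^{|\EG_\ST|}\le\sum_{m>L}a_t^m$ is off by a factor $6\sd$ per component, of the same order of looseness as the paper's own count.)
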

\begin{corollary}\label{cor:TCEusefulerrorbound}
	There exists a constant $0<c<1$\footnote{$c$ depends on the dimension-proxy $\sd$ which we view as constant.} such that for $t\le c$ and $L\ge \log n$,
	\[\Big\|\trexp_L\big[-t(H-\E_0)\big]-e^{-t(H-\E_0)}\Big\|\quad\bounded\quad 2^{-L}\]
\end{corollary}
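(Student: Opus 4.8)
The plan is to deduce this from Lemma~\ref{lem:TCEbounds} applied to the \emph{shifted} Hamiltonian $\tilde H:=H-\E_0$, whose ground-state energy $\E_0(\tilde H)=\E_0-\E_0=0$, so that the prefactor $e^{-t\,\E_0(\tilde H)}=1$ disappears. The first step is to realize $\tilde H$ as a sum of local terms of bounded norm: put $H'_\e:=H_\e-\tfrac{\E_0}{|\EG|}\Id$, so that $\tilde H=\sum_{\e\in\EG}H'_\e$. Since $\|H\|\le\sum_\e\|H_\e\|\le|\EG|$ forces $|\E_0|\le|\EG|$, each $\|H'_\e\|\le2$. Rescale by setting $\hat H_\e:=H'_\e/2$ and $\hat H:=\sum_\e\hat H_\e=\tilde H/2$, so that $\|\hat H_\e\|\le1$ and Lemma~\ref{lem:TCEbounds} applies to the family $(\hat H_\e)_\e$. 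Unwinding the definition of $G_\ee$ and of $\trexp_L$, both the truncated expansion and the exponential depend on the local data only through the family $(tH'_\e)_\e=(2t\hat H_\e)_\e$, so $\trexp_L(-t\tilde H)=\trexp_L(-2t\hat H)$ and $e^{-t\tilde H}=e^{-2t\hat H}$.

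Next I would invoke Lemma~\ref{lem:TCEbounds} for $\hat H$ with time parameter $2t$ and $L\ge\log n$. Because the ground-state energy of the Hamiltonian we plug in is $\E_0(\hat H)=\tfrac12\E_0(\tilde H)=0$, the prefactor equals $1$, and we obtain
\[
\Big\|\trexp_L\big[-t(H-\E_0)\big]-e^{-t(H-\E_0)}\Big\|\ \le\ \exp\!\Big(\frac{n\,a_{2t}^{L}}{1-a_{2t}}\Big)-1,\qquad a_{2t}=6\sd\,e^{4\sd t}\big(e^{2t}-1\big).
\]

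It remains to choose the constant $c$. The map $t\mapsto a_{2t}$ is increasing on $[0,\infty)$ and vanishes at $t=0$, so there is a constant $c\in(0,1)$, depending only on $\sd$, such that for all $t\le c$ the quantity $a_{2t}$ is below any prescribed absolute constant; in particular we may arrange $a_{2t}\le\tfrac1{4b}$, where $b$ is the (universal) base of the logarithm in ``$L\ge\log n$''. Then $1-a_{2t}\ge\tfrac12$ and, using $n\le b^{L}$,
\[
\frac{n\,a_{2t}^{L}}{1-a_{2t}}\ \le\ 2\,(b\,a_{2t})^{L}\ \le\ 2\cdot 4^{-L}\ \le\ 2\cdot 2^{-L}\ \le\ 1 .
\]
Since $e^{x}-1\le2x$ for $0\le x\le1$, the right-hand side of the displayed bound is at most $4\cdot2^{-L}=\bigO(2^{-L})$, which is the claim.

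The only genuinely delicate point is the bookkeeping in the first step: one must check that after distributing the constant $-\E_0$ over the $|\EG|$ edge terms and rescaling, the symbol $\trexp_L$ still denotes the same operator, i.e.\ that the truncated cluster expansion is a function of the rescaled local family $(tH'_\e)_\e$ and not of $H$ or of $\E_0$ separately. Once that is in place everything else is a routine estimate; in particular the fractal-dimension hypothesis $\beta<2$ plays no role here, entering only through the eventual choice of the truncation level $L$.
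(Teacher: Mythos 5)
Your proof is correct, but it routes through Lemma~\ref{lem:TCEbounds} differently from the paper. The paper implicitly treats the constant shift as commuting with the truncated cluster expansion, i.e.\ $\trexp_L[-t(H-\E_0)]=e^{t\E_0}\trexp_L[-tH]$, so that the prefactor $e^{-t\E_0}$ in Lemma~\ref{lem:TCEbounds} cancels against the $e^{t\E_0}$ produced by the shift, leaving exactly $\exp\!\big(\tfrac{na_t^L}{1-a_t}\big)-1$. You instead read the shift as distributed over the $|\EG|$ edges (doubling the local norm bound), rescale by $\tfrac12$ to restore $\|\hat H_\e\|\le1$, and apply the lemma to $\hat H$ at time $2t$ with $\E_0(\hat H)=0$; this makes the prefactor trivially equal $1$ at the cost of replacing $a_t$ by $a_{2t}$. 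Both readings of the notation are defensible and both yield the claimed $\bigO(2^{-L})$, so the difference is essentially one of bookkeeping. What your route buys is a more literal treatment of $\trexp_L$ applied to a genuine local Hamiltonian with its own local terms; what the paper's route buys is brevity, since one never has to discuss how the shift is attributed to edges.

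One point where you are actually tighter than the paper: the final constant choice. The paper's proof takes $a_t\le\tfrac12$, which gives $\tfrac{na_t^L}{1-a_t}\le n2^{1-L}$ and hence only $\bigO(n2^{-L})$ after the $\exp(x)-1\lesssim x$ step; your choice $a_{2t}\le\tfrac1{4b}$ gives $na_{2t}^L\le b^La_{2t}^L\le4^{-L}$, so the $n$ is absorbed and the bound really is $\bigO(2^{-L})$. So your version closes a small gap in the write-up.
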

\begin{proof}
	We use the fact that $a_t=o(t)$ for $t\to0$ and pick $c$ such that $t\le c$ implies $a_t\le1/2$. Then $\frac{na_t^L}{1-a_t}\le n2^{1-L}$. This expression is bounded by a constant $C$ since $L\ge\log n$, so we conclude by applying $\exp(x)-1\bounded x$ for $x\in[0,C]$.
\end{proof}

\subsection{Meta-TNO for the truncated cluster expansion}
Consider a partition $\P$ of a tree $\ST$. an $L$-forest in $\ST$ can be decomposed into $L$-forests within each $\sST\in\P$ and an additional \emph{inner stitching} consisting of the clusters which cross between sub-subtrees. An inner stitching induces a boundary condition on each $\sST$ in the form of an \emph{boundary} stitching. 
\begin{definition}
Let \(\ST\subset\T\) be a lean subtree and let \(\P\) be a partition of \(\ST\). Let $\partialv\P:=\bigcup_{\sST\in\P}\partialv\sST$ be the set of vertices at the boundary of trees in the partition.
\begin{itemize}
	\item A (blank) \emp{boundary \(\boldsymbol L\)-stitching} is an $L$-forest $\xtc=\bigcup_{\vx\in\partialv\ST}\tau_\vx$ where the $\tau_\vx\subset\closure{\ST}$ are \emp{disjoint closed} $L$-trees such that $\vx\in\tau_\vx$ for each $\vx\in\partialv\ST$.
	\item Given an boundary stitching \(\xtc\), a (blank) \emp{inner \(\boldsymbol L\)-stitching} of \(\P\) adapted to \(\xtc\) is a an $L$-forest $\itc=\bigcup_{\vx\in\partialv\P\xpt\xtc}\tau_\vx$ where the $\tau_\vx\subset\ST\xpt\xtc$ are disjoint closed $L$-trees such that $\vx\in\tau_\vx$ for each $\vx\in\partialv\P\xpt\xtc$.

		\item
			A \emp{decorated} (inner/outer) \emp{$\boldsymbol{L}$-stitching} is a pair $\ALPHA=(\Stc,\ii)$ where $\Stc$ is a blank stitching and $\ii=(\ii_1,\ii_2)\in[\ld]^{\Stc}\times[\ld]^{\Stc}$.
	\end{itemize}
	Let \(\stitchout_L(\ST)\) be the set of boundary \(L\)-stitchings in \(\ST\) and \(\stitchin_L(\P\xpt\xtc)\) the set of inner \(L\)-stitchings adapted to boundary stitching \(\xtc\). Denote the sets of \emp{decorated} boundary $L$-stitchings by $\STITCHout_L(\ST)$ and decorated inner $L$-stitchings adapted to $\xtc$ by $\STITCHin_L(\P\xpt\xtc)$.  
\end{definition}

We now construct a meta-TNO $\GGamma$ on $\MT$ whose restriction to any meta-branch $\MT^\mv$ encodes a set of operators which includes the truncated cluster expansion $\trexp_L(-H_{\T^\mv})$. More precisely, the encoding of the truncated cluster expansion results from assigning the empty boundary stitching $\boldsymbol\upepsilon=(\emptyset,\epsilon)$ as the value of the root index, where $\epsilon$ is the empty word.

For each $L$-subtree $\ST\subset\T$, define coefficients $a_{\ALPHA}=a_{(\ST,\ii)}$ by expanding $G^+_\ST$ in the standard basis,
\begin{equation}\label{eq:preproc}G^+_\ST=\sum_{\ii\in[\ld]^{2\ST}}a_{(\ST,\ii)}\:\ket{\ii_1}\bra{\ii_2}_{\ST}.\end{equation}

We now define the meta-TNO $\GGamma$ by specifying the local tensor $\GGamma^{[\mv]}$ at each meta-vertex.

\begin{definition}[\textbf{Meta-TNO for the truncated cluster expansion}]
	For each $\mv\in\MV$ which is not a meta-leaf and each decorated boundary stitching $\ALPHA=(\xtc,\ii)\in\STITCHout_L(\T^\mv)$, define the $\ALPHA$-slice of local tensor $\GGamma^{[\mv]}$ by
	\[{[\GGamma^{[\mv]}]^{\ALPHA}_{\cdots}\quad:=\sum_{\BETA\in\STITCHin_L(\P\xpt\xtc)}a_\BETA\bigotimes_{\mw\in\CLD(\mv)}\updelta_{\ALPHA\BETA|_{\T^\mw}}},\]
	where $a_{\BETA=(\itc,\jj)}:=\prod_{\ST\in\comp(\itc)}a_{(\ST,\jj|_{\ST})}$ with each factor given by \eqref{eq:preproc}. Here,
	\begin{itemize}
		\item the restriction $\ALPHA\BETA|_{\T^\mw}$ of $\ALPHA\BETA=(\xtc\cup\itc,\ii\jj)$ to $\T^\mw$ is the decorated boundary stitching $((\xtc\cup\itc)\cap\T^\mw,\ii\jj|_{\T^\mw})$ of $\T^\mw$, and
		\item $\bigotimes_\mw\updelta_{\ALPHA\BETA|_{\T^\mw}}$ is the unit tensor which enforces the value $\ALPHA\BETA|_{\T^\mw}$ at the edge connecting $\mv$ to child $\mw$, simultaneously for all $\mw\in\CLD(\mw)$.
	\end{itemize}
For each trivial (edge) meta-leaf $\me$ let $\GGamma^{[\me]}=1$. For each nontrivial (particle) meta-leaf $\mw$ with $\T^\mw=\{\vx\}$, let $[\GGamma^{\mw}]^{\boldsymbol\upepsilon}=\Id_{\HS_\vx}$ and $[\GGamma^{\mv}]^{(\{\vx\},i_1i_2)}=\ket{i_1}\bra{i_2}$.
\end{definition}

Before proving that $\GGamma$ indeed encodes the truncated cluster expansion, let us bound its bond dimension. The edge between meta-vertex $\mv$ and its parent has bond dimension $|\STITCHout_L(\T^\mv)|$, the number of decorated boundary $L$-stitchings of $\T^\mv$. Each blank $L$-stitching has at most $\ld^{4L}$ decorations ($|\partialv\T^\mv|\le2$ implies that $\xtc$ consists of at most two $L$-trees, each of which has $\ld^{2L}$ decorations). Furthermore, for each $\vx\in\partialv\T^mv$ there exist at most $(6\sd)^L$ $L$-trees in $\T$ which contain $\mv$ by observation \ref{obs:counttrees}. Thus, there are at most $(6\sd)^{2L}$ blank boundary $L$-stitchings and $(36\sd^2\ld^4)^L$ decorated boundary $L$-stitchings. We conclude:
\begin{observation}\label{obs:TCEbonddim}
The bond dimension of $\GGamma$ is at most $(36\sd^2\ld^4)^L$.
\end{observation}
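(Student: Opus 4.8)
The plan is to read the bond dimension off the explicit definition of $\GGamma$. An internal edge of $\MT$ joins a meta-vertex $\mv$ to a child $\mw\in\CLD(\mv)$, and the index shared along it is exactly the parent-leg of the local tensor $\GGamma^{[\mw]}$, so its dimension equals the number of slices of $\GGamma^{[\mw]}$. If $\mw$ is not a meta-leaf this number is $\vd{\STITCHout_L(\T^\mw)}$, the number of decorated boundary $L$-stitchings of the standard subtree $\T^\mw$; if $\mw$ is a trivial (edge) meta-leaf it is $1$, since $\GGamma^{[\mw]}=1$; and if $\mw$ is a nontrivial (particle) meta-leaf $\{\vx\}$ it is $1+\ld^2$, from the slices $\boldsymbol\upepsilon$ and $(\{\vx\},i_1i_2)$ with $i_1,i_2\in[\ld]$. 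Since $1+\ld^2\le(36\sd^2\ld^4)^L$ for $L\ge1$, it remains to bound $\vd{\STITCHout_L(\T^\mv)}$ uniformly over $\mv\in\MV$.

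I would bound $\vd{\STITCHout_L(\T^\mv)}$ by counting blank stitchings and decorations separately. A blank boundary $L$-stitching of $\T^\mv$ is a family of pairwise-disjoint closed $L$-trees $\tau_\vx\subset\closure{\T^\mv}$ with $\vx\in\tau_\vx$, indexed by the boundary vertices $\vx\in\partialv\T^\mv$. Because $\T^\mv$ is a lean subtree (a branch or a section), we have $\vd{\partialv\T^\mv}\le2$, so such a stitching consists of at most two $L$-trees; by Observation \ref{obs:counttrees} the number of $L$-trees of $\T$ through a fixed vertex is at most $(6\sd)^L$, hence there are at most $(6\sd)^{2L}$ blank boundary $L$-stitchings of $\T^\mv$ (the disjointness requirement only lowers the count). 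A decoration of such a blank stitching $\xtc$ is a pair of $[\ld]$-labelings of $\xtc$, and since $\xtc$ consists of at most two $L$-trees, each carrying at most $\ld^{2L}$ decorations, there are at most $\ld^{4L}$ decorations in all. Multiplying the two counts gives $\vd{\STITCHout_L(\T^\mv)}\le(6\sd)^{2L}\ld^{4L}=(36\sd^2\ld^4)^L$, and combining with the first paragraph shows that every internal edge of $\MT$ has bond dimension at most $(36\sd^2\ld^4)^L$.

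I do not expect a genuine obstacle here: the argument is elementary counting, and its only nontrivial input is the neighborhood bound of Observation \ref{obs:counttrees}. The one point that demands care is the bookkeeping that identifies the index on each META-edge with the correct set of decorated boundary $L$-stitchings (together with the small, separate treatment of the two meta-leaf cases), after which the product of the blank-stitching count and the decoration count delivers the stated bound at once.
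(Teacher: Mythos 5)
Your proof is correct and follows essentially the same route as the paper: identify the bond on the edge above $\mv$ with the index set $\STITCHout_L(\T^\mv)$, bound the number of blank boundary $L$-stitchings by $(6\sd)^{2L}$ via Observation \ref{obs:counttrees} and the constraint $|\partialv\T^\mv|\le2$, bound the decorations by $\ld^{4L}$, and multiply. The only addition is your explicit check of the two meta-leaf cases, which the paper leaves implicit; this is harmless bookkeeping and does not change the argument.
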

The local construction of the local tensors $\GGamma^{[\mv]}$ further implies that $\GGamma$ can be constructed in time $|\MV|(36\sd^2\ld^4)^L=(\sd\ld)^{\bigO(L)}\cdot n$.

\begin{restatable}{lemma}{TCETNO}\label{lem:TCETNO}
The restriction of $\GGamma$ to meta-branch $\MT^\mv$ encodes the set of operators
\begin{equation}\label{eq:thegoal}\TNOspan{\GGamma^{[\MT^\mv]}}^{\mv\leftarrow(\xtc,\ii)}\quad=\quad\trexp_L(-H_{\T^\mv\xpt\xtc})\otimes\ket{\ii_1}\bra{\ii_2}\end{equation}
	by fixing the value of the root index to $\ALPHA=(\xtc,\ii)$. In particular, let $\boldsymbol\upepsilon=(\emptyset,\epsilon)$ where $\epsilon$ is the empty word. Then assigning the value $\boldsymbol\upepsilon$ to the root index yields an encoding of the truncated cluster expansion.
	\[\TNOspan{\GGamma^{[\MT^\mv]}}^{\mv\leftarrow\boldsymbol\upepsilon}=\trexp_L(-H_{\T^\mv}).\]
\end{restatable}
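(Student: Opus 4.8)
The plan is to prove the displayed identity \eqref{eq:thegoal} by structural induction on the meta-branch $\MT^\mv$, working from the meta-leaves upward; the special case stated for $\boldsymbol\upepsilon=(\emptyset,\epsilon)$ then follows at once, since removing the empty forest from $\T^\mv$ leaves $\T^\mv$. For the base case $\mv$ is a meta-leaf and $\T^\mv$ is a singleton: if $\mv=\me$ is a trivial (edge) meta-leaf there are no particles, $H_{\T^\me}$ is an empty sum, and $\trexp_L(-H_{\T^\me})$ is the empty tensor product $1=\GGamma^{[\me]}$; if $\mv=\mw$ is a nontrivial meta-leaf with $\T^\mw=\{\vx\}$ then $H_{\T^\mw}=0$, and the only decorated boundary $L$-stitchings of $\{\vx\}$ are $\boldsymbol\upepsilon$, for which $\trexp_L(0)=\Id_{\HS_\vx}=[\GGamma^{[\mw]}]^{\boldsymbol\upepsilon}$, and $(\{\vx\},i_1i_2)$, for which $\T^\mw\xpt\{\vx\}=\emptyset$ and $\trexp_L(-H_\emptyset)\otimes\ket{i_1}\bra{i_2}=\ket{i_1}\bra{i_2}=[\GGamma^{[\mw]}]^{(\{\vx\},i_1i_2)}$.

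For the inductive step, fix a non-leaf meta-vertex $\mv$, let $\P=\Refinem(\T^\mv)=\{\T^\mw\mid\mw\in\CLD(\mv)\}$, and fix a decorated boundary stitching $\ALPHA=(\xtc,\ii)$. Contracting $\GGamma^{[\MT^\mv]}$ with the root index set to $\ALPHA$: from the definition of the local tensor $[\GGamma^{[\mv]}]^{\ALPHA}$, together with the fact that the unit tensor $\updelta_{\ALPHA\BETA|_{\T^\mw}}$ forces the value $\ALPHA\BETA|_{\T^\mw}$ onto the meta-edge descending to each child $\mw$, one gets
\[\TNOspan{\GGamma^{[\MT^\mv]}}^{\mv\leftarrow\ALPHA}=\sum_{\BETA=(\itc,\jj)\in\STITCHin_L(\P\xpt\xtc)}a_\BETA\bigotimes_{\mw\in\CLD(\mv)}\TNOspan{\GGamma^{[\MT^\mw]}}^{\mw\leftarrow\ALPHA\BETA|_{\T^\mw}}.\]
Applying the inductive hypothesis to each child with $\ALPHA\BETA|_{\T^\mw}=((\xtc\cup\itc)\cap\T^\mw,\ \ii\jj|_{\T^\mw})$ and taking the tensor product over $\mw$, the surviving truncated cluster expansions multiply to $\bigotimes_\mw\trexp_L(-H_{\T^\mw\xpt((\xtc\cup\itc)\cap\T^\mw)})$, and the rank-one decoration factors multiply to $\ket{\ii_1}\bra{\ii_2}$ on the $\xtc$-block tensored with $\bigotimes_{\ST\in\comp(\itc)}\ket{(\jj|_\ST)_1}\bra{(\jj|_\ST)_2}_\ST$. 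Since $a_\BETA=\prod_{\ST\in\comp(\itc)}a_{(\ST,\jj|_\ST)}$, summing over the decoration $\jj$ rebuilds, via \eqref{eq:preproc}, the operator $G^+_\ST$ on each cluster of $\itc$, i.e.\ $\sum_\jj a_\BETA\bigotimes_\ST\ket{(\jj|_\ST)_1}\bra{(\jj|_\ST)_2}_\ST=\bigotimes_{\ST\in\comp(\itc)}G^+_\ST=G^+_\itc$. Then $\bigotimes_\mw\trexp_L(-H_{\T^\mw\xpt(\cdots)})=\trexp_L(-H_{(\T^\mv\xpt\xtc)\xpt\itc})$ by the component-wise factorization in Corollary \ref{cor:factor} (valid because an $L$-forest avoiding the partition interface $\partialv\P$ splits into sub-subtree-local pieces), and finally summing over all inner stitchings $\itc$ and invoking Corollary \ref{cor:factor} once more — which partitions the $L$-forests of $\T^\mv\xpt\xtc$ according to the family of components that straddle $\P$ — collapses $\sum_\itc G^+_\itc\otimes\trexp_L(-H_{(\T^\mv\xpt\xtc)\xpt\itc})$ to $\trexp_L(-H_{\T^\mv\xpt\xtc})$. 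This yields $\TNOspan{\GGamma^{[\MT^\mv]}}^{\mv\leftarrow\ALPHA}=\trexp_L(-H_{\T^\mv\xpt\xtc})\otimes\ket{\ii_1}\bra{\ii_2}$, completing the induction; setting $\ALPHA=\boldsymbol\upepsilon$ gives the stated corollary.

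The main obstacle is the combinatorial bookkeeping hidden in the two appeals to Corollary \ref{cor:factor}: one must check that the correspondence sending an $L$-forest of $\T^\mv\xpt\xtc$ to the pair consisting of (its family of components touching $\partialv\P$, recorded as an inner $L$-stitching adapted to $\xtc$) and (its remaining components, grouped by sub-subtree) is a bijection onto $\bigsqcup_\itc\big(\{\itc\}\times\prod_\mw\{L\text{-forests of }\T^\mw\xpt((\xtc\cup\itc)\cap\T^\mw)\}\big)$, so that $\trexp_L$ is reassembled with neither over- nor under-counting; the only genuinely delicate points are that a cluster straddling $\P$ be attributable to a well-defined component of $\STITCHin_L(\P\xpt\xtc)$ (so that the definition of inner stitching is exhaustive and disjoint), and that the decoration on a vertex shared by two sub-subtrees be carried by exactly one of the restrictions $\ALPHA\BETA|_{\T^\mw}$. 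Once the $L$-forest decomposition across $\P$ is matched to the definition of $\STITCHin_L(\P\xpt\xtc)$, the rest is the homomorphism property of $G$ (Lemma \ref{obs:sumprod}) iterated over the at most five sub-subtrees, plus routine tracking of the index strings; the bond-dimension and construction-time bounds recorded before the lemma are unaffected.
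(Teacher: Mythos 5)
Your proof follows the paper's argument essentially verbatim: the same induction over meta-branches, the same contraction at $\mv$, the same reconstruction of $G^+_\ST$ by summing the decoration indices against $a_\BETA$, and the same two invocations of Corollary~\ref{cor:factor} to reassemble $\trexp_L$. The one delicate point your "recorded as an inner $L$-stitching" glosses over — and which the paper handles explicitly by splitting $\itc$ into isolated vertices $\I$ and $\itc'=\itc\xpt\I$ with $\F=\T^\mv\xpt(\xtc\cup\I)$ — is that a forest not touching some $\vx\in\partialv\P\xpt\xtc$ must be assigned a stitching with the trivial tree $\tau_\vx=\{\vx\}$, for which $G^+_{\{\vx\}}=\Id_{\HS_\vx}$; once you say that explicitly, your bijection is exactly the paper's.
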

\begin{proof}
	The proof is by induction. We begin with the base cases: For a particle meta-leaf $\mw$ with $\T^\mw=\{\vx\}$, the possible decorated boundary stitchings are $\boldsymbol\upepsilon$ and $(\{\vx\},i_1i_2)$ for $i_1,i_2\in[\ld]$. In the former case we have $\TNOspan{\GGamma^\mw}^{\boldsymbol\upepsilon}=\Id_{\HS_\vx}=e^{-0}=\trexp_L(-0)$. This agrees with \eqref{eq:thegoal} because $H_{\{\vx\}}=0$. The latter case $[\GGamma^{\mv}]^{(\{\vx\},i_1i_2)}=\ket{i_1}\bra{i_2}$ agrees with \eqref{eq:thegoal} because $\T^\mw\xpt\{\vx\}=\emptyset$.
\smallbreak
	\noindent\textbf{Induction step }
	Let $\MT^\mv$ be a meta-branch with height $h$ and suppose the claim is true for meta-branches of height less than $h$. Contracting the edges from $\mv$ to its children we get that
	\begin{align*}[\GGamma^{[\MT^\mv]}]^{\ALPHA}\quad&=\sum_{\BETA\in\STITCHin_L(\P\xpt\xtc)}a_\BETA\bigotimes_{\mw\in\CLD(\mv)} [\GGamma^{[\MT^\mw]}]^{\mw\leftarrow\ALPHA\BETA|_{\T^\mw}},
\end{align*}
where the \emph{restriction} $\ALPHA\BETA|_{\T^\mw}$ of $\ALPHA\BETA=(\xtc\cup\itc,\ii\jj)$ to $\T^\mw$ is the decorated boundary stitching $((\xtc\cup\itc)\cap\T^\mw,\ii\jj|_{\T^\mw})$ of $\T^\mw$.
Using this fact apply the induction hypothesis to each $\mw\in\CLD(\mw)$ and then apply corollary \ref{cor:factor} to get
\begin{align}\label{eq:thisherething}
\bigotimes_{\mw\in\CLD(\mv)}\TNOspan{\GGamma^{[\MT^\mw]}}^{\mw\leftarrow\ALPHA\BETA|_{\T^\mw}}
&=\bigotimes_{\mw\in\CLD(\mv)}\trexp_L(-H_{\T^\mw\xpt(\xtc\cup\itc)})\otimes\bigket{\ii_1\jj_1|_{\T^\mw}}\bigbra{\ii_2\jj_2|_{\T^\mw}}
\\&=\ket{\jj_1}\bra{\jj_2}_{\xtc}\otimes\trexp_L(-H_{\T^\mv\xpt(\xtc\cup\itc)})\otimes\ket{\jj_1}\bra{\jj_2}_{\itc},
\end{align}
where the last equality uses the fact that each $\T^\mw\xpt(\xtc\cup\itc)$ is a union of connected components of $\T^\mv\xpt(\xtc\cup\itc)$. Now fix the blank inner stitching $\itc$ and let
\[\tilde A_\itc:=\sum_{\jj\in[\ld]^\itc}a_{(\itc,\jj)}\bigotimes_{\mw\in\CLD(\mv)}\TNOspan{\GGamma^{[\MT^\mw]}}^{\mw\leftarrow\ALPHA\BETA|_{\T^\mw}},\]
such that $\TNspan{\GGamma^{[\MT^\mv]}}^\ALPHA=\sum_{\itc\in\stitchin_L(\P\xpt\xtc)}\tilde A_\itc$. Recall that $a_\BETA$ is defined as the product of the pre-computed $a_{(\ST,\jj|_\ST)}$ over connected components of $\itc$. Summing \eqref{eq:thisherething} over decorations $\jj\in[\ld]^{2\itc}$ with coefficients $a_{\BETA=(\itc,\jj)}$ yields:
\begin{align}\label{eq:fixstitch}
	\tilde A_\itc=&\:\ket{\ii_1}\bra{\ii_2}_{\xtc}\otimes\trexp_L\big(-H_{\T^\mv\xpt(\xtc\cup\itc)}\big)\otimes\sum_{\jj\in[\ld]^\itc}a_{(\itc,\jj)}\ket{\jj_1}\bra{\jj_2}_\itc
	\\=&\:\ket{\ii_1}\bra{\ii_2}_{\xtc}\otimes\underbrace{\trexp_L\big(-H_{\T^\mv\xpt(\xtc\cup\itc)}\big)\otimes\bigotimes_{\ST\in\comp(\itc)}G^+_{\ST}}_{A_\itc:=}.\nonumber
\end{align}
Let $\I$ be the set of isolated vertices in $\itc$ and let $\itc'=\itc\xpt\I$ and $\F=\T^\mv\xpt(\xtc\cup\I)$. By corollary \ref{cor:factor},
	\begin{align}
		A_\itc=\:\trexp_L(-H_{\F\xpt\itc'})\otimes&\bigotimes_{\ST\in\comp(\itc')}G^+_{\ST}=\sum_{\substack{\closure{\supp\hist}\in\mathfrak F_L(\F)\\\text{ extends }\itc'}} G_\hist.\label{eq:iso}
\end{align}
For each possible $\closure{\supp\hist}\subset\T^\mv\xpt\xtc$ there exists exactly one blank inner stitching $\itc$ adapted to $\xtc$ such that
\begin{center}
	$\closure{\supp\hist}$ extends $\itc'$, \emph{\AND}
	$\I_{\xtc}\cap\closure{\supp\hist}=\emptyset$ (i.e., $\closure{\supp\hist}\subset\F$).
\end{center}
Indeed, this $\itc$ is the union of the $\tau\in\comp(\closure{\supp\hist})$ such that $\tau\cap\partialv\P\neq\emptyset$. Therefore,
\[\sum_{\itc\in\stitchin_L(\T^\mv\xpt\xtc)}A_\itc=\trexp_L(-H_{\T^\mv\xpt\xtc}).\]
Thus we establish the induction step with the chain of equalities
\begin{align*}\TNspan{\GGamma^{[\MT^\mv]}}^\ALPHA&=\Big(\sum_\itc A_\itc\Big)\otimes\ket{\ii_1}\bra{\ii_2}=\trexp_L(-H_{\T^\mv\xpt\xtc})\otimes\ket{\ii_1}\bra{\ii_2}_\xtc.\end{align*}
\end{proof}

\subsection{The efficient truncation}\label{subsec:soft}
\label{sec:softtruncsec}

The \emph{soft truncation} \cite{alvv17} builds on the expansion
\begin{align*}-\log y&=\textstyle{\sum_{j=1}^\infty\frac1j(1-y)^j}&\text{ for }y\in(0,2]\\x&=\textstyle{\sum_{j=1}^\infty\frac1j(1-e^{-x})^j}&\text{ for }
x\ge-\log2.\label{eq:logexp}\end{align*}
The latter series converges to the identity function \(f(x)=x\), but it does so slowly for large \(x\)---indeed, the $k$\ts{th} partial sum,
\begin{equation}\label{eq:fdef}f_k(x)=\textstyle{\sum_{j=1}^k\frac1j(1-e^{-x})^j},\end{equation}
is uniformly bounded by \((\log k)+1\). This property makes the partial sums useful as a truncation function. Expanding the powers yields
\[f_k(x)=\sum_{j=0}^k c_j^{[k]}e^{-jx},\WHERE c^{[k]}_0=\sum_{j=1}^k\frac1j,\AND c^{[k]}_j=\frac{(-1)^j}j\binom kj,j\in[k].\]
Apply an affine transformation to $f_k$ with parameters $\E$ (a lower bound on the energy) and $\eta$ (which scales the width of the energy interval). Applying the resulting function to $H$ yields the \emph{soft truncation} of the Hamiltonian:
\begin{definition}[\cite{alvv17}]
	The degree-\(k\) \emp{soft truncation} of \(H\) with energy parameters \(\E\in\RR\), \(\eta>0\) is
	\[\sTRUNC H{\E}\eta k :=\E+\eta f_k\Big(\frac{H-\E}\eta\Big)=\E+\eta\sum_{j=0}^kc^{[k]}_j\exp\big[-\frac{j}\eta(H-\E)\big],\hspace{1.2cm}\]
	The \emp{efficient truncation of degree $\boldsymbol{k}$, cluster size $\boldsymbol L$, and energy parameters $\boldsymbol \E,i\boldsymbol\eta$} is
	\begin{equation}\label{eq:defTCE}\hspace{1.2cm}{\fTRUNC H{\E}{\eta}{k,L}:=\E+\eta\sum_{j=0}^k\,c^{[k]}_j\,\trexp_L\big[-\frac j\eta(H-\E)\big].}\end{equation}
\end{definition}
We pick a default value of the energy width and define:
\begin{definition}
Let $c$ be the small constant of corollary \ref{cor:TCEusefulerrorbound}. The \emp{efficient truncation of degree $\boldsymbol{k}$, cluster size $\boldsymbol L$, and lower energy bound $\boldsymbol\E$} is
	\[\fTRUNC H{\E}{}{k,L}:=\fTRUNC H{\E}{k/c}{k,L}\]
\end{definition}
We begin the analysis of the efficient truncation with the soft truncation. We sandwich it between two hard truncations by showing the corresponding bounds inequalities for scalar functions.

\begin{lemma}\label{lem:sandwich}Let \(k>2\), \(\eta>0\), and \(\E\le\E_0(H)\). Then,
	 \[\hTRUNC{H}{\eta-\dE}-2^{-.66k}\eta \Id\less\sTRUNC{H}{\E}{\eta}k\less\hTRUNC{H}{(1+\log k){\eta}},\]
	 where \(\dE=\E_0(H)-\E\).
\end{lemma}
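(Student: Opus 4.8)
The plan is to reduce the two Loewner inequalities to scalar inequalities via the functional calculus for $H$, and then establish the scalar inequalities by truncating the power series $-\log(1-t)=\sum_{j\ge1}t^j/j$. Every operator appearing is a continuous function of $H$: $\hTRUNC{H}{\eta'}=F_{\E_0(H)+\eta'}(H)$ with $F_M(x)=\min\{x,M\}$, $\sTRUNC{H}{\E}{\eta}{k}=\E+\eta f_k\big(\tfrac{H-\E}{\eta}\big)$, and $2^{-.66k}\eta\Id$ is scalar; hence $A(H)\less B(H)$ is equivalent to $a\le b$ pointwise on $\spec H\subseteq[\E_0(H),\infty)$. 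Using $\E_0(H)-\dE=\E$ to write $\hTRUNC{H}{\eta-\dE}=F_{\E+\eta}(H)$ and substituting $u=\tfrac{x-\E}{\eta}$ — which obeys $u\ge\tfrac{\dE}{\eta}\ge0$ for $x\in\spec H$ since $x\ge\E_0(H)\ge\E$ and $\eta>0$ — the claim reduces to proving, for all $u\ge0$,
\[\min\{u,1\}-2^{-.66k}\;\le\;f_k(u)\;\le\;\min\Big\{u,\ \tfrac{\dE}{\eta}+1+\log k\Big\}.\]

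For the right inequality I would write $t=1-e^{-u}\in[0,1)$, so $f_k(u)=\sum_{j=1}^{k}t^j/j$ has nonnegative terms and hence $f_k(u)\le\sum_{j=1}^{\infty}t^j/j=-\log(1-t)=u$; separately $f_k(u)\le\sum_{j=1}^{k}1/j\le1+\log k\le\tfrac{\dE}{\eta}+1+\log k$. The right inequality is the minimum of these two bounds.

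For the left inequality, set $g(v)=v-f_k(v)=\sum_{j>k}(1-e^{-v})^j/j$, which is nonnegative and nondecreasing on $[0,\infty)$, as is $f_k$ itself. Then $\min\{u,1\}-f_k(u)\le g(1)$ in either case: if $u\le1$ it equals $g(u)\le g(1)$, and if $u>1$ it equals $1-f_k(u)\le1-f_k(1)=g(1)$. It remains to bound the tail geometrically, $g(1)=\sum_{j>k}(1-1/e)^j/j\le\tfrac1{k+1}\cdot\tfrac{(1-1/e)^{k+1}}{1-(1-1/e)}=\tfrac{e}{k+1}(1-1/e)^{k+1}$, and then to use $\tfrac{e}{k+1}<1$ (valid because $k>2>e-1$) together with $1-1/e<2^{-.66}$, i.e.\ $\log_2\tfrac{e}{e-1}>0.66$, to conclude $g(1)<2^{-.66(k+1)}<2^{-.66k}$.

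The only step with real content is this last numerical estimate: since $\log_2\tfrac{e}{e-1}\approx0.6617$ the exponent $0.66$ is nearly tight, so the geometric tail bound leaves almost no slack and I would be careful to apply the $1/j\le1/(k+1)$ bound before summing the geometric series. Everything else — the reduction to scalars, the $-\log(1-t)$ series identity, and the monotonicity of $f_k$ and $g$ — is routine.
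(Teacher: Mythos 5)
Your proof is correct and follows essentially the same route as the paper's: reduce to a scalar inequality via functional calculus, verify the upper bound from $\sum_{j\le k}1/j\le 1+\log k$ and $f_k\le\id$, and verify the lower bound by controlling the tail $1-f_k(1)=\sum_{j>k}(1-1/e)^j/j$ with a geometric bound and the numerical fact $1-1/e<2^{-.66}$. Your use of the nondecreasing function $g=\id-f_k$ is just a repackaging of the paper's argument that $f_k$ is $1$-Lipschitz and increasing (both follow from $f_k'=1-(1-e^{-x})^k\in(0,1]$); your final tail estimate $\frac{e}{k+1}(1-1/e)^{k+1}$ is identically the paper's $\frac{e-1}{k+1}(1-1/e)^k$.
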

\begin{proof}
It suffices to prove that for \(k>2\),
\begin{equation}\label{eq:explogexp}F_1(x)-2^{-.66k}\less f_k(x)\less F_{1+\log k}(x)\quad\text{for}\quad x\ge0.\end{equation}
where \(F_\eta(x)=\min\{x,\eta\}\). Indeed, this will imply that
\begin{align*}F_{\eta}(x-\E)-2^{-.66k}\eta&\less \eta f_k(\tfrac{x-\E}\eta)\less F_{\eta(1+\log k)}(x-\E),\\
F_{\eta+\E}(x)-2^{-.66k}\eta&\less \eta f_k(\tfrac{x-\E}\eta)+\E\less F_{\eta(1+\log k)+\E}(x)\quad\text{for}\quad x\ge\E.\end{align*}
Now write $F_{\eta+\E}= F_{\eta-\dE+\E_0(H)}$ on the LHS and apply the definition of the hard truncation. (Recall that it has an implicit dependence on $\E_0(H)$.)
The conclusion follows since $\spec H\subset[\E,\infty)$.

\noindent\textbf{Proof of \eqref{eq:explogexp} }The derivative \(f_k'(x)=e^{-x}\sum_{j=0}^{k-1}(1-e^{-x})^j=1-(1-e^{-x})^k\) belongs to \((0,1]\) for \(x\ge0\), so \(f_k\) is increasing and 1-Lipschitz on \(\RR_+\).
\sparagraph{Upper bound}
We have the uniform bound \(f_k(x)\le\sum_{j=1}^k\frac1j\le1+\log k\) on \(\RR_+\). Furthermore \(f_k(x)\le x\) since \(f_k\) is 1-Lipschitz and \(f_k(0)=0\). Thus \(f_k(x)\le x\wedge (1+\log k)=F_{1+\log k}(x)\).
\smallskip\sparagraph{Lower bound}
Let \(\varepsilon=1-f_k(1)\) and consider \(F_1-\varepsilon\) whose corner is at the point \((1,f_k(1))\). Since the corner of \(F_1-\varepsilon\) lies on the graph of \(f_k\) it follows that \(f_k\ge F_1-\varepsilon\) on \(\RR_+\). Here we have used the Lipschitz property on \([0,1]\) and monotonicty on \([1,\infty)\). It remains to evaluate \(\varepsilon\). \(f_k(1)\) is a partial sum of an infinite series with sum \(1\), so \(\varepsilon\) is the sum of the remaining terms and we estimate
\[\textstyle{\varepsilon=\sum_{j=k+1}^\infty\tfrac1j(1-e^{-1})^j\le\frac{e-1}{k+1}(1-e^{-1})^k}\le 2^{-.66k}.\]
\end{proof}
We now extend the sandwich bounds for the soft truncation to the efficient truncation using the triangle inequality.
\begin{corollary}\label{cor:sandwichwitheverything}
Let $L\ge\log n\vee 3k$. The efficient truncation of degree $2k$, energy lower bound $\E=\E_0-\dE$, and cluster size $2L$ satisfies:
\[\hTRUNC{H}{2k-\dE}-\bigO(2^{-k})\less\fTRUNC{H}{\E}{}{2k,2L}\less \hTRUNC{H}{\bigO(k\log k)}+\bigO(2^{-L}).\]
Furthermore, $\fTRUNC{H}{\E}{}{2k,2L}$ is encoded by a meta-TNO with bond dimension $k2^{\bigO(L)}$.
\end{corollary}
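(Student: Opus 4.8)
The plan is to view the efficient truncation $\fTRUNC{H}{\E}{}{2k,2L}=\fTRUNC{H}{\E}{2k/c}{2k,2L}$ as a perturbation of the degree-$2k$ soft truncation $\sTRUNC{H}{\E}{2k/c}{2k}$: with width $\eta=2k/c$ the two expressions differ only in that each $\exp[-\tfrac{j}{\eta}(H-\E)]$ is replaced by its truncated $2L$-cluster expansion $\trexp_{2L}[-\tfrac{j}{\eta}(H-\E)]$, read as the scalar multiple $e^{j\E/\eta}\trexp_{2L}(-\tfrac j\eta H)$. So I would (i) bound the operator-norm difference between the efficient and soft truncations via Corollary \ref{cor:TCEusefulerrorbound}; (ii) sandwich the soft truncation using Lemma \ref{lem:sandwich}; (iii) combine (i) and (ii), simplifying the energy parameters; and (iv) read off the bond dimension from the meta-TNO of Lemma \ref{lem:TCETNO}. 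We may assume $k\ge 2$, since otherwise the statement is absorbed into the implicit constants.

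For (i), set $t_j=\tfrac{j}{\eta}=\tfrac{jc}{2k}\le c$ for $j\in\{0,\dots,2k\}$. Since $\E\le\E_0$ and shifting by a scalar only rescales both sides by the same positive factor,
\[\big\|\trexp_{2L}[-t_j(H-\E)]-e^{-t_j(H-\E)}\big\|=e^{t_j(\E-\E_0)}\big\|\trexp_{2L}[-t_j(H-\E_0)]-e^{-t_j(H-\E_0)}\big\|\ \bounded\ 2^{-2L},\]
the last step being Corollary \ref{cor:TCEusefulerrorbound} with cluster size $2L\ge\log n$. Summing the $2k+1$ differences weighted by $\eta\,|c^{[2k]}_j|$ and using $\sum_{j=0}^{2k}|c^{[2k]}_j|\le(1+\log 2k)+2^{2k}=\bigO(2^{2k})$ gives
\[\big\|\fTRUNC{H}{\E}{}{2k,2L}-\sTRUNC{H}{\E}{2k/c}{2k}\big\|\ \bounded\ \tfrac{2k}{c}\,2^{2k-2L}\ \bounded\ 2^{-L},\]
where the last inequality uses $L\ge 3k$, so that $2^{2k-L}\le 2^{-k}$ and hence $\tfrac{2k}{c}2^{2k-L}=\bigO(1)$.

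For (ii)--(iii), Lemma \ref{lem:sandwich} with degree $2k$, width $\eta=2k/c$, and $\dE=\E_0-\E$ gives
\[\hTRUNC{H}{2k/c-\dE}-2^{-1.32k}\tfrac{2k}{c}\Id\ \less\ \sTRUNC{H}{\E}{2k/c}{2k}\ \less\ \hTRUNC{H}{(1+\log 2k)\,2k/c}.\]
Here $2^{-1.32k}\tfrac{2k}{c}\bounded 2^{-k}$; since $c<1$ and $F_M(x)=\min\{x,M\}$ is nondecreasing in $M$, we have $\hTRUNC{H}{2k/c-\dE}\opge\hTRUNC{H}{2k-\dE}$; and $(1+\log 2k)\tfrac{2k}{c}=\bigO(k\log k)$, so by the same monotonicity $\hTRUNC{H}{(1+\log 2k)\,2k/c}\ople\hTRUNC{H}{\bigO(k\log k)}$. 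Adding the $\pm\bigO(2^{-L})\Id$ slack from (i) and absorbing $2^{-L}\le 2^{-3k}\le 2^{-k}$ into the error on the lower side yields exactly the claimed two-sided bound.

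For (iv), Lemma \ref{lem:TCETNO} applied with cluster size $2L$ to the rescaled interactions $t_jH_\e$ encodes each $\trexp_{2L}(-t_jH_{\T^\mv})$ as a meta-TNO of bond dimension $(36\sd^2\ld^4)^{2L}=2^{\bigO(L)}$ by Observation \ref{obs:TCEbonddim}, and multiplying by the scalar $e^{t_j\E}$ costs nothing. The efficient truncation is a linear combination of these $2k+1$ meta-TNOs together with the bond-dimension-$1$ encoding of $\E\Id$, and a linear combination of $M$ meta-TNOs of bond dimension $\le B$ is encoded by a meta-TNO of bond dimension $\le MB$ (take block-diagonal local tensors and contract the root index against the coefficient vector); hence the total is $(2k+2)\,2^{\bigO(L)}=k\,2^{\bigO(L)}$. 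The only real difficulty is bookkeeping in steps (i)--(iii): one must check that the $2^{2k}$ growth of $\sum_j|c^{[2k]}_j|$ is killed by the hypothesis $L\ge 3k$, and that the extra factor $1/c$ by which the soft-truncation width $2k/c$ exceeds the hard-truncation thresholds named in the statement disappears by monotonicity of $F_M$; the scalar shift $H\mapsto H-\E$ is harmless throughout since $e^{t_j(\E-\E_0)}\le 1$ and scalars do not affect bond dimension.
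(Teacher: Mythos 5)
Your proof is correct and follows the same route as the paper's: bound the operator-norm gap between the efficient and soft truncations via the TCE error estimate (Corollary \ref{cor:TCEusefulerrorbound}), sandwich the soft truncation with Lemma \ref{lem:sandwich}, and read off the bond dimension from Lemma \ref{lem:TCETNO} and Observation \ref{obs:TCEbonddim}. If anything you are somewhat more careful than the paper on the bookkeeping details — the rescaling $e^{t_j(\E-\E_0)}\le 1$ when passing from $\E$ to $\E_0$, the monotonicity of $F_M$ used to replace the threshold $2k/c-\dE$ by $2k-\dE$, and the explicit linear-combination bond-dimension bound — all of which the paper leaves implicit.
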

\begin{proof}
	Let $c$ be the constant from corollary \ref{cor:TCEusefulerrorbound} and bound the difference between the soft and the efficient truncation using the corollary and the triangle inequality,
	\begin{align*}
		\|\sTRUNC{H}{\E}{k/c}{k}-\fTRUNC{H}{\E}{k/c}{k,L}\|\quad& \le\quad\frac{k}{c}\sum_{j=1}^k\frac1j\binom{k}{j}\big[\trexp_L-\exp\big]\big(-\tfrac{cj}k(H-\E)\big) \\&\le\quad \frac kc\sum_{j=1}^k\frac1j\binom{k}{j}\big[\trexp_L-\exp\big]\big(-\tfrac{cj}k(H-\E_0)\big) \\&\le\quad\frac kc\sum_{j=1}^k\frac1j\binom{k}{j}e^{-L}\less\frac kc\cdot 2^{k-L}.
		\label{eq:L}
	\end{align*}
	Now replace the degree by $2k$ and the cluster size by $2L$ to get
	\begin{equation}\label{eq:opnormbound}\|\sTRUNC{H}{\E}{2k/c}{2k}-\fTRUNC{H}{\E}{}{2k,2L}\|\bounded k2^{2k-2L}\bounded2^{3k-2L}\le 2^{-L}.\end{equation}
	Apply lemma \ref{lem:sandwich} and use the bound $c<1$,
	\[\hTRUNC{H}{2k-\dE}-2^{-1.32k}k/c\less\sTRUNC{H}{\E}{2k/c}{2k}\less\hTRUNC{H}{\bigO(k\log k)},\]
	then apply \eqref{eq:opnormbound} and the triangle inequality to get the operator inequalities. The meta-TNO encoding follows from observation \ref{obs:TCEbonddim} and lemma \ref{lem:TCETNO}.
\end{proof}

\subsection{The efficient PAP and its properties}

We are now ready to combine the soft truncation, the truncated cluster expansion, and the higher-degree viability procedure (section \ref{sec:HDVabstract}) to construct the efficient PAP.

\begin{definition}
	The degree-\(m\) \emp{efficient \textup{PAP}} on $\T^\mv$ with separation distance $s$, precision $k$, cluster size $L\ge \log n\vee 3k$, and energy lower bound $\E$ is
	\begin{equation}\label{eq:raisedeg}\L_{m,s,k,L,\E}(\T^\mv):=\L_m(\fTRUNClong{H_{\T^\mv}}{\E}{}{2k,2L}{.6},H_\barrier),\end{equation}
	where $\L_m$ is the $m$-powered operator subspace of definition \ref{def:Lmdef}, and $\barrier$ is the closed barrier of width $s$ around $\T^\mv$.
\end{definition}

\begin{lemma}\label{lem:approximateproj}
	Let \(s\in\NN\), and pick $k\DEFTHETA s^\beta+\log\tfrac1{\delta\Delta}$ and $L=\log n+ 3k$. Let \(\E\) be an energy lower bound for $H_{\T^\mv}$ with $\E_0(H_{\T^\mv})-\E\le k$.  Let
	\[\widetilde H=\fTRUNClong{H_{\T^\mv}}{\E}{}{2k,2L}{.6}+H_\barrier+\hTRUNC{H_{\outside}}{k},\]
where $\outside$ the the closed complementary region separated from $\T^\mv$ by distance $s$. Then the span $\widetilde\Z$ of the $D$ lowest eigenvectors of $\widetilde H$ is \(\delta\)-close to \(\Z\) and has a gap $\Delta/4$ following its first $D$ eigenvalues.
\end{lemma}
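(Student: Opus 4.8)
The plan is to sandwich $\widetilde H$ between two hard-truncated Hamiltonians and then invoke the robustness of hard truncations, Corollary~\ref{cor:robusttotrunc}, together with the eigenvalue-to-eigenspace bound, Lemma~\ref{lem:spectrumtosubspace}. Set $\dE=\E_0(H_{\T^\mv})-\E$, so that $0\le\dE\le k$ by hypothesis, and note $L=\log n+3k\ge\log n\vee 3k$, so Corollary~\ref{cor:sandwichwitheverything} applies to $H_{\T^\mv}$ and yields
\[\hTRUNC{H_{\T^\mv}}{2k-\dE}-\bigO(2^{-k})\,\Id\less\fTRUNC{H_{\T^\mv}}{\E}{}{2k,2L}\less\hTRUNC{H_{\T^\mv}}{\bigO(k\log k)}+\bigO(2^{-L})\,\Id.\]
From the upper inequality, together with $\hTRUNC{H_{\T^\mv}}{\bigO(k\log k)}\ople H_{\T^\mv}$ and $\hTRUNC{H_\outside}{k}\ople H_\outside$, I obtain $\widetilde H\ople H+c_1\Id$ with $c_1=\bigO(2^{-L})$. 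From the lower inequality, since $2k-\dE\ge k$ and the hard truncation is monotone in its parameter, $\hTRUNC{H_{\T^\mv}}{2k-\dE}\opge\hTRUNC{H_{\T^\mv}}{k}$, so $\widetilde H\opge\check H-c_2\Id$ with $c_2=\bigO(2^{-k})$, where $\check H:=\hTRUNC{H_{\T^\mv}}{k}+H_\barrier+\hTRUNC{H_\outside}{k}$ is a genuine two-sided hard truncation of $H$.

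Next I would apply Corollary~\ref{cor:robusttotrunc} to $\check H$, with inside region $\T^\mv$, barrier $\barrier$, outside region $\outside$, truncation parameter $k$, and closeness parameter $\delta/3$. Its hypotheses hold: $\{\EG_{\T^\mv},\EG_\barrier,\EG_\outside\}$ partitions $\EG$; $\T^\mv$ and $\outside$ are non-adjacent since $s\ge1$; and the requirement $k\BOUNDS|\partiale\T^\mv|+|\partiale\outside|+\log\frac1{\Delta\delta}$ follows from the choice $k\DEFtheta s^\beta+\log\frac1{\delta\Delta}$, using $|\partiale\T^\mv|=\bigO(1)$ ($\T^\mv$ is lean with bounded degree) and $|\partiale\outside|\le|\VX_\barrier|\bounded s^\beta$ — the latter because $\barrier$ is a closed forest contained in the $s$-balls around the $\bigO(1)$ boundary vertices of $\T^\mv$, so the fractal-dimension bound gives $|\VX_\barrier|\bounded s^\beta$, with $s^\beta\bounds1$ absorbing the $\bigO(1)$ terms. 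Corollary~\ref{cor:robusttotrunc} then gives $\E_j(\check H)\ge\E_j(H)-\tfrac{\delta\Delta}6$ for $j\in\{0,D\}$, whence $\E_j(\widetilde H)\ge\E_j(H)-\tfrac{\delta\Delta}6-c_2$ for $j\in\{0,D\}$; and from $\widetilde H\ople H+c_1\Id$ together with the degeneracy $D$ of $H$, $\E_{D-1}(\widetilde H)\le\E_{D-1}(H)+c_1=\E_0+c_1$. Since $k$ exceeds a large multiple of $\log\frac1{\delta\Delta}$ I may assume $c_1,c_2\le\tfrac{\delta\Delta}{12}$.

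It then remains to read off the two conclusions. For the gap, $\E_D(\widetilde H)-\E_{D-1}(\widetilde H)\ge\big(\E_0+\Delta-\tfrac{\delta\Delta}6-c_2\big)-\big(\E_0+c_1\big)\ge\Delta-\tfrac\Delta6-\tfrac\Delta6=\tfrac{2\Delta}3\ge\tfrac\Delta4$, using $\delta\le1$. For closeness, set $H'=\widetilde H-c_1\Id\ople H$; since $H'$ differs from $\widetilde H$ only by a scalar, its $D$ lowest eigenvectors also span $\widetilde\Z$. Then $\varepsilon:=\E_0(H)-\E_0(H')=\E_0-\E_0(\widetilde H)+c_1\le\tfrac{\delta\Delta}6+c_1+c_2\le\tfrac{\delta\Delta}3$, and $\E_D(H')-\E_0(H')=\E_D(\widetilde H)-\E_0(\widetilde H)\ge\tfrac{2\Delta}3$ by the same estimate. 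Lemma~\ref{lem:spectrumtosubspace} then yields that $\widetilde\Z$ is $\tfrac{\varepsilon}{\E_D(H')-\E_0(H')}\le\tfrac\delta2$-close, and hence $\delta$-close, to $\Z$.

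The step I expect to be the main obstacle is not conceptual but the error bookkeeping in the second paragraph: one must check that the two exponentially small slacks $\bigO(2^{-k})$ (from the lower envelope of the efficient truncation) and $\bigO(2^{-L})$ (from the upper envelope) are comfortably dominated by the chosen $k$ and $L$, and that the constant-factor loss in $\delta$ incurred by routing through $\check H$ is absorbed by the large constant hidden in $k\DEFtheta s^\beta+\log\frac1{\delta\Delta}$; everything else is the sandwich observation plus tools already established.
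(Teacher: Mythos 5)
Your proof is correct and follows essentially the same route as the paper: sandwich the efficient truncation between hard truncations via Corollary~\ref{cor:sandwichwitheverything}, use monotonicity of $\hTRUNC{\cdot}{\eta}$ and $\dE\le k$ to compare against the two-sided hard-truncated Hamiltonian (your $\check H$, the paper's $\widehat H$), invoke Corollary~\ref{cor:robusttotrunc} for the eigenvalue perturbation bounds, and finish with Lemma~\ref{lem:spectrumtosubspace} applied to a scalar shift of $\widetilde H$. The only differences are cosmetic bookkeeping (your closeness parameter $\delta/3$ and conclusion of $\delta/2$-closeness versus the paper's $\delta/4$ and exact $\delta$), and you additionally verify the hypotheses of Corollary~\ref{cor:robusttotrunc} explicitly, which the paper leaves implicit apart from a footnote about the implicit constant in $k$.
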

\begin{proof}
	Apply corollary \ref{cor:sandwichwitheverything} to $H_{\T^\mv}$ and subtract $C2^{-L}$,
	\[\hTRUNC{H_{\T^\mv}}{2k-\dE}-\bigO(2^{-k})\less\fTRUNC{H_{\T^\mv}}{\E}{}{2k,2L}-C2^{-L}\less H_{\T^\mv}.\] 
Add $H_\moat+\hTRUNC{H_{\outside}}{k}$, and use the fact $\dE\le k$
\begin{equation}\label{eq:hatwitherror}\underbrace{\hTRUNC{H_{\T^\mv}}{k}+H_\moat+\hTRUNC{H_\outside}{k}}_{\mathlarger{\widehat H}:=}-\bigO(2^{-k})\less\widetilde H-C2^{-L}\less H\end{equation}
We can now apply the analysis of the hard truncation to $\widehat H$. By corollary \ref{cor:robusttotrunc},\footnote{We allow $k$ to be chosen with a larger implicit constant than in corollary \ref{cor:robusttotrunc}.}
	\[\E_0(\widehat H)\ge\E_0-\delta\Delta/8\AND\E_D(\widehat H)\ge\E_0+\Delta/2,\]
	so from \eqref{eq:hatwitherror} we have that $H''=\widetilde H-C2^{-L}$ satisfies
	\begin{align*}
		H''\le H,\quad
		\E_0(H'')&\ge\E_0-\delta\Delta/8-\bigO(2^{-k})\ge\E_0-\delta\Delta/4,\AND\\
	\E_D(H'')&\ge\E_0+\Delta/2-\bigO(2^{-k})> \E_0+\Delta/4.\end{align*}
	By the gap$\Rightarrow$closeness-lemma \ref{lem:spectrumtosubspace}, $\V_{(\infty,\E_0+\Delta/4]}(H'')$ is $\delta$-close to $\Z$. But the first $D$ eigenvectors of $H''$ are the same as those of $\widetilde H$, since the operators differ by a constant.
\end{proof}

\begin{proposition}\label{prop:Lg}
	Suppose $1\le\frac{\log m}{\log s}=\bigO(1)$ and pick $k\DEFTHETA s^\beta+\log\tfrac1{\delta\Delta}$ and $L=\log n+3k$. Let \(\E\) be a lower bound on $\E_0(H_{\T^\mv})$ with the error bound
	\[0\less\E_0(H_{\T^\mv})-\E\less k.\]
	Then \(\L=\L_{m,s,k,L,\E}\) is a \(\shrink\)-\textup{PAP} for some subspace \(\widetilde\Z\subsp\HS\) which is \(\delta\)-close to \(\Z\), where
	\[\shrink=2\exp\Big[-\Theta\Big(m\sqrt{\tfrac\Delta{k\log k}}\Big)\Big]\AND|\L|=\exp\Big[\bigO\Big(\frac{m}{s}\log m+s^\beta\Big)\Big].\]
	Furthermore, $\L$ is encoded by a meta-TNO with bond dimension $(n2^k)^{\bigO(m)}$.
\end{proposition}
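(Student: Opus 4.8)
The plan is to build the efficient PAP out of three pieces already in place: the higher-degree Viability construction of Proposition~\ref{prop:Hmprop}, the Chebyshev bound of Lemma~\ref{lem:degmusage}, and the spectral and encoding properties of the efficient truncation from Lemma~\ref{lem:approximateproj} and Corollary~\ref{cor:sandwichwitheverything}. Write $\Reg=\T^\mv$, $\barrier=\barrier_s$, $\outside=\outside_s$, and let
\[\widetilde H=\fTRUNC{H_{\T^\mv}}{\E}{}{2k,2L}+H_\barrier+\hTRUNC{H_\outside}{k}\]
be the truncated Hamiltonian of Lemma~\ref{lem:approximateproj}, so that by definition $\L=\L_{m,s,k,L,\E}(\T^\mv)=\L_m(\fTRUNC{H_{\T^\mv}}{\E}{}{2k,2L},H_\barrier)$. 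First I would show $\L$ is degree-$m$ Viable for $\widetilde H$. Since $1\le\tfrac{\log m}{\log s}$ forces $s\le m$, Proposition~\ref{prop:Hmprop} applies with region $\Reg=\T^\mv$, non-local operator $H_\Reg=\fTRUNC{H_{\T^\mv}}{\E}{}{2k,2L}$, and the genuinely local barrier Hamiltonian $H_\barrier=\sum_{\e\in\EG_\barrier}H_\e$ split along its edge shells $\EG_1,\dots,\EG_s$; it gives that $\L$ is degree-$m$ Viable for $\S=\CC\widetilde H_{\pad\Reg s}+\lin(\partial_s)$, where $\widetilde H_{\pad\Reg s}=\fTRUNC{H_{\T^\mv}}{\E}{}{2k,2L}+H_\barrier$ is the part of $\widetilde H$ supported on $\pad\Reg s$. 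Then $\S$ is itself Viable for $\widetilde H$: the summand $\widetilde H_{\pad\Reg s}$ lies in $\CC\widetilde H_{\pad\Reg s}\subset\S$, while $\hTRUNC{H_\outside}{k}$ is supported on $\HS_\outside$, which meets $\HS_{\pad\Reg s}$ only in $\HS_{\partial_s}$, so expanding it across the cut $\HS_{\partial_s}\otimes\HS_{\outside\xpt\partial_s}$ writes it as a sum of operators whose left tensor factor lies in $\lin(\partial_s)\subset\S$; hence $\widetilde H\in\S\otimes\lin(\T\xpt\pad\Reg s)$. Transitivity (Lemma~\ref{lem:transitive}) then yields that $\L$ is degree-$m$ Viable for $\widetilde H$.

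Next I would bound the spectrum of $\widetilde H$ and invoke Lemma~\ref{lem:degmusage}. By Lemma~\ref{lem:approximateproj} the span $\widetilde\Z$ of the $D$ lowest eigenvectors of $\widetilde H$ is $\delta$-close to $\Z$, its first $D$ eigenvalues lie within $\bigO(\delta\Delta)$ of $\E_0$, and they are followed by a gap $\E_D(\widetilde H)-\E_{D-1}(\widetilde H)\ge\Delta/4$; in particular $\widetilde\Z=\V_{(-\infty,\,a-\Delta_0]}(\widetilde H)$ with $a:=\E_D(\widetilde H)\ge\E_0+\Delta/4-\bigO(2^{-L})$ and $\Delta_0:=\Delta/4$. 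For the top of the spectrum, Corollary~\ref{cor:sandwichwitheverything} gives $\widetilde H\less\hTRUNC{H_{\T^\mv}}{\bigO(k\log k)}+H_\barrier+\hTRUNC{H_\outside}{k}+\bigO(2^{-L})$, so using $\E_{\max}(\hTRUNC{H}{\eta})=\E_0(H)+\eta$, super-additivity of $\E_0$ and the triangle inequality on $H_{\T^\mv}+H_\outside=H-H_\barrier$ (which give $\E_0(H_{\T^\mv})+\E_0(H_\outside)\le\E_0+|\EG_\barrier|$), together with $|\EG_\barrier|<|\VX_\barrier|\bounded s^\beta\bounded k$, one gets $\E_{\max}(\widetilde H)\le\E_0+\bigO(k\log k)$, hence $\E_{\max}(\widetilde H)-a\bounded k\log k$. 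Setting $b:=\max\{\E_{\max}(\widetilde H),\,a+2\Delta\}$ we have $\spec\widetilde H\subset(-\infty,a-\Delta_0]\cup[a,b]$ with $8\Delta_0=2\Delta\le b-a\bounded k\log k$, so Lemma~\ref{lem:degmusage} applies and shows $\L$ is a $\shrink$-PAP for $\V_{(-\infty,a-\Delta_0]}(\widetilde H)=\widetilde\Z$ with
\[\shrink\le2\exp\Big(-m\sqrt{\tfrac{2\Delta_0}{b-a}}\Big)=2\exp\Big[-\Theta\Big(m\sqrt{\tfrac{\Delta}{k\log k}}\Big)\Big],\]
and $\widetilde\Z\close\delta\Z$, as required.

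The remaining two bounds are routine. Observation~\ref{obs:Ldimbound}(\ref{it:itemLmdim}) with $|\barrier|<|\VX_\barrier|\bounded s^\beta$ gives $|\L|=\exp[\bigO((\tfrac ms+s)\log m+s^\beta)]$; the hypothesis $\log m=\bigO(\log s)$ lets the $s\log m$ term be absorbed into $\bigO(\tfrac ms\log m+s^\beta)$, which is the stated form. For the encoding, Corollary~\ref{cor:sandwichwitheverything} encodes $\fTRUNC{H_{\T^\mv}}{\E}{}{2k,2L}$ as a meta-TNO on $\MT^\mv$ with bond dimension $B:=k2^{\bigO(L)}$, and Observation~\ref{obs:Ldimbound}(\ref{it:TNOdegm}) — whose proof is unchanged with $\fTRUNC{H_{\T^\mv}}{\E}{}{2k,2L}$ in place of $H_{\T^\mv}$ — then encodes $\L$ as a meta-TNO of bond dimension $B^m D_m$ with $D_m=\exp[\bigO((\tfrac ms+s)\log m+s^\beta)]$. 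Since $L=\log n+3k$ we have $B=k2^{\bigO(\log n+k)}=(n2^k)^{\bigO(1)}$; and since $\log m=\bigO(\log s)$ and $s^\beta\bounded k$ give $\log m\bounded k\le L$ and hence $D_m\le(n2^k)^{\bigO(m)}$, the bond dimension is $(n2^k)^{\bigO(m)}$.

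The main obstacle is the first step. The efficient truncation $\fTRUNC{H_{\T^\mv}}{\E}{}{2k,2L}$ is a \emph{non-local} operator on $\HS_{\T^\mv}$, yet it must still feed the $m$-powering construction of Definition~\ref{def:Lmdef}; this works precisely because the non-locality is confined to $\T^\mv$ and does not reach the barrier, and because the spectrally-truncated outer part $\hTRUNC{H_\outside}{k}$ — although non-local on $\outside$ — touches $\pad\Reg s$ only through the thin vertex shell $\partial_s$ and is therefore absorbed by the $\lin(\partial_s)$ summand of $\S$. The accompanying quantitative care is to keep the spectral diameter above $\widetilde\Z$ at $\bigO(k\log k)$ without spoiling the post-$D$ gap of $\Delta/4$ — which is exactly what Lemma~\ref{lem:approximateproj} together with the upper sandwich bound of Corollary~\ref{cor:sandwichwitheverything} deliver.
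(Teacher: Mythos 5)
Your proof is correct and follows the same route as the paper: feed the efficient truncation into the $m$-powering construction, use Proposition~\ref{prop:Hmprop} for degree-$m$ Viability, invoke Lemma~\ref{lem:approximateproj} for the spectral gap and $\delta$-closeness of $\widetilde\Z$, bound the spectral diameter of $\widetilde H$ by $\bigO(k\log k)$ via Corollary~\ref{cor:sandwichwitheverything}, and then apply Lemma~\ref{lem:degmusage} for the shrinking factor. You spell out a couple of steps the paper leaves implicit — in particular the transitivity argument showing $\S$ is Viable for $\widetilde H$ despite the non-local truncation on $\outside$, and the explicit choice of $a$, $b$, $\Delta_0$ when applying Lemma~\ref{lem:degmusage} — but these are elaborations, not a different argument.
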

\begin{proof}
	The dimension bound follows directly from proposition \ref{prop:Hmprop} by absorbing $s\log m$ in $\bigO(s^\beta)$. The proposition also states that $\L$ is degree-$m$ Viable for $\widetilde H$.  Since the efficient truncation is encoded by a meta-TNO with bond dimension $k2^{\bigO(L)}$, the bond dimension of $\L$ is bounded by $(k2^{\bigO(L)})^m|\L|$ by proposition \ref{prop:Hmprop}. Absorb the factor $|\L|$ in $2^{\bigO(mL)}$ and write $2^{\bigO(mL)}=(n2^k)^{\bigO(m)}$.

	By lemma \ref{lem:approximateproj}, the span of the first $D$ eigenvectors of $\widetilde H$ is $\delta$-close to $\Z$. The lemma also yields that the $D$\ts{th} eigenvalue $\E_{D-1}(\widetilde H)$ is followed by a gap $\Delta/4$. Furthermore, the spectral diameter of $\widetilde H$ is $\bigO(k\log k)$, by the bound on the efficient truncation on $\T^\mv$ from corollary \ref{cor:sandwichwitheverything}. Indeed, this term absorbs the contributions $s^\beta$ from the un-truncated part on the barrier $\moat$ and the $k$ from the hard-truncated part on the complementary region $\outside$.

	Combining the gap with the spectral diameter bound yields that $\L$ is a $\shrink$-PAP by lemma \ref{lem:degmusage}.
\end{proof}

%

\section{Analysis of algorithm}
\label{sec:algan}
The most technical step in algorithm $\GSm$ is the subroutine $\Enhancem$ which is applied at each iteration to reduce error and complexity of the partial solutions. With the analysis of the efficient PAP is hand we are ready to analyze this subroutine.
\subsection{The subroutine \Enhancet}
The subroutine $\Enhancem$ introduced in section \ref{sec:algsec} amplifies the overlap of a partial solution $\W\subsp\HS_{\ST}$ on a standard subtree $\ST$. Its main inputs are the partial solution $\W$ and a lower bound $\E$ on the local energy $\E_0(\ST)$. Additionally it takes parameters: cluster size $L$, Viability degree $m$, separation distance $s$, and truncation degree $k$ to construct the PAP, a trimming threshold $\xi>0$, and a dimension parameter $D'$ which determines when to stop iterating.\\
{ \begin{minipage}[t]{0.53\linewidth}\strut\vspace{-.5\baselineskip}\newline
	 \begin{algorithm}[H]
\topcaption{\Enhancet}
\hrule
		\textbf{input }$\V\leftarrow\W$ and $\E$\;
		\textbf{parameters }$L$, $m$, $s$, $k$, $\xi$, $D'$\;
\begin{spacing}{1.5}
		$\widetilde H_\ST\leftarrow\fTRUNC{H_\ST}{\E}{}{2k,2L}$\;
		$\L\leftarrow\L_m(\ST,\widetilde H_\ST+H_{\moat_s})$\;

	\While{\(|\V|> D'\)}{
	 \(\V\leftarrow\enhancem(\V,\L,\xi)\)
 }\medbreak
 \textbf{output }{\(\V\) and \(\E_0(\Into_\V^\dag\widetilde H_{\ST}\Into_\V)\)}\end{spacing}
\end{algorithm}
 \end{minipage}
 \begin{minipage}[t]{0.47\linewidth}\strut\vspace{-.5\baselineskip}\newline
	\begin{algorithm}[H]
		\topcaption{\enhancet}
		\hrule
		\textbf{input }$\W$, $\L$\;
		\textbf{parameter }$\xi$\;
\begin{spacing}{1.5}
	 Sample \(\V\subsp\W\), \(|\V|=\lfloor\frac12\cdot\frac{|\W|}{|\L|}\rfloor\).\;
	 \(\W'\leftarrow\L\V\)\;
 \textbf{output } \(\Trim_\xi(\W')\)\end{spacing}
 \end{algorithm}
 \end{minipage}
}
\bigbreak
 \begin{lemma}\label{lem:enhancelemma}
Let \(\Z\subsp\HS\) be a subspace which satisfies an \(r(\cdot)\)-entanglement bound. Let \(\L\subsp\lin(\ST)\) be a \(\shrink\)-PAP for some \(\widetilde\Z\) which is \(\frac15\delta\)-close to \(\Z\) where $0<\delta<1/8$ and  \(\shrink |\L|^2\BOUNDED\delta\). Let
	\begin{equation}\label{eq:defxi}\xi\DEFtheta\frac{\delta^2}{n^2\cdot r(\mathsmaller{\delta/n})^2},\end{equation}
	 and let \(\W\subsp\HS_\ST\) be a \(1/2\)-viable subspace for \(\Z\in\HS\) such that
	 \begin{equation}\label{eq:xilower}|\W|\BOUNDS|\Z|\cdot|\L|^{\frac32}+|\L|\log\tfrac1\theta.\end{equation}
	Then with probability $1-\theta$, the output $\W'$ of \(\enhancem(\W,\L,\xi)\) is \(\delta\)-viable for \(\Z\).
	$\W'$ has dimension at most \(|\W|/2\) and is given by a meta-TN with bond dimension $\frac{|\W|}{2\xi}$. If $\L$ is encoded by a meta-TNO $\GGamma$ with bond dimension $\BB\ge|\L|$ and $\W$ by a meta-TTN with bond dimension $B\ge|\W|$, then the running time of $\enhancem$ is
	\begin{equation}\label{eq:enhancetime}\bigO\Big(\BB^4\cdot B^4\:n^2\Big).\end{equation}
\end{lemma}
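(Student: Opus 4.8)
The plan is to follow the three operations performed by $\enhancem$—Haar-random subspace sampling $\V\subsp\W$, PAP application $\L\V$, and trimming—and to track the effect of each on viability, dimension, and meta-TN complexity. The only delicate part is the viability guarantee, which I obtain by chaining four transfer steps. First, since $\widetilde\Z$ is $\tfrac\delta5$-close to $\Z$, the Symmetry Lemma \ref{lem:symmetric} gives that $\Z$ is $\tfrac\delta5$-almost majorizing for $\widetilde\Z$; feeding this together with the hypothesis that $\W$ is $\tfrac12$-viable for $\Z$ into the approximate transitivity of Lemma \ref{lem:robust} shows that $\W$ is a partial $\mu_0$-majorizer for $\widetilde\Z$ with $\mu_0\ge1-\big(\tfrac12+3\sqrt{\delta/5}\big)=\Omega(1)$ (the hypothesis $\delta<1/8$ is used here). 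Second, I apply the Haar-random restriction lemma \ref{lem:random} with target $\widetilde\Z$ (which has dimension $|\Z|$), $W=|\W|$, $\mu=\mu_0$, and $V=\lfloor\tfrac{|\W|}{2|\L|}\rfloor$: with probability $1-\eta$ the sampled $\V$ is a partial $\nu$-majorizer for $\widetilde\Z$ with $\nu=\mu_0 V/(8|\W|)\ge c_\nu/|\L|$ for a constant $c_\nu>0$, and the failure probability $\eta=\exp\big[\bigO\big(|\Z|\sqrt{|\L|}+\log|\W|\big)-\Omega\big(|\W|/|\L|\big)\big]$ is at most $\theta$ precisely because of hypothesis \eqref{eq:xilower} (the $\log|\W|$ term there being negligible against $|\W|/|\L|$, since $|\W|$ stays quasi-polynomial throughout the algorithm and hence $\log|\W|\ll|\L|^{1/2}\le|\W|/|\L|$). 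Third, Lemma \ref{lem:AGSPapplication} applies: $\L$ is a $\shrink$-PAP for $\widetilde\Z$ and $\V$ is a partial $\nu$-majorizer for $\widetilde\Z$, so $\L\V$ is $\shrink/\nu^2$-viable for $\widetilde\Z$; combining $\nu\ge c_\nu/|\L|$ with the hypothesis $\shrink|\L|^2\BOUNDED\delta$ (whose implicit constant I take large enough to absorb $c_\nu^{-2}$) gives $\shrink/\nu^2\le\delta/40$. Fourth, I transfer back through $\widetilde\Z$ using Lemma \ref{lem:robust} once more (with $\varepsilon=\tfrac\delta5$, since $\widetilde\Z$ is $\tfrac\delta5$-almost majorizing for $\Z$), which shows $\L\V$ is $\delta''$-viable for $\Z$ with $\delta''\le2(\delta/40+\delta/5)<\delta/2$. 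Finally, $\L\V\subsp\HS_\ST$ where $\ST=\T^\mv$ is a standard subtree, so by Corollary \ref{cor:usefultrimcor}—with the implicit constant of \eqref{eq:defxi} taken small enough (via Lemma \ref{lem:usefultrim}) that trimming contributes at most $\tfrac\delta2$ to the viability error—the output $\W'=\Trim_\xi(\L\V)$ is $(\delta''+\tfrac\delta2)$-viable for $\Z$, hence $\delta$-viable.

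The dimension, representation-size, and running-time bounds are comparatively routine. From $|\V|=V\le\tfrac{|\W|}{2|\L|}$ we get $|\L\V|\le|\L|\cdot|\V|\le|\W|/2$, and trimming never increases dimension, so $|\W'|\le|\W|/2$. By Observation \ref{obs:trimdim}, $\W'=\Trim_\xi(\L\V)$ is represented by a meta-TN of bond dimension at most $|\L\V|/\xi\le|\W|/(2\xi)$. For the running time, forming $\V$ amounts to composing the root index of the meta-TN for $\W$ with a Haar-random isometry $\CC^V\hookrightarrow\CC^{|\W|}$, at cost $\operatorname{poly}(B)$ and leaving all inner bonds $\le B$; the product $\L\V$ is the meta-TN on $\MT^\mv$ obtained by contracting the input legs of $\GGamma$ into the physical legs of the meta-TN for $\V$, and has bond dimension at most $\BB B$; and the dominant cost is the subsequent global trimming of this meta-TN, which by Observation \ref{obs:trimcomplexity}—including the computation of its canonical form—runs in time $\bigO\big(n^2(\BB B)^4\big)=\bigO\big(\BB^4 B^4 n^2\big)$.

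The main obstacle is the error accounting in the viability chain: the target $\delta$ must not be overspent across the four contributions—$\tfrac\delta5$ from the closeness of $\widetilde\Z$ to $\Z$, $\shrink/\nu^2$ from the PAP, another $\tfrac\delta5$ from transferring back, and the trimming error—which forces the constant hidden in $\shrink|\L|^2\BOUNDED\delta$, the constant $c_\nu$ in $\nu\ge c_\nu/|\L|$ coming from Lemma \ref{lem:random}, and the slack in Lemma \ref{lem:robust} to be mutually compatible, and which pins down the implicit constant in \eqref{eq:defxi}. A secondary point that needs care is checking that hypothesis \eqref{eq:xilower} genuinely drives $\eta$ below $\theta$, i.e.\ that $|\W|/|\L|$ dominates $|\Z|\sqrt{|\L|}+\log|\W|+\log\tfrac1\theta$.
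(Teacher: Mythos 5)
Your proof follows essentially the same route as the paper's: transfer $\W$'s viability from $\Z$ to $\widetilde\Z$ via Lemma \ref{lem:robust}, sample a random subspace via Lemma \ref{lem:random} and use hypothesis \eqref{eq:xilower} to bound the failure probability, apply the PAP via Lemma \ref{lem:AGSPapplication}, transfer back via Lemma \ref{lem:robust}, and finish with the trimming bound of Lemma \ref{lem:usefultrim} (or its corollary) and Observations \ref{obs:trimdim} and \ref{obs:trimcomplexity} for the complexity claims. The only minor divergence is your appeal to an algorithm-external fact ("$|\W|$ stays quasi-polynomial") to control the $\log|\W|$ term in the failure probability; the paper instead derives $|\W|/\log|\W|\gtrsim|\L|$ internally from hypothesis \eqref{eq:xilower} and monotonicity of $x/\log x$, which is the cleaner way to close that step.
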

\begin{proof}
	\sparagraph{Random subspace}
	\(\W\) is \(1/2\)-viable for \(\Z\) which is \(\delta/5\)-close to \(\widetilde\Z\), so \(\W\) is a partial \(\Omega(1)\)-majorizer for \(\widetilde\Z\) by lemma \ref{lem:robust}. By lemma \ref{lem:random}, \(\V\) is a partial \(\Omega(1/|\L|)\)-majorizer for \(\widetilde\Z\) with probability
\begin{equation}\label{eq:successprob}1-\exp\big[\bigO\big(|\Z|\sqrt{|\L|}+\log|\W|\big)-\Omega(|\W|/|\L|)\Big].\end{equation}
We have \(|\W|/|\L|\BOUNDS\log\tfrac1\theta+|\Z|\sqrt{|\L|}\) and $|\W|/\log|\W|\BOUNDS|\L|$ by the assumption on \(\W\), which yields that
\[|\W|/|\L|\BOUNDS\log\tfrac1\theta+|\Z|\sqrt{|\L|}+\log|\W|.\]
Hence, \eqref{eq:successprob} is bounded below by \(1-\theta\).

\sparagraph{PAP application}
Since \(\V\) is a partial \(\Omega(1/|\L|)\)-majorizer for \(\widetilde\Z\), lemma \ref{lem:AGSPapplication} implies that \(\W'=\L\V\) is \(\bigO(\shrink|\L|^2)=\delta/5\)-viable for \(\widetilde\Z\), which in turn makes it \(\frac45\delta\)-viable for \(\Z\) by lemma \ref{lem:robust}. 

\sparagraph{Trimming}
Lemma \ref{lem:usefultrim} yields that \(\Trim_\xi(\W')\) is \(\delta\)-viable for \(\Z\).
\sparagraph{Complexity bounds}
The dimension bound is from the fact $|\W'|\le|\L||\V|\le\frac12|\W|$ by the definition of $\V$. The bound on the bond dimension then follows from observation \ref{obs:trimdim}. The time complexity is bounded by the operations, including the global trmming, on the meta-TN for $\L\W$ which has bond dimension $\BB B$ and $\bigO(n)$ meta-vertices. \ref{obs:trimcomplexity} yields the bound on the time complexity.

\end{proof}

We now choose parameters for $m$ (Viability degree), $s$ (separation distance), $k$ (soft truncation degree), and $\E$ (energy lower bound) to construct the efficient PAP $\L$ used in $\enhancem$.  
\begin{equation}\label{eq:theell}\ell\DEFTHETA\tilde\Theta\Big[\Delta^{-\frac\beta{2-\beta}}+\Big(\tfrac{\log(1/\delta)}{\Delta}\Big)^{\frac\beta2}\Big],\AND\end{equation}
\begin{equation}\label{eq:mskchoice}m\DEFtheta\frac{\ell^{1+1/\beta}}{\log \ell},\quad s\DEFTHETA\ell^{1/\beta},\AND k\DEFTHETA\ell+\log\frac1{\delta\Delta}.\end{equation}

\begin{corollary}\label{cor:Enhancecor}
	There exist $m$, $s$, and $k$ as in \eqref{eq:mskchoice} such that the efficient PAP $\L=\L_{m,s,k,L,\E}$ satisfies $\shrink|\L|^2\BOUNDED\delta$.
	Let \(\Z\subsp\HS\) be a subspace which satisfies an \(r(\cdot)\)-entanglement bound, and let \(\xi_{r(\cdot)}\) be given by \eqref{eq:defxi}.  
	Suppose $\Enhancem$ is run on input $\W,\E$, where
	\begin{enumerate}
		\item $\W\subsp\HS_{\T^\mv}$ is $\bigO(1/2)$-viable for $\Z$, and
		\item $\E_0(H_{\T^\mv})-k\le\E\le\E_0(H_{\T^\mv})$.
	\end{enumerate}
	Let $\V,\E'$ be the output of \(\Enhancem(\W,\E)\). With probability $1-\theta$:
	\begin{enumerate}[resume]
		\item $\V$ is \(\delta\)-viable for \(\Z\),
		\item $\E_0(H_{\T^\mv})-4\sd-\bigO(\sqrt\delta k\log k)\le\E'\le\E_0(H_{\T^\mv})$.\label{it:energyitem}
	\end{enumerate}
The time complexity of $\Enhancem(\W,\L,\xi,D')$ is
\begin{equation}\label{eq:timebound}Cn^{\bigO\Big(\Delta^{\frac{\beta+1}{2-\beta}}+\big(\frac{\log(1/\delta)}{\Delta}\big)^{\frac{\beta+1}2}\Big)}B^4,\end{equation}
	 where $B$ is the bond dimension of the meta-TN encoding of the input $\W$.
	 The output $\V$ is encoded as a meta-TN with bond dimension
	 \begin{equation}\label{eq:bondbound}\frac{(|\Z|+\log\log|\W|+\log\tfrac1\theta)r(\mathsmaller{\delta/|\MV|})^2n^2}{\delta^2}\cdot\exp\Big[\tilde\Theta\Big(\Delta^{-\frac\beta{2-\beta}}+(\tfrac{\log(1/\delta)}{\Delta})^{\frac\beta2}\Big)\Big].\end{equation}
\end{corollary}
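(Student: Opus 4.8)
# Proof Plan for Corollary \ref{cor:Enhancecor}

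\textbf{Strategy overview.} The plan is to assemble this corollary from three inputs that are now all available: (i) the efficient-PAP construction of Proposition \ref{prop:Lg}, which gives both the shrinkage/dimension tradeoff and the meta-TNO bond bound; (ii) the single-step analysis of \enhancem{} in Lemma \ref{lem:enhancelemma}, which converts a PAP application into a viability improvement with controlled failure probability and bond growth; and (iii) the area-law consequences, Proposition \ref{prop:arealawthm} and Corollary \ref{cor:usefultrimcor}, which certify that $\Z$ satisfies the $r(\cdot)$-entanglement bound that Lemma \ref{lem:enhancelemma} requires. The structure of the argument mirrors the structure of the \Enhancet{} pseudocode: first verify the precondition $\shrink|\L|^2\BOUNDED\delta$ by plugging \eqref{eq:mskchoice} into Proposition \ref{prop:Lg}; then iterate Lemma \ref{lem:enhancelemma} through the \textbf{while} loop, tracking a union bound over iterations for the $1-\theta$ success probability; then handle the final energy-estimate update; and finally collect the running-time and bond-dimension bounds geometrically over the loop.

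\textbf{Step 1: the PAP precondition.} With $\ell$ as in \eqref{eq:theell} and $m,s,k$ as in \eqref{eq:mskchoice}, substitute into Proposition \ref{prop:Lg}. The dimension bound there gives $\log|\L|=\bigO(\tfrac ms\log m+s^\beta)=\bigO(\ell)$ after substituting $m/s=\ell/\log\ell$ and $s^\beta=\ell$ and using $\log m=\bigO(\log\ell)$. The shrinkage bound gives $\log(1/\shrink)=\Omega(m\sqrt{\Delta/(k\log k)})$; since $k=\tilde\Theta(\ell)$ this is $\tilde\Omega(m\sqrt{\Delta/\ell})=\tilde\Omega(\ell^{1/2+1/\beta}\sqrt\Delta/\log\ell)$, which by the choice \eqref{eq:theell} of $\ell$ (made exactly so the exponents of $\ell$ exceed $1$, cf.\ Corollary \ref{lem:AGSP}) dominates $2\log|\L|+\log(1/\delta)$ by a large constant factor once the implicit constant in \eqref{eq:theell} is taken large enough. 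Hence $\shrink|\L|^2\le\delta$, which is the hypothesis needed to invoke Lemma \ref{lem:enhancelemma}. I should also note here that $\widetilde\Z$ from Proposition \ref{prop:Lg} is $\delta/5$-close to $\Z$ provided $\delta$ is replaced by $\delta/5$ throughout the parameter choices, which only changes constants.

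\textbf{Step 2: iterating the loop, and the energy estimate.} Each pass through the \textbf{while} loop runs \enhancem{} once. The invariant is that $\V$ stays $\bigO(1/2)$-viable for $\Z$: Lemma \ref{lem:enhancelemma} in fact outputs a $\delta$-viable (hence, for $\delta<1/8$, much better than $1/2$-viable) subspace, so the hypothesis of the next iteration is maintained a fortiori. The dimension halves each iteration ($|\V|\le|\W|/2$), so the loop runs $\bigO(\log|\W|)\le\bigO(\log B)$ times before $|\V|\le D'$; a union bound over these iterations, each failing with probability $\theta/\bigO(\log B)$ — equivalently, feeding Lemma \ref{lem:enhancelemma} the parameter $\theta'=\theta/\bigO(\log\log|\W|)$, which is why the $\log\log|\W|$ appears in \eqref{eq:bondbound} — gives overall success probability $1-\theta$. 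One must check that the lower bound \eqref{eq:xilower} on $|\W|$ from Lemma \ref{lem:enhancelemma}, namely $|\W|\BOUNDS|\Z||\L|^{3/2}+|\L|\log\tfrac1{\theta'}$, is automatically satisfied whenever the loop has not yet terminated, i.e.\ whenever $|\V|>D'$; this is a matter of choosing the stopping dimension $D'$ appropriately in terms of $|\Z|,|\L|,\theta$ (this $D'$ is the one suppressed in the clean statement). For item \ref{it:energyitem}: the output energy is $\E_0(\Into_\V^\dag\widetilde H_{\T^\mv}\Into_\V)$. By Corollary \ref{cor:sandwichwitheverything} the efficient truncation satisfies $\widetilde H_{\T^\mv}\le H_{\T^\mv}$, so restricting to any subspace gives $\E_0(\Into_\V^\dag\widetilde H_{\T^\mv}\Into_\V)\le\E_0(\Into_\V^\dag H_{\T^\mv}\Into_\V)$, and since $\V$ is $\delta$-viable (hence $\delta$-close, via a partial-majorizer relation) to a space overlapping $\Z(H_{\T^\mv})$... actually more directly: $\V$ being $\delta$-viable for the \emph{global} $\Z$ controls $\V$'s overlap with low-energy states of $H_{\T^\mv}$ up to the boundary contribution $\bigO(|\partiale\T^\mv|)=\bigO(\sd)$, giving the upper bound $\E'\le\E_0(H_{\T^\mv})$ after absorbing boundary terms; the lower bound follows from the sandwich $\hTRUNC{H_{\T^\mv}}{2k-\dE}-\bigO(2^{-k})\less\widetilde H_{\T^\mv}$ in Corollary \ref{cor:sandwichwitheverything} together with $\E_0$ of a $\delta$-close subspace differing from the true $\E_0$ by $\bigO(\sqrt\delta\cdot(\text{spectral diameter}))=\bigO(\sqrt\delta\,k\log k)$, using the spectral diameter bound $\bigO(k\log k)$ from Corollary \ref{cor:sandwichwitheverything}.

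\textbf{Step 3: complexity accounting, and the main obstacle.} For the running time: the PAP $\L$ has meta-TNO bond dimension $\BB=(n2^k)^{\bigO(m)}$ by Proposition \ref{prop:Lg}, and the input $\W$ has bond dimension $B$; each \enhancem{} call costs $\bigO(\BB^4 B^4 n^2)$ by \eqref{eq:enhancetime}, and the dimensions of the intermediate subspaces are all $\le$ those of the meta-TNs, so subsequent iterations cost no more (in fact $B$ is only decreasing in the relevant sense after trimming forces bond dimension $\le|\W|/(2\xi)$). Summing the geometric series over the $\bigO(\log B)$ iterations absorbs into a single factor. The term $\BB^4=(n2^k)^{\bigO(m)}=n^{\bigO(m)}2^{\bigO(mk)}$; with $m=\tilde\Theta(\ell^{1+1/\beta}/\log\ell)$, $k=\tilde\Theta(\ell)$, and $2^{\bigO(mk)}=2^{\tilde\bigO(\ell^{2+1/\beta})}$... the point is that $\log\BB=\tilde\bigO(mk+m\log n)=\tilde\bigO(\ell^{2+1/\beta}+m\log n)$, and one then substitutes $\ell=\tilde\Theta(\Delta^{-\beta/(2-\beta)}+(\log(1/\delta)/\Delta)^{\beta/2})$ so that $m=\tilde\Theta(\ell^{1+1/\beta}/\log\ell)$ gives the exponent $\tilde\Theta(\Delta^{-(\beta+1)/(2-\beta)}+(\log(1/\delta)/\Delta)^{(\beta+1)/2})$ in \eqref{eq:timebound} — the $n^{\bigO(m)}$ factor is what produces $n$ raised to that exponent, and the $2^{\bigO(mk)}$ piece must be checked to be no larger (here $mk=\tilde\Theta(\ell^{2+1/\beta}/\log\ell)$ vs.\ the claimed exponent times $\log n$, so for $\ell$ polylogarithmic in the other parameters this is dominated, but for the clean bound \eqref{eq:timebound} one simply folds $2^{\bigO(mk)}$ into the constant $C$ when $\Delta,\delta$ are treated as giving a fixed exponent). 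For the bond dimension of the output \eqref{eq:bondbound}: Lemma \ref{lem:enhancelemma} says the output of one \enhancem{} has bond dimension $|\W|/(2\xi)$; the final $|\W|$ at loop exit is $\bigO(D')=\bigO(|\Z||\L|^{3/2}+|\L|\log\tfrac1\theta)$, and $1/\xi=\bigO(n^2 r(\delta/n)^2/\delta^2)$ from \eqref{eq:defxi}, and $|\L|=\exp(\bigO(\ell))=\exp(\tilde\Theta(\Delta^{-\beta/(2-\beta)}+(\log(1/\delta)/\Delta)^{\beta/2}))$, which multiply to give \eqref{eq:bondbound} after writing $r(\delta/n)$ via Proposition \ref{prop:arealawthm} and absorbing $\log\log|\W|$ from the per-iteration $\theta$-splitting. \textbf{The main obstacle} I expect is not any single inequality but the bookkeeping of constants and implicit $\tilde\Theta$'s: one must verify that the various "sufficiently large/small constant" choices ($\ell$ large enough to beat $|\L|^2$ in Step 1; $D'$ large enough to keep \eqref{eq:xilower} valid throughout; $\theta$-splitting across $\bigO(\log B)$ rounds) are mutually consistent and that the three separate exponent computations (shrinkage, running time, output bond dimension) all collapse to the \emph{same} controlling quantity $\ell$ — this is exactly where the delicate choice \eqref{eq:theell}, engineered so both $\ell$-exponents in Corollary \ref{lem:AGSP}/\ref{cor:robusttotrunc} exceed $1$, is doing all the work, and it should be stated explicitly that $\beta<2$ is what makes it go through.
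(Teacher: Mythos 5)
Your plan follows the same route as the paper: invoke Proposition \ref{prop:Lg} with the choices \eqref{eq:mskchoice} to certify $\shrink|\L|^2\BOUNDED\delta$, iterate Lemma \ref{lem:enhancelemma} through the \textbf{while} loop with a union bound over the $\bigO(\log|\W|)$ passes, and substitute $\BB=(n2^k)^{\bigO(m)}$ into \eqref{eq:enhancetime} for the time bound and $D'/\xi_{r(\cdot)}$ for the output bond dimension. Steps 1 and 3 are essentially the paper's argument, and your bookkeeping remarks about which quantity controls the exponent and where $\beta<2$ enters are accurate.

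The one genuine gap is the justification of item (iv). You write that $\widetilde H_{\T^\mv}\ople H_{\T^\mv}$ gives $\E_0(\Into_\V^\dag\widetilde H_{\T^\mv}\Into_\V)\le\E_0(\Into_\V^\dag H_{\T^\mv}\Into_\V)$ and that this leads, after absorbing boundary terms, to $\E'\le\E_0(H_{\T^\mv})$. The first inequality is fine, but $\E_0(\Into_\V^\dag H_{\T^\mv}\Into_\V)\ge\E_0(H_{\T^\mv})$ (restricting a Hermitian operator to a subspace can only raise its ground energy), so this chain runs in the wrong direction and cannot produce an upper bound by $\E_0(H_{\T^\mv})$. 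You also place the $\bigO(\sqrt\delta\,k\log k)$ error on the lower side of the bound, whereas that error actually enters on the \emph{upper} side: the correct argument is the paper's --- take a global ground state $\ket z\in\Z$, project it to $\ket{\tilde z}\propto(\PP_\V\otimes\Id)\ket z$, and use $\|\ket{\tilde z}-\ket z\|\bounded\sqrt\delta$ together with the spectral-diameter bound $\bigO(k\log k)$ and the boundary-energy bound \eqref{eq:Zbound} to get $\E'\le\E_0(H_{\T^\mv})+4\sd+\bigO(\sqrt\delta\,k\log k)$, while the \emph{lower} bound $\E'\ge\E_0(H_{\T^\mv})-\delta$ comes directly from the left-hand sandwich in Corollary \ref{cor:sandwichwitheverything} (hard truncation preserves $\E_0$) and needs no $\sqrt\delta$ factor. (Your confusion here is partly inherited from the paper, whose own proof establishes the error term on the \emph{upper} side of $\E_0(H_{\T^\mv})$ even though the statement of item (iv) places it on the \emph{lower} side; as stated, the claim $\E'\le\E_0(H_{\T^\mv})$ is not what the paper's proof delivers, and would only hold after shifting the algorithm's output $\E'$ downward by the boundary and $\sqrt\delta\,k\log k$ terms.)
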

\begin{proof}
	$\log|\L|\le \ell$ is achieved by proposition \ref{prop:Lg} because $s\DEFtheta \ell^{1/\beta}$ and $m\BOUNDED \ell^{1+1/\beta}/\log \ell$. By the same proposition, 
	\[\shrink=2\exp\Big[-\tilde\Omega\Big(\ell^{1+\frac1\beta}\sqrt{\tfrac{\Delta}{\ell+\log\frac1{\delta}}}\Big)\Big]\asymp\exp\Big[-\tilde\Omega\Big(\ell^{\frac12+\frac1\beta}\sqrt\Delta\wedge\ell^{1+\frac1\beta}\sqrt{\tfrac\Delta{\log(1/\delta)}}\Big)\Big],\]
	so, the choice of $\ell$ can be made as in \eqref{eq:theell} to achieve $\shrink\BOUNDED e^{-2\ell}$, i.e., $\shrink|\L|^2\BOUNDED1$.

Now $\L$ satisfies the condition of lemma \ref{lem:enhancelemma} so at each iteration the output of $\enhancem$ is $\delta$-viable for $\Z$ with probability $1-\theta/\log_2(|\W|)$. We have chosen the lower dimension threshold $D'$ so that the input to $\enhancem$ satisfies \eqref{eq:xilower} at each iteration. Since at most $\log_2|\W|$ iterations can be made by the halving property of $\enhancem$, the bound on the error probability follows by a union bound.

\sparagraph{Energy approximation}
From corollary \ref{cor:sandwichwitheverything},
\[\E'=\E_0(\Into_\V^\dag\widetilde H_{\ST}\Into_\V)\ge\E_0(\widetilde{H}_{\ST})\ge \E_0(H_\ST)-\delta,\]
Moreover, for any $\ket z\in \Z$,
\begin{equation}\label{eq:Zbound}\bra z H_\ST\otimes \Id\ket z\le\E_0(H_\ST)+2|\partiale\ST|.\end{equation}
Indeed, let $\ket\zeta$ be a ground state of $H_\ST$. Then
\begin{align*}\bra z H_\ST\otimes\Id\ket z+\langle\rho^z_\partial, H_\partial\rangle+&\langle \rho^z_{\outside},H_\outside\rangle=\E_0(H)\text{ is a lower bound on}
\\\langle\rho^\zeta\otimes\rho^z_\outside, H\rangle\le\E_0(H_{\ST})+|\partiale\ST|+&\langle\rho^z_\outside,H_C\rangle,\end{align*}
where $\partial=\closure{\partiale\ST}$ and $\outside=\T\xpt\ST$. Rearranging proves \eqref{eq:Zbound} which we then combine with the upper bound of corollary \ref{cor:sandwichwitheverything} to get
\[\bra z \widetilde H_\ST\otimes \Id\ket z\le\E_0(H_\ST)+2|\partiale\ST|+\bigO(\delta/n).\] 
Pick any $\ket z\in\Z$ and let $\ket{\tilde z}=\PP_\V\otimes \Id\ket z/\|\PP_\V\otimes \Id\ket z\|$. Since $\V$ is $\delta$-viable for $\Z$, $\|\ket{\tilde z}-\ket z\|\bounded\sqrt\delta$ and thus, using the spectral diameter bound $\bigO(k\log k)$ for $\widetilde H_\ST$,
\[\bra{\tilde z} \widetilde H_\ST\otimes \Id\ket{\tilde z}\le\E_0(H_\ST)+2|\partiale\ST|+\bigO\big((k\log k)\sqrt\delta\big).\]
Bounding $|\partiale\T^\mv|\le2\sd$ yields item \ref{it:energyitem}.

\sparagraph{Complexity bounds}
By proposition \ref{prop:Lg} the meta-TNO for $\L$ has bond dimension bounded by $\BB=(n2^k)^{\bigO(m)}$. Substiting into \eqref{eq:enhancetime} yields that each call to $\enhancem$ takes time
\[\bigO\big((n2^k)^{\bigO(m)}B^4n^2\big)\bounded n^{\bigO\big(\Delta^{\frac{\beta+1}{2-\beta}}+\big(\frac{\log(1/\delta)}{\Delta}\big)^{\frac{\beta+1}2}\big)}B^4.\]
	The factor $\log|\W|$ incurred from the number of iterations is bounded by $\log|\HS|\bounded n$, and is therefore absorbed along with the factor $n^2$, proving \eqref{eq:timebound}.

Substituting the value of $\ell$ and using $\log|\L|\le\ell$ yields the bound
\begin{equation}\label{eq:thisdimbound}D'\BOUNDED(|\Z|+\log\log|\W|+\log\tfrac1\theta)\cdot\exp\Big[\tilde\Theta\Big(\Delta^{-\frac\beta{2-\beta}}+(\tfrac{\log(1/\delta)}{\Delta})^{\frac\beta2}\Big)\Big].\end{equation}
By lemma \ref{lem:enhancelemma} the output TN has bond dimension $D'/\xi_{r(\cdot)}$, substituting \eqref{eq:thisdimbound} and \eqref{eq:defxi} into this fraction yields \eqref{eq:bondbound}.
\end{proof}

\subsection{Proof of theorem \ref{thm:mainthm}}

The analysis of $\Enhancem$ is the main technical content of proving correctness of $\GSm$. The only remaining tool required is the tight analysis of the final error reduction from constant to inverse polynomial error using a simple AGSP. This final error reduction is a standard part of algorithms \cite{lvv15,chubb2016computing,alvv17}, but we analyze it for completeness and to note that no energy estimate is required.

The final error reduction requires an AGSP whose target space is exactly $\Z$, but a good shrinking factor $\shrink$ is not necessary. This allows for the use of an AGSP in the original sense \cite{arad2013area,arad2012improved}, which we call an \emph{isotropic} approximate projector for $\Z$ since it acts as a scalar on $\Z$. Somewhat unconventionally we do not require that the AGSP be normalized.
\begin{definition}[adapted from AGSP in \cite{arad2013area}] An \emp{isotropic} $\shrink$-approximate projector for $\Z\subsp\HS$ is a Hermitian operator $A$ such that $A\Into_\Z=\sqrt\lambda\Into_\Z$ and $\|A\Into_{\Z^\perp}\|\le\sqrt{\shrink\lambda}$ for some $\lambda>0$.
\end{definition}
Note that $A$ commutes with $\PP_\Z$. An isotropic $\shrink$-AGSP for Hamiltonian $H$ is an isotropic $\shrink$-approximate projector for $\Z(H)$. We make the standard choice of a simple AGSP construction with a poor shrinking factor but with the benefit of having target space exactly $\Z$:
\begin{observation}\label{obs:simpleAGSP}
	Let $\EG$ be the number of edges in the interaction graph. $A=|\EG|-H$ is an isotropic $\shrink$-AGSP for $H$ where
	\[\sqrt\shrink=1-\frac\Delta{|\EG|-\E_0}\le1-\frac\Delta{2n}.\]
\end{observation}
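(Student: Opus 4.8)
The statement is an elementary spectral computation, so the plan is simply to track the action of $A=|\EG|-H$ on the two spectral subspaces $\Z$ and $\Z^\perp$ of $H$. First I would record the crude norm bound $\|H\|\le\sum_{\e}\|H_\e\|\le|\EG|$, which confines the whole spectrum of $H$ to $[-|\EG|,|\EG|]$; in particular $\E_0\ge-|\EG|$ and $\E_{\max}(H)\le|\EG|$. Since $\Z=\opn{Ker}(H-\E_0)$ is a spectral subspace, $H$ commutes with $\PP_\Z$, and hence so does $A$, which is the parenthetical remark.

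On $\Z$ the operator $H$ acts as the scalar $\E_0$, so $A\Into_\Z=(|\EG|-\E_0)\Into_\Z$; thus $A$ is isotropic on $\Z$ with $\sqrt\lambda=|\EG|-\E_0$, which is nonnegative (and strictly positive when $\Delta>0$, since then $\E_0<\E_D\le|\EG|$). On $\Z^\perp$ the eigenvalues of $H$ are exactly $\E_D\le\E_{D+1}\le\cdots\le\E_{\max}$, all lying in $[\E_0+\Delta,\,|\EG|]$, because the gap is $\Delta=\E_D-\E_0$ and $\Z$ has dimension $D$. Consequently the eigenvalues of $A$ restricted to $\Z^\perp$ are the numbers $|\EG|-\E_j$ for $j\ge D$, which lie in $[0,\,|\EG|-\E_0-\Delta]$, so $\|A\Into_{\Z^\perp}\|\le|\EG|-\E_0-\Delta$.

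Dividing the last bound by $\sqrt\lambda=|\EG|-\E_0>0$ gives $\|A\Into_{\Z^\perp}\|\le\big(1-\tfrac{\Delta}{|\EG|-\E_0}\big)\sqrt\lambda$, so setting $\sqrt\shrink:=1-\tfrac{\Delta}{|\EG|-\E_0}\in[0,1)$ exhibits $A$ as an isotropic $\shrink$-approximate projector for $\Z(H)$, i.e. an isotropic $\shrink$-AGSP for $H$. Finally, to obtain the stated closed-form bound I would use $|\EG|=n-1$ for a tree on $n$ vertices together with $\E_0\ge-|\EG|$ to get $|\EG|-\E_0\le2(n-1)\le 2n$, hence $\sqrt\shrink\le1-\tfrac\Delta{2n}$. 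There is no genuine obstacle here; the only point requiring care is the eigenvalue bookkeeping that identifies $\Z^\perp$ with the span of the eigenvectors for $\E_D,\ldots,\E_{\max}$, which is what lets us bound the norm of $A$ on $\Z^\perp$ by $|\EG|-\E_0-\Delta$ rather than merely by $|\EG|-\E_0$ and thereby produces the factor $1-\Delta/(|\EG|-\E_0)$.
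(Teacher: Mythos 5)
Your proof is correct and takes essentially the same route as the paper: compute $\sqrt\lambda = |\EG|-\E_0$ from the action of $A$ on $\Z$, then bound $\|A\Into_{\Z^\perp}\|$ by $|\EG|-\E_0-\Delta$ by combining the global bound $H\le|\EG|\Id$ with the gap bound $\Into_{\Z^\perp}^\dag H\Into_{\Z^\perp}\opge(\E_0+\Delta)\Id$. The only (immaterial) cosmetic difference is that you phrase the $\Z^\perp$-bound in terms of explicit eigenvalues, whereas the paper phrases it via operator inequalities and the identity $A\PP_{\Z^\perp}=\PP_{\Z^\perp}A\PP_{\Z^\perp}$; you also spell out the elementary step $|\EG|-\E_0\le 2(n-1)\le 2n$ that the paper leaves implicit.
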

\begin{proof}
	$A\Into_\Z=(|\EG|-\E_0)\Into_\Z$, so let $\sqrt\lambda=|\EG|-\E_0$.  
	$\|H\|\le|\EG|$ implies that $A\opge0$, and $\Into_{\Z^\perp}H\Into_{\Z^\perp}\opge\E_0+\Delta$ implies $\Into_{\Z^\perp}A\Into_{\Z^\perp}\ople|\EG|-\E_0-\Delta$, so $\spec(\Into_\Z^\dag A\Into_\Z)\subset[0,|\EG|-\E_0-\Delta]$. Thus, $\|A\PP_{\Z^\perp}\|=\|\PP_{\Z^\perp} A\PP_{\Z^\perp}\|\le|\EG|-\E_0-\Delta$ where the first equality holds as operators. We then pick $\shrink$ such that $\sqrt{\shrink\lambda}=|\EG|-\E_0-\Delta$.
\end{proof}
\begin{observation}\label{obs:simpleAGSPapplication} 
	If $\Y$ is $\mu=(1-\delta)$-majorizing for $\Z$ and $A$ is an isotropic $\shrink$-approximate projector for $\Z$, then $A\Z$ is $\delta'$-almost majorizing for $\Z$ where
	\[\delta'=\frac{\delta}{\mu/\shrink+\delta}.\]
\end{observation}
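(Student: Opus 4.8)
We read the conclusion as a statement about $A\Y=\{A\ket y\mid\ket y\in\Y\}$ (the literal $A\Z$ equals $\Z$, since $A$ acts as a scalar on $\Z$, so that reading is vacuous). The plan is to prove the bound one ground state at a time: fix a unit vector $\ket\varphi\in\sphere(\Z)$ attaining $\shadow{A\Y}{\Z}$ and exhibit a single vector of $A\Y$ whose squared overlap onto $\ket\varphi$ is at least $\tfrac{\mu}{\mu+\shrink\delta}$. Write $A\Into_\Z=\sqrt\lambda\Into_\Z$ with $\lambda>0$; as already noted $A$ commutes with $\PP_\Z$, so $A^2$ preserves $\Z$ and $\Z^\perp$ and acts as $\lambda$ on $\Z$. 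For $\ket y\in\sphere(\Y)$ and $\ket\psi=A\ket y$ the routine commutator bookkeeping gives $\bracket\varphi\psi=\sqrt\lambda\,\bracket\varphi y$ and, decomposing $\ket y=\PP_\Z\ket y+\PP_{\Z^\perp}\ket y$ and using $\|A\Into_{\Z^\perp}\|^2\le\shrink\lambda$, the bound $\|\ket\psi\|^2=\bra y A^2\ket y\le\lambda\big(\|\PP_\Z\ket y\|^2+\shrink\|\PP_{\Z^\perp}\ket y\|^2\big)$, whence for every $\ket y\in\sphere(\Y)$
\[\|\PP_{A\Y}\ket\varphi\|^2\ \ge\ \frac{|\bracket\varphi y|^2}{\|\PP_\Z\ket y\|^2+\shrink\big(1-\|\PP_\Z\ket y\|^2\big)}.\]
The key move is to restrict $\ket y$ to the subspace $\Y_\varphi:=\Y\cap(\Z\ominus\CC\ket\varphi)^\perp$, on which $\PP_\Z\ket y=\bracket\varphi y\,\ket\varphi$, so the factor $\|\PP_\Z\ket y\|^2$ in the denominator coincides with the numerator $|\bracket\varphi y|^2=:t$; the right-hand side becomes $g(t)=t/\big(t+\shrink(1-t)\big)$, which is strictly increasing in $t$ (its derivative is $\shrink/\big(t+\shrink(1-t)\big)^2>0$, so no assumption $\shrink<1$ is needed), giving $\|\PP_{A\Y}\ket\varphi\|^2\ge g(c^2)$ with $c^2:=\|\PP_{\Y_\varphi}\ket\varphi\|^2$.

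The one step with genuine content — and the part I expect to be the main obstacle — is the sub-lemma $c^2\ge\mu$: projecting $\ket\varphi$ onto $\Y_\varphi$ must be essentially as effective as projecting onto all of $\Y$. The naive candidate $\PP_\Y\ket\varphi$ does not lie in $\Y_\varphi$ and its $\Z$-component need not be parallel to $\ket\varphi$, which would leave $\|\PP_\Z\ket y\|^2$ as large as $1$ and collapse the bound; instead the plan is to compute $c^2=\dist\big(\ket\varphi,\Y^\perp+(\Z\ominus\CC\ket\varphi)\big)^2$, valid in finite dimension via $\Y_\varphi^\perp=\Y^\perp+(\Z\ominus\CC\ket\varphi)$. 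Minimizing $\|\ket\varphi-\ket{u_1}-\ket{u_2}\|^2$ first over $\ket{u_1}\in\Y^\perp$ reduces this to $\min_{\ket{u_2}\in\Z\ominus\CC\ket\varphi}\|\PP_\Y(\ket\varphi-\ket{u_2})\|^2$; since $\ket\varphi-\ket{u_2}\in\Z$ and $\ket{u_2}\perp\ket\varphi$ we have $\|\ket\varphi-\ket{u_2}\|^2=1+\|\ket{u_2}\|^2\ge1$, and applying the hypothesis that $\Y$ is $\mu$-majorizing for $\Z$ to the vector $\ket\varphi-\ket{u_2}\in\Z$ gives $\|\PP_\Y(\ket\varphi-\ket{u_2})\|^2\ge\mu\|\ket\varphi-\ket{u_2}\|^2\ge\mu$ for every $\ket{u_2}$; taking the minimum yields $c^2\ge\mu$. (If $\mu=0$ the statement is vacuous, so one may assume $\mu>0$, which forces $\vd\Y\ge\vd\Z$ and hence $\Y_\varphi\ne\{0\}$ by a dimension count.)

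Assembling the pieces, $\|\PP_{A\Y}\ket\varphi\|^2\ge g(c^2)\ge g(\mu)=\tfrac{\mu}{\mu+\shrink(1-\mu)}=\tfrac{\mu}{\mu+\shrink\delta}$ for every $\ket\varphi\in\sphere(\Z)$, so $\shadow{A\Y}{\Z}\ge\tfrac{\mu}{\mu+\shrink\delta}$ and hence $\almaj{A\Y}{\Z}\le\tfrac{\shrink\delta}{\mu+\shrink\delta}=\tfrac{\delta}{\mu/\shrink+\delta}=\delta'$, which is precisely the claim that $A\Y$ is $\delta'$-almost majorizing for $\Z$. Everything outside the $c^2\ge\mu$ sub-lemma is commutator bookkeeping and the elementary monotonicity of $g$, so I anticipate no further difficulty.
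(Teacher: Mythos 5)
Your proof is correct, reads the typo $A\Z$ as the intended $A\Y$, and derives the claimed bound $\delta'=\delta/(\mu/\shrink+\delta)$. It is a genuinely different argument from the paper's, and the more careful of the two. The paper's proof picks $\ket y=\PP_\Y\ket z/\|\PP_\Y\ket z\|$, sets $\ket{y'}=A\ket y$, and computes a lower bound on $\|\PP_\Z\ket{y'}\|^2/\|\ket{y'}\|^2$, which is the overlap of $\ket{y'}/\|\ket{y'}\|$ with the \emph{subspace} $\Z$, not with the vector $\ket z$ that $\shadow{A\Y}{\Z}$ quantifies over; the two coincide only if $\PP_\Z\ket{y'}\parallel\ket z$, equivalently $\PP_\Z\PP_\Y\ket z\parallel\ket z$, which can fail when $\vd\Z>1$. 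Setting $q=|\bracket z y|^2\ge\mu$ and $p=\|\PP_\Z\ket y\|^2\in[q,1]$, the honest single-vector bound is $q/\big(p+\shrink(1-p)\big)$; since $p$ is free to equal $1$, this only gives $\ge\mu$, not the claimed improvement $\mu/(\mu+\shrink\delta)$. Your restriction to $\Y_\varphi=\Y\cap(\Z\ominus\CC\ket\varphi)^\perp$ is exactly the missing constraint: on $\Y_\varphi$ one has $p=q$ by construction, the ratio becomes the monotone $g(q)=q/(q+\shrink(1-q))$, and the sub-lemma $\|\PP_{\Y_\varphi}\ket\varphi\|^2\ge\mu$ (via $\dist(\ket\varphi,\Y^\perp+(\Z\ominus\CC\ket\varphi))^2$, Pythagoras on $\ket\varphi-\ket{u_2}\in\Z$ with $\|\ket\varphi-\ket{u_2}\|\ge1$, and $\mu$-majorization) delivers $g(\mu)=\mu/(\mu+\shrink\delta)$. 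In short, your proof fills a real gap in the paper's brief argument and is the version worth retaining.
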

\begin{proof}
	Let an arbitrary unit vector $\ket{z}\in\sphere(\Z)$ be given.
	Pick $\ket{y}=\PP_{\widetilde\Z}\ket{z}/\|\PP_{\widetilde\Z}\ket{z}\|$, so that $\bracket{y}{z}=\sqrt{|\bra{z}\PP_{\widetilde\Z}\ket{z}|}\ge\sqrt\mu$. It follows by Pythagoras that $\|\PP_\Z\ket z\|\le\sqrt{\delta}$. Let $\ket{y'}=A\ket{y}$ so that $\bracket{y'}{z}=\sqrt{\lambda}\bracket{y}{z}\ge\sqrt{\lambda\mu}$ and $\|\PP_{\Z^\perp}\ket{y'}\|\le\sqrt{\shrink\lambda\delta}$. It follows that $A\tilde\Z$ is $\mu'$-majorizing for $\Z$ where
	\[\mu'=\frac{\|\PP_\Z\ket{y'}\|^2}{\|\PP_\Z\ket{y'}\|^2+\|\PP_{\Z^\perp}\ket{y'}\|^2}=\frac1{1+\|\PP_{\Z^\perp}\ket{y'}\|^2/\|\PP_{\Z}\ket{y'}\|^2}\ge\frac\mu{\mu+\shrink\delta}.\]
\end{proof}

%

\begin{corollary}[Main theorem \ref{thm:mainthm}]
	Let $\epsilon,\phi>0$ be given.
	Let $\delta\DEFTHETA\tilde\Theta[(\log\frac1\Delta)^{-2}]$ and $\theta\DEFTHETA\phi/n$. Let
	\begin{align*}\xi&\DEFTHETA\frac1{D^2n^2}\exp\Big[-\tilde\Theta\Big(\Delta^{-\frac\beta{2-\beta}}(\log n)^\alpha\Big)\Big],\AND\\
	\xifinal&\DEFTHETA\frac{\epsilon^4}{D^2n^8}\exp\Big[-\tilde\Theta\Big(\Delta^{-\frac\beta{2-\beta}}(\log(n/\epsilon))^\alpha\Big)\Big],\quad\alpha={\tfrac32\tfrac{\beta}{\beta+1}}<1.\end{align*}
	Let $D=\vd\Z=n^{\bigO(1)}$ and $\epsilon=n^{-\bigO(1)}$ and let the remaining parameters of $\GSm$ be as in corollary \ref{cor:Enhancecor} with $\xi_{r(\cdot)}$ defined in terms of the function $r(\cdot)$ from the area law. Then $\GSm(\T,\HSH)$ runs in time $n^{\bigO\big(\Delta^{-\frac{\beta+1}{2-\beta}}\big)}$ and outputs a meta-TN $\Gamma$ with bond dimension
\[\frac{D^3n^8}{\epsilon^4}\exp\Big[\tilde\Theta\Big(\Delta^{-\frac\beta{2-\beta}}\Big(\log\frac{n}\epsilon\Big)^\alpha\Big)\Big],\WHERE \alpha={\tfrac32\tfrac{\beta}{\beta+1}}<1.\]
	such that $\TNspan\Gamma$ is $\epsilon$-close to $\Z(H)$ with probability $1-\phi$.
\end{corollary}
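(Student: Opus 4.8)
The plan is to read off a loop invariant for $\GSm$, then analyze the deterministic final error‑reduction step, and finally collect the complexity bounds; almost all the real work is already packaged in Corollary~\ref{cor:Enhancecor} and the preceding structural lemmas.

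\textbf{Loop invariant.} I would argue by induction on the order in which the while loop processes meta‑vertices that, on a success event of probability $\ge 1-(\#\text{processed so far})\cdot\theta$, each assigned $\V_{\T^\mv}$ is $\delta$‑viable for $\Z$, the energy estimate satisfies $\E_0(H_{\T^\mv})-\bigO(\sd+\sqrt\delta\,k\log k)\le\E_{\T^\mv}\le\E_0(H_{\T^\mv})$, and $\V_{\T^\mv}$ is encoded by a meta‑TN with bond dimension at most \eqref{eq:bondbound}. The base case is a meta‑leaf: $\V_{\{\vx\}}=\CC^\ld=\HS_\vx$ is trivially $0$‑viable and $\E_{\{\vx\}}=0=\E_0(H_{\{\vx\}})$ since $H_{\{\vx\}}=0$. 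For the inductive step at $\mv$ the input on line~\ref{ln:Wassign} is $\W=\bigotimes_{\mw\in\CLD(\mv)}\V_{\T^\mw}$, a tensor product over the at most $5$ disjoint standard subtrees produced by $\Refinem$; iterating Lemma~\ref{lem:robust} and Lemma~\ref{lem:viableclose} shows viability over disjoint regions combines with additively‑bounded error, so $\W$ is $\bigO(\delta)$‑viable and hence $\bigO(1/2)$‑viable because $\delta$ is a sufficiently small constant. Superadditivity of $\E_0$ together with $\|H_\e\|\le 1$ makes $\E=(\sum_\mw\E_{\T^\mw})-|\bigcup_\mw\partiale\T^\mw|$ a valid lower bound for $\E_0(H_{\T^\mv})$, while evaluating $H_{\T^\mv}$ on a product of local ground states gives $\E_0(H_{\T^\mv})-\E\le\sum_\mw(\E_0(H_{\T^\mw})-\E_{\T^\mw})+\bigO(\sd)\le\bigO(\sd+\sqrt\delta\,k\log k)$ by the inductive hypothesis; the choice $\delta\DEFTHETA\tilde\Theta[(\log\tfrac1\Delta)^{-2}]$ keeps $\sqrt\delta\log k$ below the constant needed for this to stay at most $k$. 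Both hypotheses of Corollary~\ref{cor:Enhancecor} are thus met, and with the parameters \eqref{eq:theell} and \eqref{eq:mskchoice} that corollary yields (except with probability $\theta$) all three parts of the invariant at $\mv$. The only randomness is in these $|\MV|=\bigO(n)$ calls to $\Enhancem$, so a union bound leaves the whole loop correct with probability $\ge1-|\MV|\theta\ge1-\phi$ once the constant in $\theta\DEFTHETA\phi/n$ is fixed.

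\textbf{Final error reduction.} When the loop exits, $\Y=\V_\T$ is $\delta$‑majorizing for $\Z$ (here $\T^\mr=\T$), with $\delta<1/8$, of dimension $D'$ and bond dimension $\bigO(D'/\xi)$. Line~\ref{ln:basic} iterates $T=\bigO(\tfrac n\Delta\log\tfrac n{\epsilon\Delta})$ times the update $\Y\leftarrow\Trim_{\xifinal}((|\EG|-H)\Y)$. By Observation~\ref{obs:simpleAGSP} the operator $A=|\EG|-H$ is an isotropic $\shrink$‑AGSP for $\Z$ with $\sqrt\shrink\le1-\tfrac\Delta{2n}$ and target space exactly $\Z$; by Observation~\ref{obs:simpleAGSPapplication} each application of $A$ replaces a $\delta^{(t)}$‑majorizing subspace by a $(1-\Omega(\Delta/n))\delta^{(t)}$‑majorizing one (using $\delta^{(t)}<1/8$), and by Corollary~\ref{cor:usefultrimcor} the subsequent $\Trim_{\xifinal}$ adds at most $\epsilon_{\mathrm{trim}}$ to the viability error, where $\xifinal$ is precisely the threshold that corollary prescribes for $\epsilon_{\mathrm{trim}}$ of order $\mathrm{poly}(\epsilon\Delta/n)$ (up to absorbing polynomial factors into $\xifinal$). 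Iterating $\delta^{(t+1)}\le(1-\Omega(\Delta/n))\delta^{(t)}+\epsilon_{\mathrm{trim}}$ for $T$ steps drives $\Y$ to be $\epsilon'$‑majorizing for $\Z$ with $\epsilon'=\mathrm{poly}(\epsilon\Delta/n)$, and the error stays below $1/8$ throughout. Since $A$ is a meta‑TNO of bond dimension $\bigO(1)$ and $\Trim_{\xifinal}$ restores the bound $\vd\Y/\xifinal$ by Observation~\ref{obs:trimdim}, every iterate has dimension $\le D'$ and bond dimension $\bigO(D'/\xifinal)$.

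\textbf{Output and complexity.} Writing $\widehat H=\Into_\Y^\dag H\Into_\Y\opge\E_0(H)\Id_\Y$, the $\epsilon'$‑majorizing property makes $\widetilde\Z_0:=\PP_\Y\Z$ a $D$‑dimensional subspace of $\Y$ that is $\epsilon'$‑close to $\Z$ (Lemma~\ref{lem:symmetric}), and since $H-\E_0(H)\ople\bigO(n)\cdot\PP_{\Z^\perp}$ each unit vector of $\widetilde\Z_0$ has $\widehat H$‑energy at most $\E_0(H)+\bigO(n\epsilon')\le\E_0(\widehat H)+\epsilon\Delta$, so $\widetilde\Z_0$ lies in the output window $\V_{(\infty,\E_0(\widehat H)+\epsilon\Delta]}(\widehat H)$; conversely any unit $\ket\phi\in\Y$ with $\widehat H$‑energy $\le\E_0(\widehat H)+\epsilon\Delta$ satisfies $\Delta\|\PP_{\Z^\perp}\ket\phi\|^2\le\bra\phi(H-\E_0(H))\ket\phi\le\bigO(\epsilon\Delta)$, hence is $\bigO(\epsilon)$‑close to $\Z$, so the window has dimension exactly $D$, it majorizes $\Z$ (it contains $\widetilde\Z_0$), and by Lemma~\ref{lem:symmetric} it is $\epsilon$‑close to $\Z$. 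The algorithm realizes it as a meta‑TN by contracting $\Y$ against the meta‑TNO for $H$ to form the $D'\times D'$ matrix $\widehat H$, diagonalizing, and composing $\Y$'s meta‑root index with the isometry onto the window, inheriting $\Y$'s inner bonds; substituting the bounds on $D'$ and $\xifinal$ gives bond dimension $\tfrac{D^3n^8}{\epsilon^4}\exp[\tilde\Theta(\Delta^{-\frac\beta{2-\beta}}(\log\tfrac n\epsilon)^\alpha)]$. For the runtime, the $\bigO(n)$ calls to $\Enhancem$ dominate: each runs in time $n^{\bigO(\Delta^{-(\beta+1)/(2-\beta)}+(\log(1/\delta)/\Delta)^{(\beta+1)/2})}B^4$ with $B=n^{\bigO(\cdot)}$ the input bond dimension (Corollary~\ref{cor:Enhancecor}), and since $\delta$ is a function of $\Delta$ the second exponent is $\Otilde(\Delta^{-(\beta+1)/2})$, dominated by the first because $\tfrac1{2-\beta}>\tfrac12$; constructing $\MT$ and the truncated cluster‑expansion meta‑TNO costs $(\sd\ld)^{\bigO(L)}n$ and the $T$ iterations of the final loop cost $\bigO(n^2(D'/\xifinal)^4)$ apiece, both absorbed into $n^{\bigO(\Delta^{-(\beta+1)/(2-\beta)})}$ since $(\log(n/\epsilon))^\alpha=o(\log n)$ makes $\exp[\tilde\Theta(\Delta^{-\beta/(2-\beta)}(\log(n/\epsilon))^\alpha)]=n^{o(\Delta^{-\beta/(2-\beta)})}$. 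Hence the total running time is $n^{\bigO(\Delta^{-(\beta+1)/(2-\beta)})}$.

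\textbf{Main obstacle.} The delicate points are the energy bookkeeping — ensuring the per‑level additive error $\bigO(\sd+\sqrt\delta\,k\log k)$ does not accumulate past $k$ as one climbs $\MT$, which is exactly what forces $\delta=\tilde\Theta[(\log\tfrac1\Delta)^{-2}]$ — and controlling the additive trimming errors across the $\Theta(\tfrac n\Delta\log\tfrac n\epsilon)$ steps of the final loop so that the AGSP's geometric contraction outruns them and $\Y$ reaches $\mathrm{poly}(\epsilon\Delta/n)$ viability error; the composition of viability over the disjoint child subtrees and the extraction of the output window are routine but must be invoked with care.
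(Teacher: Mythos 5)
Your proof follows the same route as the paper: an induction over the META‑tree with the same three invariants (viability, two‑sided energy estimate, bond dimension), a union bound over the $\bigO(n)$ calls to $\Enhancem$, the AGSP/trim contraction for the final loop via Observations~\ref{obs:simpleAGSP}--\ref{obs:simpleAGSPapplication} and Corollary~\ref{cor:usefultrimcor}, and the spectral‑window extraction via the symmetry lemma. Two small remarks: the combination of viability across the disjoint child subtrees is Observation~\ref{obs:disjointviable} (giving the clean $\delta_A+\delta_B$ bound), not Lemmas~\ref{lem:robust}/\ref{lem:viableclose} which give a slightly weaker bound with a cross term; and the paper tracks the energy invariant as $E_0(H_{\T^\mv})-(k/5-2|\partiale\T^\mv|)\le\E_{\T^\mv}$ precisely so that the sum over $\le 5$ children plus the partition‑boundary correction closes exactly to $E_0(H_{\T^\mv})-k$, whereas your $\bigO(\sd+\sqrt{\delta}\,k\log k)$ form works but hides how the five per‑child errors and $|\partiale\P|$ recombine without inflation.
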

\begin{proof}
	Corollary \ref{cor:Enhancecor} asserts the correctness of $\Enhancem$, i.e., it proves that a correct input yields a correct output with probability $\theta\BOUNDED\phi/n$. By a union bound we can view this implication as deterministic and true for all $\bigO(n)$ applications of $\Enhancem$ simultaneously with probability $1-\phi$. What remains is an induction on the height of meta-branch $\MT^\mv$, where the induction hypothesis consists of the following assertions about the partial solutions $\V_{\T^\mv}$ and local energy lower bounds $\E_{\T^\mv}$:
	\begin{enumerate}
		\item\label{it:firstres} $\V_{\T^\mv}\subsp\HS_{\T^\mv}$ is $\bigO(\delta)$-viable for $\Z(H)$, and
		\item\label{it:secondres} $\E_0(H_{\T^\mv})-({k}/{5}-2\underbrace{|\partiale\T^\mv|}_{\le4\sd})\le\E_{\T^\mv}\le\E_0(H_{\T^\mv})$.
	\end{enumerate}
	The base case is clear since $\HS_\vx$ is $0$-viable for any subspace of $\HS_\vx$ and $\E_0(\HS_\vx)=0$ ($k$ is chosen $\ge40\sd$ so that $k/5-|\partiale\T^\mv|>0$). For the induction step, let $\mv$ be a meta-vertex and assume that items \ref{it:firstres} and \ref{it:secondres} hold for each of the at most $5$ children $\mw\in\CLD(\mv)$. In line \ref{ln:Wassign} of $\GSm$, each $\V_{\T^\mw}$ is $\delta$-viable by the inductive hypothesis, so $\W=\bigotimes_{\mw\in\CLD(\mv)}\V_{\T^\mw}$ is $5\delta<\frac12$-viable by observation \ref{obs:disjointviable}. Still in line \ref{ln:Wassign} of $\GSm$, $\E=\sum_{\mw\in\CLD(\mv)}\E_{\T^\mv}-C$ satisfies
	\[\sum_{\mw\in\CLD(\mv)}\Big(\E_0(H_{\T^\mw})-({k}/5-2|\partiale\T^\mw|)\Big)-|\partiale\P|\le\E\le\sum_{\mw\in\CLD(\mv)}\E_0(H_{\T^\mw})-|\partiale\P|,\]
	where we have used the partition boundary $\partiale\P=\bigcup_{\mw\in\CLD(\mv)}\partiale\T^\mw$. The sum of the Hamiltonians $H_{\T^\mw}$ in the partition agrees with the $H_{\mv}$ except on the partition boundary, so $H_{\T^\mv}-\sum_{\mw}H_{\T^\mw}=H_{\partiale\P}$. By the triangle inequality, 
	\[\E_0(H_{\T^\mv})-k\le\E_0(H_{\T^\mv})-k+2\sum_{\mw}|\partiale\T^\mw|-2|\partiale\P|\le\E\le\E_0(H_{\T^\mv}).\]
	Now both conditions of corollary \ref{cor:Enhancecor} are satisifed: $\W\subsp\HS_{\T^\mv}$ is $\bigO(1/2)$-viable for $\Z$, and $\E_0(H_{\T^\mv})-k\le\E\le\E_0(H_{\T^\mv})$.
	Applying corollary \ref{cor:Enhancecor} we get that $\V_{\T^\mv}$ is $\delta$-viable for $\Z$ which establishes the first half of the induction step. The corollary further yields 
	\[\E_0(H_{\T^\mv})-4\sd-\bigO(\sqrt{\delta}k\log k)\le\E\le\E_0(H_{\T^\mv}).\]
	Having chosen $k$ as a large constant relative to $\sd$ and $\delta$ such that $\sqrt{\delta}\log k$ is a small constant we get $12\sd+\bigO(\sqrt{\delta}k\log k)\le k/5$, establishing item \ref{it:secondres} and concluding the induction step.
The induction proof shows that the loop exits with $\V_\T\subsp\HS$ which is $\delta$-viable for $\Z$.

The first term of the time complexity bound is obtained by substituting \eqref{eq:bondbound} into \eqref{eq:timebound}. Note that $n^{\bigO(\Delta^{\frac{\beta+1}{2-\beta}})}=\exp[\bigO(\Delta^{\frac{\beta+1}{2-\beta}}\log n)]$ absorbs $\exp[\bigO(\Delta^{\frac{\beta}{2-\beta}}(\log n)^\alpha)]$ because $\alpha<1$.

	It remains to analyze the final error reduction step, which contributes the second term to the time complexity bound. Observations \ref{obs:simpleAGSP} and \ref{obs:simpleAGSPapplication} imply that in each iteration of line \ref{ln:basic}, the application of $A=|\EG|-H$ divides the majorization error by a factor $\mu/\sigma+\delta\ge\frac1{2\sigma}+\frac12=1+\Omega(\Delta/n)$. Equivalently
	\[\almaj{A\Y}{\Z}\le(1-\Omega(\Delta/n))\almaj{\Y}{\Z}.\]
	Let $\epsilon'\defsmall\frac{\epsilon^2\Delta^2}{n^2}$.
	By corollary \ref{cor:usefultrimcor} and the choice of $\xifinal$, $\Trim_{\xifinal}$ increases the error only by $c\epsilon'\Delta/n=\Theta(\frac{\epsilon^2\Delta^3}{n^3})$. Thus, while the subspace $\Y$ at the $t$\ts{th} iteration satisfies $\almaj{\Y_t}{\Z}\ge\epsilon'$,
	\[\almaj{\Y_{t+1}}{\Z}\le(1-\Omega(\Delta/n))\almaj{\Y_t}{\Z}\]
	remains true for the trimmed subspace $\Y_{t+1}=\Trim_{\xifinal}[A\Y_{t+1}]$. So the majorization error $\delta$ is bounded by $e^{-\Omega(-t\Delta/n)}\vee\epsilon'$ which equals $\epsilon'$ for $t\deflarge\frac{n}\Delta\log\frac1{\epsilon'}$, hence this number of iterations of the error reduction loop. 

	In the last line of $\GSm$, $\Y$ is $\epsilon'$-almost majorizing for $\Z$ as we have just shown. So $\PP_\Y\Z$ is $\epsilon'$-close to $\Z$ by lemma \ref{lem:viableclose}. $\Z'=\PP_\Y\Z\subsp\Y$ is a $D=|\Z|$-dimensional subspace such that $\E_0\le\Into_{\Z'}^\dag H\Into_{\Z'}\le\E_0+\bigO(\sqrt{\epsilon'}\|H\|)=\E_0+\bigO(n\sqrt{\epsilon'};)$ since $\|\PP_\Z^\perp\ket{\psi'}\|\le\sqrt{\epsilon'}$ for each $\ket{\psi'}\in\sphere(\Z')$. We can then lower-bound the dimension of the spectral subspace $\widetilde\Z=\V_{(-\infty,\E_0+\epsilon\Delta]}(\Into_{\Y}^\dag H\Into_\Y)$ by $\vd{\widetilde\Z}\ge D$.

	By a Markov bound, $\Into_{\widetilde\Z}^\dag H\Into_{\widetilde\Z}\opge\E_0+\Delta\Into_{\widetilde\Z}^\dag\PP_{\Z^\perp}\Into_{\widetilde\Z}$. Combining with the definition of $\widetilde\Z$ as a spectral subspace, we obtain $\Delta\Into_{\widetilde\Z}^\dag\PP_{\Z^\perp}\Into_{\widetilde\Z}\ople \epsilon\Delta$. That is, $\Z$ is $\epsilon$-majorizing for $\widetilde\Z$. $\vd{\widetilde\Z}\ge D$ then implies that $\widetilde\Z\close\epsilon\Z$ by lemma \ref{lem:symmetric} (symmetry).

\end{proof}


\section{Detailed construction of META-tree.}
\label{sec:detMETA}
\newcommand\rt{\mathrm{r}}
\begin{definition}\leavevmode
	\begin{itemize} 
		\item Let $\ee$ be a non-empty set of directed edges with the same endpoint $\rt$.  The \emp{branch \({\brc{\ee}}\) suspended from \(\ee\)} is the unique closed subtree $\ST$ such $\partialv\ST=\{\rt\}$ and $\partiale\ST=\ee$ (as a set of undirected edges). $\rt$ is the \emp{root} of $\brc{\ee}$.
			
		\item Let \(\vx_\ell,\vx_r\in\VX\) be distinct. The \emp{section \(\boldsymbol{\hmk{\vx_\ell}{\vx_r}}\) suspended from \(\boldsymbol{\vx_\ell}\) and \(\boldsymbol{\vx_r}\)} is the unique open subtree $\ST$ such that $\partialv\ST=\{\vx_\ell,\vx_r\}$.
		\end{itemize}
\end{definition}
Note that $\brc\ee$ is disjoint from $\ee$. We think of $\e\in\ee$ as pointing down towards $\brc\ee$. The children of a vertex $\vx\in\brc{\ee}$, is the set $\cld_\ee(\vx)=\{\w\in\brc{\ee}\mid\dist(\rt,\w)>\dist(\rt,\vx)\}$. Let $\ee(\vx)=\{(\vx,\w)\mid\w\in\cld_\ee(\vx)\}$ be the set of directed edges pointed downwards from $\vx$. \\
\noindent\begin{minipage}[t]{.45\linewidth}
\begin{definition}	
	The \emp{trisection of branch} \(\brc{\ee}\) is the output of the procedure $\trisectm$ (right), where $\hat\e$ is the reversal of $\e$.

	The trisection of a single vertex $\vx$ is $(\vx,\emptyset,\emptyset)$.
\end{definition}
\end{minipage}
\begin{minipage}[t]{.6\linewidth}
\begin{procedure}[H]
	\topcaption{$\trisectm(\brc{\ee})$}
	${j\leftarrow0}$; $\vx_0\leftarrow\rt$; $\ee_0\leftarrow\ee$\;
	\setstretch{1.5}
	\While{$\nV{\brc{\ee_j}}>\frac12\nV{\brc{\ee}}$}{
		$\e_{j+1}=(\vx_j,\vx_{j+1})\leftarrow\underset{\e\in\ee(\vx_j)}{\opn{argmax}}\nV{\brc{\{\e\}}}$\;
		$\ee_{j+1}\leftarrow\{\e_{j+1}\}$; ${j\mathsmaller{++}}$\;
		}
		\vspace{.8ex}
		\textbf{output }{\(\{\brc{\ee\cup\{\hat\e_1\}},\hmk{\vx}{\vx_j},\brc{\{\e_j\}}\}\)}
\end{procedure}
\end{minipage}
\begin{lemma}\label{lem:brclemma}
	If \(\nV{\brc{\ee}}>1\) then each subtree $\ST$ in the trisection of the branch \(\brc{\ee}\) satisfies $\nV{\ST}\le\frac{4\sd-1}{4\sd}\nV{\brc{\ee}}$.
\end{lemma}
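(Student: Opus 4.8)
The plan is to track the sizes of the three output subtrees $\brc{\ee\cup\{\hat\e_1\}}$, $\hmk{\vx}{\vx_j}$, and $\brc{\{\e_j\}}$ separately, using the stopping rule of $\trisectm$ together with the bounded-degree assumption. Write $N=\nV{\brc\ee}$ and let $j$ be the value of the loop counter at termination. First I would handle the last output tree $\brc{\{\e_j\}}$: since $\e_j=(\vx_{j-1},\vx_j)$ was chosen as the argmax over $\ee(\vx_{j-1})$ of the branch size, and the loop had \emph{not} yet stopped at step $j-1$ (so $\nV{\brc{\ee_{j-1}}}>\frac12 N$), the vertex $\vx_{j-1}$ has at most $2\sd$ children, hence at most $2\sd$ descending branches partitioning $\brc{\ee_{j-1}}\setminus\{\vx_{j-1}\}$; the maximal one, $\brc{\{\e_j\}}$, thus satisfies $\nV{\brc{\{\e_j\}}}\ge\frac1{2\sd}(\nV{\brc{\ee_{j-1}}}-1)$, but more importantly we need an \emph{upper} bound — and $\nV{\brc{\{\e_j\}}}\le\frac12 N$ holds because at the final step the loop condition failed: $\nV{\brc{\ee_j}}=\nV{\brc{\{\e_j\}}}\le\frac12 N\le\frac{4\sd-1}{4\sd}N$.

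Next I would bound the middle section $\hmk{\vx}{\vx_j}$: its vertex set is contained in the path $\vx_0,\vx_1,\ldots,\vx_j$ together with the small side-branches hanging off of it, but in fact (by the definition of the section suspended from two vertices as an \emph{open} subtree) $\hmk{\vx}{\vx_j}$ consists only of the internal path vertices and edges. Each internal vertex $\vx_i$ ($0<i<j$) contributes its "off-path" descending branches, none of which is $\brc{\{\e_{i+1}\}}$; since $\vx_i$ had degree at most $2\sd$, these off-path branches plus the continuing branch $\brc{\{\e_{i+1}\}}$ together with one up-edge exhaust the $\le 2\sd$ neighbors, and $\e_{i+1}$ was the \emph{largest} descending branch, so the total off-path mass at $\vx_i$ is at most $(2\sd-1)\nV{\brc{\{\e_{i+1}\}}}$... — the cleaner route is: $\hmk{\vx}{\vx_j}$ is disjoint from $\brc{\ee\cup\{\hat\e_1\}}$ and from $\brc{\{\e_j\}}$, and I claim $\nV{\hmk\vx{\vx_j}}\le N-\nV{\brc{\{\e_j\}}}-\nV{\brc{\ee\cup\{\hat\e_1\}}}+\bigO(1)$, then bound $\nV{\brc{\ee\cup\{\hat\e_1\}}}$ from below. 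For the first output tree, $\brc{\ee\cup\{\hat\e_1\}}$ is exactly $\brc\ee$ with the branch $\brc{\{\e_1\}}$ (the entire subtree descending through $\e_1$) removed; since $\e_1$ was chosen maximal among the $\le 2\sd$ descending branches at $\rt$, we get $\nV{\brc{\{\e_1\}}}\ge\frac1{2\sd}(N-1)$, hence $\nV{\brc{\ee\cup\{\hat\e_1\}}}\le N-\frac1{2\sd}(N-1)\le\frac{2\sd-1}{2\sd}N+\bigO(1)$, which is comfortably below $\frac{4\sd-1}{4\sd}N$ for $N$ large; the $\bigO(1)$ slack is absorbed using the hypothesis $N>1$ and the $\max\{\cdot,1\}$ in proposition~\ref{prop:therefineprop} (here stated just as the ratio bound, valid once $N$ exceeds a small threshold, with the degenerate small cases checked by hand).

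The remaining and genuinely delicate bound is on the section $\hmk\vx{\vx_j}$: I would argue that at the \emph{first} step where the loop condition becomes false, namely step $j$, we have $\nV{\brc{\ee_j}}\le\frac12 N$ while $\nV{\brc{\ee_{j-1}}}>\frac12 N$; since passing from $\brc{\ee_{j-1}}$ to $\brc{\ee_j}=\brc{\{\e_j\}}$ only removes $\vx_{j-1}$ and its non-maximal descending branches, and the maximal one was kept, a bounded-degree argument gives $\nV{\brc{\ee_{j-1}}}\le 2\sd\,\nV{\brc{\{\e_j\}}}+1$, so $\nV{\brc{\{\e_j\}}}\ge\frac1{2\sd}(\frac12 N -1 )=\frac{N}{4\sd}-\bigO(1)$. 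Combined with $\nV{\brc{\ee\cup\{\hat\e_1\}}}\ge N-\nV{\brc{\{\e_1\}}}$ and crude containment $\nV{\hmk\vx{\vx_j}}\le N - \nV{\brc{\ee\cup\{\hat\e_1\}}} \ge$ ... the clean inequality I will actually use is $\nV{\brc{\ee\cup\{\hat\e_1\}}}+\nV{\hmk\vx{\vx_j}}+\nV{\brc{\{\e_j\}}}\le N+\bigO(1)$ (the three pieces plus a constant from shared boundary vertices cover $\brc\ee$), whence $\nV{\hmk\vx{\vx_j}}\le N - \nV{\brc{\{\e_j\}}}+\bigO(1)\le N-\frac{N}{4\sd}+\bigO(1)=\frac{4\sd-1}{4\sd}N+\bigO(1)$. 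The main obstacle is getting the case analysis of which neighbors of $\vx_{j-1}$ (and of $\rt$) count as up-edges versus descending branches exactly right, and making the $\bigO(1)$ boundary corrections rigorous so that the stated clean ratio $\frac{4\sd-1}{4\sd}$ holds on the nose rather than only asymptotically; I would resolve this by treating $\nV\ST\le 2$ directly and, for larger $\ST$, noting the additive constants are at most $2$ and absorbed since $\frac12 N < \frac{4\sd-1}{4\sd}N$ leaves room $\frac{2\sd-1}{4\sd}N\ge 1$ once $N\ge 2$.
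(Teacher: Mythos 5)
Your plan follows the same strategy as the paper's proof: use the stopping condition to bound $\nV{\brc{\{\e_j\}}}$ from above by $N/2$, use bounded degree at $\vx_{j-1}$ to bound it from below (the maximal descending branch must carry a $\Theta(1/\sd)$ fraction of the mass), and conclude via the fact that the trisection partitions $\brc{\ee}$. However, your execution introduces slack that the paper's argument deliberately avoids, and one of your claims is wrong.

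First, the statement that "the three pieces plus a constant from shared boundary vertices cover $\brc\ee$" is not right: the trisection is an \emph{exact} partition. The middle piece $\hmk{\rt}{\vx_j}$ is an \emph{open} subtree, so by definition it excludes both boundary vertices $\rt$ and $\vx_j$, which live in the two branches. There is no $\bigO(1)$ correction here. Second, and more consequentially, your lower bound $\nV{\brc{\{\e_j\}}}\ge\frac{1}{2\sd}\bigl(\frac{N}{2}-1\bigr)$ carries a $-1$ slack that the lemma's clean ratio $\frac{4\sd-1}{4\sd}$ cannot absorb in general; your remark that this is handled "once $N\ge 2$" is not a proof, and for $N$ of size comparable to $\sd$ the slack genuinely matters. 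The paper avoids the slack entirely by observing that
\[\nV{\T_{\mu-1}}\;=\;1+\sum_{\e\in\ee(\vx_{\mu-1})}\nV{\brc{\{\e\}}}\]
is a sum of at most $2\sd$ positive integer terms (counting the $1$ as one of them), each bounded above by the maximum $\nV{\T_\mu}\ge1$; hence $\nV{\T_{\mu-1}}\le 2\sd\,\nV{\T_\mu}$ with no additive constant, giving $\nV{\T_\mu}>\frac{N}{4\sd}$ exactly. Finally, your side-argument bounding $\nV{\brc{\ee\cup\{\hat\e_1\}}}$ via maximality of $\e_1$ at $\rt$ is a detour the paper does not need: once the exact lower bound on $\nV{\T_\mu}$ is in place, the exact partition $\nV{\T_\ell}+\nV{\T_{\MID}}+\nV{\T_\mu}=N$ controls both the left branch and the middle section simultaneously, with no case analysis on small $N$.
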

\begin{proof}
	Let $\mu$ be the final value of $j$ and let $\T_j=\brc{\ee_j}$ for $j=0,1,\ldots,\mu$. $\nV{\brc{\ee}}\ge2$ by assumption. If $\nV{\T_j}>\frac12\nV{\brc{\ee}}$ then $\nV{\T_j}>1$ which implies that $\cld(\vx_{j})$ is nonempty. So the body of the while-loop is justified.
	
	By the condition of the while-loop, the rightmost subtree of the trisection, $\T_\mu$, satisfies $\nV{\T_\mu}\le\frac12\nV{\brc{\ee}}$. It remains to bound the first two subtrees. Since the $\mu-1$\ts{st} iteration was carried out, $\nV{\T_{\mu-1}}>\frac12\nV{\brc{\ee}}$. $\nV{\T_\mu}$ is a maximal term in the sum $\nV{\T_{\mu-1}}=1+\sum_{\e\in\ee({\vx_{\mu-1}})}\nV{\brc{\e}}$ which is over at most $2\sd$ terms (including the first $1$), so
 \[\nV{\T_\mu}\ge\frac1{2\sd}\nV{\T_{\mu-1}}>\frac1{4\sd}\nV{{\brc{\ee}}}\]
 So the total fraction of vertices in the left and middle subtrees of the trisection is at most $1-\frac{1}{4\sd}$.
\end{proof}

\begin{definition}[Trisection of a section]
	Given a section \(\ST=\hmk{\vx_\ell}{\vx_r}\), write the shortest path from \(\vx_\ell\) to \(\vx_r\) as \(\vx_\ell=\vx_0,\vx_1,\ldots,\vx_{\dist}=\vx_r\), and let
	\[\T_j=\brc{\ee_j},\WHERE\ee_j=\{(\vx_{j-1},\vx_j),(\vx_{j+1},\vx_j)\}\FOR j=1,\ldots,{\dist}-1.\]
	The \emp{trisection} of \(\hmk{\vx_\ell}{\vx_r}\) is $\trisectm(\hmk{\vx_\ell}{\vx_r})=\{\hmk{\vx_\ell}{\vx_\mu },\T_\mu,\hmk{\vx_{\mu +1}}{\vx_r}\}$, where
\[\mu=\max\Big\{j=1,\ldots,\dist-1\::\:\sum_{i<j}\nV{\T_i}\le\sum_{i\ge j}\nV{\T_i}\Big\}.\]
	\end{definition}

Whether $\ST$ is a branch or a section the first and last subtrees in a trisection are of the same type as the input while the middle subtree is of the opposite type. 
\begin{lemma}\label{lem:hmklemma}
Let $\dist(\vx_\ell,\vx_r)\ge2$ and let $\{\T_\ell,\T_\mu,\T_r\}$ be the trisection of $\hmk{\vx_\ell}{v_r}$ where $\T_\mu$ is the branch. Then both \emp{sections} in the trisection satisfy $\nV{\T_\ell},\nV{\T_r}\le\frac12\nV{\hmk{\vx_\ell}{\vx_r}}$.
\end{lemma}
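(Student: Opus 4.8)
The plan is to reduce the statement to an elementary fact about partial sums of the sizes $n_j:=\nV{\T_j}$ of the branches hanging off the spine. The one genuinely substantive step is the bookkeeping that turns the (open) section $\hmk{\vx_\ell}{\vx_r}$ into a disjoint union of the $\T_j$'s.

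First I would record that, with spine $\vx_\ell=\vx_0,\vx_1,\ldots,\vx_{\dist}=\vx_r$, every vertex of $\hmk{\vx_\ell}{\vx_r}$ is either an internal spine vertex $\vx_j$ ($1\le j\le\dist-1$) or lies in a subtree hanging off a unique such $\vx_j$ in a direction other than toward $\vx_{j-1}$ or $\vx_{j+1}$; in either case it belongs to exactly one $\T_j=\brc{\ee_j}$ (and to no spine edge). Hence $\VX_{\hmk{\vx_\ell}{\vx_r}}=\bigsqcup_{j=1}^{\dist-1}\VX_{\T_j}$, so, writing $N:=\nV{\hmk{\vx_\ell}{\vx_r}}$, we have $N=\sum_{j=1}^{\dist-1}n_j$. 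Applying the same observation to the two section-pieces of the trisection — and using that cutting the two spine edges at an internal vertex $\vx_j$ of a sub-section yields the same branch $\T_j$ as in the full section — gives $\nV{\T_\ell}=\nV{\hmk{\vx_\ell}{\vx_\mu}}=\sum_{j<\mu}n_j$ and $\nV{\T_r}\le\sum_{j>\mu}n_j$ (the inequality also absorbing the boundary case $\mu=\dist-1$, where $\nV{\T_r}=0$).

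Then I would translate the defining property of $\mu$: since $\{i<j\}$ and $\{i\ge j\}$ partition $\{1,\ldots,\dist-1\}$, the condition $\sum_{i<j}n_i\le\sum_{i\ge j}n_i$ is exactly $\sum_{i<j}n_i\le N/2$, so $\mu$ is the largest $j$ with $\sum_{i<j}n_i\le N/2$ (well defined, as $j=1$ gives the empty sum). This immediately yields $\nV{\T_\ell}=\sum_{i<\mu}n_i\le N/2$. For $\T_r$: if $\mu=\dist-1$ then $\nV{\T_r}=0\le N/2$; otherwise $\mu+1\in\{1,\ldots,\dist-1\}$ and maximality of $\mu$ forces $\sum_{i<\mu+1}n_i>N/2$, hence $\nV{\T_r}\le\sum_{i>\mu}n_i=N-\sum_{i\le\mu}n_i<N/2$. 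In all cases $\nV{\T_\ell},\nV{\T_r}\le\tfrac12\nV{\hmk{\vx_\ell}{\vx_r}}$, proving the lemma. This last part is the same median-split argument already used in Lemma~\ref{lem:brclemma}; the only obstacle worth flagging is the first step's careful accounting of which vertices and edges of the section and its sub-sections lie in which $\T_j$.
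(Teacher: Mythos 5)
Your proof is correct and follows essentially the same median-split argument as the paper: the bound on $\T_\ell$ is immediate from the defining inequality for $\mu$, and maximality of $\mu$ forces the left-plus-middle portion to exceed half, so what remains for $\T_r$ is below half. You are in fact slightly more careful than the paper's one-line proof in making the vertex bookkeeping explicit (identifying $\nV{\hmk{\vx_\ell}{\vx_r}}$ with $\sum_j n_j$ and writing the $\T_r$ bound as an inequality $\nV{\T_r}\le\sum_{j>\mu}n_j$, which safely covers both the boundary case $\mu=\dist-1$ and the fact that the branch at $\vx_{\mu+1}$ is not included in $\hmk{\vx_{\mu+1}}{\vx_r}$), but the underlying idea is identical.
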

\begin{proof}
The maximum is over a nonempty set since $\nV{\hmk{\vx_\ell}{\vx_1}}=0\le\frac12\nV{\hmk{\vx_\ell}{\vx_r}}$, and the bound on $\T_\ell$ is immediate from the definition. Furthermore, the maximality of $\mu$ implies that $\nV{\T_\ell\cup\T_\mu}>\frac12\nV{\ST}$, hence $\nV{\T_r}=\nV{\ST}-\nV{\T_\ell\cup\T_\mu}<\frac12\ST$.
\end{proof}
A branch is \emph{proper} if it is not a single vertex. A section is proper if it is not a single edge.\\
{\RestyleAlgo{ruled}
\proc
\begin{algorithm}[H]
	\caption{\Refinet($\ST$)}
	\setstretch{1.3}
	\Switch{$\ST$}{
		\lCase{proper branch}{\textbf{output } $\trisectm(\ST)$}
		\Case{proper section}{
			$\{\T_\ell,\T_\mu,\T_r\}\leftarrow\trisectm(\ST)$\tcp{where $\T_\mu$ is the branch}
	\textbf{output } $\{\T_\ell,\T_r\}\cup\trisectm(\T_\mu)$}
\lCase{single vertex or edge}{\textbf{output } $\P=\{\ST\}$}
}
\end{algorithm}}

	Combining lemmas \ref{lem:brclemma} and \ref{lem:hmklemma} immediately yields
	\refineprop*

\section{Acknowledgements}
The author is grateful to Peter Shor and Jonathan Kelner for inspiring discussions and to Thomas Vidick for literature suggestions and a fruitful discussion at QIP.

\begin{appendices}

\section{Numerics for 3D UST}
\label{sec:numerics}
Examples of random trees which empirically satisfy the fractal dimension bound required in this paper include \emph{2D diffusion-limited aggregation} (DLA) and \emph{uniform spanning trees} (UST) in the 2D and 3D lattices.\footnote{On an infinite lattice one defines first a \emph{uniform spanning forest} as a weak limit of uniform spanning tree measures \cite{benjamini2001uniform}. This forest is almost surely a tree for dimension $\sd\le 4$ \cite{pemantle1991choosing} which justifies speaking of a uniform spanning \emph{tree} on the lattice.}
The case of 2-dimensional DLA is well studied numerically \cite{meakin1983diffusion}, and the 2D UST case has been proven to have $\beta=1.6$ locally \cite{barlow2011spectral}, though we have not attempted to adapt the concentration bounds to hold uniformly. We did not find any references which explicitly simulate the volume growth of intrinsic balls in 3-dimensional UST, so we include a plot from a simulation of our own. 
\begin{figure}[H]
	\centering
\caption{\small{Volume as a function of radius (log-log scale) for intrinsic balls in $\T$. \\\textbf{Red lines} illustrate $C$-discrete fractal dimension $\beta=1,2$ with $C=2$. \\$\GSm$ is polynomial-time below the top line with slope $\beta=2$. \\The bottom line with slope $\beta=1$ is the well-studied case of a spin chain.}}
\adjustbox{valign=t}{
	\begin{minipage}{.45\textwidth}
		\centering
		\includegraphics[width=.65\textwidth]{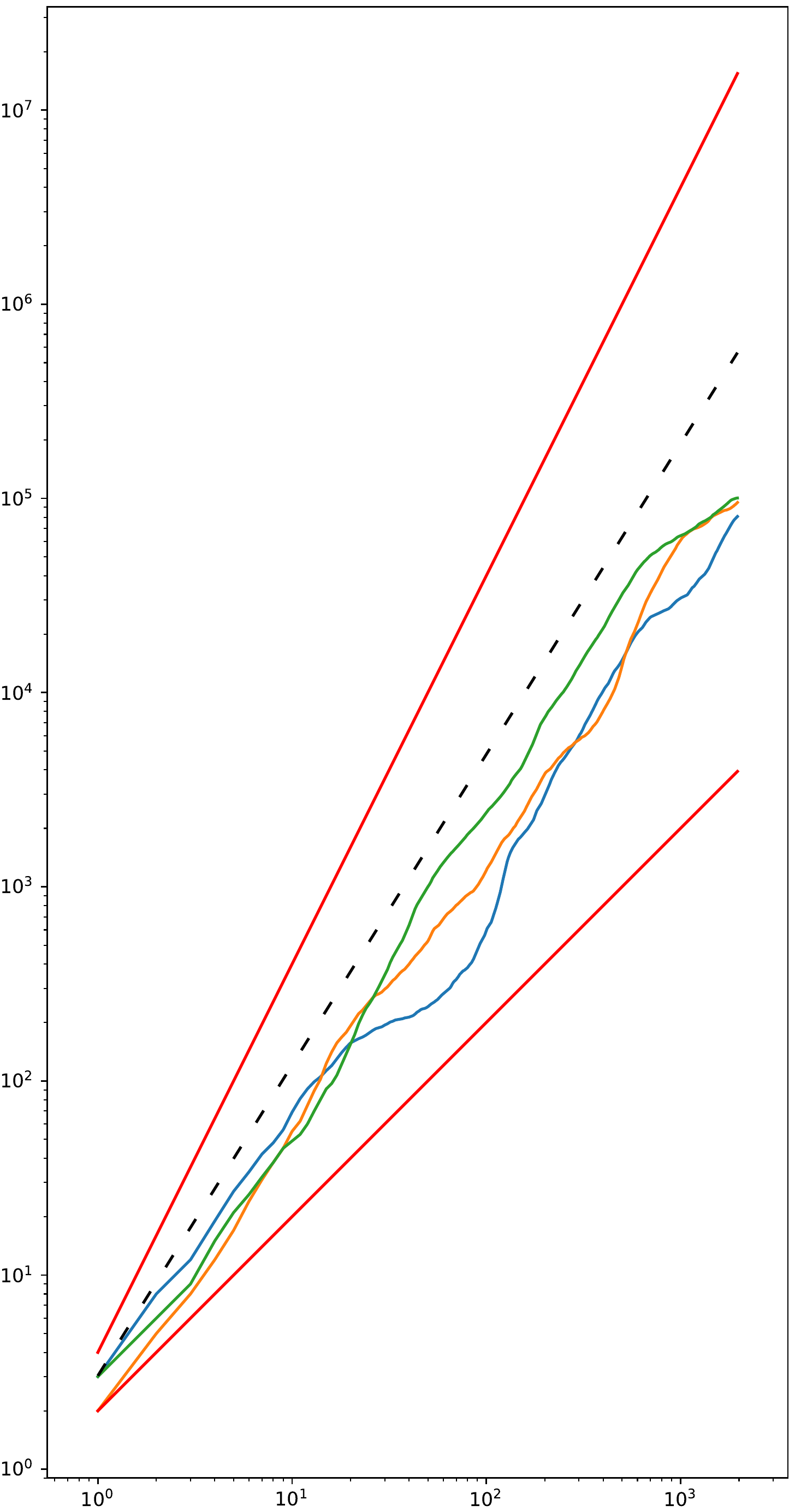}
		\caption*{\scriptsize{\textbf{2D} UST in $317\times317$ square with $100489$ vertices in the 2D lattice. The dashed line is the theoretical slope $\beta=1.6$ \cite{barlow2011spectral}.}}
	\end{minipage}
}\quad
\adjustbox{valign=t}{
	\begin{minipage}{.45\textwidth}
		\centering
		\includegraphics[width=.65\textwidth]{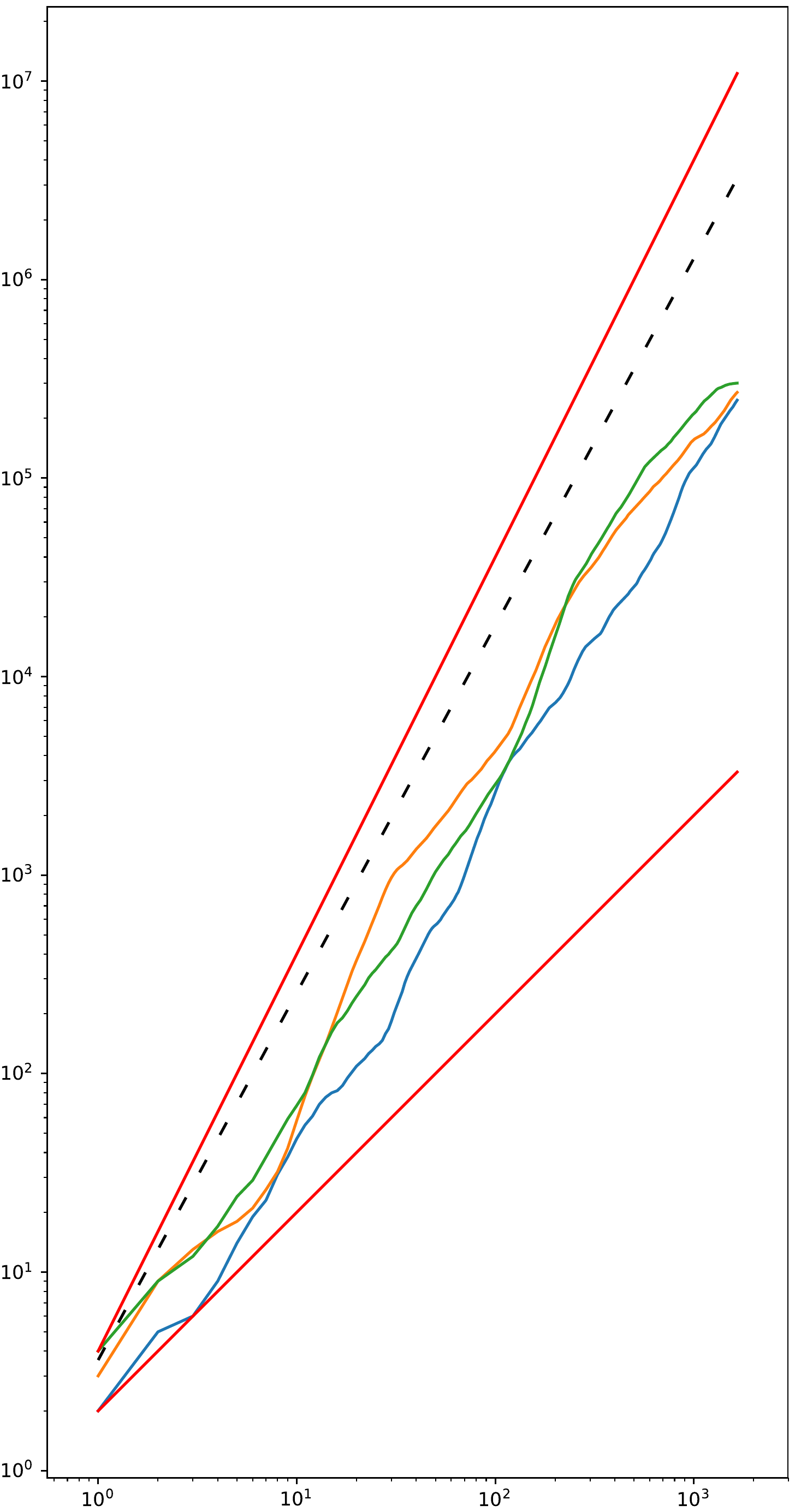}
		\caption*{\scriptsize{\textbf{3D} UST in the $67\times67\times 67$ cube with $300763$ vertices in the 3D lattice. The dashed line has slope $\beta=1.85$ as suggested by the heuristic based on the LERW.}}
\end{minipage}}
\end{figure}
Additionally, there exists a close connection between UST and the \emph{loop-erased random walk} \cite{lawler1999loop}, an extremely rich area of study \cite{schramm2011scaling} which has also received attention in numerical treatments \cite{wilson2010dimension}. This connection states that given two vertices in a graph, the path which connects them in a UST has the distribution of a LERW. Thus, if the LERW has \emph{large} fractal dimension $z$, then we expect the UST to have \emph{small} intrinsic fractal dimension. Indeed, a heuristic analysis suggests that $r_{\text{int}}\sim r_{\text{ext}}^z\sim (V^{1/\sd})^z$ and hence $V\sim r_{\text{int}}^{\sd/z}$ where $r_{\text{int}}$ and $r_{\text{ext}}$ represent distances in the UST and the surrounding lattice, respectively. A numerical analysis by \cite{wilson2010dimension} found that LERW in the 3D lattice has fractal dimension $\approx1.624$ which, by the heuristic analysis above, suggests that that the tree has an intrinsic dimension $\beta\approx 3/1.62\approx1.85$. Somewhat surprisingly, the actual fractional dimension in our simulation appears to be even smaller (better) than this heuristic would suggest.

\section{Notation for implicit constants}
\label{sec:bigOnotation}

Define the expression \(x\bounded y\) to be equivalent with \(x=\bigO(y)\), i.e., \(x\le Cy\) for some universal constant \(C\).
%
\sparagraph{\emph{Conditions} with an implicit constant}
\leavevmode\\The LHS below is a shorthand for the RHS:
\medbreak\noindent
\begin{minipage}{0.35\textwidth}\emph{If \(x\BOUNDED y\) then \(\opn Q(x,y)\) holds.}
\end{minipage}
\begin{minipage}{0.1\textwidth}
\centering \(:\equiv\)
\end{minipage}
\begin{minipage}{0.55\textwidth}\emph{There exists a universal constant \(c>0\) such that any \(x\) and \(y\) with \(x\le cy\) satisfy property \(\opn Q(x,y)\).}
\end{minipage}\medbreak
We also write \(x\BOUNDED y\) as \(y\BOUNDS x\).
\bigbreak
\sparagraph{\emph{Definitions} involving implicit constants}Given a function \(f(x)\) with implicit parameters \(\mathbf c\), the notation \(y\DEFtheta f(x)\) means that we define \(y=c_0\cdot f(x)\) where \(c_0\) and the implicit \(\mathbf c\) are determined by the subsequent theorems. Thus the block below, left is shorthand for the block on the right.
\bigbreak\noindent
\scaleleftright[1ex]{\{}{
		\begin{minipage}{0.35\textwidth}
			\textbf{Definition 1. }\(y(x)\DEFtheta x^2.\)\medbreak
		\textbf{Definition 2. }\(z(x)\DEFtheta e^{\tilde\Theta(x)}\)\bigbreak
		\textbf{Fact 1. }\:\((y(x),z(x))\in A\).\medbreak
		\textbf{Fact 2. }\:\((y(x),z(x))\in B\).
\end{minipage}
}{\}}
\begin{minipage}{0.04\textwidth}
\centering \(:\equiv\)
\end{minipage}
\scaleleftright[1ex]{\{}{
\begin{minipage}{0.54\textwidth}
	\noindent\textbf{Facts 1$\wedge$2. }\textit{There exist constants \(c_0\), \(c_1\), and \(c_2\) such that the functions
		\begin{align*}
		y_{c_0,c_1}(x)&=c_0\exp\big(x\cdot(\log x)^{c_1}\big)\\z_{c_2}(x)&=c_2x^2\end{align*}
	satisfy \((y_{c_0,c_1}(x),z_{c_2}(x))\in A\cap B\) for all \(x\).}\medbreak
	\textbf{Definitions 1$\wedge$2. }\textit{Pick \(c_0\), \(c_1\), and \(c_2\) as above and define \(y:= y_{c_0,c_1}\) and \(z:=z_{c_2}\).}
\end{minipage}\enspace}{\}}

\section{Analysis of the subspace trimming}
\label{sec:trimsec}

\begin{lemma}[Similar to \cite{alvv17} lemma 7]\label{lem:prevlemma}
	Let \(\Z\subsp\HS_{AB}\) and suppose the left support for \(\Z\) has dimension \(\ell\). If \(\widetilde\Z\subsp\HS_{AB}\) is \(\mu\)-majorizing for \(\Z\) then \(\trim_\xi^A(\widetilde\Z)\) is \(\mu'\)-majorizing for \(\Z\) where \(\mu'=\mu-2\sqrt{\xi \ell}\).
\end{lemma}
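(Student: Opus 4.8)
The plan is to reduce the claim to a pointwise overlap estimate and then, for each ground-state vector, to exhibit an explicit witness inside the trimmed subspace. Fix $\ket z\in\sphere(\Z)$; it suffices to show $\|\PP_{\trim_\xi^A(\widetilde\Z)}\ket z\|^2\ge\mu-2\sqrt{\xi\ell}$, since taking the minimum over $\ket z$ then gives $\shadow{\trim_\xi^A(\widetilde\Z)}{\Z}\ge\mu-2\sqrt{\xi\ell}$. Since $\widetilde\Z$ is $\mu$-majorizing we have $\|\PP_{\widetilde\Z}\ket z\|^2\ge\mu$, so the unit vector $\ket{\tilde z}:=\PP_{\widetilde\Z}\ket z/\|\PP_{\widetilde\Z}\ket z\|\in\sphere(\widetilde\Z)$ satisfies $\bracket{z}{\tilde z}=\|\PP_{\widetilde\Z}\ket z\|\ge\sqrt\mu$ (real and nonnegative by construction). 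Writing $T=\PP\otimes\Id_B$ with $\PP=\charac_{(\xi,\infty)}(\rho_A)$ and $\rho_A=\tr_B(\PP_{\widetilde\Z})$, the vector $T\ket{\tilde z}$ lies in $\trim_\xi^A(\widetilde\Z)$ and has norm at most $1$, so it is the natural witness to use.

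The key step will be the bound $\|(\Id-T)\ket{\tilde z}\|^2\le\xi\ell$. Because $\ket{\tilde z}$ is a unit vector of $\widetilde\Z$, for any $\ket\varphi$ we have $|\bracket{\tilde z}{\varphi}|^2=|\bracket{\tilde z}{\PP_{\widetilde\Z}\varphi}|^2\le\|\PP_{\widetilde\Z}\ket\varphi\|^2$, i.e. $\ket{\tilde z}\bra{\tilde z}\ople\PP_{\widetilde\Z}$; applying the positive, linear partial trace over $B$ gives $\rho^{\tilde z}_A:=\tr_B(\ket{\tilde z}\bra{\tilde z})\ople\rho_A$. Since $\Id-T=(\Id_A-\PP)\otimes\Id_B$ and $\Id_A-\PP=\charac_{[0,\xi]}(\rho_A)$ commutes with $\rho_A$,
\[\|(\Id-T)\ket{\tilde z}\|^2=\tr\big[(\Id_A-\PP)\rho^{\tilde z}_A\big]\le\tr\big[(\Id_A-\PP)\rho_A\big]=\sum_{i:\;0<\lambda_i\le\xi}\lambda_i\le\xi\cdot\vd{\supp_A(\widetilde\Z)},\]
where the $\lambda_i$ are the eigenvalues of $\rho_A$ and the final step uses that at most $\rank\rho_A=\vd{\supp_A(\widetilde\Z)}$ of them are nonzero. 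Observing that $\vd{\supp_A(\widetilde\Z)}\le\ell$ (this is the left-support dimension appearing in the hypothesis), we get $\|(\Id-T)\ket{\tilde z}\|\le\sqrt{\xi\ell}$.

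Finally, to put the pieces together I would combine this with a Cauchy–Schwarz step. Since $\Id-T$ is an orthogonal projection, $\bracket{z}{T\tilde z}=\bracket{z}{\tilde z}-\bracket{z}{(\Id-T)\tilde z}$ and $|\bracket{z}{(\Id-T)\tilde z}|\le\|(\Id-T)\ket z\|\,\|(\Id-T)\ket{\tilde z}\|\le\|(\Id-T)\ket{\tilde z}\|\le\sqrt{\xi\ell}$, using $\|(\Id-T)\ket z\|\le1$. Hence $\mathrm{Re}\,\bracket{z}{T\tilde z}\ge\sqrt\mu-\sqrt{\xi\ell}$, and since $\|T\ket{\tilde z}\|\le1$ the normalized witness $\ket u=T\ket{\tilde z}/\|T\ket{\tilde z}\|$ (well-defined unless $\xi\ell\ge1$, in which case $\mu-2\sqrt{\xi\ell}<0$ and there is nothing to prove) satisfies $|\bracket{z}{u}|\ge\mathrm{Re}\,\bracket{z}{T\tilde z}\ge\sqrt\mu-\sqrt{\xi\ell}$. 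Squaring, $\|\PP_{\trim_\xi^A(\widetilde\Z)}\ket z\|^2\ge(\sqrt\mu-\sqrt{\xi\ell})^2\ge\mu-2\sqrt{\mu}\sqrt{\xi\ell}\ge\mu-2\sqrt{\xi\ell}$ by $\mu\le1$; and when $\sqrt{\xi\ell}>\sqrt\mu$ the right-hand side is negative so the inequality holds trivially. I expect the main obstacle to be the bookkeeping in the displayed estimate — in particular controlling $\rank\rho_A$, i.e. the left-support dimension of the subspace actually being trimmed, by $\ell$, and dealing with the basis-independence subtleties of $\supp_A$ from Definition~\ref{def:leftsupp} — whereas the rest is a short manipulation in the spirit of the approximate-transitivity lemma~\ref{lem:robust}.
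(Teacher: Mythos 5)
There is a genuine gap, and you actually flagged the location yourself at the end: you invoke $\vd{\supp_A(\widetilde\Z)}\le\ell$ to conclude $\|(\Id-T)\ket{\tilde z}\|^2\le\xi\ell$, but the hypothesis of the lemma controls the left support of $\Z$, not of $\widetilde\Z$. These are different spaces, and in the intended application (lemma~\ref{lem:trimexact}, where $\widetilde\Z=\V\otimes\Z|_R$ for a possibly high-dimensional partial solution $\V$) the left support of $\widetilde\Z$ on $\HS_A$ is exactly the thing trimming is meant to shrink and is not bounded by $\ell$ a priori. So the displayed chain $\tr[(\Id_A-\PP)\rho_A]=\sum_{0<\lambda_i\le\xi}\lambda_i\le\xi\,\rank\rho_A$ is correct, but the final ``observing that $\rank\rho_A\le\ell$'' is not available, and with it the bound $\|(\Id-T)\ket{\tilde z}\|\le\sqrt{\xi\ell}$ collapses. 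The global Cauchy--Schwarz $|\bracket{z}{(\Id-T)\tilde z}|\le\|(\Id-T)\ket{z}\|\cdot\|(\Id-T)\ket{\tilde z}\|\le\|(\Id-T)\ket{\tilde z}\|$ throws away the one place the hypothesis on $\Z$ could enter.

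The paper avoids this by never estimating $\|(\Id-T)\ket{\tilde z}\|$ as a whole. Instead it writes the Schmidt decomposition $\ket z=\sum_i\sqrt{\lambda_i}\ket{a_i}\ket{b_i}$, which has at most $\ell$ terms because the left support of $\Z$ has dimension $\ell$, so that $\sum_i\sqrt{\lambda_i}\le\sqrt\ell$ by Cauchy--Schwarz. Your eigenvalue estimate then shows $(\Id_A-\PP)\rho_A(\Id_A-\PP)\ople\xi\Id_A$, which gives the pointwise bound $|\bra{a_i b_i}(\Id-T)\ket{\tilde z}|\le\sqrt\xi$ for each product state in the decomposition. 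Combining term by term yields $|\bracket{z}{(\Id-T)\tilde z}|\le\sum_i\sqrt{\lambda_i}\cdot\sqrt\xi\le\sqrt{\xi\ell}$, and the rest of your argument (norm of the witness at most one, real part of the overlap, squaring) goes through unchanged. In short: keep your eigenvalue bound on $(\Id_A-\PP)\rho_A$, discard the Cauchy--Schwarz at the level of whole vectors, and bring in the $\le\ell$-term Schmidt decomposition of $\ket z\in\Z$ to get the factor $\sqrt\ell$.
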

\begin{proof}
	Given a unit vector \(\ket\psi\in\Z\), we show the existence of a \(\ket{\tilde\psi_\xi}\in\trim^A_\xi\widetilde\Z\) with \(\|\ket{\tilde\psi_\xi}\|\le1\) such that \(|\bracket{\tilde\psi_\xi}{\psi}|^2\ge\mu'\).

	Let \(\tilde\rho_A=\tr_B(\PP_{\tilde\Z})\), and \(\tilde\charac_{(\xi}=\charac_{(\xi,\infty)}(\tilde\rho_A)\), and \(\tilde\charac_{\xi]}=\charac_{[0,\xi]}(\tilde\rho_A)\). Let \(\ket\psi\) be an arbitrary unit vector in \(\Z\) and let \(\ket\psi=\sum_i\sqrt{\lambda_i}\ket{a_i}\ket{b_i}\) be its Schmidt decomposition. Then \(\sum_i\sqrt{\lambda_i}\le\sqrt\ell\)\footnote{By Jensen's inequality, \(\frac1\ell\sum_i\sqrt{\lambda_i}\le\sqrt{\frac1\ell\sum_i\lambda_i}=1/\sqrt\ell\).}.
	
	Since \(\widetilde\Z\) is \(\mu\)-majorizing for \(\Z\), there exists a unit vector \(\ket{\tilde\psi}\in\widetilde\Z\) such that \(\bracket{\tilde\psi}\psi\ge\sqrt\mu\). Pick \(\ket{\tilde\psi_\xi}=[\:\tilde\charac_{(\xi}\otimes I_B\:]\ket{\tilde\psi}\) and consider the error term \(\ket{\tilde\eta}=\ket{\tilde\psi_\xi}-\ket{\tilde\psi}=-[\:\:\tilde\charac_{\xi]}\otimes I_B\:]\ket{\tilde\psi}\). The reduced density matrix for the error vector satisfies
	\[\tr_B\big(\ket{\tilde\eta}\bra{\tilde\eta}\big)=\tilde\charac_{\xi]}\big(\tr_B\ket{\tilde\psi}\bra{\tilde\psi}\big)\tilde\charac_{\xi]}\le\tilde\charac_{\xi]}\big(\tr_B \PP_{\widetilde\Z}\big)\tilde\charac_{\xi]}=\tilde\charac_{\xi]}\tilde\rho\le\xi.\]
	This means that \(\xi\) is a bound on the Schmidt coefficients of \(\ket{\tilde\eta}\) and thus on its squared overlap with any product state. This implies that
	\[\bracket{\tilde\eta}{\psi}\le\sum\sqrt{\lambda_i}\bracket{\tilde\eta}{a_i\otimes b_i}\le\sum_i\sqrt{\lambda_i\xi}\le\sqrt{\xi\ell}.\]
	It follows that \(|\bracket{\tilde\psi_\xi}{\psi}|^2\ge|\bracket{\tilde\psi}{\psi}|^2-2|\bracket{\tilde\eta}{\psi}|\ge\mu-2\sqrt{\xi\ell}\).
\end{proof}

\begin{lemma}\label{lem:trimexact}
	Let \(\Z\subsp\overbrace{\HS_A\HS_B}^{\mathlarger{\HS_L}}\HS_R\) and suppose the left support of \(\Z\) on \(\HS_A\) and right support of \(\Z\) on \(\HS_R\) have dimensions \(\ell\) and \(r\), respectively. If \(\V\subsp\HS_L\) is \(\delta\)-viable for \(\Z\), then \(\trim_\xi^A(\V)\) is \(\delta'\)-viable for \(\Z\) where \(\delta'=\delta+2\sqrt{\ell r\xi}\).
\end{lemma}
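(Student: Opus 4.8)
The plan is to reduce Lemma~\ref{lem:trimexact} to the majorization statement of Lemma~\ref{lem:prevlemma} together with the closeness-to-majorizing dictionary supplied by Lemma~\ref{lem:viableclose} and the symmetry lemma. First I would unpack what $\delta$-viability of $\V\subsp\HS_L$ for $\Z\subsp\HS_{LR}$ means: by Lemma~\ref{lem:viableclose} it is equivalent to $\widetilde\Z:=(\PP_\V\otimes\Id_R)\Z$ being $\delta$-close to $\Z$, and by the symmetry lemma this in turn means $\widetilde\Z$ is $\delta$-almost majorizing for $\Z$ (the dimension condition $\vd{\widetilde\Z}\le\vd\Z$ holding automatically since $\widetilde\Z$ is a projected image of $\Z$). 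So the hypothesis gives us a subspace $\widetilde\Z\subsp\HS_{LR}$ supported inside $\V\otimes\HS_R$ which is $(1-\delta)$-majorizing for $\Z$.

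Next I would apply the trimming. Note that $\trim_\xi^A(\V)$ sits inside $\HS_L=\HS_A\otimes\HS_B$, and the corresponding projected target subspace is $(\PP_{\trim_\xi^A(\V)}\otimes\Id_R)\Z$. The key observation is a compatibility between trimming $\V$ and trimming $\widetilde\Z$: since $\widetilde\Z\subset\V\otimes\HS_R$ and $\PP=\charac_{(\xi,\infty)}(\tr_R\PP_\V)$ controls the relevant spectral projection, applying $\PP\otimes\Id_{BR}$ to $\widetilde\Z$ lands inside $\trim_\xi^A(\V)\otimes\HS_R$. More carefully, I would relate the trimming of $\V$ relative to $A$ (inside the bipartition $A\,|\,BR$ of $\V\otimes\HS_R$, but with $\V$ only occupying $\HS_L$) to the trimming of $\widetilde\Z$ relative to $A$. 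The reduced density matrix $\rho_A^{\widetilde\Z}=\tr_{BR}\PP_{\widetilde\Z}$ is dominated by $\tr_{R}(\PP_\V)=\tr_B\big(\tr_{\emptyset}\cdots\big)$-type object times a constant, or more precisely one checks the spectral projection $\charac_{(\xi,\infty)}$ of $\rho_A^{\widetilde\Z}$ is dominated by (has range inside) that of $\tr_B\PP_\V$ up to the bound $\vd\V$; the cleanest route is to observe $\widetilde\Z\subset\V\otimes\HS_R$ forces $\rho_A^{\widetilde\Z}\le (\tr_B\PP_\V)\cdot\tr_R(\cdots)$ so that $\trim_\xi^A(\widetilde\Z)\subset\trim_{\xi'}^A(\V)\otimes\HS_R$ for a comparable $\xi'$ — I would simply run the trimming with the same $A$-reduced density matrix as in Lemma~\ref{lem:prevlemma}.

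Then I would invoke Lemma~\ref{lem:prevlemma}: with $\widetilde\Z$ being $(1-\delta)$-majorizing for $\Z$ and the left support of $\Z$ on $\HS_A$ having dimension $\ell$, the trimmed $\trim_\xi^A(\widetilde\Z)$ is $(1-\delta-2\sqrt{\xi\ell})$-majorizing for $\Z$ — but this only handles one trimming factor $\sqrt{\xi\ell}$; the $\sqrt{\ell r\xi}$ in the statement (rather than $\sqrt{\ell\xi}$) suggests the error actually involves the right support dimension $r$. I would account for this by noting that $\trim_\xi^A(\widetilde\Z)$ need not itself be of the form (something on $L$) $\otimes\HS_R$; rather, when we pass from $\trim_\xi^A(\widetilde\Z)$ back to a partial majorizer we project onto $\HS_L$, and the image of $\trim_\xi^A(\widetilde\Z)$ under $\tr_R$-type projection has $A$-support contained in $\mathrm{ran}\,\PP$, whose rank — bounded using both the left support $\ell$ and, via the Markov-type argument of Observation~\ref{obs:trimdim} applied on the $R$ side, the factor $r$ — gives the stated $2\sqrt{\ell r\xi}$. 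Concretely I expect: run Lemma~\ref{lem:prevlemma} with the bipartition $A\,|\,BR$, get majorization error picking up $2\sqrt{\xi\cdot(\text{dim of }A\text{-support of }\Z)}$, but since $\Z$'s $A$-support dimension is bounded by $\ell r$ when we only know separately the $L$-side support is $\ell$ and $R$-side is $r$ (as $\supp_A\Z\subset\supp_A(\supp_{AB}\Z)$ and the relevant product bound gives $\ell r$), we land on $\delta'=\delta+2\sqrt{\ell r\xi}$. Finally, since $\vd{\trim_\xi^A(\V)}\le\vd\V$ keeps the dimension ordering intact, Lemma~\ref{lem:viableclose} (backwards) converts the $\delta'$-almost-majorizing of the projected target back into $\delta'$-viability of $\trim_\xi^A(\V)$.

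The main obstacle I anticipate is the bookkeeping in the previous paragraph: correctly identifying which support dimension controls the trimming error on the $A$ system and getting the $\sqrt{\ell r\xi}$ rather than $\sqrt{\ell\xi}$ or $\sqrt{(\ell+r)\xi}$. This hinges on carefully tracking, through the projection $\PP_{\trim_\xi^A(\V)}\otimes\Id_R$, the interplay between the left support of $\Z$ on $\HS_A$ and its right support on $\HS_R$, and on applying the Schmidt-coefficient bound of Lemma~\ref{lem:prevlemma}'s proof with the appropriate product-dimension estimate. Everything else — the translation via Lemmas~\ref{lem:viableclose}, \ref{lem:symmetric} and the monotonicity $\vd{\trim_\xi^A(\V)}\le\vd\V$ — is routine.
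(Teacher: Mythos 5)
Your proposal has a genuine gap in the choice of auxiliary subspace. You set $\widetilde\Z := (\PP_\V\otimes\Id_R)\Z$, which is indeed $\delta$-almost-majorizing for $\Z$, but it does not give a controlled relationship between $\trim_\xi^A(\V)$ and $\trim_{\xi'}^A(\widetilde\Z)$. The inclusion $\PP_{\widetilde\Z}\le\PP_\V\otimes\Id_R$ only yields $\rho_A^{\widetilde\Z}\le\vd{\HS_R}\cdot\rho_A^\V$, a constant of $\vd{\HS_R}$, not $r$; there is no identification of spectral projections, and the inclusion $\trim_\xi^A(\widetilde\Z)\subset\trim_{\xi'}^A(\V)\otimes\HS_R$ does not follow for a comparable $\xi'$. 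You acknowledge this is the hard step, and your closing heuristic for where $r$ enters — that ``$\Z$'s $A$-support dimension is bounded by $\ell r$'' — is incorrect: the statement already defines $\ell$ to be precisely the $A$-support dimension of $\Z$, so there is no extra product bound to be had there.

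The paper sidesteps all of this by choosing $\widetilde\Z := \V\otimes\Z|_R$ where $\Z|_R=\supp_R\Z$ is the $r$-dimensional right support. This $\widetilde\Z$ is likewise $\delta$-almost-majorizing for $\Z$ (since $\PP_\Z(\PP_\V\otimes\PP_{\Z|_R})\PP_\Z = \PP_\Z(\PP_\V\otimes\Id)\PP_\Z$), but unlike your choice it satisfies the \emph{exact} scaling $\tr_R\PP_{\widetilde\Z}=r\PP_\V$, hence $\rho_A^{\widetilde\Z}=r\rho_A^\V$. This makes the spectral projections coincide at the rescaled threshold, $\charac_{(r\xi,\infty)}(\rho_A^{\widetilde\Z})=\charac_{(\xi,\infty)}(\rho_A^\V)$, giving the identity $\trim_{r\xi}^A(\widetilde\Z)=\trim_\xi^A(\V)\otimes\Z|_R$. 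Applying Lemma~\ref{lem:prevlemma} at threshold $r\xi$ with left-support dimension $\ell$ produces the error $2\sqrt{(r\xi)\ell}=2\sqrt{\ell r\xi}$, and the containment $\trim_{r\xi}^A(\widetilde\Z)\subset\trim_\xi^A(\V)\otimes\HS_R$ converts this into $\delta'$-viability of $\trim_\xi^A(\V)$. So the factor $r$ comes from rescaling the trimming threshold, not from inflating any support dimension, and the crucial missing idea in your attempt is tensoring with $\Z|_R$ rather than projecting $\Z$.
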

\begin{proof}
	Let \(\Z|_R\) be the right support of \(\Z\) on \(\HS_R\), and define \(\widetilde\Z:=\V\otimes\Z|_R\). Then \(\widetilde\Z\) is \(\delta\)-almost majorizing for \(\Z\)\footnote{Indeed, \((I_L\otimes \PP_{\Z|_R})\PP_\Z=\PP_\Z\)  implies that \(\PP_\Z(\PP_\V\otimes \PP_{\Z_\R})\PP_\Z=\PP_\Z(\PP_\V\otimes I)\PP_\Z\ge(1-\delta)\PP_\Z\).}.

	Let \(\rho_A^\V=\tr_{B}(\PP_\V)\) and \(\rho^{\widetilde\Z}_A=\tr_{BR}(\PP_{\widetilde\Z})\). Then \(\tr_R\PP_{\widetilde\Z}=\tr(\Z|_R)\PP_\V=r\PP_\V\) implies that
	\[\rho^{\widetilde\Z}_A=\tr_B\big(\tr_R(\PP_{\widetilde\Z})\big)=\tr_B(r\PP_\V)=r\rho^\V_A.\]
	We can then equate spectral projections \(\charac_{(r\xi}(\rho_A^{\widetilde\Z})=\charac_{(\xi}(\rho_A^\V)\) and, by the definition of the subspace trimming procedure,
	\[\trim^A_{r\xi}(\widetilde\Z)=[\charac_{(r\xi}(\rho_A^{\widetilde\Z})\otimes I_{BR}](\V\otimes\Z|_R)=[\charac_{(\xi}(\rho_A^\V)\otimes I_B]\V\otimes\Z|_R=\trim_{\xi}^A(\V)\otimes\Z|_R\]
	By lemma \ref{lem:prevlemma}, \(\trim_{r\xi}^A(\widetilde\Z)\) is \(\delta'\)-almost majorizing for \(\Z\), so \(\trim_\xi^A(\Z)\) is \(\delta'\)-viable for \(\Z\).
\end{proof}
\begin{observation}\label{obs:disjointviable}
	If \(\A\subsp\HS_A\) is \(\delta_A\)-viable for \(\Z\in\HS_{AB}\) and \(\R\subsp\HS_B\) is \(\delta_B\)-viable for \(\Z\), then \(\A\otimes\R\subsp\HS_{AB}\) is \(\delta_A+\delta_B\)-viable for \(\Z\).
\end{observation}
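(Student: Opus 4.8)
The statement is observation \ref{obs:disjointviable}: if $\A\subsp\HS_A$ is $\delta_A$-viable for $\Z\subsp\HS_{AB}$ and $\R\subsp\HS_B$ is $\delta_B$-viable for $\Z$, then $\A\otimes\R$ is $(\delta_A+\delta_B)$-viable for $\Z$. The natural route is to pass through the characterization in lemma \ref{lem:viableclose}: $\A$ being $\delta_A$-viable means that $(\PP_\A\otimes\Id_B)\Z$ is $\delta_A$-close to $\Z$, equivalently (by the symmetry lemma \ref{lem:symmetric}, since dimensions can only drop under a projection) that $\A\otimes\HS_B$ is $\delta_A$-majorizing for $\Z$, i.e. $\shadow{\A\otimes\HS_B}\Z\ge 1-\delta_A$. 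Likewise $\shadow{\HS_A\otimes\R}\Z\ge 1-\delta_B$. So the task reduces to showing that the product projection does not lose more than the sum of the two individual losses.

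\medskip\noindent
First I would reformulate everything in operator language on $\Z$. Writing $Q_A=\Into_\Z^\dagger(\PP_\A\otimes\Id_B)\Into_\Z$ and $Q_B=\Into_\Z^\dagger(\Id_A\otimes\PP_\R)\Into_\Z$, the hypotheses say $Q_A\opge(1-\delta_A)\Id_\Z$ and $Q_B\opge(1-\delta_B)\Id_\Z$, i.e. $\Into_\Z^\dagger(\Id-\PP_\A\otimes\Id)\Into_\Z\ople\delta_A\Id_\Z$ and similarly for $B$. The quantity to control is $\shadow{\A\otimes\R}\Z=\lambda_{\min}\big(\Into_\Z^\dagger(\PP_\A\otimes\PP_\R)\Into_\Z\big)$. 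The key identity is the telescoping decomposition of the complement: $\Id-\PP_\A\otimes\PP_\R=(\Id-\PP_\A\otimes\Id)+(\PP_\A\otimes\Id)(\Id_A\otimes(\Id-\PP_\R))$. Taking any unit $\ket\varphi\in\Z$ and pairing against $\ket\varphi$,
\[
1-\|(\PP_\A\otimes\PP_\R)\ket\varphi\|^2 \;=\; \|(\Id-\PP_\A\otimes\Id)\ket\varphi\|^2 + \|(\PP_\A\otimes(\Id-\PP_\R))\ket\varphi\|^2,
\]
using that the two projectors $\PP_\A\otimes\Id$ and $\Id\otimes\PP_\R$ commute so $\PP_\A\otimes\PP_\R$ is itself a projector, making the two right-hand terms the orthogonal pieces of $(\Id-\PP_\A\otimes\PP_\R)\ket\varphi$. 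The first term is at most $\delta_A$ by the viability of $\A$ (via lemma \ref{lem:viableclose}/\ref{lem:symmetric}), and the second is bounded by $\|(\Id_A\otimes(\Id-\PP_\R))\ket\varphi\|^2\le\delta_B$ since dropping $\PP_\A$ only increases the norm. Hence $\|(\PP_\A\otimes\PP_\R)\ket\varphi\|^2\ge 1-\delta_A-\delta_B$ for every unit $\ket\varphi\in\Z$, i.e. $\shadow{\A\otimes\R}\Z\ge 1-(\delta_A+\delta_B)$.

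\medskip\noindent
To finish, I would invoke lemma \ref{lem:viableclose} in reverse: $\A\otimes\R$ being $(\delta_A+\delta_B)$-almost majorizing for $\Z$ means $(\PP_{\A\otimes\R}\otimes\Id)\Z$ is $(\delta_A+\delta_B)$-almost majorizing for $\Z$ with dimension $\le\vd\Z$, hence $(\delta_A+\delta_B)$-close to $\Z$ by the symmetry lemma, which by lemma \ref{lem:viableclose} is exactly the assertion that $\A\otimes\R$ is $(\delta_A+\delta_B)$-viable for $\Z$. The one point that needs care --- and the only place where anything could go wrong --- is the clean orthogonal splitting in the displayed Pythagorean identity; it relies on $\PP_\A\otimes\Id$ and $\Id\otimes\PP_\R$ being commuting projectors, so that $(\Id-\PP_\A\otimes\PP_\R)$ factors into two mutually orthogonal projections $(\Id-\PP_\A\otimes\Id)$ and $(\PP_\A\otimes\Id-\PP_\A\otimes\PP_\R)$. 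That is immediate here since the two projectors act on disjoint tensor factors, so I expect no genuine obstacle; the whole argument is a couple of lines once phrased this way. (Alternatively one could just apply lemma \ref{lem:robust} with $\Y=\HS_A\otimes\R$, $\varepsilon=\delta_B$, and the $\delta_A$-viability of $\A$ for $\Y$, but the direct Pythagorean computation gives the sharp constant $\delta_A+\delta_B$ rather than $\delta_A+\delta_B+2\sqrt{\delta_A\delta_B}$.)
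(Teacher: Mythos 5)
Your proof is correct and uses essentially the same argument as the paper: both bound $1-\bra\psi\PP_\A\otimes\PP_\R\ket\psi$ via the telescoping identity $\Id-\PP_\A\otimes\PP_\R=\PP_\A^\perp\otimes\Id+\PP_\A\otimes\PP_\R^\perp$ and then estimate each term by $\delta_A$ and $\delta_B$ respectively, the second after dropping $\PP_\A$. The detour through lemmas \ref{lem:viableclose} and \ref{lem:symmetric} at the start and end is harmless but unnecessary, since $\delta$-viability is \emph{defined} as $\V\otimes\HS_\rhalf$ being $(1-\delta)$-majorizing, so the displayed inequality for all unit $\ket\psi\in\Z$ already is the conclusion.
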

\begin{proof}
For any unit vector \(\ket\psi\in\Z\), \(\bra\psi \PP_\A^\perp\otimes I\ket\psi\le\delta_A\) since \(\A\) is \(\delta_A\)-viable for \(\Z\). Furthermore, \(\bra\psi \PP_\A\otimes \PP_\R^\perp\ket\psi\le\bra\psi I\otimes \PP_\R^\perp\ket\psi\le\delta_B\).
\[1-\bra\psi \PP_\A\otimes \PP_\R\ket\psi=\bra\psi \PP_\A^\perp\otimes I\ket\psi+\bra\psi \PP_\A\otimes \PP_\R^\perp\ket\psi\le\delta_A+\delta_B.\]
\end{proof}

\begin{restatable}{lemma}{trimrobust}\label{lem:thetrimrobust}
	Let \(\Z\subsp\overbrace{\HS_A\HS_B}^{\mathlarger{\HS_L}}\HS_R\) be a subspace which satisfies an \(r(\cdot)\)-entanglement bound on \(\HS_{L}\) and one on \(\HS_R\). Let \(\V\subsp\HS_L\) be \(\delta\)-viable for \(\Z\) and let
	\[\xi=\frac{\varepsilon^2}{16(r(\varepsilon/12))^2}.\]
	Then \(\trim_\xi^A(\V)\) is \(\delta+\varepsilon\)-viable for \(\Z\).
\end{restatable}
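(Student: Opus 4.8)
The statement is essentially a clean packaging of Lemma~\ref{lem:trimexact} together with the $r(\cdot)$-entanglement bound. The plan is to first invoke the entanglement bound on $\HS_L$ and on $\HS_R$ to control the left support dimension $\ell$ on $\HS_A$ and the right support dimension $r$ on $\HS_R$ that appear in Lemma~\ref{lem:trimexact}. For a target accuracy $\varepsilon'$ to be chosen, the $r(\cdot)$-entanglement bound gives a $\delta_0$-viable subspace of dimension $\le r(\sqrt{2\delta_0})$ for each cut; choosing $\sqrt{2\delta_0}\asymp\varepsilon/12$, so that $\delta_0\asymp(\varepsilon/12)^2/2$ is negligible, yields that both $\ell$ and $r$ are bounded by $r(\Theta(\varepsilon))$ up to a constant (one must also account for the $\delta_0$-viability of these small subspaces, but since $\delta_0$ is quadratically small in $\varepsilon$ this only inflates the final error by $O(\varepsilon^2)\ll\varepsilon$; alternatively one uses the \emph{support} dimension directly via observation~\ref{obs:entbound}-type reasoning and absorbs the slack).

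Second, I would simply apply Lemma~\ref{lem:trimexact} with these dimension bounds: $\trim_\xi^A(\V)$ is $\delta'$-viable for $\Z$ with $\delta'=\delta+2\sqrt{\ell r\xi}$. Substituting $\ell,r\le c\,r(\varepsilon/12)$ (for the appropriate constant) and the chosen $\xi=\frac{\varepsilon^2}{16(r(\varepsilon/12))^2}$ gives $2\sqrt{\ell r\xi}\le 2\sqrt{c^2 r(\varepsilon/12)^2\cdot\frac{\varepsilon^2}{16 r(\varepsilon/12)^2}}=\frac{c\varepsilon}{2}$, which is $\le\varepsilon$ for the right constant in the definition of $\xi$ (the statement uses $\defsmall$-type implicit constants, so the factor $c$ is absorbed). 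Hence $\delta'\le\delta+\varepsilon$, as claimed.

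\textbf{Main obstacle.} The one genuine subtlety is that Lemma~\ref{lem:trimexact} wants the \emph{exact} support dimensions $\ell$ (of $\Z$ on $\HS_A$) and $r$ (of $\Z$ on $\HS_R$), whereas the $r(\cdot)$-entanglement bound only furnishes a small-error \emph{viable} subspace of bounded dimension, not an exact support bound. Bridging this gap cleanly — either by showing the $r(\cdot)$-bound forces a genuine support bound of the same order (via Weyl/Eckart--Young on the reduced density matrix, as in observation~\ref{obs:entbound}), or by running the whole argument through an auxiliary subspace $\widetilde\Z$ that is $\delta_0$-close to $\Z$ with $\delta_0=O(\varepsilon^2)$ and then paying the extra $O(\varepsilon)$ in the triangle/robustness step (lemma~\ref{lem:robust}) — is where the only real bookkeeping lies. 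Everything else is a one-line substitution into Lemma~\ref{lem:trimexact}.
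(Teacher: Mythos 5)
Your plan follows the paper's route (your option~(b)): pass to an auxiliary subspace $\widetilde\Z=(\PP_\A\otimes\Id\otimes\PP_\R)\Z$ built from low-dimensional $\widetilde\delta$-viable subspaces $\A\subsp\HS_A$, $\R\subsp\HS_R$, use that the $\HS_A$- and $\HS_R$-supports of $\widetilde\Z$ lie inside $\A$ and $\R$ (hence are bounded by $r(\cdot)$), apply Lemma~\ref{lem:trimexact} to $\widetilde\Z$, and return to $\Z$ via Lemma~\ref{lem:robust}. You correctly identify this as the only workable bridge: your option~(a) does not go through, because the $r(\cdot)$-entanglement bound controls the dimension of a \emph{small-error viable subspace}, not the support rank of $\Z$ itself, and Observation~\ref{obs:entbound} bounds only the number of large Schmidt coefficients of a pure state, not the rank.

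There is, however, a genuine arithmetic error in your bookkeeping that would derail the proof if you pushed through. You assert that since $\delta_0\asymp(\varepsilon/12)^2/2$ is ``quadratically small in $\varepsilon$, this only inflates the final error by $O(\varepsilon^2)\ll\varepsilon$.'' That is false: the robustness step (Lemma~\ref{lem:robust}) costs $\delta+3\sqrt{\varepsilon}$, not $\delta+O(\varepsilon)$, so taking $\widetilde\Z$ to be $2\widetilde\delta=t^2$-close to $\Z$ with $t=\varepsilon/12$ incurs an additive error of $3t=\varepsilon/4$ \emph{each} of the two times Lemma~\ref{lem:robust} is invoked (once entering, once leaving $\widetilde\Z$), for a total of $6t=\varepsilon/2$. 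This is a constant fraction of the budget, not negligible. The explicit constant $16$ in $\xi=\varepsilon^2/(16\,r(\varepsilon/12)^2)$ exists precisely to split the budget: it makes $2r(\varepsilon/12)\sqrt{\xi}=\varepsilon/2$ from the trimming term (Lemma~\ref{lem:trimexact}), leaving the other $\varepsilon/2$ to absorb the two robustness steps. Your substitution ``$2\sqrt{\ell r\xi}\le\varepsilon$ for the right constant'' ignores that half the $\varepsilon$ has already been spent; with the given $\xi$ the trimming term is $\varepsilon/2$ and the total comes out to exactly $\delta+\varepsilon$, with no slack.
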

\begin{proof}
	Introduce variable $\xxii>0$ and let ${\widetilde\delta}={\xxii}^2/2$.
	Let \(\A\subsp\HS_A\) and \(\R\subsp\HS_R\) be \({\widetilde\delta}\)-viable for \(\Z\) with \(|\A|,|\R|\le r({\xxii})\). By observation \ref{obs:disjointviable}, \(\A\otimes \R\subsp\HS_{AR}\) is \(2{\widetilde\delta}\)-viable for \(\Z\), and by lemma \ref{lem:viableclose}, \(\widetilde\Z=(\PP_\A\otimes I\otimes \PP_\R)\Z\) is \(2{\widetilde\delta}\)-close to \(\Z\). The $\HS_A$-support and $\HS_R$-support of \(\widetilde\Z\) are contained in \(\A\) and \(\R\), respectively, so each has dimension at most $r({\xxii})$. 

	Since \(\V\subsp\HS_L\) is \(\delta\)-viable for \(\Z\) it is \((\delta+3{\xxii})\)-viable for \(\widetilde\Z\) by lemma \ref{lem:robust}. Lemma \ref{lem:trimexact} implies that the trimmed subspace \(\trim_\xi^A(\V)\) is \((\delta+3{\xxii}+2r({\xxii})\sqrt{\xi})\)-viable for \(\widetilde\Z\), and applying lemma \ref{lem:robust} again we get that it is \((\delta+6{\xxii}+2r({\xxii})\sqrt{\xi})\)-viable for \(\Z\). Conclude by picking $t=\varepsilon/12$.
\end{proof}

As a corollary we get:
\usefultrimlem*
\begin{proof}Recall that $|\MV|\bounded n$ by summing a geometric series.
	Let $L=\T^\mv$ be a standard subtree, let $\mw$ be a meta-vertex of $\MT^\mv$, and let $A=\T^\mw$ be a standard subtree of $\T^\mv$. $\Z$ satisfies a $r(\cdot)$-dimension bound on $\T^\mw$ and one on $\T\xpt\T^\mv$. Apply lemma \ref{lem:thetrimrobust} $\bigO(n)$ times with $\varepsilon=\epsilon/\bigO(n)$, once for each meta-vertex $\mw$ in $\MT^\mv$, to obtain the result.
\end{proof}

\section{From PAP to entanglement bound}
\label{sec:PAPtoent}

We begin this section with a deterministic corollary to the sampling lemma \ref{lem:random}. This lemma trades dimension for overlap for a majorizing subspace.
\begin{corollary}\label{cor:deterministic}	Let \(\Z\subsp\HS_{AB}\) and let \(\W\subsp\HS_A\) be a partial \(\mu\)-majorizer for \(\Z\). 
Let \(\nu\le\mu\). Then there exists a partial \(\nu\)-majorizer \(\V\) for \(\Z\) such that \(\V\subsp\W\) and
	\[|\V|\bounded\frac\nu\mu\cdot |\W|+\log |\W|+\frac{\mu|\Z|}{\nu^{3/2}}.\]
\end{corollary}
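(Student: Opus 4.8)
\textbf{Proof plan for Corollary~\ref{cor:deterministic}.} The plan is to derandomize Lemma~\ref{lem:random} by choosing the target dimension $V$ in that lemma so that the failure probability $\eta$ is strictly less than $1$; then a majorizer $\V$ of the desired dimension exists. First I would set $W=\vd\W$ and $Z=\vd\Z$, and compare the $\nu$ of the present statement with the $\nu$ produced by Lemma~\ref{lem:random}: that lemma, applied with parameter $V$, yields a partial $\nu_0$-majorizer with $\nu_0/V=\mu/(8W)$, i.e. $\nu_0=\mu V/(8W)$, and failure probability $\eta=\exp[\bigO(Z\sqrt{W/(\mu V)}+\log W)-\Omega(V)]$. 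So to get a partial $\nu$-majorizer I would choose $V=\lceil 8W\nu/\mu\rceil$, which gives $\nu_0\ge\nu$ (a $\nu_0$-majorizer is a fortiori a $\nu$-majorizer since $\shadow\Y\Z$ is monotone in the required bound), and $V\bounded \frac{\nu}{\mu}W+1$. Note $V\le W$ is needed for the lemma to make sense; this holds because $\nu\le\mu$ forces $8W\nu/\mu$ to be $\bigO(W)$, and if it exceeds $W$ one simply takes $\V=\W$ which already majorizes with $\mu\ge\nu$.

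Next I would verify that with this choice of $V$ the exponent in $\eta$ is strictly negative, so that the random subspace succeeds with positive probability and hence a good $\V$ exists deterministically. Substituting $V\asymp W\nu/\mu$ into the exponent: the term $Z\sqrt{W/(\mu V)}$ becomes $Z\sqrt{W/(\mu\cdot W\nu/\mu)}=Z\sqrt{1/\nu}=Z/\sqrt\nu$, the $\log W$ term is unchanged, and $\Omega(V)=\Omega(W\nu/\mu)$. For $\eta<1$ we need $W\nu/\mu\gtrsim Z/\sqrt\nu+\log W$, i.e. $\frac{\nu}{\mu}W\gtrsim \frac{Z}{\sqrt\nu}+\log W$. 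If this holds, $\eta<1$ and we are done with $\vd\V=V\bounded \frac\nu\mu W$. If it fails, then one of the two lower-bound terms dominates, and I would instead take $\vd\V=\lceil C(\frac{\mu Z}{\nu^{3/2}}+\log W)\rceil$ for a suitable constant $C$; this makes $\Omega(V)$ dominate both $Z/\sqrt\nu$ (since $\mu Z/\nu^{3/2}=(\mu/\nu)\cdot Z/\sqrt\nu\ge Z/\sqrt\nu$ as $\mu\ge\nu$, after adjusting constants — more carefully, one wants $V\gtrsim Z\sqrt{W/(\mu V)}$, i.e. $V^{3/2}\gtrsim Z\sqrt{W/\mu}$, i.e. $V\gtrsim (Z^2W/\mu)^{1/3}$; I should check that this quantity is indeed $\bounded \frac\nu\mu W+\frac{\mu Z}{\nu^{3/2}}$ by AM–GM, splitting $(Z^2W/\mu)^{1/3}$) and $\log W$. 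Either way the resulting dimension is bounded by the sum $\frac\nu\mu W+\log W+\frac{\mu Z}{\nu^{3/2}}$ claimed.

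The one subtlety to handle carefully — and the place I expect the bookkeeping to be most delicate — is reconciling the two regimes into the single clean bound $|\V|\bounded\frac\nu\mu|\W|+\log|\W|+\frac{\mu|\Z|}{\nu^{3/2}}$. The cleanest route is: take $V$ to be the maximum of the three quantities $\lceil 8W\nu/\mu\rceil$, $\lceil C_1(Z^2W/\mu)^{1/3}\rceil$, and $\lceil C_2\log W\rceil$ for suitable constants, capped at $W$. By construction the $\Omega(V)$ term then dominates each of the two additive pieces of the exponent, so $\eta<1$ and a partial $\nu$-majorizer of dimension $V$ exists inside $\W$ (apply Lemma~\ref{lem:random} with this $V$, or with any $V'\le$ this $V$ if the lemma's $\nu_0$ overshoots — it does not hurt to take a slightly smaller-dimensional Haar subspace). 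Finally I bound $V$: the first and third terms are directly $\frac\nu\mu W$ and $\log W$; for the middle term, apply the weighted AM–GM inequality $ (xy^2)^{1/3}\le \tfrac13(\lambda^{-2} x)+\tfrac23(\lambda y)$ with $x=\frac\nu\mu W$, $y=\frac{\mu Z}{\nu^{3/2}}$ and an appropriate $\lambda$, checking that $x y^2 = \frac\nu\mu W\cdot\frac{\mu^2 Z^2}{\nu^3}=\frac{\mu Z^2 W}{\nu^2}$ — hmm, this is $(\mu/\nu)\cdot(Z^2W/\mu)$, not exactly $Z^2W/\mu$, so I would instead use the cruder bound $(Z^2W/\mu)^{1/3}\le \frac\nu\mu W + \frac{\mu Z}{\nu^{3/2}}$ which follows because at least one of $\frac\nu\mu W\ge (Z^2W/\mu)^{1/3}$ or $\frac{\mu Z}{\nu^{3/2}}\ge (Z^2W/\mu)^{1/3}$ must hold (multiplying the two candidate lower bounds gives $\frac{\mu Z^2 W}{\nu^2}\ge (Z^2W/\mu)^{2/3}$ precisely when $\mu^{5/3}W^{1/3}Z^{2/3}\ge\nu^2$, which one checks holds in the relevant range, or else $V=W$ and the bound is trivial). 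This casework, rather than any deep idea, is the main obstacle.
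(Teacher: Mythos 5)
Your ``cleanest route'' is correct and is essentially the paper's proof: both apply the probabilistic method to Lemma~\ref{lem:random}, choose $V$ large enough that the failure probability drops below $1$ and that the overlap $V\mu/(8W)$ is at least $\nu$, and then bound $V$ by the claimed three-term sum (the paper's case split on $W\lessgtr\nu^{-3/2}\mu|\Z|$ does the same work as your capped maximum of three quantities). The AM--GM step you flagged as delicate is in fact automatic: with $a=\frac\nu\mu W$ and $b=\frac{\mu|\Z|}{\nu^{3/2}}$ one has $ab^2=\frac{\mu W|\Z|^2}{\nu^2}\ge\frac{W|\Z|^2}{\mu}$ because $\nu\le\mu$, so $(W|\Z|^2/\mu)^{1/3}\le(ab^2)^{1/3}\le\tfrac13(a+2b)\le a+b$ with no casework needed.
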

\begin{proof}
	By lemma \ref{lem:random} there exists a universal constant \(C>0\) such that \(\eta<1\) if \(V\ge C{\textstyle \big(Y\sqrt{\frac W{\mu V}}\vee\log W\big)}\) or equivalently if
	\begin{equation}V\ge C^{1/3}\cdot\frac{W^{1/3}Y^{2/3}}{\mu^{1/3}}\vee C\log W\label{eq:method}\end{equation}
	By the probabilistic method there exists \(\V\) of dimension \(V\) which is a partial \(O(\frac VW\mu)\)-majorizer for \(\Y\) when \eqref{eq:method} is satisfied.
	We now divide into two cases.
	\begin{itemize}
		\item \(\displaystyle W> \nu^{-\frac32}\mu Y\).
			In this case apply lemma \ref{lem:random} with \(V\) chosen as the RHS of \eqref{eq:method} rounded up to the nearest integer. This yields
			\[V:=\Big\lceil C^{2/3}\cdot\frac{W^{1/3}Y^{2/3}}{\mu^{1/3}}\vee C\log W\Big\rceil\bounded\frac\nu\mu\cdot W\vee \log W\]

		\item \(\displaystyle W\le \nu^{-\frac32}\mu Y\).
In this case pick \(V=W\) which is a partial \(\mu\)-majorizer for \(\Y\) and hence also a partial \(\nu\)-majorizer for \(\nu\le\mu\).
	\end{itemize}
	In both cases we have found \(\V\) which satisfies the claim.
\end{proof}

\thearealawthm*
\begin{proof}
Define
	\begin{align}\label{eq:smallV}\mathbb V(\nu)&=\big\{\text{Partial \(\nu\)-majorizers \(\V\subsp\HS_\ST\) for \(\widetilde\Z\)}\big\},\\
V(\nu)&=\min\{\dim\V\mid\V\in\mathbb V(\nu)\},\label{eq:minV}\end{align}
such that \(\mathbb V(\nu)\) is non-increasing\footnote{That is, if \(\nu'\ge\nu\) then \(\mathbb V(\nu')\subset\mathbb V(\nu)\).} in \(\nu\) and \(V(\nu)\) is non-decreasing in \(\nu\). Our task is to prove an upper bound on \(V(1-\delta)\). As in \cite{alvv17} we begin with an intermediate step which bounds the dimension of a partial $\nu$-majorizer \(\V\) for \(\widetilde\Z\) with a smaller value of $\nu$.
\paragraph{weak partial majorizer}
Let \(\ell>0\) be a parameter to be decided below. \eqref{eq:gammaL} implies that for any constant $c>0$ there exists a \(\shrink\)-PAP \(\L\) for some $\widetilde\Z\close{\delta/4}\Z$, where
\[\frac{c}{\shrink}\ge|\L|^2=\exp\Big[\tilde\Theta\bigg({\Delta^{-\frac\beta{2-\beta}}}+\Big(\tfrac{\log(1/\delta)}{\Delta}\Big)^{\frac\beta2}\bigg)\Big].\]
Let \(\nu=\sqrt{2\shrink}\) and let \(\V_0\in\mathbb V(\nu)\) be a minimizer of \eqref{eq:minV}, so that \(|\V_0|\le V(\nu)\). Let \(\W=\L\V_0\). Since \(\shrink/\nu^2=1/2\), lemma \ref{lem:AGSPapplication} implies that \(\W\) is \(1/2\)-viable for \(\widetilde\Z\). In other words it is a \(1/2\)-majorizer for \(\widetilde\Z\). By corollary \ref{cor:deterministic} there exists a partial \(\nu\)-majorizer \(\V_1\) for \(\widetilde\Z\) such that
\[|\V_1|\bounded \nu\cdot |\W|+\log|\W|+\nu^{-\frac32}|\Z|.\]
\(V(\nu)\le|\V_1|\) since \(\V_1\) is a partial \(\nu\)-majorizer for \(\widetilde\Z\), and \(|\W|\le|\L| V(\nu)\bounded \frac{c}{\nu} V(\nu)\) which implies that
\begin{equation}V(\nu)\bounded c V(\nu)+\log\tfrac1\nu+\nu^{-\frac32}|\Z|.\label{eq:torearrange}\end{equation}
Here we have absorbed a term \(\log V(\nu)\) from the RHS in the LHS. Since $c>0$ was an arbitrary constant, we can choose it small enough to rearrange
\eqref{eq:torearrange}, which yields \(V(\nu)\bounded\nu^{-3/2}|\Z|\). Here we have used the fact that \(\nu^{-3/2}\le e^{\frac32\ell}\) since \(e^\ell\nu\) is a small constant. We then use the fact $V(\nu^{-3/2})\le V(\nu)$ and pick $R\DEFtheta\nu^{-3/2}$ which shows that there exists 
\begin{equation}R=\exp\Big[\tilde\Theta\bigg({\Delta^{-\frac\beta{2-\beta}}}+\Big(\tfrac{\log(1/\delta)}{\Delta}\Big)^{\frac\beta2}\bigg)\Big] \quad\text{such that}\quad V(1/R)\le R|\Z|.\label{eq:whatisV}\end{equation}
\paragraph{boosting the viability}
By \eqref{eq:whatisV} there exists \(\V\subsp\HS_\ST\) of dimension \(\bigO(R|\Z|)\) which is \(1/R\)-majorizing for \(\widetilde\Z\). Let \(\shrink=\frac\delta{4R^2}\). Then by lemma \ref{lem:AGSPapplication}, \(\L\V\) is \(\delta/4\)-viable for \(\widetilde\Z\) for any \(\L\subsp\lin(\HS_\ST)\) which is a \(\shrink\)-PAP for \(\widetilde\Z\).
Rearranging \eqref{eq:gammaofR} of corollary \ref{lem:AGSP} yields that given \(\delta,\shrink>0\) there exists a \(\shrink\)-PAP \(\L\subsp\lin(\HS_\ST)\) for $\widetilde\Z\close{\delta}\Z$ such that
\begin{align*}|\L|&\bounded\exp\Big[\Otilde\Big(\big(\tfrac{\log\shrink^{-1}}{\sqrt\Delta}\big)^{\frac{2\beta}{2+\beta}}+\big(\sqrt{\tfrac{\log\delta^{-1}}{\Delta}}\log\tfrac1\shrink\big)^{\frac\beta{\beta+1}}\Big)\Big]\\
&=\exp\Big[\Otilde\Big(\big(\tfrac{\log(R^2/\delta)}{\sqrt\Delta}\big)^{\frac{2\beta}{2+\beta}}+\big(\sqrt{\tfrac{\log\delta^{-1}}{\Delta}}\log(R^2/\delta)\big)^{\frac\beta{\beta+1}}\Big)\Big]
\\&=\exp\Big[\Otilde\Big(\Delta^{-\frac\beta{2-\beta}}(\log\tfrac1\delta)^{\frac32\frac{\beta}{\beta+1}}\Big)\Big].\end{align*}
Here we have used $\frac{2\beta}{2+\beta}<\frac32\frac\beta{\beta+1}$ for $\beta<2$ to bound the exponent of $\log\frac1\delta$.
\(\L\V\) is \(\delta/4\)-viable for \(\widetilde\Z\) and therefore \(\delta\)-viable for \(\Z\) by lemma \ref{lem:robust}. To conclude we combine the last bound with \eqref{eq:whatisV} and write \(|\L\V|\le|\L|V(1/R)\le R|\L|\cdot|\Z|\).
\end{proof}

\section{Error bounds for the truncated cluster expansion}
\label{sec:TCEerrorsec}

We use the following definition from \cite{hastings2006solving}:
\[\rho_m=\sum_{{\mc F}}\sum_{\closure{\supp\hist}\text{ extends }{\mc F}}G_\hist(tH),\]
where \({\mc F}\) is summed over all closed subsets with exactly \(m\) connected components \(\ST\), each of which satisfies \(|\VX(\ST)|> L\). Then the inclusion-exclusion principle implies that \(\K_L(tH)-e^{-tH}=\sum_{m=1}^\infty(-1)^m\rho_m\).

\TCEb*
	\begin{proof} 
		We first show that when summing over histograms whose support \emph{equals} \({\mc F}\) we get
		\begin{equation}\label{eq:equals}\Big\|\sum_{\hist\in\NN_+^{\EG({\mc F})}}G_\hist(tH)\Big\|\le(e^t-1)^{|\EG({\mc F})|}.\end{equation}
		Indeed for a histogram \(\hist\in\NN_0^{\EG({\mc F})}\) note that the number of sequences with histogram \(\hist\) is the multinomial coefficient
		\[\#[\hist]=\binom{\|\hist\|_1}{\hist(\e),\e\in\EG({\mc F})},\]
		Then \(\|G_\hist(tH)\|\le t^{\|\hist\|_1}\prod_{\e\in\EG({\mc F})}\frac1{\hist(\e)!}\) since each term in the definition of \(G_\hist\) satisfies \(\|F_\ee(tH)\|\le t^{|\ee|}/|\ee|!\). \eqref{eq:equals} then follows by the triangle inequality,
		\[\Big\|\sum_{\hist\in\NN_+^{\EG({\mc F})}}G_\hist(tH)\Big\|\le \sum_{\hist\in\NN_+^{\EG({\mc F})}}\prod_{\e\in\EG({\mc F})}\frac{t^{\|\hist\|_1}}{\hist(\e)!}=\Big(\sum_{\mu=1}^\infty\frac{t^\mu}{\mu!}\Big)^{|\EG({\mc F})|}=(e^t-1)^{|\EG({\mc F})|}.\]
		\sparagraph{Sum over extensions}
		We use observation \ref{obs:sumprod} to get the factorization
		\[\sum_{\closure{\supp\hist}\text{ extends }{\mc F}}\kern-1em G_\hist(tH)=\sum_{\hist_{\mc F}}\sum_{\hist_\outside}G_{\hist_{\mc F}+\hist_\outside}(tH)=\sum_{\hist_{\mc F}}G_{\hist_{\mc F}}(tH)\otimes\underbrace{\textstyle{\sum_{\hist_\outside} G_{\hist_\outside}(tH)}}_{e^{-tH_\outside}},\]
		where \(\outside=\T\XPT{\mc F}\), and \(\hist_{\mc F}\) and \(\hist_\outside\) are summed over \(\NN_+^{\EG({\mc F})}\) and \(\NN_0^{\EG(\outside)}\), respectively. Then
		\[H_{\outside}\ge \E_0-|\EG({\mc F})|-|\partiale{\mc F}|\Rightarrow e^{-tH_\outside}\le e^{t(|\EG({\mc F})|+|\partiale{\mc F}|-\E_0)}.\]
		Combined with the previous paragraph this implies
		\begin{equation}\Big\|{\sum_{\closure{\supp\hist}\text{ extends }{\mc F}}}G_\hist(tH)\Big\|\le e^{t(|\partiale{\mc F}|-\E_0)}(e^{2t}-e^t)^{|\EG({\mc F})|}.\label{eq:extensions}\end{equation}
		Na\"ively bound \(|\partiale{\mc F}|\le2\sd|\EG({\mc F})|\) to get that \eqref{eq:extensions} is at most \(e^{-t\E_0+2\sd t\EG_{\mc F}}(e^t-1)^{\EG_{\mc F}}\).
\sparagraph{Sum over \({\mc F}\)}
As a corollary of observation \ref{obs:counttrees}, \(\T\) has at most \(n(6\sd)^L\) subtrees with exactly \(L\) vertices. Now sum over possible tuples of components of \({\mc F}\) with respective sizes \(L_1,\ldots,L_m\). Applying \eqref{eq:extensions} and the triangle inequality,
\[\|\rho_m\|\le \frac{e^{-t\E_0}n^m}{m!}\sum_{L_1\ge L}\cdots\sum_{L_m\ge L}\big(6\sd e^{2\sd\cdot t}(e^t-1)\big)^{\sum_iL_i}=\frac{e^{-t\E_0}}{m!}\Big(\frac{na_t^L}{1-a_t}\Big)^m,\]
where \(a_t=6\sd e^{2\sd \cdot t}(e^t-1)\). Applying the triangle inequality once more,
\[\|\K_L(tH)-e^{-tH}\|\le\sum_{m=1}^\infty\|\rho_m\|\le e^{-t\E_0}\sum_{m=1}^\infty\frac1{m!}\Big(\frac{na_t^L}{1-a_t}\Big)^m=e^{-t\E_0}\cdot\Big(\exp\Big(\frac{na_t^L}{1-a_t}\Big)-1\Big).\]
\end{proof}

\end{appendices}

\bibliographystyle{plain}
\bibliography{bibliography.bib}

\end{document}